\newcommand{\macrospath}{.}
\renewcommand{\oc}{{\tt !}}
\newcommand\dom[1]{{\tt dom}({#1})}
\newcommand\eqdef{\eqqcolon}
\newcommand\normal{{\sf{normal}}\xspace}
\newcommand\normalpr[1]{\normal({#1})}
\newcommand{\shortnormal}{\sf{n}}
\newcommand{\cbnnormaltonormal}{\shortnormal^{\mult{\shortnormal}}}
\newcommand\normalcbv{{\normal_\cbvsym}\xspace}
\newcommand\normalcbvpr[1]{\normalcbv({#1})}
\newcommand\col{ \!:\! }
\newcommand\Deribbase[5]{{#3}\ {\pmb\vdash}_{#2}^{#1} {#4} \hastype  {#5}}
\newcommand{\Deribase}[1]{%
  \def\DeribW[##1]{\Deribbase{##1}{#1}}%
  \def\DeribWO{\Deribbase{}{#1}}%
  \@ifnextchar[\DeribW\DeribWO%
  }
  \newcommand{\Deri}{%
  \def\DeriW_##1{\Deribase{##1}}%
  \def\DeriWO{\Deribase{}}%
  \@ifnextchar_\DeriW\DeriWO%
  }
\newcommand\single[1]{\mult{#1}}
\newcommand\MSigma[2]{[#1]_{{#2}}}
\newcommand{\steps}{b}
\newcommand{\msteps}{m}
\newcommand{\esteps}{e}
\newcommand{\estepstwo}{\esteps'}
\newcommand{\estepsthree}{\esteps''}
\newcommand{\mstepstwo}{\msteps'}
\newcommand{\mstepsthree}{\msteps''}
\newcommand{\result}{r}
\newcommand{\mtype}{M}
\newcommand{\mtypetwo}{N}
\newcommand{\mtypethree}{O}
\newcommand{\mtypefour}{P}
\newcommand{\M}{\mtype}
\newcommand{\type}{L}
\newcommand{\typetwo}{{\type'}}
\newcommand{\I}{I}
\newcommand{\J}{J}
\newcommand{\K}{K}
\newcommand{\iI}{{i \in \I}}
\newcommand{\jJ}{{j \in \J}}
\newcommand{\kK}{{k \in \K}}
\newcommand{\exder}{%
  \def\exderW[##1]{\triangleright_{##1}\ }%
  \def\exderWO{\triangleright\ }%
  \@ifnextchar[\exderW\exderWO%
  }
\newcommand{\tight}{{\tt tight}}
\renewcommand\single[1]{[#1]}
\newcommand{\fun}{{\tt fun}\xspace}
\newcommand{\app}{{\tt app}\xspace}
\newcommand{\appsteps}{\app_\steps}
\newcommand{\appresult}{%
  \def\appresultW<##1>{\app_\result^{##1}}%
  \def\appresultWO{\app_\result}%
  \@ifnextchar<\appresultW\appresultWO%
  }
\newcommand{\esrule}{{\tt ES}\xspace}
\newcommand{\tarrow}[2]{#1 \multimap #2}
\newcommand{\ty}[2]{\tarrow{#1}{#2}}
\newcommand{\mult}[1]{[ #1 ] }
\newcommand{\tyjp}[4]{{#3} \vdash^{#1} #2 \hastype #4}
\newcommand{\tderiv}{\Phi}
\newcommand{\tderivtwo}{\Psi}
\newcommand{\tderivthree}{\Theta}
\newcommand{\tderivfour}{\Xi}
\newcommand{\typctx}{\Gamma}
\newcommand{\typctxtwo}{\Pi}
\newcommand{\typctxthree}{\Delta}
\newcommand{\typctxfour}{\Sigma}
\newcommand{\precise}{tight\xspace}
\newcommand{\Precise}{Tight\xspace}
\newcommand{\emptymset}{\zero}
\newcommand{\ndsym}{\mathtt{nd}}
\newcommand{\hole}[1]{\langle #1 \rangle}
\newcommand{\chole}{\langle \, \rangle}
\newcommand{\ES}{\text{ES}\xspace}
\newcommand{\gc}{\text{gc}\xspace}
\newcommand{\ESgc}{\ES_{\gc}\xspace}
\newcommand{\appgc}{\app_{\gc}\xspace}
\newcommand{\esgc}{\esrule_{\gc}\xspace}
\newcommand{\rtond}{\rootRew{{\tt nd}}}
\newcommand{\tond}{\Rew{\ndsym}}
\newcommand{\sm}{\mathbin{{\setminus}\mspace{-5mu}{\setminus}}}
    \newtheorem{theorem}{Theorem}[section]
    \newtheorem{lemma}[theorem]{Lemma}
    \newtheorem{proposition}[theorem]{Proposition}
    \newtheorem{definition}{Definition}
    \newtheorem{remark}{Remark}
\newcommand{\myproof}[1]{
\ifthenelse{\boolean{withproofs}}{#1}{}
}
\newcommand{\ltermslsc}{\Lambda_{{\tt lsc}}}
\newcommand{\la}[1]{\lambda #1.}
\newcommand{\tm}{t}
\newcommand{\tmtwo}{s}
\newcommand{\tmthree}{u}
\newcommand{\tmfour}{r}
\newcommand{\tmfive}{q}
\newcommand{\var}{x}
\newcommand{\vartwo}{y}
\newcommand{\varthree}{z}
\newcommand{\varfour}{w}
\newcommand{\rootRew}[1]{\mapsto_{#1}}
\newcommand{\Rewbase}{%
  \def\RewbaseW[##1]##2##3{\ {\xrightarrow{##1}}{}_{##2}^{##3}\xspace }%
  \def\RewbaseWO##1##2{\ {\xrightarrow{}}{}_{##1}^{##2}\xspace }%
  \@ifnextchar[\RewbaseW\RewbaseWO%
  }
\newcommand{\Rew}[1]{%
  \def\RewW[##1]{\Rewbase[##1]{{#1}}{}}%
  \def\RewWO{\Rewbase{{#1}}{}}%
  \@ifnextchar[\RewW\RewWO%
  }
\newcommand{\Rewn}[2][*]{%
  \def\RewnW[##1]{\Rewbase[##1]{#2}{#1}}%
  \def\RewnWO{\Rewbase{#2}{#1}}%
  \@ifnextchar[\RewnW\RewnWO%
  }
\renewcommand{\to}{\Rew{}}
\newcommand{\lRew}[1]{\; \mbox{}_{#1}{\leftarrow}\ }
\newcommand{\gcsym}{{\tt gc}}
\newcommand{\esym}{{\mathtt e}}
\newcommand{\msym}{{\mathtt m}}
\newcommand{\mcbn}{\msym_\cbnsym}
\newcommand{\ecbn}{\esym_\cbnsym}
\newcommand{\mcbv}{\msym_\cbvsym}
\newcommand{\ecbv}{\esym_\cbvsym}
\newcommand{\mcbneed}{\msym_\cbneedsym}
\newcommand{\ecbneed}{\esym_\cbneedsym}
\newcommand{\cbvsym}{\textup{cbv}}
\newcommand{\cbnsym}{\textup{cbn}}
\newcommand{\cbneedsym}{\textup{need}}
\newcommand{\cbv}{\text{CbV}\xspace}
\newcommand{\cbn}{\text{CbN}\xspace}
\newcommand{\cbneed}{\text{CbNeed}\xspace}
\newcommand{\cbvup}{\cbvsym\!}
\newcommand{\cbnup}{\cbnsym\!}
\newcommand{\cbneedup}{\cbneedsym\!}
\newcommand{\val}{v}
\newcommand{\rtocbv}{\rootRew{\cbvsym}}
\newcommand{\cbvctx}{V}
\newcommand{\cbvctxtwo}{\cbvctx'}
\newcommand{\cbvctxthree}{\cbvctx''}
\newcommand{\cbvctxp}[1]{\cbvctx\ctxholep{#1}}
\newcommand{\cbneedctx}{E}
\newcommand{\cbneedctxtwo}{\cbneedctx'}
\newcommand{\cbneedctxp}[1]{\cbneedctx\ctxholep{#1}}
\newcommand{\ctxholep}[1]{\langle #1\rangle}
\newcommand{\ctxhole}{\ctxholep{\cdot}}
\newcommand{\nbvctxtwo}[1]{\nbvctxtwo{#1}}
\newcommand{\sctx}{S}
\newcommand{\sctxtwo}{{\sctx'}}
\newcommand{\sctxp}[1]{\sctx\ctxholep{#1}}
\newcommand{\sctxptwo}[1]{\sctxtwo\ctxholep{#1}}
\newcommand{\sctxtwop}[1]{\sctxtwo\ctxholep{#1}}
\newcommand{\cbnctx}{C}
\newcommand{\cbnctxtwo}{D}
\newcommand{\cbnctxp}[1]{\cbnctx\ctxholep{#1}}
\newcommand{\cbnctxtwop}[1]{\cbnctxtwo\ctxholep{#1}}
\newcommand{\defeq}{\coloneqq}
\newcommand{\grameq}{\Coloneqq}
\newcommand{\esub}[2]{[#1{\leftarrow}#2]}
\newcommand{\letexp}[3]{{\tt let}\ #1=#2\ {\tt in}\ #3}
\newcommand{\rtogc}{\rootRew{\gcsym}}
\newcommand{\cwc}[1]{\langle  \! \langle    #1 \rangle  \! \rangle}
\newcommand{\llbrace}{\{ \kern -0.27em \vert}
\newcommand{\rrbrace}{\vert \kern -0.27em \}}
\newcommand{\ie}{{\em i.e.}\xspace}
\newcommand{\ih}{{\emph{i.h.}}\xspace}
\newcommand{\fv}[1]{{\tt fv}(#1)}
\newcommand{\ignore}[1]{}
\newcommand{\colspace}{@{\hspace{.5cm}}}
\newcommand{\myinput}[1]{\ifthenelse{\boolean{withimages}}{\input{#1}}{}}
\newcommand{\reflemma}[1]{Lemma~\ref{l:#1}}
\newcommand{\reflemmap}[2]{Lemma~\ref{l:#1}.\ref{p:#1-#2}}
\newcommand{\reflemmaeq}[1]{{L.\ref{l:#1}}}
\newcommand{\refthm}[1]{Theorem~\ref{thm:#1}}
\newcommand{\refprop}[1]{Proposition~\ref{prop:#1}}
\newcommand{\refsect}[1]{Sect.~\ref{sect:#1}}
\newcommand{\reffig}[1]{Fig.~\ref{fig:#1}}
\newcommand{\refcoro}[1]{Corollary~\ref{coro:#1}}
\newcommand{\refrem}[1]{Remark~\ref{rem:#1}}
\newcommand{\refex}[1]{Example~\ref{ex:#1}}
\newcommand{\refexample}[1]{\refex{#1}}
\newcommand{\form}{A}
\newcommand{\formtwo}{B}
\newcommand{\ax}{\mathsf{ax}\xspace}
\newcommand{\many}{\mathsf{many}\xspace}
\newcommand{\lolli}{\multimap}
\newcommand{\set}[1]{\{#1\}}
\newcommand{\nat}{\mathbb{N}}
\newcommand{\rtom}{\mapsto_\msym}
\newcommand{\rtoe}{\mapsto_\esym}
\newcommand{\tom}{\Rew{\msym}}
\newcommand{\toe}{\Rew{\esym}}
\newcommand{\tomcbn}{\Rew{\mcbn}}
\newcommand{\toecbn}{\Rew{\ecbn}}
\newcommand{\tomcbv}{\Rew{\mcbv}}
\newcommand{\toecbv}{\Rew{\ecbv}}
\newcommand{\tomcbneed}{\Rew{\msym_\cbneedsym}}
\newcommand{\toecbneed}{\Rew{\esym_\cbneedsym}}
\newcommand{\rtoecbn}{\mapsto_{\esym_\cbnsym}}
\newcommand{\rtoecbv}{\mapsto_{\ecbv}}
\newcommand{\rtomcbneed}{\mapsto_{\msym_\cbneedsym}}
\newcommand{\rtoecbneed}{\mapsto_{\esym_\cbneedsym}}
\newcommand{\wctx}{W}
\newcommand{\wctxp}[1]{\wctx\ctxholep{#1}}
\newcommand{\size}[1]{|#1|}
\newcommand{\sizep}[2]{|#1|_{#2}}
\newcommand{\sizem}[1]{\sizep{#1}{\msym}}
\newcommand{\sizee}[1]{\sizep{#1}{\esym}}
\newcommand{\deriv}{\ensuremath{d}}
\newcommand{\derivtwo}{\ensuremath{d'}}
\newcommand\mplus{\uplus}
\newcommand\bigmplus{\biguplus}
\newcommand{\sem}[1]{[\![#1]\!]}
\newcommand\emptytype{\mathbf{0}}
\newcommand\tocbn{\Rew{\cbnsym}}
\newcommand\tocbnn{\Rewn{\cbnsym}}
\newcommand{\tocbv}{\Rew{\cbvsym}}
\newcommand{\tocbvn}{\Rewn{\cbvsym}}
\newcommand{\tocbneed}{\Rew{\cbneedsym}}
\newcommand{\tocbneedn}{\Rewn{\cbneedsym}}
\newcommand{\Rule}{\mathsf{r}}
\newcommand{\torule}{\Rew{\Rule}}
\newcommand{\ctxplus}{\mplus}
\newcommand{\hastype}{\,{:}\,}
\newcommand{\zero}{\emptytype}
\begin{document}
\title{Types by Need (Extended Version)}
%
%\titlerunning{Abbreviated paper title}
% If the paper title is too long for the running head, you can set
% an abbreviated paper title here
%

\author{Beniamino Accattoli\inst{1} \and Giulio Guerrieri\inst{2} \and Maico Leberle\inst{1}}
\authorrunning{Accattoli, Guerrieri, and Leberle}
% First names are abbreviated in the running head.
% If there are more than two authors, 'et al.' is used.
%
\institute{Inria \& LIX, \'Ecole Polytechnique, UMR 7161, Palaiseau, France
\email{\{\href{mailto:beniamino.accattoli@inria.fr}{beniamino.accattoli},\href{mailto:maico.leberle@inria.fr}{maico.leberle}\}@inria.fr}
\and
University di Bath, Departement of Computer Science, Bath, United Kingdom
\email{\href{mailto:g.guerrieri@bath.ac.uk}{g.guerrieri@bath.ac.uk}} 
}

\maketitle              % typeset the header of the contribution
\begin{abstract}
% !TEX root = main.tex
A cornerstone of the theory of $\lambda$-calculus is that intersection types characterise termination properties. They are a flexible tool that can be adapted to various notions of termination, and that also induces adequate denotational models. 

Since the seminal work of de Carvalho in 2007, it is  known that multi types (\ie non-idempotent intersection types) refine intersection types with quantitative information and a strong connection to linear logic. Typically, type derivations provide bounds for evaluation lengths, and minimal type derivations provide exact bounds. 

De Carvalho studied call-by-name evaluation, and Kesner used his system to show the termination equivalence of call-by-need and call-by-name. De Carvalho's system, however, cannot provide exact bounds on call-by-need evaluation lengths.

In this paper we develop a new multi type system for call-by-need. Our system produces exact bounds and induces a denotational model of call-by-need, providing the first tight quantitative semantics of call-by-need.

%The tricky point is that the relationship between call-by-need and linear logic is less neat than the one with respect to call-by-name / value, and so the call-by-need multi types system cannot be designed by simply following a translation into linear logic.
\end{abstract}

% !TEX root = main.tex
\section{Introduction}
Duplications and erasures have always been considered as key phenomena in the $\lambda$-calculus---the $\lambda I$-calculus, where erasures are forbidden, is an example of this. The advent of linear logic \cite{DBLP:journals/tcs/Girard87} gave them a new, prominent logical status. Forbidding erasure and duplication enables single-use resources, i.e. linearity, but limits expressivity, as every computation terminates in linear time. Their controlled reintroduction via the non-linear modality $\oc$ recovers the full expressive power of cut-elimination and allows a fine analysis of resource consumption. Duplication and erasure are therefore the key ingredients for logical expressivity, and---via Curry-Howard---for the expressivity of the $\lambda$-calculus. 
They are also essential to understand evaluation strategies.

In a $\lambda$-term there can be many $\beta$-redexes, that is, places where $\beta$-reduction can be applied. In this sense, the $\lambda$-calculus is non-deterministic. Non-determinism does not affect the result of evaluation, if any, but it affects whether evaluation terminates, and in how many steps. There are two natural deterministic evaluation strategies, \emph{call-by-name} (shortened to \cbn) and \emph{call-by-value} (\cbv), which have dual behaviour with respect to duplication and erasure. 

\paragraph{Call-by-Name = Silly Duplication + Wise Erasure.} \cbn \emph{never} evaluates arguments of $\beta$-redexes before the redexes themselves. 
As a consequence, it never evaluates in subterms that will be erased. This is wise, and makes \cbn a \emph{normalising strategy}, that is, a strategy that reaches a result whenever one exists\footnote{If a term $\tm$ admits both converging and diverging evaluation sequences then the diverging sequences occur in erasable subterms of $\tm$, which is why \cbn avoids~them.}. 
A second consequence is that if the argument 
of the redex is duplicated then it may be evaluated more than once. This is silly, as it repeats~work~already~done. 

\paragraph{Call-by-Value = Wise Duplication + Silly Erasure.} \cbv, on the other hand, \emph{always} evaluates arguments of $\beta$-redexes before the redexes themselves. Consequently, arguments are not re-evaluated---this is wise with respect to duplication---but they are also evaluated when they are going to be erased. For instance, on $\tm \defeq (\la\var\la\vartwo\vartwo) \Omega$, where $\Omega$ is the famous looping $\lambda$-term, \cbv evaluation diverges (it keeps evaluating $\Omega$) while \cbn converges in one  $\beta$-step (simply erasing $\Omega$). This \cbv treatment of erasure is clearly as silly as the duplicated work of \cbn.

\paragraph{Call-by-Need = Wise Duplication + Wise Erasure.} It is natural to try to combine the advantages of both \cbn and \cbv. The strategy that is wise with respect to both duplications and erasures is usually called \emph{call-by-need} (\cbneed), it was introduced by Wadsworth \cite{Wad:SemPra:71}, and dates back to the '70s. 
Despite being at the core of  Haskell, one of the most-used functional programming languages, and---in its strong variant---being at work in the kernel of Coq as designed by Barras \cite{barras-phd}, the theory of \cbneed is much less developed than that of \cbn or \cbv. 

One of the reasons for this is that it cannot be defined inside the $\lambda$-calculus without some hacking. Manageable presentations of \cbneed indeed require first-class sharing and  micro-step operational semantics where variable occurrences are replaced one at a time (when needed), and not all at once as in the $\lambda$-calculus. Another reason is the less natural logical interpretation.

\paragraph{Linear Logic, Names, Values, and Needs.} \cbn and \cbv have neat interpretations in linear logic. 
They correspond to two different representations of 
intuitionistic logic in linear logic, based on two different representations of implication\footnote{The \cbn translation maps $\form \Rightarrow \formtwo$ to $(\oc\form^\cbn) \lolli \formtwo^\cbn$, while the \cbv maps it to $\oc\form^\cbv \lolli \oc\formtwo^\cbv$, or equivalently to $\oc(\form^\cbv \lolli \formtwo^\cbv)$.}.

The  logical interpretation of \cbneed---studied by Maraist et al. in \cite{DBLP:journals/tcs/MaraistOTW99}---is less neat than those of \cbn and \cbv. Within linear logic, \cbneed is usually understood as corresponding to the \cbv representation where erasures are generalised to all terms, not only those under the scope of a $\oc$ modality. %Therefore, 
So, it is seen as a sort of \emph{affine} \cbv. Such an interpretation however is unusual, because it does not match exactly with cut-elimination in linear logic, as for \cbn~and~\cbv.

\paragraph{Call-by-Need, Abstractly.} The main theorem of the theory of \cbneed is that it is termination equivalent to \cbn, that is, on a fixed term, \cbneed evaluation terminates if and only if \cbn evaluation terminates, and, moreover, they essentially produce the same result (up to some technical details that are irrelevant here). This is due to the fact that both strategies avoid silly divergent sequences such as that of $(\la\var\la\vartwo\vartwo) \Omega$. Termination equivalence is an abstract theorem stating that \cbneed erases as wisely as \cbn. Curiously, in the literature there are no abstract theorems reflecting the dual fact that \cbneed duplicates as wisely as \cbv---we provide one, as a side contribution of this paper.

\paragraph{Call-by-Need and Denotational Semantics.} \cbneed is then usually considered as a \cbv optimisation of \cbn. In particular, every denotational model of \cbn is also a model of \cbneed, and adequacy---that is the fact that the denotation of $\tm$ is not degenerated if and only if $\tm$ terminates---transfers from \cbn to \cbneed. 

%Let us recall that 
Denotational semantics is invariant by evaluation, and so is insensitive to evaluation lengths by definition. 
It then seems that denotational semantics cannot distinguish between \cbn and \cbneed. 
The aim of this paper is, somewhat counter-intuitively, to separate \cbn and \cbneed semantically. 
%The idea is to 
We develop a type system whose type judgements induce a model---this is typical of \emph{intersection} type systems---and whose type derivations provide exact bounds for \cbneed evaluation---this is usually obtained via \emph{non-idempotent} intersection types. 
Unsurprisingly, the design of the type system requires a delicate mix of erasure and duplication and builds on the linear logic understanding~of~\cbn~and~\cbv.

\paragraph{Multi Types.} Our typing framework is given by \emph{multi types}, which is an alternative name for \emph{non-idempotent intersection types}\footnote{The new terminology is due to the fact that a non-idempotent intersection $A \wedge A \wedge B \wedge C$ can be seen as a multi-set $\mult{A,A,B,C}$.}. Multi types characterise termination properties exactly as intersection types, having moreover the advantages that they are closely related to (the relational semantics of) linear logic, their type derivations provide quantitative information about evaluation lengths, and the proof techniques are simpler---no need for the reducibility method.

The seminal work of de Carvalho \cite{DBLP:journals/mscs/Carvalho18} (appeared in 2007 but unpublished until  2018) showed how to use multi types to obtain exact bounds on evaluation lengths in \cbn. 
Ehrhard adapted multi types to \cbv \cite{DBLP:conf/csl/Ehrhard12}, and very recently Accattoli and Guerrieri adapted de Carvalho's study of exact bounds to Ehrhard's system and \cbv evaluation \cite{aplas18}. 
Kesner used de Carvalho's \cbn multi types to obtain a simple proof that \cbneed is termination equivalent with respect to \cbn \cite{DBLP:conf/fossacs/Kesner16} (first proved with other techniques by Maraist, Odersky, and Wadler \cite{DBLP:journals/jfp/MaraistOW98} and Ariola and Felleisen \cite{DBLP:journals/jfp/AriolaF97} in the nineties), and then Kesner and coauthors continued exploring the theory of \cbneed via \cbn multi types \cite{DBLP:journals/pacmpl/BalabonskiBBK17,KesnerVR18,DBLP:conf/ppdp/BarenbaumBM18}. 

Kesner's use of \cbn multi types to study \cbneed is \emph{qualitative}, as it deals with termination and not with exact bounds. 
For a \emph{quantitative} study of \cbneed, de Carvalho's \cbn system cannot really be informative: \cbn multi types provide bounds for \cbneed %but cannot provide exact bounds 
which cannot be exact because they already provide exact bounds for \cbn, which generally takes more steps than \cbneed.

\paragraph{Multi Types by Need.} In this paper we provide the first multi type system characterising \cbneed termination and whose minimal type derivations provide \emph{exact} bounds for \cbneed evaluation lengths. The design of the type system is delicate, as we explain in \refsect{cbneed}. One of the key points is that, in contrast to Ehrhard's system for \cbv \cite{DBLP:conf/csl/Ehrhard12}, multi types for \cbneed cannot be directly extracted by the relational semantics of linear logic, given that \cbneed does not have a clean representation in it. A by-product of our work is a new denotational semantics of \cbneed, %which is 
the first one to precisely reflect its quantitative properties.

Beyond the result itself, the paper tries to stress how the key ingredients of our type system are taken from those for \cbn and \cbv and combined together. To this aim, we first present multi types for \cbn and \cbv, and only then we proceed to build the \cbneed system and prove its properties. 

Along the way, we also prove the missing fundamental property of \cbneed, that is, that it duplicates as efficiently as \cbv. The result is obtained by dualising Kesner's approach \cite{DBLP:conf/fossacs/Kesner16}, showing that the \cbv multi type system is correct also with respect to \cbneed evaluation, that is, its bounds are also valid with respect to \cbneed evaluation lengths. Careful: the \cbv system is correct but of course not complete with respect to \cbneed, because \cbneed may normalise when \cbv diverges. The proof of the result is straightforward, because of our presentations of (\cbn,) \cbv and \cbneed. We adopt a liberal, non-deterministic formulation of \cbv, and  assuming that garbage collection is always postponed. These two ingredients turn \cbneed into a fragment of \cbv, obtaining the new fundamental result as a corollary of correctness of \cbv multi types for \cbv evaluation.

\paragraph{Technical Development.} The paper is extremely uniform, technically speaking. The three evaluations are presented as strategies of Accattoli and Kesner's Linear Substitution Calculus (shortened to LSC) \cite{DBLP:conf/rta/Accattoli12,DBLP:conf/popl/AccattoliBKL14}, a calculus with a simple but expressive form of explicit sharing. The LSC is strongly related to linear logic \cite{DBLP:conf/ictac/Accattoli18}, and provides a neat and manageable presentation of \cbneed, introduced by Accattoli, Barenbaum, and Mazza in \cite{DBLP:conf/icfp/AccattoliBM14}, and further developed by various authors in \cite{DBLP:conf/wollic/AccattoliC14,DBLP:conf/fossacs/Kesner16,DBLP:journals/pacmpl/BalabonskiBBK17,DBLP:conf/ppdp/AccattoliB17,DBLP:conf/aplas/AccattoliB17,KesnerVR18,DBLP:conf/ppdp/BarenbaumBM18}. Our type systems count evaluation steps by annotating typing rules in the \emph{exact} same way, and the proofs of correctness and completeness all follow the \emph{exact} same structure. While the results for \cbn are very minor variations with respect to those in the literature \cite{DBLP:journals/mscs/Carvalho18,DBLP:journals/pacmpl/AccattoliGK18}, those for \cbv are the first ones with respect to a presentation of \cbv with sharing.

As it is standard for \cbneed, we restrict our study to closed terms and weak evaluation (that is, out of abstractions). The main consequence of this fact is that normal forms are particularly simple (sometimes called \emph{answers} in the literature). Compared with other recent works dealing with exact bounds such as Accattoli, Graham-Lengrand, and Kesner's \cite{DBLP:journals/pacmpl/AccattoliGK18} and Accattoli and Guerrieri's \cite{aplas18} the main difference is that the size of normal forms is not taken into account by type derivations. This is because of the simple notions of normal forms in the closed and weak case, and not because the type systems are not accurate.

\paragraph{Related work about \cbneed.} Call-by-need was introduced by Wadsworth \cite{Wad:SemPra:71} in the %seventies
'70s. In the %nineties
'90s, it was first reformulated as operational semantics by Launchbury \cite{DBLP:conf/popl/Launchbury93}, Maraist, Odersky, and Wadler \cite{DBLP:journals/jfp/MaraistOW98}, and Ariola and Felleisen \cite{DBLP:journals/jfp/AriolaF97}, and then implemented by Sestoft \cite{DBLP:journals/jfp/Sestoft97} and further studied by Kutzner and Schmidt{-}Schau{\ss} \cite{DBLP:conf/icfp/KutznerS98}. %For more recent work, see
More recent papers are Garcia, Lumsdaine, and
               Sabry's \cite{DBLP:conf/popl/GarciaLS09}, Ariola, Herbelin, and Saurin's \cite{DBLP:conf/tlca/AriolaHS11}, Chang and Felleisen's \cite{DBLP:conf/esop/ChangF12}, Danvy and Zerny's \cite{DBLP:conf/ppdp/DanvyZ13}, Downen et al.'s \cite{DBLP:conf/ppdp/DownenMAV14}, P{\'{e}}drot and
               Saurin's \cite{DBLP:conf/esop/PedrotS16}, and Balabonski~et~al.'s \cite{DBLP:journals/pacmpl/BalabonskiBBK17}.

\paragraph{Related work about Multi Types.}               Intersection types are a standard tool to study $\lambda$-calculi---see Coppo and Dezani \cite{DBLP:journals/aml/CoppoD78,DBLP:journals/ndjfl/CoppoD80}, Pottinger \cite{Pottinger80}, and Krivine \cite{Kri}. Non-idempotent intersection types, \ie multi types, were first considered by Gardner \cite{DBLP:conf/tacs/Gardner94}, and then by Kfoury \cite{DBLP:journals/logcom/Kfoury00}, Neergaard and Mairson \cite{DBLP:conf/icfp/NeergaardM04}, and de Carvalho \cite{DBLP:journals/mscs/Carvalho18}---a survey is Bucciarelli, Kesner, and Ventura's~\cite{BKV17}. 

Many recent works rely on multi types or relational semantics to study properties of programs and proofs. Beyond the cited ones, Diaz-Caro, Manzonetto, and Pagani's \cite{DBLP:conf/lfcs/Diaz-CaroMP13}, Carraro and Guerrieri's \cite{DBLP:conf/fossacs/CarraroG14}, Ehrhard and Guerrieri's \cite{DBLP:conf/ppdp/EhrhardG16}, and Guerrieri's \cite{Guerrieri18} deal with \cbv, while Bernadet and Lengrand's \cite{Bernadet-Lengrand2013}, de Carvalho, Pagani, and Tortora de Falco's \cite{DBLP:journals/tcs/CarvalhoPF11} provide exact bounds. Further related work is by Bucciarelli, Ehrhard, and Manzonetto 
\cite{DBLP:journals/apal/BucciarelliEM12}, de Carvalho and Tortora de Falco \cite{DBLP:journals/iandc/CarvalhoF16}, Tsukada and Ong \cite{DBLP:conf/lics/TsukadaO16}, Kesner and Vial \cite{DBLP:conf/rta/KesnerV17}, Piccolo, Paolini and Ronchi Della Rocca \cite{DBLP:journals/mscs/PaoliniPR17}, Ong \cite{DBLP:conf/lics/Ong17}, Mazza, Pellissier, and Vial \cite{DBLP:journals/pacmpl/MazzaPV18}, Bucciarelli, Kesner and Ronchi Della Rocca \cite{DBLP:journals/lmcs/BucciarelliKR18}---this list is not exhaustive.

\medskip
This is the long version (with all proofs) of a paper accepted to ESOP 2019.

% !TEX root = main.tex
\section{Closed $\lambda$-Calculi}
\label{sect:closed}

In this section we define the \cbn, \cbv, and \cbneed evaluation strategies. We present them in the context of the Accattoli and Kesner's \emph{linear substitution calculus} (LSC) \cite{DBLP:conf/rta/Accattoli12,DBLP:conf/popl/AccattoliBKL14}. We mainly follow the uniform presentation of these strategies given by Accattoli, Barenbaum, and Mazza \cite{DBLP:conf/icfp/AccattoliBM14}. The only difference is that we adopt a non-deterministic presentation of \cbv, subsuming both the left-to-right and the right-to-left strategies in \cite{DBLP:conf/icfp/AccattoliBM14}, that makes our results slightly more general. 
Such a  non-determinism is harmless: not only \cbv evaluation is confluent, it even has the diamond property, so that all evaluations %sequences 
have the same length.

\paragraph{Terms and Contexts.} The set of terms $\ltermslsc$ of the LSC is given by the
following grammar, where $\tm\esub{\var}{\tmtwo}$ is called an \emph{explicit substitution} (shortened to ES), that is 
a more compact notation for $\letexp\var\tmtwo\tm$: 
\begin{align*}
    \textsc{LSC Terms} \quad \tm,\tmtwo & \grameq \var \mid \val \mid \tm\tmtwo\mid \tm\esub{\var}{\tmtwo} & & &
    \textsc{LSC Values} \quad \val & \grameq  \la\var\tm
  \end{align*} 
%The notion of \emph{free} variable is defined as expected, in particular,
%$\fv{\tm\esub{\var}{\tmtwo}} :=
%(\fv{\tm} \setminus \set{\var}) \cup \fv{\tmtwo}$. A term is \emph{closed} if it has no free variables.
The set $\fv{\tm}$ of \emph{free} variables of a term $\tm$  is defined as expected, in particular,
$\fv{\tm\esub{\var}{\tmtwo}} :=
(\fv{\tm} \setminus \set{\var}) \cup \fv{\tmtwo}$. 
A term $\tm$ is \emph{closed} if $\fv{\tm} = \emptyset$, \emph{open} otherwise.
As usual, terms are identified up to $\alpha$-equivalence.

%\emph{(List of) substitutions} and \emph{call-by-name contexts} are given by the following grammars: 
Contexts are terms with exactly one occurrence of the \emph{hole} $\ctxhole$, an additional constant. 
We %are going to 
shall use many different contexts. 
The most general ones %will be
are \emph{weak contexts} $\wctx$ (i.e. not under abstractions).
The (evaluation) contexts $\cbnctx$, $\cbvctx$ and $\cbneedctx$---used to define \cbn, \cbv and \cbneed evaluation strategies, respectively---are special cases of weak contexts (in fact, \cbv contexts coincide with weak contexts, the consequences of that are discussed on p.~\pageref{prop:syntactic-characterization-closed-normal}).
To define evaluation strategies, \emph{substitution contexts} (\ie lists of explicit substitutions) also play a role.
\begin{align*}
\textsc{Weak contexts} & & \wctx & \grameq \ctxhole \mid \wctx \tm \mid \wctx \esub
\var\tm \mid \tm \wctx \mid \tm \esub\var\wctx\\
\textsc{Substitution contexts} & &  \sctx & \grameq \ctxhole \mid \sctx\esub{\var}{\tm} \\
\textsc{\cbn contexts} & &  \cbnctx   & \grameq  \ctxhole  \mid \cbnctx \tm \mid \cbnctx\esub{\var}{\tm} \\
\textsc{\cbv contexts} & &  \cbvctx   & \grameq  \wctx
%\ctxhole  \mid \cbvctx \tm \mid \cbvctx \esub{\var}{\tm} \mid \tm \cbvctx \mid \tm\esub{\var}{\cbvctx} 
\\
\textsc{\cbneed contexts} & &  \cbneedctx   & \grameq  \ctxhole  \mid \cbneedctx \tm \mid \cbneedctx \esub{\var}\tm \mid \cbneedctx\cwc{\var}\esub{\var}{\cbneedctxtwo}
\end{align*} 

We write %$\sctxp\tm$ (resp. $\cbnctxp\tm$) 
$\wctxp{\tm}$ for the term obtained by replacing the hole $\ctxhole$ in context %$\sctx$ (resp. $\cbnctx$) 
$\wctx$ by the term $\tm$. 
This \emph{plugging} operation, as usual with contexts, can
capture variables---for instance $(\ctxhole \tm)\esub\var\tmtwo)\ctxholep\var = (\var\tm)\esub\var\tmtwo$. 
We write $\wctx\cwc{\tm}$ when we want to stress that the context $\wctx$ does not capture the free variables of $\tm$. 

\paragraph{Micro-step semantics.}
The rewriting rules decompose the usual small-step semantics for $\lambda$-calculi, by substituting one variable occurrence at the time, and only
when such an occurrence is in evaluation position. 
We emphasise this fact saying that we adopt a \emph{micro-step semantics}. We now give the definitions, examples of evaluation sequences follow right next.

Formally, a micro-step semantics is defined by first giving its \emph{root-steps} and then taking the closure of root-steps under suitable contexts.
\begin{align*}
\textsc{Multiplicative root-step} &&
\sctxp{\la \var \tm} \tmtwo &\rtom \sctxp{\tm \esub{\var}{\tmtwo}}
\\
\textsc{Exponential \cbn root-step} && %for \cbn} &&
\cbnctx\cwc{\var}\esub{\var}{\tm} &\rtoecbn  \cbnctx\cwc{\tm}\esub{\var}{\tm} 
\\
\textsc{Exponential \cbv root-step} && %for \cbv} &&
\cbvctx\cwc{\var}\esub{\var}{\sctxp{\val}} & \rtoecbv \sctxp{\cbvctx\cwc{\val}\esub{\var}{\val}}
\\
\textsc{Exponential \cbneed root-step} && %for \cbneed} &&
\cbneedctx\cwc{\var}\esub{\var}{\sctxp\val} &\rtoecbneed \sctxp{\cbneedctx\cwc{\val}\esub{\var}{\val}}
\end{align*}
%}
\noindent where, in the root-step $\rtom$ (resp. $\rtoecbv$; $\rtoecbneed$), if $\sctx \defeq \esub{\vartwo_1}{\tmtwo_1} \dots \esub{\vartwo_n}{\tmtwo_n}$ for some $n \in \nat$, then 
%$\fv{\tmfour} \cap \{\vartwo_1, \dots, \vartwo_n\} = \emptyset$ (resp. $\fv{\cbvctx\cwc{\var}} \cap \{\vartwo_1, \dots, \vartwo_n\} = \emptyset$; $\fv{\cbneedctx\cwc{\var}}  \cap \{\vartwo_1, \dots, \vartwo_n\} = \emptyset$).
$\fv{\tmtwo}$ (resp.~$\fv{\cbvctx\cwc{\var}}$; $\fv{\cbneedctx\cwc{\var}}$) and $\{\vartwo_1, \dots, \vartwo_n\}$ are disjoint.
This condition can always be fulfilled by~$\alpha$-equivalence.

The \emph{evaluation strategies} $\tocbn$ for \cbn, $\tocbv$ for \cbv, and $\tocbneed$ for \cbneed, are defined as the closure of root-steps under \cbn, \cbv and \cbneed evaluation contexts, respectively (so, all evaluation strategies do not reduce under abstractions, since all such contexts are weak):
%$$\begin{array}{rcl|rcl|rcl}
%&\cbn &&& \cbv &&& \cbneed \\
%\tomcbn &\defeq & \cbnctxp{\rtom} &  \tomcbv & \defeq & \cbvctxp{\rtom} & \tomcbneed & \defeq & \cbneedctxp{\rtom} \\	
% \toecbn & \defeq & \cbnctxp{\rtoecbn} & \toecbv & \defeq & \cbvctxp{\rtoecbv} & \toecbneed & \defeq &\cbneedctxp{\rtoecbneed} \\	
% \tocbn & \defeq & \cbnctxp{\rtom \cup \rtoecbn} & \tocbv & \defeq & \cbvctxp{\rtom \cup \rtoecbv} & \tocbneed & \defeq &\cbneedctxp{\rtom \cup \rtoecbneed} \\
%\end{array}$$
$$\begin{array}{c|c|c}
\cbn & \cbv & \cbneed \\
\begin{aligned}
\tomcbn &\!\defeq  \cbnctxp{\rtom} \\
\toecbn &\!\defeq  \cbnctxp{\rtoecbn} \\
\tocbn &\!\defeq  \cbnctxp{\rtom \!\cup \rtoecbn}
\end{aligned} 
&  
\begin{aligned}
\tomcbv &\!\defeq \cbvctxp{\rtom} \\
\toecbv &\!\defeq \cbvctxp{\rtoecbv} \\
\tocbv &\!\defeq \cbvctxp{\rtom \!\cup \rtoecbv}
\end{aligned} 
& 
\begin{aligned}
\tomcbneed &\!\defeq \cbneedctxp{\rtom} \\
\toecbneed &\!\defeq \cbneedctxp{\rtoecbneed} \\
\tocbneed &\!\defeq \cbneedctxp{\rtom \!\cup \rtoecbneed}
\end{aligned}
\end{array}$$
where the notation $\to \defeq \wctxp\mapsto$ means that, given a root-step $\mapsto$, the evaluation $\to$ is defined as follows: $\tm \to \tmtwo$ if and only if there are terms $\tm'$ and $\tmtwo'$ and a context $\wctx$ such that $\tm = \wctxp{\tm'}$ and $\tmtwo = \wctxp{\tmtwo'}$ and $\tm' \mapsto \tmtwo'$.

Note that evaluations $\tocbn$, $\tocbv$ and $\tocbneed$ can equivalently be defined as $\tomcbn \cup \toecbn$, $\tomcbn \cup \toecbv$ and $\tomcbneed \cup \toecbneed$, respectively.

Given an evaluation sequence $\deriv \colon \tm \tocbnn \tmtwo$ we note with $\size\deriv$ the length of $\deriv$, and with $\sizem\deriv$ and $\sizee\deriv$ the number of multiplicative and exponential steps in $\deriv$, respectively---and similarly for $\tocbv$ and $\tocbneed$.

\paragraph{Erasing Steps.} The reader may be surprised by our evaluation strategies, as none of them includes erasing steps, despite the absolute relevance of erasures pointed out in the introduction. There are no contradictions: in the LSC---in contrast to the $\lambda$-calculus---erasing steps can always be postponed, and so they are often simply omitted. This is actually close to programming language practice, as the garbage collector acts asynchronously with respect to the evaluation flow. For the sake of clarity let us spell out the erasing rules---they shall nonetheless be ignored in the rest of the paper. In \cbn and \cbneed every term is erasable, so the root erasing step takes the following form
\begin{align*}
\tm \esub\var\tmtwo \rtogc \tm \qquad &\textup{if $\var\notin\fv\tm$}
%\end{align*}
\intertext{and it is then closed by %\cbn and \cbneed evaluation contexts, respectively. 
weak evaluation contexts. \newline
\indent In \cbv only values are erasable; so, the root erasing step in \cbv is:}
%\begin{align*}
\tm \esub\var{\sctxp\val} \rtogc \sctxp\tm \qquad &\textup{if $\var\notin\fv\tm$}
\end{align*}
and it is then closed by %\cbv 
weak evaluation contexts.

\begin{example}
\label{ex:evaluation}
A good example to observe the differences between \cbn, \cbv, and \cbneed is given by the term $\tm \defeq ((\la\var\la\vartwo\var\var) (I I)) (I I)$ where $I \defeq \la \varthree \varthree$ is the identity combinator. In \cbn, it evaluates with 5 multiplicative steps and 5 exponential steps, as follows:

\[{\small
	\begin{alignedat}{2}
  \tm 
  & \tomcbn  (\la\vartwo\var\var) \esub\var{II} (II) &
  & \tomcbn  (\var\var) \esub\vartwo{II}\esub\var{II} \\
  & \toecbn  ((II)\var) \esub\vartwo{II}\esub\var{II} & 
  & \tomcbn  (\varthree\esub\varthree I \var) \esub\vartwo{II}\esub\var{II} \\
  & \toecbn  (I\esub\varthree I \var) \esub\vartwo{II}\esub\var{II} &
  & \tomcbn  \varfour \esub\varfour\var \esub\varthree I  \esub\vartwo{II}\esub\var{II}  \\
  & \toecbn  \var \esub\varfour\var \esub\varthree I  \esub\vartwo{II}\esub\var{II}  &
  & \toecbn (II) \esub\varfour\var \esub\varthree I \esub\vartwo{II}\esub\var{II}  \\
  & \tomcbn  \var' \esub{\var'} I \esub\varfour\var \esub\varthree I \esub\vartwo{II}\esub\var{II} &
  & \toecbn  I \esub{\var'} I \esub\varfour\var \esub\varthree I \esub\vartwo{II}\esub\var{II}
  \end{alignedat}
}\]

In \cbv, $\tm$ evaluates with 5 multiplicative steps and 5 exponential steps, for instance from right to left, as follows:

\[{\small\begin{alignedat}{2}
  \tm 
  & \tomcbv  (\la\var\la\vartwo\var\var) (II) (\varthree \esub\varthree I) &
  & \toecbv (\la\var\la\vartwo\var\var) (II) (I \esub\varthree I) \\
  & \tomcbv (\la\var\la\vartwo\var\var) (\varfour \esub\varfour I) (I \esub\varthree I) &
  & \toecbv (\la\var\la\vartwo\var\var) (I \esub\varfour I) (I \esub\varthree I) \\
  & \tomcbv (\la\vartwo\var\var) \esub\var{I \esub\varfour I} (I \esub\varthree I)  &
  & \tomcbv (\var\var) \esub\vartwo {I \esub\varthree I} \esub\var{I \esub\varfour I}\\
  & \toecbv (\var I) \esub{\vartwo}{I \esub{\varthree}{I}} \esub{\var}{I} \esub{\varfour}{I} &
  & \toecbv (II) \esub{\vartwo} {I \esub\varthree I} \esub\var{I} \esub{\varfour}{I}\\
  & \tomcbv \var'\esub{\var'}{I} \esub{\vartwo}{I \esub{\varthree}{I}}  \esub{\var}{I} \esub{\varfour}{I} &
  & \toecbv I\esub{\var'}I \esub{\vartwo}{I \esub{\varthree}{I}} \esub\var{I} \esub{\varfour}{I}
  \end{alignedat}
}\]

Note that the fact that \cbn and \cbv take the same number of steps is by chance, as they reduce different redexes: \cbn never reduce the unneeded redex $II$ associated to $y$, but it reduces twice the needed $II$ redex associated to $x$, while \cbv reduces both, but each one only once.

In \cbneed, $\tm$ evaluates in 4 multiplicative steps and 4 exponential steps.

\[{\small\begin{alignedat}{2}
  \tm 
  & \tomcbneed   (\la\vartwo\var\var) \esub\var{II} (II) &
  & \tomcbneed  (\var\var) \esub\vartwo{II}\esub\var{II} \\
  & \tomcbneed   (\var\var) \esub\vartwo{II}\esub\var{\varthree\esub\varthree I}  &
  & \toecbneed   (\var\var) \esub\vartwo{II}\esub\var{I\esub\varthree I} \\
  & \toecbneed   (I\var) \esub{\vartwo}{II} \esub{\var}{I} \esub{\varthree}{I} &
  & \tomcbneed   (\varfour \esub\varfour \var) \esub{\vartwo}{II} \esub{\var}{I} \esub{\varthree}{I}\\
  & \toecbneed  \varfour\esub\varfour I \esub{\vartwo}{II} \esub{\var}{I} \esub{\varthree}{I} &
  & \toecbneed  I\esub\varfour I \esub{\vartwo}{II}\esub{\var}{I} \esub{\varthree}{I}
  \end{alignedat}
}\]

\end{example}

\paragraph{\cbv Diamond Property.} \cbv contexts coincide with  weak ones.
 As a consequence, our presentation of \cbv is non-deterministic, as for instance one can have
$$ \var\esub \var I (\vartwo \esub\vartwo I) \lRew{\mcbv} (II) (\vartwo \esub\vartwo I) \toecbv (II) (I \esub\vartwo I)$$
but it is easily seen that diagrams can be closed in exactly one step (if the two reducts are different). For instance,
$$ \var\esub \var I (\vartwo \esub\vartwo I) \toecbv \var\esub \var I (I \esub\vartwo I) \lRew{\mcbv} (II) (I \esub\vartwo I)$$
Moreover, the kind of steps is preserved, as the example illustrates. This is an instance of the strong form of confluence called \emph{diamond property}. A consequence is that either all evaluation sequences normalise or all diverge, and if they normalise they have all the same length and the same number of steps of each kind. Roughly, the diamond property is a form of relaxed determinism. In particular, it makes sense to talk about the number of multiplicative\,/\,exponential steps to normal form, independently of the evaluation sequence. The proof of the property is an omitted routine check of diagrams.

\paragraph{Normal Forms.} We use two predicates to characterise normal forms, one for both \cbn and \cbneed normal forms, for which ES can contain whatever term, and one for \cbv normal forms, where ES can only contain normal terms: 
\begin{align*}
\infer{\normalpr{\la\var\tm}}{\phantom{\normalpr{\la{\var}{\tm}}}}
&&
\infer{ \normalpr{ \tm \esub\var\tmtwo } }{ \normalpr\tm}  
&&&&
\infer{\normalcbvpr{\la\var\tm}}{\phantom{\normalpr{\la{\var}{\tm}}}}
&&
\infer{ \normalcbvpr{ \tm \esub\var\tmtwo } }{ \normalcbvpr\tm \quad \normalcbvpr{\tmtwo} } 
\end{align*}

\begin{toappendix}
\begin{proposition}[Syntactic characterisation of closed normal forms]
	\label{prop:syntactic-characterization-closed-normal}
	$\!$Let $\tm$ be a closed term.
	\begin{enumerate}
		\item \emph{\cbn and \cbneed}: For $\Rule \in \{\cbnsym, \cbneedsym \}$, $\tm$ is $\Rule$-normal if and only if $\normalpr{\tm}$.
		\item \emph{\cbv}: $\tm$ is $\cbvsym$-normal if and only if $\normalcbvpr{\tm}$.
	\end{enumerate} 
\end{proposition}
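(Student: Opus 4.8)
The plan is to prove both equivalences, for all three strategies uniformly, through a single \emph{trichotomy} about arbitrary—possibly open—terms, specialising to closed terms only at the very end. Working with open terms is unavoidable: even when $\tm = \tmtwo\esub{\var}{\tmthree}$ is closed, its body $\tmtwo$ is open (it may contain the bound $\var$), so a naive structural induction on closed terms breaks exactly at the explicit substitution case. The \cbn and \cbneed parts are treated together since they share the predicate $\normalpr{}$.

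First I would record the evident shape characterisation, by a one-line induction on the predicate: $\normalpr{\tm}$ holds iff $\tm = \sctxp{\val}$ for some substitution context $\sctx$ and value $\val$, and $\normalcbvpr{\tm}$ holds iff moreover every term occurring in $\sctx$ is itself $\normalcbvpr{}$. Then, as an auxiliary lemma, I would prove that \emph{a predicate\-/normal term never exposes a demanded variable}: if $\normalpr{\tm}$ then $\tm \neq \cbnctx\cwc{\var}$ and $\tm \neq \cbneedctx\cwc{\var}$, and if $\normalcbvpr{\tm}$ then $\tm \neq \cbvctx\cwc{\var}$, for every variable $\var$ and every relevant evaluation context. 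This is immediate from the shape characterisation, since any evaluation context threaded through $\sctxp{\val}$ reaches the head value $\val$: for \cbn/\cbneed the head is the only evaluation position (an explicit substitution content is entered only when its variable is demanded, which never happens here because the head is $\val$, not a variable), and for \cbv one also enters the contents, where the same fact holds by the inner induction hypothesis.

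The core is the trichotomy, proved by structural induction on $\tm$: every term satisfies \emph{at least one} of (a) the relevant predicate; (b) $\tm = \cbnctx\cwc{\var}$ (resp.\ $\cbneedctx\cwc{\var}$, $\cbvctx\cwc{\var}$) for some \emph{free} variable $\var$ of $\tm$; or (c) $\tm$ reduces. The variable and value cases give (b) and (a) directly. For an application $\tmtwo\tmthree$ I recur on $\tmtwo$: a reduction or a demanded variable of $\tmtwo$ propagates through the context obtained by appending $\tmthree$, while $\tmtwo = \sctxp{\val}$ makes $\sctxp{\val}\,\tmthree$ a $\rtom$\-/redex, giving (c). For $\tmtwo\esub{\var}{\tmthree}$ I recur on $\tmtwo$ as well: a reduction or a demanded variable $\vartwo \neq \var$ propagates through $\esub{\var}{\tmthree}$; a demanded $\vartwo = \var$ yields the exponential redex ($\rtoecbn$ at once; for \cbv and \cbneed after a parallel inspection of $\tmthree$, which produces $\rtoecbv$/$\rtoecbneed$ when $\tmthree$ is of the form $\sctxp{\val}$ and otherwise lets evaluation descend into $\tmthree$); and when $\tmtwo$ is predicate\-/normal the auxiliary lemma guarantees $\var$ is not demanded, so no exponential redex is hidden and I conclude (a)—for \cbv this additionally needs $\tmthree$ to be $\normalcbvpr{}$, supplied by the recursion on $\tmthree$ whose remaining subcases feed back into (b) and (c).

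Finally I specialise to a closed $\tm$: clause (b) is vacuous, so the trichotomy collapses to ``$\tm$ is predicate\-/normal or $\tm$ reduces''; together with the converse ``predicate\-/normal $\Rightarrow$ irreducible''—which is the same shape analysis as the auxiliary lemma, checking that no root\-/step fires on $\sctxp{\val}$ in any evaluation context—this gives the two equivalences. I expect the main obstacle to be precisely the explicit\-/substitution case of the trichotomy for \cbv and \cbneed, where a demanded bound $\var$ forces the nested analysis of the content $\tmthree$: for \cbv one must reconcile that the exponential redex fires exactly when $\tmthree$ has shape $\sctxp{\val}$ with the fact that a \cbv\-/normal substitution also requires normal content, which is why the auxiliary ``no demanded variable'' lemma is essential to stop a normal body with a demanded variable from silently concealing a redex.
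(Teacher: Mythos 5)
Your proof is correct, but it takes a genuinely different route from the paper's. The paper proves the two implications separately: the forward direction (closed and normal implies the predicate) is a structural induction on the \emph{closed} term, arguing by contradiction in the application and ES cases by exhibiting a redex; since the body $\tmtwo$ of $\tmtwo\esub{\var}{\tmthree}$ need not be closed, the induction hypothesis does not apply to it directly, and the paper instead performs a four-way case split ($\tmtwo$ closed; $\tmtwo = \sctxp{\la{\vartwo}{\tmfour}}$; $\tmtwo$ with a demanded variable; $\tmtwo$ with an application along its spine) justified by the unique-decomposition remark (\refrem{writing}); the backward direction is a separate induction on the predicate, generalised to open terms. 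You instead prove one progress-style trichotomy over \emph{arbitrary} terms---normal shape, or demanded free variable, or reducible---which is precisely the strengthening of the induction hypothesis that lets the ES case go through uniformly, specialising to closed terms only at the end. The redexes constructed are the same in both proofs; what differs is the induction management. Your route buys uniformity across the three strategies, a reusable open-term progress statement, and it isolates and actually proves the auxiliary fact that a predicate-normal term demands no variable ($\normalpr{\tm}$ implies $\tm \neq \cbnctx\cwc{\var}$, and similarly for \cbneed and \cbv contexts), which the paper's backward direction uses but merely asserts (``Moreover, $\tm \neq \cbnctxp{\var}$ for any \cbn context $\cbnctx$''). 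The paper's route stays closer to the closed statement, never formulating an open-term generalisation, at the price of contradiction arguments and the four-way subcase analysis---which is, in essence, your trichotomy localised to the ES case.
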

\end{toappendix}

The simple structure of normal forms is the main point where the restriction to closed calculi plays a role in this paper.

From the syntactic characterization of normal forms (\refprop{syntactic-characterization-closed-normal}) it follows immediately that among closed terms, \cbn and \cbneed normal forms coincide, while \cbv normal forms are a subset of them.
Such a subset is proper since the closed term $I \esub{\var}{\delta\delta}$ (where $I \defeq \la{\varthree}\varthree$ and $\delta \defeq \la{\vartwo}{\vartwo\vartwo}$) is \cbn normal but not \cbv normal (and it cannot normalise in \cbv).

% !TEX root = main.tex
\section{Preliminaries About Multi Types}
\label{sect:prelim}
In this section we define basic notions about multi types, type contexts, and (type) judgements that are shared by the three typing systems of the paper.

\paragraph{Multi-Sets.} The type systems are based on two layers of types, defined in a mutually recursive way, \emph{linear types} $\type$ and finite \emph{multi-sets} $\mtype$ of linear types. The intuition is that a linear type $\type$ corresponds to a single use of a term, and
that an argument $\tm$ is typed with a multi-set $\mtype$ of $n$ linear types if it is going to end up (at most)
$n$ times in evaluation position, with respect to the strategy associated with the type system. 
The three systems differ on the definition of linear types, that is therefore not specified here, while all adopt the same notion of finite multi-set $\mtype$ of linear types (named \emph{multi type}), that we now introduce:
  \begin{align*}
%    \textsc{Finite multi-sets of linear types}
		\textsc{Multi types}
    && \mtype, \mtypetwo  & \grameq  \mult{\type_i}_{i \in J} \ \ (J \mbox{ a finite set}) %\\
  \end{align*}
where $\MSigma{\ldots}{}$ denotes the multi-set constructor. The empty multi-set $[\,]$ (the multi type obtained for $J = \emptyset$) is called \emph{empty (multi) type} and denoted by the special symbol $\emptymset$. An example of multi-set is $\MSigma{\type, \type, \typetwo}{}$, that contains two occurrences of $\type$ and one occurrence of $\typetwo$. Multi-set union is noted $\mplus$.

\paragraph{Type Contexts.} A \emph{type context} $\typctx$ is a map
  from variables to %finite multi-sets $\mtype$ of linear types
  multi types
  such that only finitely many variables are not mapped to $\emptytype$.
  The \emph{domain} of $\typctx$ is the set $\dom{\typctx} \defeq \set{x \mid \typctx(\var) \neq \emptytype}$.
  The type context $\typctx$ is \emph{empty} if $\dom{\typctx} = \emptyset$.  
  
  \emph{Multi-set union} $\mplus$ is extended to type contexts point-wise,
  \ie\  $\typctx \mplus \typctxtwo$ 
  maps  each variable $\var$ to $\typctx(\var) \mplus \typctxtwo(\var)$.
  This notion is extended to several contexts as expected, so that
  $\bigmplus_{i \in J} \typctx_i$ denotes a finite union of contexts---when $J = \emptyset$ the notation is to be understood as the empty context.
  We write $\typctx; \var \col \mtype$ for $\typctx\mplus (\var \mapsto \mtype)$
  only if  $\var \notin \dom{\typctx}$. 
  More generally,  we write $\typctx; \typctxtwo$ if the intersection between
  the domains of $\typctx$ and $\typctxtwo$ is empty.
  
  The \emph{restricted} context $\typctx$ with respect to the variable $\var$,
  written $\typctx \sm \var$ is defined by 
  $(\typctx \sm \var)(\var) \defeq \emptytype$ and
   $(\typctx \sm \var)(\vartwo) \defeq \typctx(\vartwo)$ if $\vartwo \neq \var$.
 
  \paragraph{Judgements.} \emph{Type judgements} are of the form
$\Deri[(\msteps, \esteps)] {\typctx}{\tm}{\type}$ or $\Deri[(\msteps, \esteps)] {\typctx}{\tm}{\mtype}$,
where the \emph{indices} $\msteps$ and $\esteps$ are natural numbers 
whose intended meaning is that $\tm$ evaluates to normal form in $\msteps$ multiplicative steps and $\esteps$ exponential steps, with respect to the evaluation strategy associated with the type system. 

To make clear in which type systems the judgement is derived, we write
$\tderiv \exder[\cbnup] \Deri[(\msteps, \esteps)] {\typctx}{\tm}{\type}$
if $\tderiv$ is a derivation in the \cbn system ending
in the judgement
$\Deri[(\msteps, \esteps)] { \typctx}{\tm}{ \type}$, and similarly for \cbv and \cbneed.

\section{Types by Name}
\label{sect:cbn}
In this section we introduce the \cbn multi type system, together with intuitions about multi types. We also prove that derivations provide exact bounds on \cbn evaluation sequences, and define the induced denotational model.

\begin{figure}[t]
  \centering
  \ovalbox{
    $\begin{array}{cccc}
      \infer[\ax]{\Deri[(0, 1)] {\var \col \single \type} \var \type}{} 
      &
      \infer[\normal]{
        \Deri[(0,0)] {} { \la\var\tm } \normal
      }{}
      \\\\
      \infer[\!\fun]{
        \Deri[(\msteps, \esteps)]{\typctx \sm \var} {\la\var\tm} {\ty{\typctx(\var)}{\type}}
      }{\Deri[(\msteps, \esteps)] {\typctx } \tm \type} 
      &
      
      \infer[\!\many]{
        \Deri[(\sum_{i \in J\!} \msteps_i, \sum_{i \in J\!} \esteps_i )] {
          \bigmplus_{i \in J}\typctxtwo_i  } \tm {\MSigma {\type_i} {i \in J}
        }
      }{
        (\Deri[(\msteps_i, \esteps_i)] {\typctxtwo_i} \tm {\type_i})_{i \in J}
      }
        \\\\

      \infer[\!\app]{
        \Deri[(\msteps + \msteps' + 1, \esteps + \esteps')] {\typctx \ctxplus \typctxtwo}
             {\tm \tmtwo} \type
      }{
        \Deri[(\msteps, \esteps)] \typctx \tm {\ty{\mtype}{\type}}
        \quad
        \Deri[(\msteps', \esteps')] {\typctxtwo} \tmtwo {\mtype}
      }
      &
      
      \infer[\!\esrule]{
        \Deri[(\msteps + \msteps', \esteps + \esteps')] {\typctx \ctxplus \typctxtwo}
             {\tm \esub\var\tmtwo} \type
      }{
        \Deri[(\msteps, \esteps)] {\typctx, \var \col \mtype} \tm {\type}
        \quad
        \Deri[(\msteps', \esteps')] {\typctxtwo} \tmtwo {\mtype}
      } 
      
    \end{array}$

  }
  \caption{Type system for \cbn evaluation}
  \label{fig:type-system-cbn}
\end{figure}

\paragraph{\cbn Types.} The system is essentially a reformulation of de Carvalho's system $\tt R$ \cite{DBLP:journals/mscs/Carvalho18}, itself being a type-based presentation of the relational model of the \cbn $\lambda$-calculus induced by relational model of linear logic via the \cbn translation of $\lambda$-calculus into linear logic. Definitions:
\begin{itemize}
  \item \cbn \emph{linear types} are given by the following grammar:
  \begin{align*}
    \textsc{\cbn linear types} && \type, \typetwo  & \grameq   \normal \mid  \ty{\mtype}{\type}
  \end{align*}
 Multi(-sets) types are defined as in \refsect{prelim}, relatively to \cbn linear types. 
 Note the linear constant $\normal$ (used to type abstractions, which are normal terms): it plays a crucial role in our quantitative analysis of \cbn evaluation.
  
  \item The \cbn \emph{typing rules} are in \reffig{type-system-cbn}. 
  
  \item The \emph{$\many$ rule}: it has as many premises as the elements in the (possibly empty) set of indices $J$. %, which is allowed to be empty. 
  %In such a case
  When $J = \emptyset$, the rule has no premises, and it types $\tm$ with the empty multi type $\zero$. The $\many$ rule is needed to derive the right premises of the rules $\app$ and $\esrule$, that have a multi type $\mtype$ on their right-hand side. Essentially, it corresponds to the promotion rule of linear logic, that, in the \cbn representation of the $\lambda$-calculus, is indeed used for typing the right subterm of applications and the content of explicit substitutions.

\item The \emph{size} of a  derivation $\tderiv \exder[\cbnup] \Deri[(\msteps, \esteps)] {\typctx}{\tm}{\type}$ is the sum $\msteps + \esteps$ of the indices.
%\item 
A quick look to the typing rules shows that indices on typing judgements are not %really 
needed, as $\msteps$ can be recovered as the number of $\app$ rules, and $\esteps$ as the number of $\ax$ rules. It is however handy to note them~explicitly. 
\end{itemize}

\paragraph{Subtleties and easy facts.} Let us overview some facts about our presentation of the type system.

  \begin{enumerate} 
\item \emph{Introduction and destruction of multi-sets}: multi-set are introduced on the right by the $\many$ rule and on the left by $\ax$. Moreover, on the left they are summed by $\app$ and $\esrule$.

\item \emph{Vacuous abstractions}:  we rely on the convention that the abstraction rule $\fun$ can always abstract a variable $\var$ not explicitly occurring in the context.
Indeed, if $\var \notin \dom{\typctx}$, then
$\typctx \sm \var$ is equal to $\typctx$ since $\typctx(\var) = \emptytype$.

\item \emph{Relevance}: %Axioms are linear---that is, no weakening is allowed.
    No weakening is allowed in axioms. An easy induction on type derivations shows that 
\begin{lemma}[Type contexts and variable occurrences for \cbn]
\label{l:name-typctx-varocc}Let $\tderiv \exder[\cbnup] \Deri[(\msteps, \esteps)] \typctx \tm \type$ be a  derivation. If $\var \not \in \fv\tm$ then $\var \notin \dom\typctx$.
\end{lemma}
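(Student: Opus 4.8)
The plan is to proceed by induction on the type derivation $\tderiv$, establishing the equivalent inclusion $\dom\typctx \subseteq \fv\tm$ (whose contrapositive is exactly the statement of the lemma, and which is the form that propagates cleanly through the rules). The two base cases are immediate. In rule $\ax$ the subject is a variable $\vartwo$ and the context is $\vartwo\col\single\type$, so $\dom\typctx = \set\vartwo = \fv\vartwo$. In rule $\normal$ the context is empty, hence $\dom\typctx = \emptyset \subseteq \fv{\la\vartwo\tm}$.

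For the non-binding inductive rules the argument is routine, because both the free-variable set and the context domain decompose as unions along the premises. In rule $\many$ the subject $\tm$ is shared by all premises and the conclusion context is $\bigmplus_{i \in J}\typctxtwo_i$; since $\dom{\bigmplus_{i \in J}\typctxtwo_i} = \bigcup_{i \in J}\dom{\typctxtwo_i}$ and each $\dom{\typctxtwo_i} \subseteq \fv\tm$ by the induction hypothesis, we conclude (the case $J = \emptyset$ gives the empty context, for which the inclusion is vacuous). In rule $\app$ we have $\fv{\tm\tmtwo} = \fv\tm \cup \fv\tmtwo$ and the conclusion context $\typctx \ctxplus \typctxtwo$; applying the induction hypothesis to the two premises gives $\dom\typctx \subseteq \fv\tm$ and $\dom\typctxtwo \subseteq \fv\tmtwo$, whence $\dom{\typctx \ctxplus \typctxtwo} \subseteq \fv{\tm\tmtwo}$.

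The only delicate cases, and the main (if mild) obstacle, are the two binding rules: there the conclusion drops the bound variable from its free-variable set while the premise context may still mention it, so a case analysis on whether the variable under consideration coincides with the bound one is needed. In rule $\fun$ abstracting a variable $\vartwo$, the conclusion is $\la\vartwo\tm$ with context $\typctx \sm \vartwo$ and $\fv{\la\vartwo\tm} = \fv\tm \setminus \set\vartwo$; take $\var \notin \fv{\la\vartwo\tm}$. If $\var = \vartwo$ then $\var \notin \dom{\typctx \sm \vartwo}$ directly, since $(\typctx \sm \vartwo)(\vartwo) = \emptytype$ by definition of restriction. If $\var \neq \vartwo$ then $\var \notin \fv\tm$, and the induction hypothesis on the premise gives $\var \notin \dom\typctx \supseteq \dom{\typctx \sm \vartwo}$. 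Rule $\esrule$ is handled in the same way on its binder: for the conclusion $\tm\esub\vartwo\tmtwo$ we have $\fv{\tm\esub\vartwo\tmtwo} = (\fv\tm \setminus \set\vartwo) \cup \fv\tmtwo$, conclusion context $\typctx \ctxplus \typctxtwo$, and left premise context $\typctx, \vartwo\col\mtype$ (equivalently $\typctx; \vartwo\col\mtype$, whose well-formedness convention forces $\vartwo \notin \dom\typctx$). Given $\var \notin \fv{\tm\esub\vartwo\tmtwo}$, the induction hypothesis on the right premise yields $\var \notin \dom\typctxtwo$; for the left one, if $\var = \vartwo$ then $\var \notin \dom\typctx$ precisely by that freshness convention, while if $\var \neq \vartwo$ then $\var \notin \fv\tm$ and the induction hypothesis gives $\var \notin \dom{\typctx, \vartwo\col\mtype} \supseteq \dom\typctx$. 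In either case $\var \notin \dom\typctx$, so $\var \notin \dom{\typctx \ctxplus \typctxtwo}$, which closes the induction.
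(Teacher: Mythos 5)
Your proof is correct and follows exactly the route the paper takes: the paper dispatches this lemma as ``an easy induction on type derivations,'' and your argument is precisely that induction, carried out in full (with the right reformulation as $\dom\typctx \subseteq \fv\tm$ and the correct case analysis on the binders in the $\fun$ and $\esrule$ rules).
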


\reflemma{name-typctx-varocc} implies %in particular 
that  derivations of closed terms have empty type context.
Note that there can be free variables of $\tm$ not %appearing in $\typctx$
%free variables of $\tm$ need not be 
in $\dom{\typctx}$: %, which are those 
the ones only occurring in subterms not touched by the evaluation strategy.    
%Remark that 
  \end{enumerate}

\paragraph{Key Ingredients.} Two key points of the \cbn system that play a role in the design of the \cbneed one in \refsect{cbneed} are:
\begin{enumerate}  
  \item \emph{Erasable terms and $\zero$}: the empty multi type $\zero$ is the type of erasable terms. Indeed, abstractions that erase their argument---whose paradigmatic example is $\la\var\vartwo$---can only be typed with $\ty\zero\type$, because of $\reflemma{name-typctx-varocc}$. Note that in \cbn every term---even diverging ones---can be typed with $\zero$ by rule $\many$ (taking 0 premises), because, correctly, in \cbn every term can be erased.
  \item \emph{Adequacy and linear types}: all \cbn typing rules but 
 $\many$ assign linear types. 
 And $\many$ is used only as right premise of the rules $\app$ and $\esrule$, to derive $\mtype$. It is with respect to linear types, in fact, that the adequacy of the system is going to be proved: a term is \cbn normalising if and only if it is typable with a linear type, given by \refthm{name-correctness} and \refthm{name-completeness} below.
\end{enumerate}

\paragraph{Tight derivations.} A %given 
term %$\tm$
may have %many different
several  derivations, indexed by different pairs $(\msteps, \esteps)$.
They always provide upper bounds on \cbn evaluation lengths. The interesting aspect of our type systems, however, is that there is a simple description of a class of  derivations that provide \emph{exact} bounds for these quantities, as we shall show. Their definition relies on the $\normal$ type constant.

\begin{definition}[\Precise derivations for \cbn]
	\label{def:tightderiv}\strut
  %% A derivation
 %% $\tderiv \exder \tyjn{(\msteps, \esteps)}{\tm}{\typctx}{\typetwo}$
  %% is \emph{minimal} if for all 
  %% $\tderiv' \exder \tyjn{(\msteps', \esteps')}{\tm}{\typctx'}{\typetwo'}$
  %% we have $\size\tderiv\leq\size{\tderiv'}$.
A  derivation $\tderiv \exder[\cbnup\!] \Deri[(\msteps, \esteps)]{\typctx}{\tm\!}{\!\type}$
  is \emph{\precise} % (resp.~\emph{semi-communicative})
  if $\type = \normal$ and $\typctx$ is empty. % (resp.~$\TComm \typctx$)
\end{definition}

\begin{example}
\label{ex:cbn-type-deriv}
Let us return to the term $\tm \defeq ((\la{\var}{\la{\vartwo}{\var \var}})(II))(II)$ used in \refex{evaluation} for explaining the difference in reduction lengths among the different strategies. We now give a derivation for it in the $\cbn$ type system. 

First, let us shorten  $\normal$ to $\shortnormal$. Then, we define $\tderiv$ as the following derivation for the subterm $\la{\var}{\la{\vartwo}{\var \var}}$ of $\tm$:
$$
\infer
	[\fun]
	{\tyjp{(1,2)}{\la{\var}{\la{\vartwo}{\var \var}}}{}{\ty{\mult{\shortnormal, \ty{\mult{\shortnormal}}{\shortnormal}}}{(\ty{\zero}{\shortnormal}})}}
	{\infer
		[\fun]
		{\tyjp{(1,2)}{\la{\vartwo}{\var \var}}{\var : \mult{\shortnormal, \ty{\mult{\shortnormal}}{\shortnormal}}}{\ty{\zero}{\shortnormal}}}
		{\infer
			[\app]
			{\tyjp{(1,2)}{\var \var}{\var \hastype \mult{\shortnormal, \ty{\mult{\shortnormal}}{\shortnormal}}}{\shortnormal}}
			{\infer
				[\ax]
				{\tyjp{(0,1)}{\var}{\var \hastype \mult{\ty{\mult{\shortnormal}}{\shortnormal}}}{\ty{\mult{\shortnormal}}{\shortnormal}}}
				{}
			\quad
			\infer
				[\many]
				{\tyjp{(0,1)}{\var}{\var \hastype \mult{\shortnormal}}{\mult{\shortnormal}}}
				{\infer
					[\ax]
					{\tyjp{(0,1)}{\var}{\var \hastype \mult{\shortnormal}}{\shortnormal}}
					{}
				}
			}
		}
	}
$$
Now, we need two derivations for $I I$, one of type $\shortnormal$, given by $\tderivtwo$ as follows
$$
\infer
	[\app]
	{\tyjp{(1,1)}{II}{}{\shortnormal}}
	{\infer
		[\fun]
		{\tyjp{(0,1)}{\la{\varthree}{\varthree}}{}{\ty{\mult{\shortnormal}}{\shortnormal}}}
		{\infer
			[\ax]
			{\tyjp{(0,1)}{\varthree}{\varthree : \mult{\shortnormal}}{\shortnormal}}
			{}
		}
	\quad
	\infer
		[\many]
		{\tyjp{(0,0)}{\la{\varfour}{\varfour}}{}{\mult{\shortnormal}}}
		{\infer
			[\normal]
			{\tyjp{(0,0)}{\la{\varfour}{\varfour}}{}{\shortnormal}}
			{}
		}
	}
$$
%with $II = (\la{\varthree}{\varthree})(\la{\varfour}{\varfour}) $, 
and one of type $\ty{\mult{\shortnormal}}{\shortnormal}$, %(shortened to $\cbnnormaltonormal$), 
given by $\tderivfour$ as follows

$$
\scalebox{0.945}{
\infer
	[\app]
	{\tyjp{(1,2)}{II}{}{\ty{\mult{\shortnormal}}{\shortnormal}}}
	{\infer
		[\fun]
		{\tyjp
			{(0,1)}
			{\la{\varthree}{\varthree}}
			{}
			{\ty{\mult{\ty{\mult{\shortnormal}}{\shortnormal}}}{(\ty{\mult{\shortnormal}}{\shortnormal})}}}
		{\infer
			[\ax]
			{\tyjp{(0,1)}{\varthree}{\varthree : \mult{\ty{\mult{\shortnormal}}{\shortnormal}}}{\ty{\mult{\shortnormal}}{\shortnormal}}}
			{}
		}
	\quad
	\infer
		[\many]
		{\tyjp{(0,1)}{\la{\varfour}{\varfour}}{}{\mult{\ty{\mult{\shortnormal}}{\shortnormal}}}}
		{\infer
			[\fun]
			{\tyjp{(0,1)}{\la{\varfour}{\varfour}}{}{\ty{\mult{\shortnormal}}{\shortnormal}}}
			{\infer
				[\ax]
				{\tyjp{(0,1)}{\varfour}{\varfour : \mult{\shortnormal}}{\shortnormal}}
				{}
			}
		}
	}
}
$$

\noindent Finally, we put $\tderiv$, $\tderivtwo$ and $\tderivfour$ together %to derive a type derivation $\tderivthree$ for $\tm$ as follows
in the following derivation $\tderivthree$ for $\tm = (\tmtwo (II)) (II)$, where $\tmtwo \defeq \la{\var}\la{\vartwo}\var\var$ and $\cbnnormaltonormal \defeq \ty{\mult{\shortnormal}}{\shortnormal}$
%$$
%\infer
%	[\app]
%	{\tyjp{(5,5)}{((\la{\var}{\la{\vartwo}{\var \var}})(II))(II)}{}{\shortnormal}}
%	{\infer
%		[\app]
%		{\tyjp{(4,5)}{(\la{\var}{\la{\vartwo}{\var \var}})(II)}{}{\ty{\zero}{\shortnormal}}}
%		{\tderiv
%		\qquad
%		\infer
%			[\many]
%			{\tyjp{(2,3)}{II}{}{\mult{\shortnormal, \ty{\mult{\shortnormal}}{\shortnormal}}}}
%			{\tderivtwo_{\shortnormal}
%			\qquad
%			\tderivtwo_{\ty{\mult{\shortnormal}}{\shortnormal}}}
%		}
%	\quad
%	\infer
%		[\many]
%		{\tyjp{(0,0)}{II}{}{\zero}}
%		{}
%	}
%$$
\begin{center}
\vspace{-0.15\baselineskip}
\scalebox{0.945}{
\begin{prooftree}[separation=1em, label separation=0.3em, rule margin=0.5ex]
	\hypo{}
	\ellipsis{$\tderiv$}{\Deri [(1,2)] {}{\tmtwo}{\ty{\mult{\shortnormal, \cbnnormaltonormal}}{(\ty{\zero}{\shortnormal})}}}
	\hypo{}
	\ellipsis{$\tderivtwo$}{\tyjp{(1,1)}{II}{}{\shortnormal}}
	\hypo{}
	\ellipsis{$\tderivfour$}{\tyjp{(1,2)}{II}{}{\cbnnormaltonormal}}
	\infer2[$\many$]{\tyjp{(2,3)}{II}{}{\mult{\shortnormal, \cbnnormaltonormal}}}
	\infer2[$\app$]{\tyjp{(4,5)}{\tmtwo(II)}{}{\ty{\zero}{\shortnormal}}}
	\infer0[$\many$]{\tyjp{(0,0)}{II}{}{\zero}}
	\infer2[$\app$]{\tyjp{(5,5)}{(\tmtwo(II))(II)}{}{\shortnormal}}
\end{prooftree}
}
\end{center}
Note that that $\tderivthree$ is a tight derivation and the indices $(5,5)$ correspond exactly to the number of %multiplicative and exponential steps
$\mcbn$-steps and $\ecbn$-steps, respectively, from $\tm$ to its $\cbnsym$-normal form, as shown in \refex{evaluation}. 
%Forthcoming \refthm{name-correctness} 
\refthm{name-correctness} below shows that this is not by chance: tight derivations are minimal and provide exact bounds to evaluation lengths.
\end{example}

The next two subsections prove the two halves of the properties of the \cbn type system, namely correctness and completeness. 

\subsection{\cbn Correctness}\label{s:tightcorrectness}
Correctness is the fact that every typable term is \cbn normalising. In our setting it comes with additional quantitative information: the indices $\msteps$ and $\esteps$ of a  derivation $\tderiv\exder[\cbnup] \Deri[(\msteps, \esteps)]{\typctx}{\tm}\type$ provide bounds for the length of the \cbn evaluation of $\tm$, that are exact when the derivation is tight.

The proof technique is standard. Moreover, the correctness theorems for \cbv and \cbneed in the next sections follow \emph{exactly} the same structure. The proof relies on a quantitative subject reduction property showing that $\msteps$ decreases by \emph{exactly one} at each $\mcbn$-step, and similarly for $\esteps$ and $\ecbn$-steps. In turn, subject reduction relies on a linear substitution lemma. Last, correctness for \emph{tight} derivations requires a further property of normal forms.

Let us point out that correctness is stated with respect to closed terms only, but the auxiliary results have to deal with open terms, since they are proved by inductions (over predicates defined by induction) over the structure of terms.\medskip

\paragraph{Linear Substitution.} The linear substitution lemma states that substituting over a variable occurrence as in the exponential rule consumes exactly one linear type and decreases of one the exponential index $\esteps$. 

\begin{toappendix}
\begin{lemma}[\cbn linear substitution]
\label{l:name-linear-substitution}
If $\tderiv \exder[\cbnup]\Deri[(\msteps, \esteps)]{\typctx, \var \hastype \mtype}{\cbnctx\cwc{\var}}{\type}$ 
then there is a splitting $ \mtype = \mult\typetwo \mplus \mtypetwo$ such that for every derivation $\tderivtwo \exder[\cbnup]\Deri[(\mstepstwo, \estepstwo)]{\typctxtwo}{\tm}{\typetwo}$ there is a derivation 
%$\tderiv_{\cbnctx\cwc{\tm}} \exder[\cbnup] \Deri[(\msteps+\mstepstwo, \esteps + \estepstwo-1)]{\var \hastype \mtypetwo; \typctx \mplus \typctxtwo}{\cbnctx\cwc{\tm}}{\type}$.
$\tderivthree \exder[\cbnup] \Deri[(\msteps+\mstepstwo, \esteps + \estepstwo-1)]{\typctx \mplus \typctxtwo, \var \hastype \mtypetwo}{\cbnctx\cwc{\tm}}{\type}$.
\end{lemma}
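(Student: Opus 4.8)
The plan is to proceed by induction on the structure of the call-by-name evaluation context $\cbnctx$, following its grammar $\cbnctx \grameq \ctxhole \mid \cbnctx\tm \mid \cbnctx\esub\var\tm$. The observation that drives the whole argument is that, since the judgement for $\cbnctx\cwc\var$ assigns a \emph{linear} type $\type$, the last rule of $\tderiv$ is forced in each case: it is $\ax$ when $\cbnctx = \ctxhole$, it is $\app$ when $\cbnctx = \cbnctxtwo\tmtwo$, and it is $\esrule$ when $\cbnctx = \cbnctxtwo\esub\vartwo\tmtwo$ (the $\many$ rule is excluded because it yields a multi type, not a linear one). In the two inductive cases the occurrence of $\var$ sitting in the hole lives entirely inside the left premise, which again types a context-with-hole term with a linear type, so the induction hypothesis applies to it directly; the argument subderivation is left untouched, even though it may contain further, unrelated occurrences of $\var$.

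For the base case $\cbnctx = \ctxhole$ the derivation $\tderiv$ is an instance of $\ax$, hence $\typctx$ is empty, $\mtype = \mult\type$, and the indices are $(0,1)$. I take the splitting $\typetwo \defeq \type$ and $\mtypetwo \defeq \emptymset$, so $\mtype = \mult\typetwo \mplus \mtypetwo$. Given any $\tderivtwo \exder[\cbnup] \Deri[(\mstepstwo,\estepstwo)]{\typctxtwo}{\tm}{\type}$, I simply set $\tderivthree \defeq \tderivtwo$: it types $\cbnctx\cwc\tm = \tm$ with $\type$, its context is $\typctx \mplus \typctxtwo, \var\col\emptymset$, and its indices $(\mstepstwo,\estepstwo) = (0+\mstepstwo,\, 1+\estepstwo-1)$ are exactly the required ones. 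This is precisely where the $-1$ on the exponential index is produced: removing the axiom, which alone contributed the unit $1$ to $\esteps$, and replacing it by $\tderivtwo$ absorbs that unit.

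For the inductive case $\cbnctx = \cbnctxtwo\tmtwo$, I decompose the final $\app$ into a function premise typing $\cbnctxtwo\cwc\var$ with $\ty{\mtype'}{\type}$ and $\var\col\mtype_1$, and an argument premise typing $\tmtwo$ with $\mtype'$ and $\var\col\mtype_2$, where $\mtype = \mtype_1 \mplus \mtype_2$ and the two component contexts sum to $\typctx,\var\col\mtype$. The induction hypothesis applied to the function premise yields a splitting $\mtype_1 = \mult\typetwo \mplus \mtype_1'$; I then set $\mtypetwo \defeq \mtype_1' \mplus \mtype_2$, graft the resulting derivation of $\cbnctxtwo\cwc\tm$ back under an $\app$ rule together with the unchanged argument premise, and check that the indices add up to $(\msteps+\mstepstwo,\, \esteps+\estepstwo-1)$ and the type contexts recombine point-wise to $\typctx\mplus\typctxtwo, \var\col\mtypetwo$. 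The case $\cbnctx = \cbnctxtwo\esub\vartwo\tmtwo$ is identical, using $\esrule$ in place of $\app$ and the fact that $\vartwo\neq\var$, which holds by $\alpha$-renaming since the context does not capture $\var$.

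The grafting itself is mechanical, so the only genuine work lies in the two inductive cases, and I expect the main obstacle to be purely book-keeping: tracking, across the point-wise union of type contexts, which part of $\var\col\mtype$ is contributed by the head premise (and is therefore refined by the induction hypothesis into $\mult\typetwo \mplus \mtype_1'$) and which part comes from the untouched argument (and is carried verbatim into $\mtypetwo$), and then verifying that the two halves of the exponential index recombine so that exactly one unit is consumed over the whole derivation.
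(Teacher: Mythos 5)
Your proposal is correct and takes essentially the same route as the paper's proof, which is likewise by induction over the structure of the \cbn evaluation context: the $\ax$ base case is where the axiom is consumed and the $-1$ on the exponential index is produced, and the $\app$ and $\esrule$ cases apply the induction hypothesis to the left premise (the only one containing the hole occurrence) and re-graft it with the untouched argument premise, splitting $\var$'s multi type accordingly. The only detail you leave implicit---that in the $\esrule$ case the bound variable $\vartwo$ does not clash with $\dom{\typctxtwo}$, which follows from the non-capture convention on $\cbnctx\cwc{\tm}$ and \reflemma{name-typctx-varocc}---is routine $\alpha$-renaming bookkeeping.
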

\end{toappendix}
   
The proof is by induction over \cbn evaluation contexts.

\paragraph{Quantitative Subject Reduction.} A key point of multi types is that the size of type derivations shrinks after every evaluation step, which is what allows to bound evaluation lengths. Remarkably, the size (defined as the sum of the indices) shrinks by exactly 1 at every evaluation step.
   
\begin{toappendix}
\begin{proposition}[Quantitative subject reduction for \cbn]
  \label{prop:name-subject-reduction}
  Let $\tderiv\exder[\cbnup] \Deri[(\msteps, \esteps)]\typctx{\tm}\type$ be a  derivation. 
  \begin{enumerate}
    \item \emph{Multiplicative}: if $\tm\tomcbn\tmtwo$ then $\msteps\geq 1$ and
  there exists a  derivation 
  $\tderivtwo\exder[\cbnup] \Deri[(\msteps-1, \esteps)]\typctx{\tmtwo}\type$.
  
  \item \emph{Exponential}: if $\tm\toecbn\tmtwo$ then $\esteps\geq 1$ and
  there exists a  derivation
  $\tderivtwo\exder[\cbnup] \Deri[(\msteps, \esteps-1)]\typctx{\tmtwo}\type$. 
  \end{enumerate}
\end{proposition}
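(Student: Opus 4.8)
The plan is to prove both items at once by induction on the \cbn evaluation context $\cbnctx$ such that $\tm = \cbnctxp{\tm'}$ for the contracted root-redex $\tm'$, so that the base case $\cbnctx = \ctxhole$ is the analysis of a root-step and the inductive cases propagate the statement through the context. The organising principle is that the two context-forming rules $\app$ and $\esrule$ merely \emph{add} the indices of their premises: decreasing one premise's index by exactly one decreases the conclusion's by exactly one, without touching the multiplicative\,/\,exponential split. This is precisely what lets the ``decreases by exactly one'' phrasing survive the contextual closure, so the real content is in the two root-steps.

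For the multiplicative root-step $\tm = \sctxp{\la\var\tmthree}\,\tmfour \tomcbn \sctxp{\tmthree\esub\var\tmfour} = \tmtwo$, the derivation $\tderiv$ must end in an $\app$ rule, with left premise $\Deri[(\msteps_1,\esteps_1)]{\typctx_1}{\sctxp{\la\var\tmthree}}{\ty\mtype\type}$ and right premise $\Deri[(\msteps_2,\esteps_2)]{\typctx_2}{\tmfour}{\mtype}$, so that $\msteps = \msteps_1+\msteps_2+1 \geq 1$ and $\esteps = \esteps_1+\esteps_2$. A straightforward induction on $\sctx$ peels off the ES rules typing the substitution context and exposes the $\fun$ rule typing $\la\var\tmthree$, whose premise types $\tmthree$ with $\type$ in a context containing $\var \col \mtype$. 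I then rebuild a derivation of $\tmtwo$ by inserting an $\esrule$ binding $\var$ to $\tmfour$ (reusing the right-premise derivation of $\tmfour \col \mtype$) and reapplying the peeled ES rules on the outside; the side condition on free variables guarantees the contexts recombine to $\typctx_1\mplus\typctx_2$. This replaces one $\app$ by one $\esrule$, removing exactly the $+1$ contributed by $\app$, and yields indices $(\msteps-1, \esteps)$ with the same $\typctx$ and $\type$.

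The delicate case, and the main obstacle, is the exponential root-step $\tm = \cbnctx\cwc{\var}\esub\var\tmthree \toecbn \cbnctx\cwc{\tmthree}\esub\var\tmthree = \tmtwo$. Here $\tderiv$ ends in an $\esrule$ with left premise $\Deri[(\msteps_1,\esteps_1)]{\typctx_1, \var\col\mtype}{\cbnctx\cwc{\var}}{\type}$ and right premise $\Deri[(\msteps_2,\esteps_2)]{\typctx_2}{\tmthree}{\mtype}$, so $\msteps = \msteps_1+\msteps_2$ and $\esteps = \esteps_1+\esteps_2$. I apply the \cbn linear substitution lemma (\reflemma{name-linear-substitution}) to the left premise, obtaining a splitting $\mtype = \mult\typetwo \mplus \mtypetwo$; since the occurrence of $\var$ in the hole is typed by an $\ax$ (of index $(0,1)$), we get $\esteps_1 \geq 1$ and hence $\esteps \geq 1$. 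The right premise, being a $\many$ rule typing $\tmthree$ with $\mtype = \mult\typetwo\mplus\mtypetwo$, decomposes into a sub-derivation $\Deri[(\msteps',\esteps')]{\typctx'}{\tmthree}{\typetwo}$ and a remainder $\Deri[(\msteps_2-\msteps',\esteps_2-\esteps')]{\typctx_2'}{\tmthree}{\mtypetwo}$ with $\typctx_2 = \typctx' \mplus \typctx_2'$. Feeding the first sub-derivation to the substitution lemma produces $\Deri[(\msteps_1+\msteps', \esteps_1+\esteps'-1)]{\typctx_1\mplus\typctx', \var\col\mtypetwo}{\cbnctx\cwc{\tmthree}}{\type}$, and I close with an $\esrule$ against the remainder. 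The resulting context is $\typctx_1\mplus\typctx'\mplus\typctx_2' = \typctx_1\mplus\typctx_2 = \typctx$, the multiplicative index is $(\msteps_1+\msteps')+(\msteps_2-\msteps') = \msteps$, and the exponential index is $(\esteps_1+\esteps'-1)+(\esteps_2-\esteps') = \esteps-1$, exactly as required.

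Finally, the inductive cases are routine. If $\cbnctx$ is not empty, it is either $\cbnctx_0\,\tmfour$ or $\cbnctx_0\esub\vartwo\tmfour$ for a smaller \cbn context $\cbnctx_0$. In the first case $\tderiv$ ends in $\app$ and the redex is confined to the left premise, which types $\cbnctx_0\ctxholep{\tm'}$ and reduces by the same root-step to $\cbnctx_0\ctxholep{\tmtwo'}$; applying the induction hypothesis there decrements exactly one of its two indices, and reassembling the $\app$ transmits this decrement --- and nothing else --- to the conclusion, preserving $\typctx$ and $\type$. The case $\cbnctx_0\esub\vartwo\tmfour$ is identical, with $\esrule$ in place of $\app$ and the redex in its left premise. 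This completes the induction, establishing that exactly one index drops by one at each step of the corresponding kind.
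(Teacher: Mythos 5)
Your proof is correct and follows essentially the same route as the paper: induction on the contextual closure of the root-steps, with the multiplicative root case handled by peeling and rebuilding the derivation around the substitution context, and the exponential root case handled by the linear substitution lemma (\reflemma{name-linear-substitution}) combined with splitting the $\many$-derivation of the argument, which the paper isolates as a separate multi-set splitting lemma. No gaps of substance.
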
  
\end{toappendix}

The proof is by induction on $\tm\tomcbn\tmtwo$ and $\tm\toecbn\tmtwo$, using the linear substitution lemma for the root exponential step.

\paragraph{Tightness and Normal Forms.} Since the indices are always non-negative, quantitative subject reduction (\refprop{name-subject-reduction}) implies that they bound evaluation lengths. 
The bound is not necessarily exact, as derivations of normal forms can have strictly positive indices. 
If they are tight, however, they are  indexed by $(0,0)$, as we now show. The proof of this fact (by induction on the predicate $\normal$) %, however, 
requires a slightly different statement, for the induction to go through.

\begin{toappendix}
\begin{proposition}[\normal typing of normal forms for \cbn]
  \label{prop:name-normal-forms-forall}
Let $\tm$ be such that $\normalpr\tm$, and
$\tderiv\exder[\cbnup] \Deri[(\msteps, \esteps)]{\typctx}{\tm}\normal$ be a derivation. 
Then $\typctx$ is empty, and so $\tderiv$ is tight, and $\msteps = \esteps = 0$.
\end{proposition}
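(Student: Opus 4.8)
The plan is to proceed by induction on the derivation of the predicate $\normalpr\tm$, which by its two rules means that $\tm$ is either an abstraction or an explicit substitution whose left component is again a normal form. I would stress at the outset the reason the statement \emph{concludes} (rather than assumes) that $\typctx$ is empty: this strengthening is exactly what makes the induction go through. In the explicit-substitution case the relevant sub-derivation carries a type context of the shape $\typctx, \var\col\mtype$, which is not \emph{a priori} empty, so I could not phrase the induction hypothesis directly in terms of tight derivations---hence the "slightly different statement" anticipated in the text.

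For the abstraction case, $\tm = \la\var\tmtwo$, I would argue by inspection of the typing rules. The only rule whose conclusion assigns the \emph{constant} type $\normal$ to an abstraction is the $\normal$ rule: indeed $\fun$ produces an arrow type $\ty{\mtype}{\type}$, the $\many$ rule produces a multi type (not a linear type), and $\ax$, $\app$, $\esrule$ do not apply at the root of an abstraction. Consequently $\tderiv$ is a single application of $\normal$, whose conclusion $\Deri[(0,0)]{}{\la\var\tmtwo}{\normal}$ has empty context and indices $(0,0)$, as required.

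For the explicit-substitution case, $\tm = \tm'\esub\var\tmtwo$ with $\normalpr{\tm'}$, the derivation $\tderiv$ must end with $\esrule$, with left premise $\Deri[(\msteps,\esteps)]{\typctx, \var\col\mtype}{\tm'}{\normal}$ and right premise $\Deri[(\msteps',\esteps')]{\typctxtwo}{\tmtwo}{\mtype}$. Applying the induction hypothesis to the left premise (legitimate since $\normalpr{\tm'}$ holds and the type is $\normal$) shows that $\typctx, \var\col\mtype$ is empty and $\msteps=\esteps=0$; in particular $\typctx$ is empty and $\mtype = \zero$. The crucial remaining observation is that a derivation of the empty multi type $\zero$ can only be built by the $\many$ rule with $J=\emptyset$, i.e. with no premises, so the right premise forces $\typctxtwo$ to be the empty context and $\msteps'=\esteps'=0$. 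Combining, the conclusion's context $\typctx\ctxplus\typctxtwo$ is empty and its indices $(\msteps+\msteps', \esteps+\esteps')$ equal $(0,0)$.

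The only genuinely delicate point---and the main obstacle if one is not careful---is the bookkeeping in this last case: one must avoid stating the proposition for tight derivations (empty context as a \emph{hypothesis}) and instead read off the vanishing of $\mtype$ from the emptiness of the sub-context, and the vanishing of $\typctxtwo$ and of $\msteps',\esteps'$ from the fact that only a nullary $\many$ can conclude $\zero$. Everything else reduces to a direct inspection of which rules can conclude the constant type $\normal$ or the empty multi type $\zero$.
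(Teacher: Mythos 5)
Your proof is correct and follows the same route as the paper: induction on the predicate $\normalpr{\cdot}$, with the abstraction case settled by observing that only the $\normal$ rule can assign the constant $\normal$ to an abstraction, and the explicit-substitution case settled by the induction hypothesis on the left premise plus the observation that $\zero$ can only be derived by a nullary $\many$, forcing the right premise to have empty context and indices $(0,0)$. Your remark about why the statement must conclude, rather than assume, emptiness of $\typctx$ is precisely the "slightly different statement" the paper alludes to.
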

\end{toappendix}

% The proof is by induction on $\normalpr\tm$.

\paragraph{The Tight Correctness Theorem.} The theorem is then proved by a straightforward induction on the evaluation length relying on quantitative subject reduction (\refprop{name-subject-reduction}) for the inductive case, and the properties of tight typings for normal forms (\refprop{name-normal-forms-forall}) for the base case.

\begin{toappendix}
\begin{theorem}[\cbn tight correctness]
  \label{thm:name-correctness} % \reftm{head-correctness}
  Let $\tm$ be a closed term.
  % and $\tderiv \exder[\cbnup]  \Deri[(\msteps, \esteps)] {}{\tm}{\type}$ be a derivation. Then 
  If $\tderiv \exder[\cbnup]  \Deri[(\msteps, \esteps)] {}{\tm}{\type}$ then there is $\tmtwo$ such that $\deriv \colon \tm
  \tocbnn \tmtwo$, $\normalpr\tmtwo$, $\sizem\deriv \leq \msteps$, $\sizee\deriv \leq \esteps$.  Moreover, if $\tderiv$ is tight then $\sizem\deriv = \msteps$ and $\sizee\deriv = \esteps$.
\end{theorem}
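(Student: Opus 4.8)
The plan is to prove the statement by a single induction on the size $\msteps+\esteps$ of the derivation $\tderiv$. The key observation is that, by quantitative subject reduction (\refprop{name-subject-reduction}), this quantity is not merely an upper bound but also a termination measure: it strictly decreases along every $\cbn$-step. Thus one induction should simultaneously establish normalisation, the bounds $\sizem\deriv\le\msteps$ and $\sizee\deriv\le\esteps$, and—when $\tderiv$ is tight—their exactness.

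For the base case I would assume $\tm$ is $\cbn$-normal. Since $\tm$ is closed, the syntactic characterisation of normal forms (\refprop{syntactic-characterization-closed-normal}) gives $\normalpr\tm$, so I take $\tmtwo\defeq\tm$ and let $\deriv$ be the empty evaluation sequence; then $\sizem\deriv=0\le\msteps$ and $\sizee\deriv=0\le\esteps$. If moreover $\tderiv$ is tight, i.e. $\type=\normal$ and $\typctx$ empty, then \refprop{name-normal-forms-forall} forces $\msteps=\esteps=0$, turning both bounds into equalities.

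For the inductive step I would assume $\tm$ is not $\cbn$-normal. Being closed and non-normal, $\tm$ fires a $\cbn$-redex, so $\tm\tocbn\tm'$ for some $\tm'$, which is again closed and reached by either a multiplicative or an exponential step. If $\tm\tomcbn\tm'$, then the multiplicative case of \refprop{name-subject-reduction} yields $\msteps\ge1$ and a derivation $\tderiv'\exder[\cbnup]\Deri[(\msteps-1,\esteps)]{}{\tm'}{\type}$ of strictly smaller size, to which the induction hypothesis applies: it produces $\deriv'\colon\tm'\tocbnn\tmtwo$ with $\normalpr\tmtwo$, $\sizem{\deriv'}\le\msteps-1$, and $\sizee{\deriv'}\le\esteps$. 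Prepending the fired step gives $\deriv\colon\tm\tocbnn\tmtwo$ with $\sizem\deriv=\sizem{\deriv'}+1\le\msteps$ and $\sizee\deriv=\sizee{\deriv'}\le\esteps$. The exponential case $\tm\toecbn\tm'$ is symmetric, using the exponential case of \refprop{name-subject-reduction} to extract $\esteps\ge1$ and a derivation indexed by $(\msteps,\esteps-1)$. For the \emph{moreover} part, I would note that subject reduction returns a derivation with the same conclusion type $\type$ and an empty type context, hence it preserves tightness; so if $\tderiv$ is tight then so is $\tderiv'$, the induction hypothesis delivers the equalities $\sizem{\deriv'}=\msteps-1$ and $\sizee{\deriv'}=\esteps$ (resp.\ $\sizem{\deriv'}=\msteps$ and $\sizee{\deriv'}=\esteps-1$), and restoring the single fired step upgrades the bounds for $\deriv$ to the claimed equalities $\sizem\deriv=\msteps$ and $\sizee\deriv=\esteps$.

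Since the substantive work is carried out by the auxiliary results, the only delicate points left in this argument will be bookkeeping: that a closed non-normal term always has a $\cbn$-redex, which is exactly what \refprop{syntactic-characterization-closed-normal} provides, and that tightness is stable under reduction, which reduces to the observation above that subject reduction leaves the conclusion type and the empty context untouched. The genuine obstacle therefore does not lie in this theorem but in \refprop{name-subject-reduction}, where the indices must be shown to decrease by \emph{exactly one} per step rather than merely to decrease; given that, the present proof is a routine induction.
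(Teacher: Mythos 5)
Your proposal is correct and follows essentially the same route as the paper: an induction on the derivation size $\msteps+\esteps$ (equivalently, on the evaluation length, since the two march in lockstep by subject reduction), with Proposition~\ref{prop:syntactic-characterization-closed-normal} and Proposition~\ref{prop:name-normal-forms-forall} handling the base case and Proposition~\ref{prop:name-subject-reduction} handling the inductive case, observing that tightness is preserved because subject reduction leaves the (empty) type context and the final type $\normal$ unchanged. The paper's proof for its \cbneed/\cbv analogues is structured in exactly this way, so there is nothing to add.
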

\end{toappendix}

Note that \refthm{name-correctness} implicitly states that tight  derivations have \emph{minimal} size among  derivations.

\subsection{\cbn Completeness}
\label{s:tightcompleteness}
Completeness is the fact that every normalising term has a (tight) type derivation. As for correctness, the completeness theorem is always obtained via three intermediate steps, dual to those for correctness. 
%First, one shows that every normal form has a (tight) derivation. Second, typability is extended to normalising terms by pulling it back through evaluation using a (quantitative) subject expansion property, itself based on a linear removal lemma.\medskip

\paragraph{Normal Forms.} The first step is to prove (by induction on the predicate $\normal$) that every normal form is typable, and %that it 
is actually typable with a tight derivation.

\begin{toappendix}
\begin{proposition}[Normal forms are tightly typable for \cbn]
\label{prop:name-normal-forms-exist} 
Let $\tm$ be such that $\normalpr\tm$. Then 
there is \precise derivation $\tderiv\exder[\cbnup] \Deri[(0, 0)]{}{\tm}\normal$.
\end{proposition}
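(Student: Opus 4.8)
The plan is to proceed by induction on the derivation of the predicate $\normalpr\tm$, building in each case a derivation whose type context is empty and whose final type is $\normal$ with indices $(0,0)$---exactly the conditions for \precise ness. There are two cases. In the base case $\tm = \la\var\tmtwo$ is an abstraction, and the $\normal$ rule applies with no premises, producing $\Deri[(0,0)]{}{\la\var\tmtwo}{\normal}$ directly; this is tight and needs no induction hypothesis.

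In the inductive case $\tm = \tm'\esub\var\tmtwo$ with $\normalpr{\tm'}$. I would first invoke the induction hypothesis on $\tm'$ to obtain a \precise derivation $\tderiv'\exder[\cbnup]\Deri[(0,0)]{}{\tm'}{\normal}$; since its context is empty, $\var$ is assigned the empty multi type $\zero$, so $\tderiv'$ equally proves $\Deri[(0,0)]{\var\col\zero}{\tm'}{\normal}$, which is the shape required as the left premise of $\esrule$. For the right premise I would type the content $\tmtwo$ of the explicit substitution with $\zero$ using the $\many$ rule instantiated at $J = \emptyset$, that is $\Deri[(0,0)]{}{\tmtwo}{\zero}$, a premise-free derivation with empty context and indices $(0,0)$. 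Applying $\esrule$ to these two derivations yields $\Deri[(0,0)]{}{\tm'\esub\var\tmtwo}{\normal}$: the contexts sum to the empty context, the type remains $\normal$, and the indices are $(0+0,\,0+0) = (0,0)$, completing the induction.

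The one point that really matters---and the closest thing to an obstacle---is that the content $\tmtwo$ of the explicit substitution is an arbitrary term, possibly non-normalising, that never occurs in evaluation position inside a normal form. This is precisely the \emph{erasable terms and $\zero$} ingredient of the system: the $\many$ rule with zero premises types any term with $\zero$ at no cost, so $\tmtwo$ is discarded without affecting either the context or the indices. Beyond making this observation I expect no difficulty; in particular the argument never inspects the structure of $\tmtwo$, which is why it goes through uniformly for every normal form.
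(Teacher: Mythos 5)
Your proof is correct and follows the paper's own route: induction on the predicate $\normal$, closing the abstraction case with the $\normal$ rule and the explicit-substitution case by combining the induction hypothesis (whose empty context already assigns $\zero$ to $\var$) with a $0$-premise $\many$ rule typing the substitution's content with $\zero$, then applying $\esrule$. The key observation you flag---that an arbitrary, possibly diverging $\tmtwo$ is typed with $\zero$ at no cost---is exactly the ingredient the paper relies on as well.
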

\end{toappendix}

%The proof is by induction on $\normalpr\tm$.

\paragraph{Linear Removal.} In order to prove subject expansion, we have to first show that typability can also be pulled back along substitutions, via a linear removal lemma dual to the linear substitution lemma. 

\begin{toappendix}
\begin{lemma}[Linear removal for \cbn]
\label{l:name-linear-removal}
Let  $\tderiv \exder[\cbnup] \Deri[(\msteps, \esteps)]{\typctx;\var \hastype\mtype}{\cbnctx\cwc{\tmtwo}}{\type}$,
where $\var \notin \fv\tmtwo$. Then  there exist
\begin{itemize}
\item a linear type  $\typetwo$,
\item a derivation $\tderiv_\tmtwo  \exder[\cbnup] \Deri[(\msteps_\tmtwo, \esteps_\tmtwo)]{\typctx_\tmtwo}{\tmtwo}{\typetwo}$, and
\item a derivation $\tderiv_{\cbnctx\cwc{\var}} \exder[\cbnup] 
 \Deri[(\mstepstwo, \estepstwo)]{  \typctx', \var \hastype \mtype\mplus\mult{\typetwo} }{\cbnctx\cwc{\var}}{\type}$
\end{itemize}
such that 
\begin{itemize}
\item \emph{Type contexts}: $\typctx = \typctx_\tmtwo \uplus \typctx'$.

\item \emph{Indices}: $(\msteps, \esteps) = (\mstepstwo +  \msteps_\tmtwo, \estepstwo + \esteps_\tmtwo- 1)$.
\end{itemize}
\end{lemma}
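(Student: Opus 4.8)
The plan is to prove the statement by induction on the structure of the \cbn evaluation context $\cbnctx$, exactly mirroring the proof of linear substitution (\reflemma{name-linear-substitution}): where substitution inserts a copy of the derivation of the plugged term, removal extracts it. Concretely, I peel off the outermost constructor of $\cbnctx$, read off the last rule of $\tderiv$, recurse on the (unique) premise that still contains the hole, and reassemble.

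In the base case $\cbnctx = \ctxhole$ we have $\cbnctx\cwc{\tmtwo} = \tmtwo$ and $\cbnctx\cwc{\var} = \var$. Since $\var \notin \fv\tmtwo$, \reflemma{name-typctx-varocc} applied to $\tderiv$ forces $\mtype = \zero$. I then take $\typetwo \defeq \type$ with $\tderiv_\tmtwo \defeq \tderiv$ (so $\typctx_\tmtwo = \typctx$ and $(\msteps_\tmtwo,\esteps_\tmtwo) = (\msteps,\esteps)$), and for $\cbnctx\cwc{\var} = \var$ I take the axiom $\Deri[(0,1)]{\var \hastype \mult{\typetwo}}{\var}{\typetwo}$, giving empty $\typctx'$ and $(\mstepstwo,\estepstwo) = (0,1)$. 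Then $\mtype \mplus \mult{\typetwo} = \zero \mplus \mult{\type} = \mult{\type}$ matches the axiom, the context equation $\typctx = \typctx_\tmtwo \uplus \typctx'$ is immediate, and the index equation reads $(\msteps,\esteps) = (0 + \msteps,\, 1 + \esteps - 1)$. This is the step that \emph{creates} the $-1$ on the exponential index: it exactly cancels the unit exponential cost of the reinstated axiom.

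For the inductive cases the hole sits in the left premise. If $\cbnctx = \cbnctx'\,\tm$, then $\tderiv$ ends in $\app$, its left premise typing $\cbnctx'\cwc{\tmtwo}$ with some arrow type $\ty{\mtype'}{\type}$ and its right premise typing $\tm$; I split $\mtype = \mtype_1 \mplus \mtype_2$ according to how $\var$ is shared between the two premises, apply the induction hypothesis to the left premise (for which $\var \notin \fv\tmtwo$ still holds), and glue the resulting derivation of $\cbnctx'\cwc{\var}$ back to the untouched right premise with the same $\app$ rule, obtaining $\cbnctx\cwc{\var} = \cbnctx'\cwc{\var}\,\tm$. The multi type of $\var$ recombines as $(\mtype_1 \mplus \mult{\typetwo})\mplus \mtype_2 = \mtype \mplus \mult{\typetwo}$, and $\typctx_\tmtwo$ factors out of the recombined context as required. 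The case $\cbnctx = \cbnctx'\esub{\vartwo}{\tm}$ is identical, using $\esrule$ in place of $\app$ and assuming (by $\alpha$-renaming) $\vartwo \neq \var$ so the two multi types never mix. In both cases the index equation is preserved because the constants contributed by the last rule (the $+1$ of $\app$, none for $\esrule$) together with the right-premise indices appear identically on both sides of $(\msteps,\esteps) = (\mstepstwo + \msteps_\tmtwo,\, \estepstwo + \esteps_\tmtwo - 1)$, the single $-1$ being transported unchanged from the induction hypothesis.

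The work is entirely bookkeeping, and the main obstacle is to carry it out without slips: correctly splitting the multi type $\mtype$ of $\var$ across the premises of $\app$ and $\esrule$, and verifying that the exponential $-1$ is produced once and only once (in the base case) and then propagated verbatim. There is no conceptual difficulty, provided one keeps the locally bound variable $\vartwo$ of an explicit substitution disjoint from $\var$ by $\alpha$-renaming, so that the type extracted for $\var$ is never conflated with the type of a context-bound variable.
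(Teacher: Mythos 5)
Your proposal is correct and follows essentially the same route as the paper: induction on the \cbn evaluation context, with the base case $\cbnctx = \ctxhole$ using relevance (\reflemma{name-typctx-varocc}) to force $\mtype = \zero$ and reinstating an $\ax$ whose unit exponential cost produces the $-1$, and the inductive cases for $\app$ and $\esrule$ splitting the type of $\var$ across premises, recursing on the left premise, and reassembling. The only point you gloss over—justifying in the $\esrule$ case that the bound variable $\vartwo$'s multi type ends up entirely in $\typctx'$ rather than in $\typctx_\tmtwo$—follows immediately from the non-capture convention on $\cbnctx\cwc{\tmtwo}$ together with \reflemma{name-typctx-varocc}, so it is harmless bookkeeping rather than a gap.
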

\end{toappendix}

\paragraph{Quantitative Subject Expansion.} This property is the dual of subject reduction.

\begin{toappendix}
\begin{proposition}[Quantitative subject expansion for \cbn]
  \label{prop:name-subject-expansion}
  Let $\tderiv \exder[\cbnup] \Deri[(\msteps, \esteps)]\typctx\tmtwo\type$ be a derivation. 
  \begin{enumerate}
    \item \emph{Multiplicative}: if $\tm\tomcbn\tmtwo$ then 
  there is a derivation 
  $\tderivtwo\exder[\cbnup] \Deri[(\msteps+1, \esteps)]\typctx{\tm}\type$.
  
  \item \emph{Exponential}: if $\tm\toecbn \! \tmtwo$ then 
  there is a derivation 
  $\tderivtwo\exder[\cbnup] \Deri[(\msteps, \esteps+1)]\typctx{\tm}\type$. 
  \end{enumerate}
\end{proposition}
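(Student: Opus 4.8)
The plan is to prove both items simultaneously by induction on the derivation of the root step inside its \cbn evaluation context, \ie on the context $\cbnctx$ with $\tm=\cbnctxp{t'}$, $\tmtwo=\cbnctxp{u'}$ and $t'$ a root-redex. This exactly dualises quantitative subject reduction (\refprop{name-subject-reduction}): there a derivation is pushed \emph{forward} along a step using the linear substitution lemma, here it is pulled \emph{backward} using the linear removal lemma (\reflemma{name-linear-removal}). The base case $\cbnctx=\ctxhole$ carries all the real content; the inductive cases are uniform reassembly and the index bookkeeping is immediate because $\app$ and $\esrule$ simply sum the indices of their premises.

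For the inductive cases the crucial structural fact is that \cbn evaluation contexts descend only into the left of an application ($\cbnctx\,\tm$) or into the body of an explicit substitution ($\cbnctx\esub{\var}{\tm}$), never into an argument or into the content of a substitution. Hence the hole always sits in a position typed with a \emph{linear} type, reached through the left premise of an $\app$ or an $\esrule$ rule. I would invert the last rule of the derivation of $\tmtwo$, apply the induction hypothesis to that single left sub-derivation (raising its $\msteps$, resp.\ $\esteps$, by one), and re-apply the same rule with the untouched right premise; the global index then grows by exactly one in the intended component.

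For the multiplicative root step $\sctxp{\la\var s}\,r \rtom \sctxp{s\esub{\var}{r}}$ I would analyse the derivation of the reduct. After peeling off the $\esrule$ rules coming from the substitution context $\sctx$, the subterm $s\esub{\var}{r}$ is typed by an $\esrule$ combining a derivation of $s$ carrying $\var\col\mtype$ in its context with a derivation of $r$ of type $\mtype$. To rebuild the redex I abstract $\var$ in the derivation of $s$ (rule $\fun$, yielding $\la\var s$ of type $\ty{\mtype}{\type}$), rewrap the substitutions of $\sctx$, and finally combine with the derivation of $r$ via $\app$. The side condition on the root step (disjointness of $\fv{r}$ from the variables bound by $\sctx$), together with \reflemma{name-typctx-varocc}, guarantees that moving the derivation of $r$ from the innermost position to the argument position recombines the type contexts correctly. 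The only rule that changes is the innermost $\esrule$, which becomes an $\app$: both combine the very same two sub-derivations, but $\app$ adds $1$ to $\msteps$ while $\esrule$ adds $0$, so $\msteps$ grows by exactly one and $\esteps$ is unchanged, as required.

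The exponential root step $\cbnctx\cwc{\var}\esub{\var}{q} \rtoecbn \cbnctx\cwc{q}\esub{\var}{q}$ is the main obstacle, and it is where linear removal enters. Inverting the $\esrule$ typing the reduct gives a derivation of $\cbnctx\cwc{q}$ with $\var\col\mtype$ in context and a derivation of $q$ of type $\mtype$. I would apply \reflemma{name-linear-removal} to the former (with removed term $q$ and removed variable $\var$), obtaining a fresh linear type $\typetwo$, a derivation of $q$ of type $\typetwo$, and a derivation of $\cbnctx\cwc{\var}$ with $\var\col\mtype\mplus\mult{\typetwo}$ in context, whose indices stand to those of the reduct in the $(\cdot,\cdot-1)$ relation stated by the lemma. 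Merging the two derivations of $q$ (of types $\mtype$ and $\typetwo$) with a $\many$ rule produces a derivation of $q$ of type $\mtype\mplus\mult{\typetwo}$, which I feed, together with the derivation of $\cbnctx\cwc{\var}$, into a final $\esrule$ typing the redex. The accounting then yields multiplicative index $\msteps$ unchanged and exponential index $\esteps+1$, the extra unit coming precisely from inverting the $-1$ in the index relation of \reflemma{name-linear-removal}. The delicate points I would check carefully are the applicability of the side condition $\var\notin\fv{q}$ and the exact splitting of the type contexts asserted by the linear removal lemma.
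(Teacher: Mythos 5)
Your proposal is correct and is essentially the paper's proof: induction on the evaluation step (equivalently, on the \cbn evaluation context), inversion and reassembly through the left premises of $\app$/$\esrule$ for the inductive cases, direct reconstruction (turning the innermost $\esrule$ into an $\app$) for the root multiplicative case, and the linear removal lemma (\reflemma{name-linear-removal}) combined with merging of multi-set derivations (\reflemma{name-merging-multisets}) for the root exponential case. Your index bookkeeping and your appeal to the freshness side condition together with \reflemma{name-typctx-varocc} likewise match the paper's argument.
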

\end{toappendix}

The proof is by induction on $\tm\tomcbn\tmtwo$ and $\tm\toecbn\tmtwo$, using the linear removal lemma for the root exponential step.

\paragraph{The Tight Completeness Theorem.} The theorem is proved by a straightforward induction on the evaluation length relying on quantitative subject expansion (\refprop{name-subject-expansion}) in the inductive case, and the existence of tight typings for normal forms (\refprop{name-normal-forms-exist}) in the base case.

\begin{toappendix}
%  \begin{theorem}[Tight completeness for \cbn]
%    \label{thm:name-completeness}
%    Let $\tm$ be a closed term and $\deriv \colon \tm \tocbnn \tmtwo$ with $\normalpr\tmtwo$.
%    Then there is a \precise derivation
%    $\tderiv \exder[\cbnup] \Deri[(\sizem\deriv, \sizee\deriv)] {}\tm \normal$.    
%  \end{theorem}
\begin{theorem}[\cbn tight completeness]
	\label{thm:name-completeness}
	Let $\tm$ be a closed term.
	If $\deriv \colon \!\tm \!\tocbnn \tmtwo$ and $\normalpr\tmtwo$ then there is a \precise derivation
	$\tderiv \exder[\cbnup\!\!\!] \Deri[(\sizem\deriv, \sizee\deriv)] {}\tm \normal$.    
\end{theorem}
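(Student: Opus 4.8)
The plan is to proceed by induction on the length $\size\deriv$ of the evaluation $\deriv \colon \tm \tocbnn \tmtwo$, exactly dualising the proof of tight correctness (\refthm{name-correctness}): the base case will be handled by the existence of tight typings for normal forms (\refprop{name-normal-forms-exist}), and the inductive case by quantitative subject expansion (\refprop{name-subject-expansion}). Since $\tm$ is closed and $\cbnsym$-evaluation preserves closedness, every term occurring along $\deriv$ is closed, so by \reflemma{name-typctx-varocc} any derivation built for such a term has empty type context; this is the side of preciseness that I shall keep as an invariant, the other being that the assigned type is $\normal$.

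For the base case, $\size\deriv = 0$ means $\tm = \tmtwo$ and $\normalpr\tm$ holds. Then \refprop{name-normal-forms-exist} directly yields a \precise derivation $\tderiv \exder[\cbnup] \Deri[(0,0)]{}{\tm}{\normal}$, and since $\deriv$ performs no steps we have $\sizem\deriv = \sizee\deriv = 0$, so the indices $(0,0)$ coincide with $(\sizem\deriv, \sizee\deriv)$, as required.

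For the inductive case, decompose $\deriv$ as a first step $\tm \tocbn \tm'$ followed by a strictly shorter evaluation $\deriv' \colon \tm' \tocbnn \tmtwo$ with $\normalpr\tmtwo$. The induction hypothesis provides a \precise derivation $\tderiv' \exder[\cbnup] \Deri[(\sizem{\deriv'}, \sizee{\deriv'})]{}{\tm'}{\normal}$. I then split on the kind of the first step. If $\tm \tomcbn \tm'$, then $\sizem\deriv = \sizem{\deriv'} + 1$ and $\sizee\deriv = \sizee{\deriv'}$, and the multiplicative case of \refprop{name-subject-expansion} produces a derivation of $\tm$ with indices $(\sizem{\deriv'}+1, \sizee{\deriv'}) = (\sizem\deriv, \sizee\deriv)$. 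If instead $\tm \toecbn \tm'$, then $\sizem\deriv = \sizem{\deriv'}$ and $\sizee\deriv = \sizee{\deriv'}+1$, and the exponential case of \refprop{name-subject-expansion} produces a derivation of $\tm$ with indices $(\sizem{\deriv'}, \sizee{\deriv'}+1) = (\sizem\deriv, \sizee\deriv)$. In both cases subject expansion leaves the type ($\normal$) and the type context (empty) untouched, so the resulting $\tderiv$ is again \precise, which closes the induction.

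The theorem is essentially routine bookkeeping once the two quoted propositions are available; the genuine difficulty of completeness lies entirely in \refprop{name-subject-expansion}, and in turn in the linear removal lemma (\reflemma{name-linear-removal}) on which it rests. Within the present argument the only point deserving care is the \emph{double} invariant that must ride along the induction: not merely that $\tm$ is typable, but that it admits a derivation that is simultaneously \precise and indexed exactly by the running step counts. Both halves are preserved precisely because subject expansion increments a single index by one while fixing the type $\normal$ and the empty context, which is exactly the quantitative shape guaranteed by \refprop{name-subject-expansion}.
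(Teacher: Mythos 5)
Your proposal is correct and follows exactly the paper's own proof: a straightforward induction on the evaluation length, using \refprop{name-normal-forms-exist} for the base case and quantitative subject expansion (\refprop{name-subject-expansion}) for the inductive case, with preciseness preserved because expansion keeps the empty type context and the type $\normal$ unchanged while incrementing the appropriate index by one. The only superfluous element is your appeal to \reflemma{name-typctx-varocc} and closedness to maintain the empty-context invariant---this already follows directly from the fact that subject expansion leaves the type context untouched, as you yourself note at the end.
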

\end{toappendix}

\paragraph{Back to Erasing Steps.} Our system can be easily adapted to measure also garbage collection steps (the \cbn erasing rule is just before \refexample{evaluation}, page \pageref{ex:evaluation}). First, a new, third index $g$ on judgements is necessary. Second, one needs to distinguish the erasing and non-erasing cases of the the $\app$ and $\esrule$ rules, discriminated by the $\zero$ type. For instance, the $\esrule$ rules are (the $\app$ rules are similar):
\[{\small
\begin{aligned}
%\infer
%	[\appgc]
%	{\tyjp{(\msteps + 1,\esteps,g+1)}{\tm \tmtwo}{\typctx} {\type}}
%	{\tyjp{(\msteps,\esteps,g)} {\tm}{\typctx} {\ty{\emptytype}{\type}}}
%&
%\infer
%	[\app]
%	{\tyjp{(\msteps + \msteps' + 1, \esteps + \esteps',g+g')}{\tm \tmtwo} {\typctx \mplus \typctxtwo} {\type}}
%	{\tyjp{(\msteps,\esteps,g)} {\tm}{\typctx} {\ty{\mtype}{\type}}
%	\quad
%	\tyjp{(\mstepstwo,\estepstwo,g')} {\tmtwo} {\typctxtwo} {\mtype}
%	\quad
%	\mtype \neq \emptytype}
%\\\\
\infer
	[\!\esgc]
	{\tyjp{(\msteps,\esteps,g+1)}{\tm \esub\var\tmtwo}{\typctx} {\type}}
	{\tyjp{(\msteps,\esteps,g)} {\tm} {\typctx} {\type}
	\quad
	\typctx(\var) = \emptytype}
&&
\infer
	[\!\esrule]
	{\tyjp{(\msteps + \msteps', \esteps + \esteps',g+g')}{\tm \esub{\var}{\tmtwo}}{\typctx \mplus \typctxtwo}{\type}}
	{\tyjp{(\msteps,\esteps,g)}{\tm}{\typctx, \var \hastype \mtype}{\type}
	\quad
	\tyjp{(\mstepstwo,\estepstwo,g')}{\tmtwo}{\typctxtwo}{\mtype}
	\quad
	\mtype \neq \emptytype}
\end{aligned}
}\]
%The erasing cases have only one premise because the only way to type with $\zero$ is by means of a $\many$ rule with 0 premises, and so it is reasonable to simplify the rules accordingly ($\many$ can then be restricted to have at least one premise).
\noindent The index $g$ bounds to the number of erasing steps. In the closed case, however, the bound cannot be, in general, exact. Variables typed with $\zero$ by $\typctx$ do not exactly match variables not appearing in the typed term (that is the condition triggering the erasing step), because a variable typed with $\zero$ may appear in the body of abstractions typed with the \normal rule, as such bodies are not typed.

It is reasonable to assume that exact bounds for erasing steps can only by provided by a type system characterising strong evaluation, whose typing rules have to inspect abstraction bodies. These erasing typing rules are nonetheless going to play a role in the design of the \cbneed system in \refsect{cbneed}.

\subsection{\cbn Model}
The idea to build the denotational model from the multi type system is that the interpretation (or 
semantics) of a term is simply the set of its type assignments, \ie the set of its derivable types together with their type contexts. 
More precisely, let $\tm$ be a term and $\var_1, \dots, \var_n$ (with $n \geq 0$) be pairwise distinct variables.
If $\fv{\tm} \subseteq \{\var_1, \dots, \var_n\}$, we say that the list $\vec{\var} = (\var_1, \dots, \var_n)$ is \emph{suitable for} $\tm$.
If $\vec{\var} = (\var_1, \dots, \var_n)$ is suitable for $\tm$, the (\emph{relational}) \emph{semantics} %, or \emph{interpretation}, 
\emph{of} $\tm$ \emph{for} $\vec{\var}$ is
\begin{equation*}
    \sem{\tm}_{\vec{\var}}^\cbn \defeq \{((\mtype_1,\dots, \mtype_n),\type) \mid \exists 
    \, 
    \tderiv \exder[\cbnup] \Deri[(\msteps, \esteps)]{\var_1 \colon\! \mtype_1, \dots, \var_n \colon\! \mtype_n}\tm\type
    \} \,.
\end{equation*}
Subject reduction (\refprop{name-subject-reduction}) and expansion (\refprop{name-subject-expansion}) guarantee that the semantics $\sem{\tm}_{\vec{\var}}^\cbn$ of $\tm$ (for \emph{any} term $\tm$, possibly open) is \emph{invariant} by \cbn evaluation. Correctness (\refthm{name-correctness}) and completeness (\refthm{name-completeness}) guarantee that, given a \emph{closed} term $\tm$,  its interpretation $\sem{\tm}_{\vec{\var}}^\cbn$ is non-empty if and only if $\tm$ is \cbn normalisable, that is, they imply that relational semantics is \emph{adequate}. 

In fact, adequacy also holds with respect to open terms. The issue in that case is that the characterisation of tight derivations is more involved, see Accattoli, Graham-Lengrand and Kesner's \cite{DBLP:journals/pacmpl/AccattoliGK18}. Said differently, weaker correctness and completeness theorems without exact bounds also hold in the open case. The same is true for the \cbv and \cbneed systems of the next sections. %Beniamino

% !TEX root = main.tex
\section{Types by Value}
\label{sect:cbv}

Here we introduce Ehrhard's \cbv multi type system \cite{DBLP:conf/csl/Ehrhard12} adapted to our presentation of \cbv in the LSC, and prove its properties. The system is similar, and yet in many aspects dual, to the \cbn one, in particular the grammar of types is different.
Linear types for \cbv are defined by:
\begin{align*} 
\textsc{\cbv linear types} && \type, \typetwo  & \grameq   %\normal \mid 
 \ty{\mtype}{\mtypetwo} %\\   
%\textsc{Multi types} && \mtype, \mtypetwo  & \grameq \mult{\type_i}_{\iI}\  (I \mbox{ a finite set}) 
%% \mneutral  & ::= & \mult{\neutype}_{\iI}, I \mbox{ a finite set} \\
\end{align*}
Multi(-sets) types are defined as in \refsect{prelim}, relatively to \cbv linear types. Note that linear types now have a multi type both as source and as target, and that the $\normal$ constant is absent---in \cbv, its role is played by $\zero$. 

The typing rules are in \reffig{type-system-cbv}. 
It is a type-based presentation of the relational model of the \cbv $\lambda$-calculus induced by relational model of linear logic via the \cbv translation of $\lambda$-calculus into linear logic.
Some remarks:
\begin{itemize}
	\item \emph{Right-hand types}: all rules but $\fun$ assign a multi type to the term on the right-hand side, and not a linear type as in \cbn.
	
	\item \emph{Abstractions and $\many$}: the $\many$ rule has a restricted form with respect to the \cbn one, it can only be applied to abstractions, that in turn are the only terms that can be typed with a linear type.
	
	\item \emph{Indices}: note as the indices are however incremented (on $\ax$ and $\app$) and summed (in $\many$ and $\esrule$) exactly as in the \cbn system.
\end{itemize}
\begin{figure}[t]
	\centering
	\ovalbox{
		$\begin{array}{c\colspace \colspace ccc}
		\infer[\ax]{\Deri[(0, 1)] {\var \col \mtype} \var \mtype}{} 
		&
\infer[\app]{
			\Deri[(\msteps + \msteps' + 1, \esteps + \esteps')] {\typctx \mplus \typctxtwo}
			{\tm \tmtwo} \mtypetwo
		}{
			\Deri[(\msteps, \esteps)] \typctx \tm {\single{\ty{\mtype}{\mtypetwo}}}
			\quad
			\Deri[(\msteps', \esteps')] {\typctxtwo} \tmtwo {\mtype}
		}
				\\\\
		\infer[\fun]{
			\Deri[(\msteps, \esteps)]{\typctx} {\la\var\tm} {\ty{\mtypetwo}{\mtype}}
		}{\Deri[(\msteps, \esteps)] {\typctx, \var \col \mtypetwo} \tm \mtype} 
		&
		
		\infer[\many]{
			\Deri[(\sum_{i \in J\!} \msteps_i, \sum_{i \in J\!} \esteps_i )] {
				\bigmplus_{i \in J}\typctxtwo_i  } {\la\var\tm} {\MSigma {\type_i} {i \in J}
			}
		}{
			(\Deri[(\msteps_i, \esteps_i)] {\typctxtwo_i} {\la\var\tm} {\type_i})_{i \in J}
		}
		\\\\

		&
		
		\infer[\esrule]{
			\Deri[(\msteps + \msteps', \esteps + \esteps')] {\typctx \mplus \typctxtwo}
			{\tm \esub\var\tmtwo} \mtype
		}{
			\Deri[(\msteps, \esteps)] {\typctx, \var \col \mtypetwo} \tm {\mtype}
			\quad
			\Deri[(\msteps', \esteps')] {\typctxtwo} \tmtwo {\mtypetwo}
		} 
		
		\end{array}$
		
	}
	\caption{Type system for \cbv evaluation.}
	\label{fig:type-system-cbv}
\end{figure}

\paragraph{Intuitions: the Empty Type $\emptymset$.} The empty multi-set type $\emptymset$ plays a special role in \cbv. As in \cbn, it is the type of terms that can be erased, but, in contrast to \cbn, not every term is erasable in \cbv.

In the \cbn multi type system every term, even a diverging one, is typable with $\emptymset$. On the one hand, this is correct, because in \cbn every term can be erased, and erased terms can also be divergent, because they are never evaluated. On the other hand, adequacy is formulated with respect to non-empty types: a term terminates if and only if it is typable with a non-empty type.

In \cbv, instead, terms have to be evaluated before being erased; and, of course, their evaluation has to terminate. 
Thus, terminating terms and erasable terms coincide. Since the multi type system is meant to characterise terminating terms, in \cbv a term is typable
if and only if it is typable with $\emptymset$, as we shall prove in this section. 
Then the empty type is not a degenerate type excluded for adequacy from the interesting types of a term, as in \cbn, it rather is \emph{the} type, characterising (adequate) typability altogether. And this is also the reason for the absence of the constant $\normal$---one way to see it is that in \cbv $\normal = \zero$.

Note that, in particular, in a type judgement $\typctx \vdash \tm \hastype \mtype$ the type context $\typctx$ may give the empty type to a variable $\var$ occurring in $\tm$, as for instance in the axiom $\var \hastype \emptymset \vdash \var \hastype \emptymset$---this may seem very strange to people familiar with \cbn multi types. We hope that instead, according to the provided intuition that $\emptymset$ is the type of termination, it would rather seem natural.

\begin{definition}[\Precise derivation for \cbv]\label{def:tightderiv-cbv}
	%% A derivation
	%% $\tderiv \exder \tyjn{(\msteps, \esteps)}{\tm}{\typctx}{\typetwo}$
	%% is \emph{minimal} if for all 
	%% $\tderiv' \exder \tyjn{(\msteps', \esteps')}{\tm}{\typctx'}{\typetwo'}$
	%% we have $\size\tderiv\leq\size{\tderiv'}$.
	A derivation $\tderiv \exder[\cbvup] \Deri[(\msteps, \esteps)]{\typctx}{\tm}{\mtype}$
	is \emph{\precise} % (resp.~\emph{semi-communicative})
	if $\mtype = \emptymset$ and $\typctx$ is empty. % (resp.~$\TComm \typctx$)
\end{definition}

%\begin{remark}[Properties of tightness]\hfill
%\label{rem:value-properties-tight}
%	\begin{enumerate}
%		\item If $\tm$ is closed and $\tderiv \exder[\cbvup] \Deri[(\msteps, \esteps)]{\typctx}{\tm}{\emptymset}$, then $\dom{\typctx} = \emptyset$ and hence $\tderiv$ is tight.
%		\item if $\tderiv \exder[\cbvup] \tyjp{(\msteps,\esteps)}{\tm}{\typctx}{\mtype}$ is tight then any $\tderivtwo \exder[\cbvup] \tyjp{(\mstepstwo,\estepstwo)}{\tm'}{\typctx}{\mtype}$ is tight, for any term $\tm'$.
%	\end{enumerate}
%\end{remark}

\begin{example}
\label{ex:value-type-deriv}
Let's consider again the term $\tm \defeq ((\la{\var}{\la{\vartwo}{\var \var}})(II))(II)$ of \refex{evaluation} (where $I \defeq \la{\varthree}\varthree$), for which a $\cbn$ tight derivation was given in \refex{cbn-type-deriv}, and let us type it in the $\cbv$ system with a tight derivation.

We define the following derivation $\tderiv_{1}$ for the subterm $\tmtwo \defeq \la{\var}{\la{\vartwo}{\var \var}}$ of $\tm$

	$$
	\infer
		[\many]
		{\tyjp{(1,2)}{\tmtwo}{}{\mult{\ty{\mult{\ty{\zero}{\zero}}}{\mult{\ty{\zero}{\zero}}}}}}
		{\infer
			[\fun]
			{\tyjp{(1,2)}{\tmtwo}{}{\ty{\mult{\ty{\zero}{\zero}}}{\mult{\ty{\zero}{\zero}}}}}
			{\infer
				[\many]
				{\tyjp{(1,2)}{\la{\vartwo}{\var \var}}{\var : \mult{\ty{\zero}{\zero}}}{\mult{\ty{\zero}{\zero}}}}
				{\infer
					[\fun]
					{\tyjp{(1,2)}{\la{\vartwo}{\var \var}}{\var : \mult{\ty{\zero}{\zero}}}{\ty{\zero}{\zero}}}
					{\infer
						[\app]
						{\tyjp{(1,2)}{\var \var}{\var : \mult{\ty{\zero}{\zero}}}{\zero}}
						{\infer
							[\ax]
							{\tyjp{(0,1)}{\var}{\var : \mult{\ty{\zero}{\zero}}}{\mult{\ty{\zero}{\zero}}}}
							{}
						\quad
						\infer
							[\ax]
							{\tyjp{(0,1)}{\var}{\var : \zero}{\zero}}
							{}
						}
					}
				}
			}
		}
	$$
	Note that $\mult{\ty{\zero}{\zero}} \mplus \zero = \mult{\ty{\zero}{\zero}}$, which explains the shape of the type context in the conclusion of the $\app$ rule. 
Next, we define the derivation $\tderiv_{2}$ as follows %(where $II \defeq (\la{\varthree}{\varthree})(\la{\varfour}{\varfour})$)
	$$
	\infer
		[\app]
		{\tyjp{(1,2)}{II}{}{\mult{\ty{\zero}{\zero}}}}
		{\infer
			[\many]
			{\tyjp{(0,1)}{\la{\varthree}{\varthree}}{}{\mult{\ty{\mult{\ty{\zero}{\zero}}}{\mult{\ty{\zero}{\zero}}}}}}
			{\infer
				[\fun]
				{\tyjp{(0,1)}{\la{\varthree}{\varthree}}{}{\ty{\mult{\ty{\zero}{\zero}}}{\mult{\ty{\zero}{\zero}}}}}
				{\infer
					[\ax]
					{\tyjp{(0,1)}{\varthree}{\varthree : \mult{\ty{\zero}{\zero}}}{\mult{\ty{\zero}{\zero}}}}
					{}
				}
			}
		\quad
		\infer
			[\many]
			{\tyjp{(0,1)}{\la{\varfour}{\varfour}}{}{\mult{\ty{\zero}{\zero}}}}
			{\infer
				[\fun]
				{\tyjp{(0,1)}{\la{\varfour}{\varfour}}{}{\ty{\zero}{\zero}}}
				{\infer
					[\ax]
					{\tyjp{(0,1)}{\varfour}{\varfour : \zero}{\zero}}
					{}
				}
			}
		}
	$$
%	with $II = (\la{\varthree}{\varthree})(\la{\varfour}{\varfour})$, 
and the derivation $\tderiv_{3}$ as follows %(where $II = (\la{\var'}{\var'})I $)
	$$
	\infer
		[\app]
		{\tyjp{(1,1)}{II}{}{\zero}}
		{\infer
			[\many]
			{\tyjp{(0,1)}{\la{\var'}{\var'}}{}{\mult{\ty{\zero}{\zero}}}}
			{\infer
				[\fun]
				{\tyjp{(0,1)}{\la{\var'}{\var'}}{}{\ty{\zero}{\zero}}}
				{\infer
					[\ax]
					{\tyjp{(0,1)}{\var'}{\var' : \zero}{\zero}}
					{}
				}
			}
		\quad
		\infer
			[\many]
			{\tyjp{(0,0)}{I}{}{\zero}}
			{}
		}
	$$
%with $II = (\la{\var'}{\var'})I $.
	
Finally, we put $\tderiv_{1}$, $\tderiv_{2}$ and $\tderiv_{3}$ together %to derive a type derivation $\tderiv$ for $\tm$ as follows
in the following derivation $\tderiv$~for~$\tm$
%	$$
%	\infer
%		[\app]
%		{\tyjp{(5,5)}{((\la{\var}{\la{\vartwo}{\var \var}})(II))(II)}{}{\zero}}
%		{\infer
%			[\app]
%			{\tyjp{(3,4)}{(\la{\var}{\la{\vartwo}{\var \var}})(II)}{}{\mult{\ty{\zero}{\zero}}}}
%			{\tderiv_{1}
%			\qquad\qquad
%			\tderiv_{2}}
%		\qquad
%		\tderiv_{3}\qquad}
%	$$
\begin{prooftree*}
	\hypo{}
	\ellipsis{$\tderiv_{1}$}{\tyjp{(1,2)}{\tmtwo}{}{\mult{\ty{\mult{\ty{\zero}{\zero}}}{\mult{\ty{\zero}{\zero}}}}}}
	\hypo{}
	\ellipsis{$\tderiv_{2}$}{\tyjp{(1,2)}{II}{}{\mult{\ty{\zero}{\zero}}}}
	\infer2[$\app$]{\tyjp{(3,4)}{(\la{\var}{\la{\vartwo}{\var \var}})(II)}{}{\mult{\ty{\zero}{\zero}}}}
	\hypo{}
	\ellipsis{$\tderiv_{3}$}{\tyjp{(1,1)}{II}{}{\zero}}
	\infer2[$\app$]{\tyjp{(5,5)}{((\la{\var}{\la{\vartwo}{\var \var}})(II))(II)}{}{\zero}}
\end{prooftree*}
Note that the indices $(5,5)$ correspond exactly to the number of %multiplicative and exponential steps
$\mcbv$-steps and $\ecbv$-steps, respectively, from $\tm$ to its \cbv normal form, as shown in \refex{evaluation}, and that $\tderiv$ is a tight derivation. 
Forthcoming \refthm{value-correctness} shows that $\cbv$ tight derivations are minimal and provide exact bounds to evaluation lengths in $\cbv$.
\end{example}

\emph{Correctness} (\ie typability implies normalisability) and \emph{completeness} (\ie normalisability implies typability) of the \cbv type system with respect to \cbv evaluation (together with quantitative information about evaluation lengths) follow exactly the same pattern of the \cbn case, \emph{mutatis mutandis}. 

\subsection{\cbv Correctness}

\begin{toappendix}
\begin{lemma}[\cbv linear substitution]
	\label{l:value-linear-substitution-bis}
	Let $\tderiv\exder[\cbvup] \Deri[(\msteps, \esteps)]{\typctx, \var \col \mtype}{\cbvctx\cwc{\var}}{\mtypetwo}$.
	Then there exists a splitting $\mtype = \mtypethree \mplus \mtypefour$ such that, for every derivation $\tderivtwo\exder[\cbvup] \Deri[(\mstepstwo,\estepstwo)]{\typctxtwo}{\val}{\mtypethree}$, there is a derivation $\tderiv'\exder[\cbvup]\Deri[(\msteps+\mstepstwo, \esteps+\estepstwo-1)]{\typctx \mplus \typctxtwo, \var \col \mtypefour}{\cbvctx\cwc{\val}}{\mtypetwo}$.
\end{lemma}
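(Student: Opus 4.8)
The plan is to proceed by induction on the structure of the \cbv evaluation context $\cbvctx$, which by definition coincides with the weak contexts $\wctx$. At each step I invert the last typing rule of $\tderiv$, locate the premise typing the subterm that contains the hole, apply the induction hypothesis there, and reassemble the derivation. The splitting $\mtype = \mtypethree \mplus \mtypefour$ is produced at the base case and then propagated through the inductive cases, reuniting $\mtypefour$ with the share of $\mtype$ consumed by the other occurrences of $\var$ scattered in the sibling subterms.

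\emph{Base case} $\cbvctx = \ctxhole$. Here $\cbvctx\cwc\var = \var$, and since a bare variable is neither an application, nor an explicit substitution, nor an abstraction, the only rule that can conclude $\tderiv$ is $\ax$. Hence $\typctx$ is empty, $\mtype = \mtypetwo$, and $(\msteps,\esteps) = (0,1)$. I take $\mtypethree \defeq \mtypetwo$ and $\mtypefour \defeq \emptymset$: then $\cbvctx\cwc\val = \val$, and for every $\tderivtwo \exder[\cbvup] \Deri[(\mstepstwo,\estepstwo)]{\typctxtwo}{\val}{\mtypethree}$ I simply set $\tderiv' \defeq \tderivtwo$. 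The indices match, since $(\msteps+\mstepstwo, \esteps+\estepstwo-1) = (\mstepstwo, \estepstwo)$, the $-1$ exactly cancelling the single exponential charged by the replaced axiom.

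\emph{Inductive cases.} A useful preliminary observation is that, for a weak context $\cbvctx$, the term $\cbvctx\cwc\var$ is never an abstraction, so $\fun$ and $\many$ never conclude $\tderiv$ and only $\app$ and $\esrule$ arise. In each case exactly one premise types the subterm holding the hole: for $\cbvctx = \wctx\tm$ it is the left premise of $\app$ (which types the function subterm with a singleton multi type); for $\cbvctx = \tm\wctx$ it is the right premise of $\app$; for $\cbvctx = \wctx\esub\vartwo\tm$ and $\cbvctx = \tm\esub\vartwo\wctx$ it is, respectively, the body and the content premise of $\esrule$. In every case the last rule splits the context, distributing the assignment $\var \col \mtype$ as $\var \col \mtype_1$ in the hole-premise and $\var \col \mtype_2$ in the sibling premise, with $\mtype = \mtype_1 \mplus \mtype_2$. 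I apply the induction hypothesis to the hole-premise, obtaining a splitting $\mtype_1 = \mtypethree \mplus \mtypefour'$ and, for each $\tderivtwo$, a transformed sub-derivation; I then set $\mtypefour \defeq \mtypefour' \mplus \mtype_2$ and reapply the same instance of $\app$ or $\esrule$, leaving the sibling premise untouched. The indices add up as required: the $+1$ of $\app$ and the sibling indices are preserved, while the increments $+\mstepstwo$ and $+\estepstwo-1$ are inherited from the induction hypothesis.

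\emph{Main obstacle.} The argument is essentially routine; the only real care lies in the bookkeeping of the multiset splitting—separating the part $\mtypethree$ consumed by the hole occurrence from the part $\mtypefour$ covering the remaining occurrences of $\var$—and in checking the freshness conventions on bound variables (the $\vartwo$ of the traversed explicit substitutions, and the variables of $\val$), so that plugging the value causes no capture and the context notation $\typctx \mplus \typctxtwo, \var \col \mtypefour$ is well formed.
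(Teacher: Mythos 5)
Your proof is correct and follows essentially the same route as the paper: induction on the (weak) \cbv evaluation context, inverting the last typing rule---necessarily $\ax$ in the base case, and $\app$ or $\esrule$ in the inductive cases since $\cbvctx\cwc{\var}$ is never an abstraction---with the splitting created trivially at the axiom ($\mtypethree = \mtype$, $\mtypefour = \emptymset$, the $-1$ absorbing the axiom's exponential index) and the sibling premise's share of $\mtype$ folded into $\mtypefour$ on the way back up. The index and type-context bookkeeping you describe matches the paper's argument, so there is nothing to add.
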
	
\end{toappendix}

\begin{toappendix}
\begin{proposition}[Quantitative subject reduction for \cbv]
	\label{prop:value-subject-reduction}
	Let $\tderiv\exder[\cbvup] \Deri[(\msteps, \esteps)]\typctx{\tm}\mtype$ be a derivation. 
	\begin{enumerate}
		\item \emph{Multiplicative}: if $\tm\tomcbv\tm'$ then $\msteps\geq 1$ and
		there exists a derivation 
		$\tderiv'\exder[\cbvup] \Deri[(\msteps-1, \esteps)]\typctx{\tm'}\mtype$.
		
		\item \emph{Exponential}: if $\tm\toecbv\tm'$ then $\esteps\geq 1$ and
		there exists a derivation 
		$\tderiv'\exder[\cbvup] \Deri[(\msteps, \esteps-1)]\typctx{\tm'}\mtype$. 
	\end{enumerate}
\end{proposition}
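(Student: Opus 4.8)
The plan is to prove both items at once by induction on the \cbv evaluation context $\cbvctx$ in which the fired redex sits (equivalently, on the derivation of $\tm \tomcbv \tm'$, resp.\ $\tm \toecbv \tm'$), following \emph{verbatim} the structure of the \cbn proof of \refprop{name-subject-reduction}. The inductive cases, one per constructor of a weak context, are routine: I invert the typing rule governing the topmost constructor ($\app$ or $\esrule$), apply the induction hypothesis to the premise typing the subterm that contains the redex, and reapply the rule, so that the indices add up and the single-unit decrement is transported unchanged. All the real work is in the base case $\cbvctx = \ctxhole$, i.e.\ the two root-steps.

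For the \emph{multiplicative} root-step $\tm = \sctxp{\la\var s}\,\tmtwo \rtom \sctxp{s\esub\var\tmtwo} = \tm'$, I first observe that, since the subject is an application and in \cbv the rule $\many$ applies only to values, the last rule of $\tderiv$ is necessarily $\app$; this already forces $\msteps \geq 1$. Inverting it gives a derivation of $\sctxp{\la\var s}$ with $\single{\ty{\mtypetwo}{\mtype}}$ and a derivation of $\tmtwo$ with $\mtypetwo$, the ``$+1$'' sitting on the $\app$. I then peel the substitution context $\sctx$ by a side induction, stripping its explicit substitutions one at a time through $\esrule$, until I reach a derivation of $\la\var s$ with the linear type $\ty{\mtypetwo}{\mtype}$; inverting the one-premise $\many$ and then $\fun$ yields a derivation of $s$ with $\mtype$ under $\var\col\mtypetwo$. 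Recomposing in the other order --- building $s\esub\var\tmtwo$ with $\esrule$ from this derivation and the one for $\tmtwo$, then re-wrapping the stripped substitutions with $\esrule$ --- produces $\tderiv'$ for $\tm'$. The only index that moved is the ``$+1$'' attached to the vanished $\app$, so $\tderiv'$ has index $(\msteps - 1, \esteps)$.

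For the \emph{exponential} root-step $\tm = \cbvctx\cwc{\var}\esub{\var}{\sctxp\val} \rtoecbv \sctxp{\cbvctx\cwc{\val}\esub{\var}{\val}} = \tm'$, the last rule of $\tderiv$ is $\esrule$, giving a derivation of $\cbvctx\cwc\var$ of type $\mtype$ under $\typctx, \var\col\mtypetwo$ and a derivation of $\sctxp\val$ of type $\mtypetwo$. I peel $\sctx$ off this second premise with $\esrule$ to isolate a derivation of the value $\val$ with $\mtypetwo$; being a value, this is a $\many$ bundling one sub-derivation per element of $\mtypetwo$. I then apply the \cbv linear substitution lemma (\reflemma{value-linear-substitution-bis}) to the first premise: it returns a split $\mtypetwo = \mtypethree \mplus \mtypefour$ and, fed the sub-bundle of the $\many$ corresponding to $\mtypethree$, produces a derivation of $\cbvctx\cwc\val$ of type $\mtype$ whose exponential index is exactly one less than the input sum. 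Reassembling $\cbvctx\cwc\val\esub\var\val$ with $\esrule$ --- keeping $\var\col\mtypefour$ and the residual $\mtypefour$-bundle for $\val$ --- and then re-wrapping the peeled substitutions, which in $\tm'$ have floated out of the substitution, places $\sctx$ on the outside exactly as the root-step prescribes, yielding $\tderiv'$ of index $(\msteps, \esteps - 1)$. Here $\esteps \geq 1$ since the replaced occurrence of $\var$ was typed by an $\ax$ of index $(0,1)$, as recorded by the ``$-1$'' in the lemma.

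The main obstacle is the bookkeeping around substitution contexts, which is invisible in the statement but pervades both base cases: I must check that stripping and re-wrapping $\sctx$ through $\esrule$ preserves type contexts and indices additively, and, in the exponential case, that the split $\mtypetwo = \mtypethree \mplus \mtypefour$ handed back by \reflemma{value-linear-substitution-bis} can always be realised by a matching partition of the premises of the $\many$ typing $\val$ --- this is what lets me pass the lemma a derivation of $\val$ with precisely $\mtypethree$ while retaining one with $\mtypefour$ for the residual explicit substitution. The \cbv-specific fact that $\many$ types only values is exactly what makes both matchings canonical, and is also what guarantees, in the multiplicative case, that an application is typed by $\app$ rather than by a zero-premise $\many$, securing $\msteps \geq 1$.
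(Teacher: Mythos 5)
Your proposal is correct and follows essentially the same route as the paper: induction on the evaluation step (equivalently, on the context closure), with the multiplicative root case handled by inverting $\app$/$\many$/$\fun$ and peeling/re-wrapping the substitution context, and the exponential root case handled by the \cbv linear substitution lemma (\reflemma{value-linear-substitution-bis}) combined with splitting the $\many$-derivation of the value to realise the partition $\mtypetwo = \mtypethree \mplus \mtypefour$ (the paper packages this as the multi-set splitting part of its ``Typing values'' lemma, \reflemma{value-typing}). The bookkeeping points you flag---relevance of type contexts under re-wrapped explicit substitutions, and the canonicity of the value splitting---are exactly the auxiliary facts the paper relies on, so there is no gap.
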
  
\end{toappendix}

\begin{toappendix}
\begin{proposition}[Tight typings for normal forms for \cbv]
	\label{prop:value-normal-forms-forall}
	Let 
	$\tderiv\exder[\cbvup] \Deri[(\msteps, \esteps)]\typctx{\tm}\emptymset$ be a derivation, with %$\tm$ closed and 
	$\normalcbvpr{\tm}$.
	Then $\typctx$ is empty, and so $\tderiv$ is tight, and $\msteps = \esteps = 0$.
\end{proposition}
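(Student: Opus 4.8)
The plan is to argue by induction on the derivation of $\normalcbvpr{\tm}$, i.e.\ on the structure of the \cbv normal form $\tm$, carrying out at each step a generation (inversion) analysis of the last rule of $\tderiv$. The driving observation is that the conclusion of $\tderiv$ carries the \emph{multi} type $\emptymset$, and the syntactic shape of $\tm$ tightly constrains which rule can have produced such a judgement.

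\textbf{Base case.} Here $\tm = \la\var{\tm'}$. Among the \cbv typing rules, only $\fun$ and $\many$ can have an abstraction in their conclusion, and $\fun$ assigns a \emph{linear} type; since the conclusion type $\emptymset$ is a multi type, the last rule of $\tderiv$ must be $\many$. As $\emptymset = \MSigma{\type_i}{i \in J}$ is empty, the index set $J$ must be empty, so this is the zero-premise instance of $\many$, whose type context $\bigmplus_{i \in \emptyset}\typctxtwo_i$ is empty and whose indices $\sum_{i \in \emptyset}\msteps_i$ and $\sum_{i \in \emptyset}\esteps_i$ are both $0$. This is exactly the claim.

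\textbf{Inductive case.} Here $\tm = \tm'\esub{\var}{\tm''}$ with $\normalcbvpr{\tm'}$ and $\normalcbvpr{\tm''}$. The only rule whose conclusion is an explicit substitution is $\esrule$, so $\tderiv$ ends with $\esrule$, from premises $\Deri[(\msteps,\esteps)]{\typctx, \var\col\mtypetwo}{\tm'}{\emptymset}$ and $\Deri[(\msteps',\esteps')]{\typctxtwo}{\tm''}{\mtypetwo}$, with conclusion context $\typctx\mplus\typctxtwo$ and indices $(\msteps+\msteps',\esteps+\esteps')$. I would apply the induction hypothesis to the left premise, which types the smaller normal form $\tm'$ with $\emptymset$: it yields that $\typctx, \var\col\mtypetwo$ is empty and $(\msteps,\esteps)=(0,0)$. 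Emptiness of $\typctx, \var\col\mtypetwo = \typctx\mplus(\var\mapsto\mtypetwo)$ forces both $\typctx$ empty and $\mtypetwo=\emptymset$. Hence the right premise types $\tm''$ with $\emptymset$, so the induction hypothesis applies to it as well, giving $\typctxtwo$ empty and $(\msteps',\esteps')=(0,0)$. Summing, the conclusion context $\typctx\mplus\typctxtwo$ is empty and the conclusion indices are $(0,0)$; being of type $\emptymset$ with empty context, $\tderiv$ is thus tight.

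I do not expect a genuine obstacle: this is a short structural induction whose entire content sits in the generation step. The only two points that deserve care are (i) the base case, which rests on the single observation that the empty multi type can be assigned to an abstraction \emph{only} by the $J=\emptyset$ instance of $\many$, so that no indices or context can creep in; and (ii) the routine passage, in the inductive case, from emptiness of the extended context $\typctx, \var\col\mtypetwo$ to $\mtypetwo=\emptymset$ together with $\typctx$ empty. Notably, in contrast with the \cbn counterpart (\refprop{name-normal-forms-forall}), which required a generalised statement for the induction to close, here the statement phrased directly for the type $\emptymset$ is already self-strengthening, precisely because both premises of $\esrule$ end up being typed with $\emptymset$.
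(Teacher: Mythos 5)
Your proof is correct and takes essentially the same route as the paper: a structural induction on the predicate $\normalcbvpr{\tm}$ with inversion on the last typing rule, where the abstraction base case (only the $J=\emptyset$ instance of $\many$ can assign $\emptymset$ to a value, forcing empty context and null indices) is exactly the content of the paper's auxiliary lemma on typing values (\reflemma{value-typing}), and the $\esrule$ case propagates emptiness of the context to both premises just as in the paper. Nothing is missing.
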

\end{toappendix}

\begin{toappendix}
\begin{theorem}[\cbv tight correctness]
	\label{thm:value-correctness} % \reftm{head-correctness}
  Let $\tm$ be a closed term. 
  %and $\tderiv \exder[\cbvup] \Deri[(\msteps, \esteps)] {\typctx}{\tm}{\mtype}$ be a derivation. Then 
  If $\tderiv \exder[\cbvup] \Deri[(\msteps, \esteps)] {\typctx}{\tm}{\mtype}$ then there is $\tmtwo$ such that $\deriv \colon \tm
	\tocbvn \tmtwo$, $\normalcbvpr\tmtwo$, $\sizem\deriv \leq \msteps$, $\sizee\deriv \leq \esteps$.  
	Moreover, if $\tderiv$ is tight then $\sizem\deriv = \msteps$ and $\sizee\deriv = \esteps$.
\end{theorem}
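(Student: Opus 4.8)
The plan is to follow \emph{exactly} the pattern of the \cbn tight correctness theorem (\refthm{name-correctness}), replacing each \cbn ingredient by its \cbv counterpart established above. The proof goes by strong induction on the size $\msteps + \esteps$ of the given derivation $\tderiv \exder[\cbvup] \Deri[(\msteps, \esteps)]{\typctx}{\tm}{\mtype}$. Three tools feed the argument: the syntactic characterisation of closed \cbv normal forms (\refprop{syntactic-characterization-closed-normal}), which supplies a progress property; the tight typing of normal forms (\refprop{value-normal-forms-forall}), which settles the base case exactly; and quantitative subject reduction (\refprop{value-subject-reduction}), which drives the inductive step and rests in turn on the \cbv linear substitution lemma (\reflemma{value-linear-substitution-bis}).

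\textbf{Base case.} Since $\tm$ is closed, by \refprop{syntactic-characterization-closed-normal} either $\tm$ is \cbv normal, in which case $\normalcbvpr\tm$, or $\tm$ fires a (multiplicative or exponential) root-step inside some \cbv evaluation context. In the first case set $\tmtwo \defeq \tm$ and let $\deriv$ be the empty evaluation: then $\sizem\deriv = 0 \leq \msteps$ and $\sizee\deriv = 0 \leq \esteps$. Moreover, if $\tderiv$ is tight then $\mtype = \emptymset$ and $\typctx$ is empty, so \refprop{value-normal-forms-forall} gives $\msteps = \esteps = 0$, whence $\sizem\deriv = \msteps$ and $\sizee\deriv = \esteps$ and the bounds are exact.

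\textbf{Inductive step.} If $\tm$ is not \cbv normal, pick any step $\tm \tocbv \tm'$ (its existence is the progress half of \refprop{syntactic-characterization-closed-normal}); note $\tm'$ is again closed, so the induction hypothesis applies to it. If the step is multiplicative, $\tm \tomcbv \tm'$, then \refprop{value-subject-reduction} yields $\msteps \geq 1$ and a derivation $\tderiv' \exder[\cbvup] \Deri[(\msteps - 1, \esteps)]{\typctx}{\tm'}{\mtype}$ of strictly smaller size. The induction hypothesis applied to $\tderiv'$ produces $\tmtwo$ and $\deriv' \colon \tm' \tocbvn \tmtwo$ with $\normalcbvpr\tmtwo$, $\sizem{\deriv'} \leq \msteps - 1$ and $\sizee{\deriv'} \leq \esteps$; prepending the step gives $\deriv \colon \tm \tocbvn \tmtwo$ with $\sizem\deriv = \sizem{\deriv'} + 1 \leq \msteps$ and $\sizee\deriv = \sizee{\deriv'} \leq \esteps$. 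The exponential case $\tm \toecbv \tm'$ is symmetric, decreasing $\esteps$ instead of $\msteps$. For the \emph{moreover} part, the crucial observation is that subject reduction leaves $\typctx$ and $\mtype$ untouched: hence if $\tderiv$ is tight then so is $\tderiv'$, and the induction hypothesis delivers the equalities $\sizem{\deriv'} = \msteps - 1$ and $\sizee{\deriv'} = \esteps$, which lift to $\sizem\deriv = \msteps$ and $\sizee\deriv = \esteps$.

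\textbf{Main obstacle.} The genuine difficulty lies not in this theorem, whose argument is the verbatim \cbv reading of the \cbn one, but upstream in \refprop{value-subject-reduction} and ultimately in \reflemma{value-linear-substitution-bis}: there one must cope with the value restriction of the \cbv exponential rule and with the commutation of the substitution context surrounding the substituted value, tracking that exactly one exponential index is consumed per substitution. At the level of the theorem it only remains to check two routine facts: that $\tocbv$ preserves closedness, so that the induction hypothesis applies to $\tm'$, and that tightness is inherited by the reduct, which is immediate since the type context and the output type are invariant under reduction. Notably the \cbv diamond property is \emph{not} needed here, the statement being existential; it guarantees only, a posteriori, that the exact bounds hold along \emph{every} evaluation sequence and not merely the one we construct.
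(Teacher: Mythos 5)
Your proof is correct and follows essentially the same route as the paper's: induction on the derivation size $\msteps + \esteps$ with a case analysis on whether $\tm$ is \cbv normal, using the syntactic characterisation of closed normal forms (\refprop{syntactic-characterization-closed-normal}) for progress, \refprop{value-normal-forms-forall} to settle the tight base case, and quantitative subject reduction (\refprop{value-subject-reduction}) --- which preserves $\typctx$ and $\mtype$, hence tightness --- for the inductive step. Your closing remarks (closedness preservation, dispensability of the diamond property) are also accurate.
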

\end{toappendix}

\subsection{\cbv Completeness}
\begin{toappendix}
	\begin{proposition}[Normal forms are tightly typable for \cbv]
		\label{prop:value-normal-forms-exist} 
		Let $\tm$ be such that $\normalcbvpr\tm$. Then 
		there exists a \precise derivation $\tderiv\exder[\cbvup] \Deri[(0, 0)]{}{\tm}\emptymset$.
	\end{proposition}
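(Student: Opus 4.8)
The plan is to prove the statement by induction on the derivation witnessing $\normalcbvpr{\tm}$, which is defined by exactly two rules; accordingly there are two cases. In both I build a derivation with empty type context, type $\emptymset$, and indices $(0,0)$, which is \precise by definition. No auxiliary lemma is needed: the two defining rules of $\normalcbvpr{\cdot}$ mirror almost exactly the $\many$ (with no premises) and $\esrule$ typing rules, so the induction essentially reshapes the derivation of the predicate into a type derivation.

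In the base case $\tm = \la\var{\tmtwo}$ is an abstraction. Here I would apply the $\many$ rule with empty index set $J = \emptyset$. Since in the \cbv system $\many$ is restricted to abstractions and $\tm$ is one, the rule applies; with no premises it yields exactly $\tderiv \exder[\cbvup] \Deri[(0,0)]{}{\la\var{\tmtwo}}{\emptymset}$, because the empty sums give indices $(0,0)$, the empty family of linear types gives the multi type $\emptymset$, and the empty union of type contexts is the empty context. This derivation is \precise.

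In the inductive case $\tm = \tm_1\esub\var{\tm_2}$ with $\normalcbvpr{\tm_1}$ and $\normalcbvpr{\tm_2}$. By the induction hypothesis applied to the two sub-derivations of the predicate there are \precise derivations $\tderiv_{\tm_1} \exder[\cbvup] \Deri[(0,0)]{}{\tm_1}{\emptymset}$ and $\tderiv_{\tm_2} \exder[\cbvup] \Deri[(0,0)]{}{\tm_2}{\emptymset}$. I would combine them with $\esrule$, instantiating the bridging multi type $\mtypetwo$ of that rule with $\emptymset$. Its left premise then demands $\Deri[(0,0)]{\var\col\emptymset}{\tm_1}{\emptymset}$, which is literally $\tderiv_{\tm_1}$, since an empty type context already assigns $\emptymset$ to $\var$; its right premise demands $\Deri[(0,0)]{}{\tm_2}{\emptymset}$, which is $\tderiv_{\tm_2}$. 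The conclusion is $\Deri[(0,0)]{}{\tm_1\esub\var{\tm_2}}{\emptymset}$, again \precise.

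The argument is essentially bookkeeping, so there is no real obstacle; the only two points deserving attention are structural. First, the zero-premise use of $\many$ in the base case is legitimate precisely because the subject is an abstraction — this is exactly where the \cbv restriction on $\many$ (as opposed to the unrestricted \cbn version) matters, and it is guaranteed by the shape of the base rule of $\normalcbvpr{\cdot}$. Second, in the inductive case one must observe that choosing $\emptymset$ for the variable's type collapses the context $\typctx;\var\col\emptymset$ to $\typctx$, so that the two induction hypotheses plug in unchanged and the index sums and context unions remain $(0,0)$ and empty. Both observations are immediate from the rules.
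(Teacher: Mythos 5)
Your proof is correct and matches the paper's approach: induction on the derivation of $\normalcbvpr{\tm}$, handling abstractions with a zero-premise $\many$ rule (legitimate in \cbv precisely because the subject is an abstraction) and explicit substitutions by plugging the two inductive hypotheses into $\esrule$ with the bridging multi type instantiated to $\emptymset$, so that the context $\var\col\emptymset$ collapses away. Both of the structural points you flag are exactly the ones that make the bookkeeping go through, so nothing is missing.
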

\end{toappendix}

\begin{toappendix}
\begin{lemma}[Linear removal for \cbv]
\label{l:value-linear-anti-substitution}
	Let $\tderiv \exder[\cbvup] \Deri[(\msteps, \esteps)]{\typctx, \var \col \mtype}{\cbvctx\cwc{\val}}{\mtypetwo}$ where $\var \notin \fv{\val}$.
	Then, there exist
	\begin{itemize}
		\item a multi type $\mtype'$ and two type contexts $\typctx'$ and $\typctxtwo$, 
		\item a derivation $\tderivtwo \exder[\cbvup] \Deri[(\mstepsthree, \estepsthree)] {\var \col \mtype \mplus \mtype', \typctxtwo}{\cbvctx\cwc{\var}}{\mtypetwo}$, and 
		\item a derivation $\tderiv' \exder[\cbvup] \Deri[(\mstepstwo,\estepstwo)]{\typctx'}{\val}{\mtype'}$
	\end{itemize}
		such that 
		\begin{itemize}
			\item\emph{Type contexts:} $\typctx = \typctx' \mplus \typctxtwo$,
			\item\emph{Indices:} $(\msteps, \esteps) = (\mstepstwo + \mstepsthree, \estepstwo + \estepsthree - 1)$.
		\end{itemize} 
\end{lemma}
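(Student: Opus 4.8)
The plan is to prove the statement by induction on the structure of the \cbv evaluation context $\cbvctx$ (equivalently, of the weak context it ranges over), analysing in each case the last rule of $\tderiv$, whose shape is forced by that of $\cbvctx$. The guiding principle is that the lone $-1$ in the exponential index is produced exactly once, by the axiom introduced in the base case, and is then threaded unchanged through the inductive steps. I would also rely on the \cbv relevance property (the analogue for \cbv of \reflemma{name-typctx-varocc}, which holds because no weakening is allowed in $\ax$): if $\var\notin\fv\tm$ then $\var$ is assigned $\emptymset$ in any derivation of $\tm$.

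\textbf{Base case} $\cbvctx = \ctxhole$. Here $\cbvctx\cwc\val = \val$, so $\tderiv$ types $\val$ with $\mtypetwo$ in $\typctx, \var\col\mtype$. Since $\var\notin\fv\val$, relevance gives $\mtype = \emptymset$. I then set $\mtype' \defeq \mtypetwo$, $\typctx' \defeq \typctx$, and $\typctxtwo$ empty, taking $\tderiv' \defeq \tderiv$ (which types $\val$ with $\mtype'$ in $\typctx'$) and letting $\tderivtwo$ be the axiom $\Deri[(0,1)]{\var\col\mtypetwo}{\var}{\mtypetwo}$, so that $\cbvctx\cwc\var = \var$ is typed with $\mtypetwo$ in $\var\col\emptymset\mplus\mtype' = \var\col\mtype\mplus\mtype'$. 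The context equation $\typctx = \typctx'\mplus\typctxtwo$ is immediate, and the index equation holds since $(\msteps,\esteps) = (\msteps + 0,\, \esteps + 1 - 1)$. If $\tderiv$ happens to end in $\many$ (possible, as $\val$ is an abstraction) this is harmless: the whole derivation is recycled as $\tderiv'$, so $\many$ is never dissected.

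\textbf{Inductive cases.} When $\cbvctx$ is non-trivial, $\cbvctx\cwc\val$ is an application or an ES, hence $\tderiv$ ends in $\app$ or $\esrule$ respectively (never in $\many$, which types only abstractions). Consider $\cbvctx = \cbvctx'\,\tm$: then $\tderiv$ ends in $\app$, with left premise typing $\cbvctx'\cwc\val$ and right premise typing $\tm$. The splitting of the context as $\typctx_1\mplus\typctx_2$ induces a splitting $\mtype = \mtype_1\mplus\mtype_2$ of the type of $\var$. I apply the induction hypothesis to the left premise (which still satisfies $\var\notin\fv\val$), obtaining $\mtype'$, a derivation $\tderiv'$ of $\val$ with $\mtype'$, and a derivation of $\cbvctx'\cwc\var$ with the same arrow type in a context assigning $\mtype_1\mplus\mtype'$ to $\var$. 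Reapplying $\app$ to the latter and the untouched right premise yields a derivation of $\cbvctx\cwc\var = \cbvctx'\cwc\var\,\tm$, in which $\var$ receives $(\mtype_1\mplus\mtype')\mplus\mtype_2 = \mtype\mplus\mtype'$, as required. The context and index equations follow by substituting the induction hypothesis equalities into the arithmetic of $\app$: the extra $+1$ on the multiplicative index appears identically on both sides, and the $-1$ on the exponential index is inherited verbatim. The three remaining cases ($\cbvctx = \tm\,\cbvctx'$, $\cbvctx = \cbvctx'\esub\vartwo\tm$, and $\cbvctx = \tm\esub\vartwo\cbvctx'$) are treated identically, applying the induction hypothesis to whichever premise contains the hole and reassembling with the same rule; the $\esrule$ cases use that the induction hypothesis preserves the multi type of the context-filled subterm, so the side condition matching types across the two premises of $\esrule$ is maintained.

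\textbf{Main obstacle.} The delicate point is purely the bookkeeping: one must verify that the variable type $\mtype$ is split correctly across the premises of $\app$ and $\esrule$, that the extracted value derivation $\tderiv'$ and its context $\typctx'$ are carried unchanged up the context (they never interact with the rule reassembled at each step), and that the exact equation $(\msteps,\esteps) = (\mstepstwo + \mstepsthree,\, \estepstwo + \estepsthree - 1)$ is preserved throughout---in particular that the single $-1$ is generated exactly once, in the base case, and is neither duplicated nor lost in the inductive steps.
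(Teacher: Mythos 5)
Your proof is correct and follows essentially the same route as the paper's: induction on the structure of the \cbv evaluation context, with the base case handled by the \cbv relevance property (\reflemma{value-typctx-varocc}) together with a fresh axiom contributing the $(0,1)$ indices, and the inductive cases obtained by applying the induction hypothesis to the premise containing the hole and re-applying the same final rule ($\app$ or $\esrule$). The one point you leave implicit---that in the $\esrule$ cases the bound variable's type cannot be captured by $\typctx'$, so the reassembled premises still match---follows from the same relevance lemma, since $\cbvctx\cwc{\val}$ guarantees the context does not capture the free variables of $\val$.
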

\end{toappendix}

\begin{toappendix}
	\begin{proposition}[Quantitative subject expansion for \cbv]
		\label{prop:value-subject-expansion}
		Let $\tderiv' \exder[\cbvup] \Deri[(\msteps, \esteps)]\typctx{\tm'}\mtype$ be a derivation. 
		\begin{enumerate}
			\item \emph{Multiplicative}: if $\tm\tomcbv\tm'$ then 
			there is a derivation 
			$\tderiv\exder[\cbvup] \Deri[(\msteps+1, \esteps)]\typctx{\tm}\mtype$.
			
			\item \emph{Exponential}: if $\tm\toecbv\tm'$ then 
			there is a derivation 
			$\tderiv\exder[\cbvup] \Deri[(\msteps, \esteps+1)]\typctx{\tm}\mtype$. 
		\end{enumerate}
	\end{proposition}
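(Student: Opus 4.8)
The plan is to prove the two items simultaneously, by induction on the \cbv evaluation context $\cbvctx$ witnessing the step, that is, writing $\tm = \cbvctxp{\tm_0}$ and $\tm' = \cbvctxp{\tm_0'}$ with $\tm_0 \rtom \tm_0'$ for item~1 and $\tm_0 \rtoecbv \tm_0'$ for item~2 (recall that $\cbvctx$ ranges over all weak contexts). When $\cbvctx$ is not the hole, its top constructor dictates the last rule of the given derivation $\tderiv'$ of $\tm'$ (an $\app$ or an $\esrule$), the redex $\tm_0'$ sits inside exactly one premise, and the claim follows by applying the induction hypothesis to that premise and reassembling the rule, leaving all indices untouched except for the single $+1$ contributed by the hypothesis. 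The reason the redex occurs in a \emph{single} premise---so that the increment is exactly $+1$ and not $+|J|$---is the interplay of two facts: \cbv evaluation is weak, hence the subterm hosting $\tm_0'$ is never an abstraction; and the only rule producing several derivations of one and the same subterm is $\many$, which types abstractions only. Therefore that subterm is typed by exactly one of $\ax$, $\app$, $\esrule$, never by $\many$, and the induction hypothesis fires precisely once.

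The heart of the argument is the base case $\cbvctx = \ctxhole$. For the multiplicative root step $\sctxp{\la\var\tm}\tmtwo \rtom \sctxp{\tm\esub\var\tmtwo}$ I would first reduce to $\sctx = \ctxhole$ by an auxiliary induction that peels off and later restores the explicit substitutions of $\sctx$ (sound thanks to the freshness side condition on $\sctx$). It then remains to note that the last rule of $\tderiv'$ on $\tm\esub\var\tmtwo$ is an $\esrule$ whose left premise types $\tm$ with some $\mtype$ under $\var\col\mtypetwo$ and whose right premise types $\tmtwo$ with $\mtypetwo$. Feeding the left premise into $\fun$ yields $\la\var\tm$ of type $\ty\mtypetwo\mtype$; wrapping this in a one-premise $\many$ produces the singleton $\single{\ty\mtypetwo\mtype}$ demanded by $\app$; and an application of $\app$ with the right premise reconstructs a derivation of $(\la\var\tm)\tmtwo$ of type $\mtype$. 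Since $\app$ is the only rule added and it increments the multiplicative index by one, the resulting indices are $(\msteps+1, \esteps)$, as required.

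For the exponential root step $\cbvctx\cwc\var\esub\var{\sctxp\val} \rtoecbv \sctxp{\cbvctx\cwc\val\esub\var\val}$ I would again commute the substitution context $\sctx$ out and then peel the $\esrule$ typing the residual substitution $\esub\var\val$, exposing a derivation of $\cbvctx\cwc\val$ with $\var\col\mtype$ and a derivation of $\val$ with $\mtype$. The key tool here is the \cbv linear removal lemma (\reflemma{value-linear-anti-substitution}) applied to the derivation of $\cbvctx\cwc\val$: it extracts a multi type $\mtype'$, a derivation of $\val$ of type $\mtype'$, and a derivation of $\cbvctx\cwc\var$ in which a fresh occurrence of $\var$ is typed with $\mtype\mplus\mtype'$, the extra exponential step being recorded by the $-1$ in the lemma's index equation. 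I would then merge, via $\many$, the two typings of $\val$ (of types $\mtype$ and $\mtype'$, the first from the residual substitution, the second from the lemma) into a single typing of $\val$ of type $\mtype\mplus\mtype'$, restore $\sctx$, and close with an $\esrule$ binding $\var$, obtaining a derivation of $\cbvctx\cwc\var\esub\var{\sctxp\val}$ whose indices, after the bookkeeping, are exactly $(\msteps, \esteps+1)$.

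I expect this exponential base case to be the main obstacle, for two reasons: correctly recombining the two independent typings of the value $\val$ into one typing of $\sctxp\val$ while respecting the multiplicities imposed by $\many$, and checking that the accounting of $\sctx$ together with the $-1$ of linear removal delivers \emph{exactly} one extra exponential step. Everything else is a routine reshuffling of rules that mirrors the \cbn subject expansion argument (\refprop{name-subject-expansion}) and rests on the same principle: each base case disturbs precisely one $\app$ (multiplicative) or one $\ax$ (exponential), so the corresponding index moves by one and the other stays fixed.
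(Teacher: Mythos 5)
Your proposal is correct and follows essentially the same route as the paper: induction on the evaluation step (equivalently, on the \cbv evaluation context), with the multiplicative root case handled by dismantling the $\esrule$/$\fun$/$\many$/$\app$ rules and the exponential root case handled by the linear removal lemma (\reflemma{value-linear-anti-substitution}) combined with merging the two typings of the value (\reflemma{l:value-typability}), exactly as the paper does. Your observation that the redex-hosting subterm is never an abstraction---so $\many$ never multiplies the induction hypothesis and the index moves by exactly one---is also the same structural point the paper relies on.
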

\end{toappendix}

\begin{toappendix}
	\begin{theorem}[\cbv tight completeness]
		\label{thm:value-completeness}
		    Let $\tm$ be a closed term.
		    If $\deriv \colon\! \tm \tocbvn\! \tmtwo$ with $\normalcbvpr\tmtwo$,
		then there is a \precise derivation
		$\tderiv \exder[\cbvup] \Deri[(\sizem\deriv, \sizee\deriv)] {}\tm \emptymset$.    
	\end{theorem}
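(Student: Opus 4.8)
The plan is to mirror the proof of \cbn tight completeness (\refthm{name-completeness}): I would proceed by induction on the length $\size\deriv$ of the evaluation sequence $\deriv \colon \tm \tocbvn \tmtwo$, using the existence of tight typings for \cbv normal forms (\refprop{value-normal-forms-exist}) in the base case, and quantitative subject expansion for \cbv (\refprop{value-subject-expansion}) in the inductive step. The whole argument is a matter of threading the indices correctly; the genuine content is already isolated in those two propositions.

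In the base case $\size\deriv = 0$, so $\tm = \tmtwo$ and $\normalcbvpr\tm$ holds. Then \refprop{value-normal-forms-exist} yields a \precise derivation $\tderiv \exder[\cbvup] \Deri[(0,0)]{}{\tm}{\emptymset}$, and $(0,0) = (\sizem\deriv, \sizee\deriv)$ since $\deriv$ has no steps, which settles this case. For the inductive step $\size\deriv \geq 1$, I would split $\deriv$ into its first step $\tm \tocbv \tm'$ followed by a strictly shorter sequence $\deriv' \colon \tm' \tocbvn \tmtwo$ with $\normalcbvpr\tmtwo$. Since $\tocbv$ does not create free variables, $\tm'$ is again closed, so the inductive hypothesis applies to $\deriv'$ and gives a \precise derivation $\tderiv' \exder[\cbvup] \Deri[(\sizem{\deriv'}, \sizee{\deriv'})]{}{\tm'}{\emptymset}$. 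I would then distinguish the kind of the first step. If $\tm \tomcbv \tm'$, then $\sizem\deriv = \sizem{\deriv'} + 1$ and $\sizee\deriv = \sizee{\deriv'}$, and the multiplicative case of \refprop{value-subject-expansion} produces $\tderiv \exder[\cbvup] \Deri[(\sizem{\deriv'}+1, \sizee{\deriv'})]{}{\tm}{\emptymset}$, with exactly the same empty type context and type $\emptymset$ as $\tderiv'$. If instead $\tm \toecbv \tm'$, then $\sizem\deriv = \sizem{\deriv'}$ and $\sizee\deriv = \sizee{\deriv'} + 1$, and the exponential case of \refprop{value-subject-expansion} gives $\tderiv \exder[\cbvup] \Deri[(\sizem{\deriv'}, \sizee{\deriv'}+1)]{}{\tm}{\emptymset}$. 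In both cases the resulting $\tderiv$ still has empty type context and final type $\emptymset$, since subject expansion preserves both; hence it is \precise, and its indices are precisely $(\sizem\deriv, \sizee\deriv)$.

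I do not expect any real obstacle at the level of this theorem: the induction closes immediately once the two propositions are in place, and the only bookkeeping point worth stating is that closedness passes from $\tm$ to $\tm'$ so that the inductive hypothesis stays within the scope of the statement. The actual difficulty sits one layer below, in quantitative subject expansion, whose multiplicative and exponential root cases rest on the \cbv linear removal lemma (\reflemma{value-linear-anti-substitution}); the delicate part there is that one must pull typability back along the substitution of a \emph{value}, reconstructing the exact multi type consumed while decrementing the exponential index by exactly one. That lemma, together with the fact that in \cbv the empty type $\emptymset$ is the type of termination—so that tightness is genuinely preserved by expansion—is where the quantitative precision is really secured; the present theorem only has to package it through a clean length induction.
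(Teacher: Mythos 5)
Your proof is correct and matches the paper's own argument: the paper proves \cbv tight completeness by exactly this induction on the evaluation length, using \refprop{value-normal-forms-exist} for the base case and \refprop{value-subject-expansion} (which preserves both the empty type context and the type $\emptymset$, hence tightness) for the inductive step, splitting on whether the first step is multiplicative or exponential. Your remarks on closedness being preserved and on the real work residing in linear removal (\reflemma{value-linear-anti-substitution}) also accord with the paper's structure.
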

\end{toappendix}

\paragraph{\cbv Model.} The interpretation of terms with respect to the \cbv system is defined as follows (where $\vec{\var} = (\var_1, \dots, \var_n)$ is a list of variables suitable for $\tm$):
\begin{equation*}
    \sem{\tm}_{\vec{\var}}^\cbv \defeq \{((\mtype_1,\dots, \mtype_n),\mtypetwo) \mid \exists 
    \, 
    \tderiv \exder[\cbvup] \Deri[(\msteps, \esteps)]{\var_1 \colon\! \mtype_1, \dots, \var_n \colon\! \mtype_n}\tm\mtypetwo
    \} \,.
\end{equation*}
Note that rule $\fun$ assigns a linear type but the interpretation considers only multi types. The \emph{invariance} %by evaluation 
and the \emph{adequacy} of $\sem{\tm}_{\vec{\var}}^\cbv$ with respect to \cbv evaluation are obtained exactly as for the \cbn case. %Giulio

% !TEX root = main.tex
\section{Types by Need}
\label{sect:cbneed}

\paragraph{\cbneed as a Blend of \cbn and \cbv.} The multi type system for \cbneed is obtained by carefully blending ingredients from the \cbn and \cbv ones:
\begin{itemize}
  \item \emph{Wise erasures from \cbn}: in \cbn wise erasures are induced by the fact that the empty multi type $\zero$ (the type of erasable terms) and the linear type $\normal$ (the type of normalisable terms) are distinct and every term is typable with $\zero$ by using the $\many$ rule with 0 premises. Adequacy is then formulated with respect to (non-empty) linear types.
  \item \emph{Wise duplications from \cbv}: in \cbv wise duplications are due to two aspects. First, only abstractions can be collected in multi-sets by rule $\many$. This fact accounts for the evaluation of arguments to normal form---that is, abstractions---before being substituted. Second, terms are typed with multi types instead of linear types. Roughly, this second fact allows the first one to actually work because the argument is reduced once for a whole multi set of types, and not once for each element of the multi set, as in \cbn.
\end{itemize}
It seems then that a type system for \cbneed can easily be obtained by basically adopting the \cbv system plus 
\begin{itemize}
\item separating $\emptymset$ and $\normal$, that is, adding $\normal$ to the system;
\item modifying the $\many$ rule by distinguishing two cases: with 0 premises it can assign $\zero$ to whatever term---as in \cbn---otherwise it is forced to work on abstractions, as in \cbv;
\item restricting adequacy to non-empty types.
\end{itemize}
Therefore, the grammar of linear types is:
\begin{align*}
    \textsc{\cbneed linear types} && \type, \typetwo  & \grameq   \normal \mid  \ty{\mtype}{\mtypetwo} %\\   
%    \textsc{\cbneed Multi types} && \mtype, \mtypetwo  & \grameq  \mult{\type_i}_{\iI}\  (I \mbox{ a finite set}) %\\
%    \textsc{Zero} && \emptytype & \grameq \emptymset
\end{align*}
Multi(-sets) types are defined as in \refsect{prelim}, relatively to \cbneed linear types. The rules of this \emph{na\"ive system} for \cbneed are in \reffig{type-system-cbneed-natural}. 
\begin{figure}[t]
	\centering
	\ovalbox{
		$\begin{array}{c\colspace \colspace ccc}
		\infer[\ax]{\Deri[(0, 1)] {\var \col \mtype} \var \mtype}{} 
		&
\infer[\app]{
			\Deri[(\msteps + \msteps' + 1, \esteps + \esteps')] {\typctx \mplus \typctxtwo}
			{\tm \tmtwo} \mtype
		}{
			\Deri[(\msteps, \esteps)] \typctx \tm {\single{\ty{\mtypetwo}{\mtype}}}
			\quad
			\Deri[(\msteps', \esteps')] {\typctxtwo} \tmtwo {\mtypetwo}
		}
				\\\\
		\infer[\many_0]{
			\Deri[(0, 0)] {} {\tm} {\zero}
		}{			
		}
		&
		\infer[\many_{>0}]{
			\Deri[(\sum_{i \in J\!} \msteps_i, \sum_{i \in J\!} \esteps_i )] {
				\bigmplus_{i \in J}\typctxtwo_i  } {\la\var\tm} {\MSigma {\type_i} {i \in J}
			}
		}{
			(\Deri[(\msteps_i, \esteps_i)] {\typctxtwo_i} {\la\var\tm} {\type_i})_{i \in J}
			\quad
			J \neq \emptyset
		}
		\\\\
		\infer[\fun]{
			\Deri[(\msteps, \esteps)]{\typctx} {\la\var\tm} {\ty{\mtypetwo}{\mtype}}
		}{\Deri[(\msteps, \esteps)] {\var \col \mtypetwo ; \typctx} \tm \mtype} 
&	
		\infer[\esrule]{
			\Deri[(\msteps + \msteps', \esteps + \esteps')] {\typctx \mplus \typctxtwo}
			{\tm \esub\var\tmtwo} \mtype
		}{
			\Deri[(\msteps, \esteps)] {\var \col \mtypetwo ; \typctx} \tm {\mtype}
			\quad
			\Deri[(\msteps', \esteps')] {\typctxtwo} \tmtwo {\mtypetwo}
		} 
		\\\\
		\infer
	[\normal]
	{\tyjp{(0,0)} { \la\var\tm } {} {\normal}}
	{}	

		\end{array}$
		
	}
	\caption{Na\"ive type system for \cbneed evaluation.}
	\label{fig:type-system-cbneed-natural}
\end{figure}

\paragraph{Issue with the Na\"ive System.} 
Unfortunately, the na\"ive system does not work: tight derivations---defined as expected: empty type context and the term typed with $\mult{\normal}$---do not provide exact bounds. 
The problem is that the na\"ive blend of ingredients allows derivations of $\zero$ with strictly positive indices $\msteps$ and $\esteps$. 
Instead, derivations of $\zero$ should always have 0 in both indices---as is the case when they are derived with a $\many_0$ rule with 0 premises---because they correspond to terms to be erased, that are not evaluated in \cbneed. For any term $\tm$, indeed, one can for instance derive the following derivation $\tderiv$:
	$$
	\infer
		[\app]
		{\tyjp{(1,0)}{(\la{\var}{\var}) \tm}{}{\zero}}
		{\infer
			[\many_{>0}]
			{\tyjp{(0,0)}{\la{\var}{\var}}{}{\mult{\ty{\zero}{\zero}}}}
			{\infer
				[\fun]
				{\tyjp{(0,0)}{\la{\var}{\var}}{}{\ty{\zero}{\zero}}}
				{\infer
					[\many_0]
					{\tyjp{(0,0)}{\var}{}{\zero}}
					{}
				}
			}
		\quad
		\infer
			[\many_0]
			{\tyjp{(0,0)}{\tm}{}{\zero}}
			{}}
	$$
Note that introducing $\tyjp{(0,1)}{\var}{}{\zero}$ with rule $\ax$ rather than via $\many_0$ (the typing context $\var \hastype \zero$ is equivalent to the empty type context) would give a derivation with final judgement $\tyjp{(1,1)}{(\la{\var}{\var}) \tm}{}{\zero}$---thus, the system messes up both indices.

Such bad derivations of $\zero$ are not a problem \emph{per se}, because in \cbneed one expects correctness and completeness to hold only for derivations of non-empty multi types. 
However, they do mess up also derivations of non-empty multi types because they can still appear \emph{inside} tight derivations, as sub-derivations of sub-terms to be erased; consider for instance:
%	$$
%	\infer
%		[\app]
%		{\tyjp{(2,0)}{(\la{\vartwo}{I})((\la{\var}{\var}) \tm)}{}{\mult{\normal}}}
%		{\infer
%			[\many_{>0}]
%			{\tyjp{(0,0)}{\la{\vartwo}{I}}{}{\mult{\ty{\zero}{\mult{\normal}}}}}
%			{\infer
%				[\fun]
%				{\tyjp{(0,0)}{\la{\vartwo}{I}}{}{\ty{\zero}{\mult{\normal}}}}
%				{\infer
%					[\many_{>0}]
%					{\tyjp{(0,0)}{I}{}{\mult{\normal}}}
%					{\infer
%						[\normal]
%						{\tyjp{(0,0)}{I}{}{\normal}}
%						{}
%					}
%				}
%			}
%		\qquad
%		\tderiv
%%		\infer
%%			[\app]
%%			{\tyjp{(1,0)}{(\la{\var}{\var}) \tm}{}{\zero}}
%%			{\infer
%%				[\many_{>0}]
%%				{\tyjp{(0,0)}{\la{\var}{\var}}{}{\mult{\ty{\zero}{\zero}}}}
%%				{\infer
%%					[\fun]
%%					{\tyjp{(0,0)}{\la{\var}{\var}}{}{\ty{\zero}{\zero}}}
%%					{\infer
%%						[\many_0]
%%						{\tyjp{(0,0)}{\var}{}{\zero}}
%%						{}
%%					}
%%				}
%%			\quad
%%			\infer
%%				[\many_0]
%%				{\tyjp{(0,0)}{\tm}{}{\zero}}
%%				{}}
%		}
%	$$
\begin{prooftree*}
	\infer0[$\normal$]{\tyjp{(0,0)}{I}{}{\normal}}
	\infer1[$\many_{>0}$]{\tyjp{(0,0)}{I}{}{\mult{\normal}}}
	\infer1[$\fun$]{\tyjp{(0,0)}{\la{\vartwo}I}{}{\ty{\zero}\mult{\normal}}}
	\infer1[$\many_{>0}$]{\tyjp{(0,0)}{\la{\vartwo}I}{}{\mult{\ty{\zero}\mult{\normal}}}}
	\hypo{}
	\ellipsis{$\tderiv$}{\tyjp{(1,0)}{(\la{\var}{\var}) \tm}{}{\zero}}
	\infer2[$\app$]{\tyjp{(2,0)}{(\la{\vartwo}I)((\la{\var}{\var}) \tm)}{}{\mult{\normal}}}
\end{prooftree*}
The term normalises in just 1 $\mcbneed$-step to $I \esub{\vartwo}{(\la{\var}{\var}) \tm}$ but the multiplicative index of the derivation is 2. The mismatch is %caused by
due to a bad derivation of $\zero$ used as right premise of an $\app$ rule. Similarly, the induced typing of $I \esub{\vartwo}{(\la{\var}{\var}) \tm}$ %provides 
is an example of a bad derivation used as right premise of a rule $\esrule$:
%	$$
%	\infer
%		[\esrule]
%		{\tyjp{(1,0)}{I \esub{\vartwo}{(\la{\var}{\var}) \tm}}{}{\mult{\normal}}}
%		{\infer
%			[\many_{>0}]
%			{\tyjp{(0,0)}{I}{}{\mult{\normal}}}
%			{\infer[\normal]{
%				\tyjp{(0,0)}{I}{}{\normal}}
%				{}
%			}
%		\qquad
%		\tderiv
%%		\infer
%%			[\app]
%%			{\tyjp{(1,0)}{(\la{\var}{\var}) \tm}{}{\zero}}
%%			{\infer
%%				[\many_{>0}]
%%				{\tyjp{(0,0)}{\la{\var}{\var}}{}{\mult{\ty{\zero}{\zero}}}}
%%				{\infer
%%					[\fun]
%%					{\tyjp{(0,0)}{\la{\var}{\var}}{}{\ty{\zero}{\zero}}}
%%					{\infer
%%						[\many_0]
%%						{\tyjp{(0,0)}{\var}{}{\zero}}
%%						{}
%%					}
%%				}
%%			\quad
%%			\infer
%%				[\many_0]
%%				{\tyjp{(0,0)}{\tm}{}{\zero}}
%%				{}}
%		}
%	$$
	\begin{prooftree*}
		\infer0[$\normal$]{\tyjp{(0,0)}{I}{}{\normal}}
		\infer1[$\many_{>0}$]{\tyjp{(0,0)}{I}{}{\mult{\normal}}}
		\hypo{}
		\ellipsis{$\tderiv$}{\tyjp{(1,0)}{(\la{\var}{\var}) \tm}{}{\zero}}
		\infer2[$\esrule$]{\tyjp{(1,0)}{I \esub{\vartwo}{(\la{\var}{\var}) \tm}}{}{\mult{\normal}}}
	\end{prooftree*}
%As we shall define in \refdef{need-tight-deriv}, type derivations that end in typing judgements as those of $\tderivtwo$ and $\tderivtwo$ shall be considered $\tight$. This shows that there seems not to be a straightforward way to force $\minimal$ (\ie, those whose indices are the lowest possible and do not type subterms more than it is really needed) derivations in the na\"ive type system for $\cbneed$  to always derive $\zero$ by means of the $\many_0$ rule. This is still work in progress.

\paragraph{The Actual Type System.} Our solution to such an issue is to modify the system as to avoid as much as possible derivations of $\zero$. 
The idea is that deriving $\zero$ is only needed for the right premise of rules $\app$ and $\esrule$, when $\mtypetwo = \zero$, and so we add two dedicated rules $\appgc$ and $\esgc$, and instead remove rule $\many_0$ and forbid axioms to introduce $\zero$---the system is in \reffig{need-type-system} and it is based on the same grammar of types of the na\"ive system. Note that rules $\app$ and $\esrule$ now also require $\mtypetwo$ to be different from $\zero$, to avoid overlaps with $\appgc$ and $\esgc$. 

Note that the indices $\msteps$ and $\esteps$ are incremented and summed exactly as in the \cbn and \cbv type systems.

%\maico{Moreover, note that the indices $\msteps$ and $\esteps$ are incremented and summed exactly as in the $\cbn$ and $\cbv$ type systems. Lastly, note that the constraint on the multi type in the $\ax$ rule is required (\ie, $\mtype \neq \zero$), or otherwise the following type derivation would be possible
%
%	$$
%	\infer
%		[\appgc]
%		{\tyjp{(1,1)}{(\la{\var}{\var})((\la{\varthree}{\varthree})I)}{}{\mult{\normal}}}
%		{\infer
%			[\many]
%			{\tyjp{(0,1)}{\la{\var}{\var}}{}{\mult{\ty{\zero}{\zero}}}}
%			{\infer
%				[\fun]
%				{\tyjp{(0,1)}{\la{\var}{\var}}{}{\ty{\zero}{\zero}}}
%				{\infer
%					[\ax]
%					{\tyjp{(0,1)}{\var}{\var : \zero}{\zero}}
%					{}
%				}
%			}
%		}
%	$$
%
%and, thus, \refthm{need-correctness} would not hold because the indices would no longer bound the multiplicative and the exponential number of steps:
%	
%{\small$$\begin{array}{rclclclclc}
%  (\la{\var}{\var})((\la{\varthree}{\varthree})I)
%  & \rtomcbneed & \var \esub{\var}{((\la{\varthree}{\varthree})I)}  
%  & \tomcbneed  & \var \esub{\var}{\varthree \esub{\varthree}{I}} \\
%  & \toecbneed & \var \esub{\var}{I \esub{\varthree}{I}} 
%  & \rtoecbneed & I \esub{\var}{I} \esub{\varthree}{I}
%  
%  \end{array}$$
%  
%}	
%
%}

\begin{figure}[t]
\centering

\ovalbox{
$\begin{array}{cc}

\infer
	[\!\ax]
	{\tyjp{(0,1)} {\var} {\var \hastype \mtype} {\mtype}}
	{}

&

\infer
	[\!\normal]
	{\tyjp{(0,0)} { \la\var\tm } {} {\normal}}
	{}	

\\\\

\infer
	[\!\fun]
	{\tyjp{(\msteps, \esteps)}{\la{\var}{\tm}}{\typctx}{\ty{\mtype}{\mtypetwo}}}
	{\tyjp{(\msteps,\esteps)}{\tm}{\typctx, \var \hastype \mtype}{\mtypetwo}}
&
\infer
	[\!\many]
	{\tyjp{(\sum_{i \in J\!} \msteps_{i}, \sum_{i \in J\!} \esteps_{i})}{\la{\var}{\tm}}{\bigmplus_{i \in J} \typctx_{i}}{\mult{\type_{i}}_{i \in J}}}
	{(\tyjp{(\msteps_{i}, \esteps_{i})}{\la{\var}{\tm}}{\typctx_{i}}{\type_{i}})_{i \in J}
	\quad
	J \neq \emptyset}

\\\\

\infer
	[\!\appgc]
	{\tyjp{(\msteps + 1,\esteps)}{\tm \tmtwo}{\typctx} {\mtype}}
	{\tyjp{(\msteps,\esteps)} {\tm}{\typctx} {\mult {\ty{\emptytype}{\mtype}}}}
&
\infer
	[\!\app]
	{\tyjp{(\msteps + \msteps' + 1, \esteps + \esteps')}{\tm \tmtwo} {\typctx \mplus \typctxtwo} {\mtype}}
	{\tyjp{(\msteps,\esteps)} {\tm}{\typctx} {\mult {\ty{\mtypetwo}{\mtype}}} 
	\quad
	\tyjp{(\msteps',\esteps')} {\tmtwo} {\typctxtwo} {\mtypetwo}
	\quad
	\mtypetwo \neq \emptytype}

\\\\

\infer
	[\!\esgc]
	{\tyjp{(\msteps,\esteps)}{\tm \esub\var\tmtwo}{\typctx} {\mtype}}
	{\tyjp{(\msteps,\esteps)} {\tm} {\typctx} {\mtype}
	\quad
	\typctx(\var) = \emptytype}
&
\infer
	[\!\esrule]
	{\tyjp{(\msteps + \msteps', \esteps + \esteps')}{\tm \esub{\var}{\tmtwo}}{\typctx \mplus \typctxtwo}{\mtype}}
	{\tyjp{(\msteps,\esteps)}{\tm}{\typctx, \var \hastype \mtypetwo}{\mtype}
	\quad
	\tyjp{(\mstepstwo,\estepstwo)}{\tmtwo}{\typctxtwo}{\mtypetwo}
	\quad
	\mtypetwo \neq \emptytype}

%%%	
%%%\\\\
%%%\hline
%%%\\
%%%
%%%
%%%\infer
%%%	[\duplication]
%%%	{\tyjp{(\sum_{\iI} \msteps_{i},\sum_{\iI} \esteps_{i})} {\la{\var}{\tm}} {\bigmplus_{\iI} \typctx_i} {\mult{\type_{i}}}_{\iI}}
%%%	{(\tyjp{(\msteps_{i},\esteps_{i})} {\la{\var}{\tm}} {\typctx_i} {\type_{i}})_{\iI}
%%%	\quad
%%%	I \neq \emptyset}
%%%
%%%&
%%%
%%%
%%%\infer
%%%	[\erasure]
%%%	{\tyjp{(0, 0)}{\tm}{}{\emptymset}}
%%%	{}
%%%

\end{array}$

}

\caption{Type system for \cbneed evaluation.}
\label{fig:need-type-system}
\end{figure}

%Tight derivations are defined as follows.
\begin{definition}[Tight derivations for \cbneed]
\label{def:need-tight-deriv}
A derivation $\tderiv \exder[\cbneedup] \allowbreak \tyjp{(\msteps, \esteps)}{\tm}{\typctx}{\mtype}$ is \emph{tight} if $\mtype = [\normal]$ and $\typctx$ is empty.
%%%
%%%A type derivation $\tderiv \exder[\cbneedup] \tyjp{(\msteps, \esteps)}{\tm}{\typctx}{\mtype}$ is \emph{$\trivial$} if it is of the form
%%%	$$
%%%	\infer
%%%		[\erasure]
%%%		{\tyjp{(0,0)}{\tm}{}{\emptymset}}
%%%		{}
%%%	$$
%%%$\tderiv$ is $\emph{\minimal}$ if it is not $\trivial$ and every other type derivation of $\tm$, say $\tderiv' \exder[\cbneedup] \tyjp{(\msteps', \esteps')}{\tm}{\typctxtwo}{\mtypetwo}$ satisfies that $\msteps \leq \msteps'$ and $\esteps \leq \esteps'$.
\end{definition}

%%%\begin{toappendix}
%%%\begin{proposition}
%%%\label{prop:spreading-of-minimality}
%%%Given a minimal type derivation, all its sub-type derivations are either trivial or minimal.
%%%\end{proposition}
%%%\end{toappendix}

\begin{example} [The needed one]
\label{ex:need-type-deriv}
We return %once again 
to $\tm \defeq ((\la{\var}{\la{\vartwo}{\var \var}})(II))(II)$ used in \refex{evaluation} and we give it a tight derivation in the $\cbneed$ type system.

Again, we shorten $\normal$ to $\shortnormal$. Then, we define $\tderivtwo$ as follows
$$
\scalebox{0.95}{
\infer
			[\many]
			{\tyjp{(1,2)}{\la{\var}{\la{\vartwo}{\var \var}}}{}{\mult{\ty{\mult{\shortnormal, \ty{\mult{\shortnormal}}{\mult{\shortnormal}}}}{\mult{\ty{\zero}{\mult{\shortnormal}}}}}}}
			{\infer
				[\fun]
				{\tyjp{(1,2)}{\la{\var}{\la{\vartwo}{\var \var}}}{}{\ty{\mult{\shortnormal, \ty{\mult{\shortnormal}}{\mult{\shortnormal}}}}{\mult{\ty{\zero}{\mult{\shortnormal}}}}}}
				{\infer
					[\many]
					{\tyjp{(1,2)}{\la{\vartwo}{\var \var}}{\var : \mult{\shortnormal, \ty{\mult{\shortnormal}}{\mult{\shortnormal}}}}{\mult{\ty{\zero}{\mult{\shortnormal}}}}}
					{\infer
						[\fun]
						{\tyjp{(1,2)}{\la{\vartwo}{\var \var}}{\var : \mult{\shortnormal, \ty{\mult{\shortnormal}}{\mult{\shortnormal}}}}{\ty{\zero}{\mult{\shortnormal}}}}
						{\infer
							[\app]
							{\tyjp{(1,2)}{\var \var}{\var : \mult{\shortnormal, \ty{\mult{\shortnormal}}{\mult{\shortnormal}}}}{\mult{\shortnormal}}}
							{\infer
								[\ax]
								{\tyjp{(0,1)}{\var}{\var : \mult{\ty{\mult{\shortnormal}}{\mult{\shortnormal}}}}{\mult{\ty{\mult{\shortnormal}}{\mult{\shortnormal}}}}}
								{}
							\quad
							\infer
								[\ax]
								{\tyjp{(0,1)}{\var}{\var : \mult{\shortnormal}}{\mult{\shortnormal}}}
								{}}
						}
					}
				}
			}
}
$$
and, shortening $\ty{\mult{\shortnormal}}{\mult{\shortnormal}}$ to $\shortnormal^{\shortnormal}$, we define $\tderivthree$ as follows
%{\footnotesize
$$
\scalebox{0.93}{
\infer
	[\!app]
	{\tyjp{(1,2)}{II}{}{\mult{\shortnormal, \shortnormal^{\shortnormal}}}}
	{\infer
		[\!\many]
		{\tyjp{(0,1)}{\la{\varthree}{\varthree}}{}{\mult{\ty{\mult{\shortnormal, \shortnormal^{\shortnormal}}}{\mult{\shortnormal, \shortnormal^{\shortnormal}}}}}}
		{\infer
			[\!\fun]
			{\tyjp{(0,1)}{\la{\varthree}{\varthree}}{}{\ty{\mult{\shortnormal, \shortnormal^{\shortnormal}}}{\mult{\shortnormal, \shortnormal^{\shortnormal}}}}}
			{\infer
				[\!\ax]
				{\tyjp{(0,1)}{\varthree}{\varthree : \mult{\shortnormal, \shortnormal^{\shortnormal}}}{\mult{\shortnormal, \shortnormal^{\shortnormal}}}}
				{}
			}
		}
	\quad
	\infer
		[\!\many]
		{\tyjp{(0,1)}{\la{\varfour}{\varfour}}{}{\mult{\shortnormal, \shortnormal^{\shortnormal}}}}
		{\infer
			[\!\normal]
			{\tyjp{(0,0)}{\la{\varfour}{\varfour}}{}{\shortnormal}}
			{}
		\quad
		\infer
			[\!\fun]
			{\tyjp{(0,1)}{\la{\varfour}{\varfour}}{}{\shortnormal^{\shortnormal}}}
			{\infer
				[\!\ax]
				{\tyjp{(0,1)}{\varfour}{\varfour : \mult{\shortnormal}}{\mult{\shortnormal}}}
				{}
			}
		}
	}
}
$$
%}
%with $II = (\la{\varthree}{\varthree})(\la{\varfour}{\varfour})$.
Finally, we put $\tderivtwo$ and $\tderivthree$ together %to derive a type derivation $\tderiv$ for $\tm$ as follows
in the following derivation $\tderiv$ for $\tm$

%$$
%\infer
%	[\appgc]
%	{\tyjp{(4,4)}{((\la{\var}{\la{\vartwo}{\var \var}})(II))(II)}{}{\mult{\shortnormal}}}
%	{\infer
%		[\app]
%		{\tyjp{(3,4)}{(\la{\var}{\la{\vartwo}{\var \var}})(II)}{}{\mult{\ty{\zero}{\mult{\shortnormal}}}}}
%		{\tderiv_{1}
%		\qquad\qquad
%		\tderiv_{2}
%		}
%	}
%$$
\begin{prooftree*}
\scalebox{0.95}{
	\hypo{}
	\ellipsis{$\tderivtwo$}{\tyjp{(1,2)}{\la{\var}{\la{\vartwo}{\var \var}}}{}{\mult{\ty{\mult{\shortnormal, \ty{\mult{\shortnormal}}{\mult{\shortnormal}}}}{\mult{\ty{\zero}{\mult{\shortnormal}}}}}}}
	\hypo{}
	\ellipsis{$\tderivthree$}{\tyjp{(1,2)}{II}{}{\mult{\shortnormal, \shortnormal^{\shortnormal}}}}
	\infer2[$\app$]{\tyjp{(3,4)}{(\la{\var}{\la{\vartwo}{\var \var}})(II)}{}{\mult{\ty{\zero}{\mult{\shortnormal}}}}}
	\infer1[$\appgc$]{\tyjp{(4,4)}{((\la{\var}{\la{\vartwo}{\var \var}})(II))(II)}{}{\mult{\shortnormal}}}
}
\end{prooftree*}

\noindent Note that the indices $(4,4)$ correspond exactly to the number of %multiplicative and exponential steps, 
$\mcbneed$-steps and $\ecbneed$-steps respectively, from $\tm$ to its $\cbneedsym$-normal form---as shown in \refex{evaluation}---, and that $\tderiv$ is a \emph{tight} derivation. 
Forthcoming \refthm{need-correctness} shows once again that this is not by chance: $\cbneed$ tight derivations are minimal and provides exact bounds to evaluation lengths in $\tocbneed$.
\end{example}

Remarkably, the technical development to prove \emph{correctness} and \emph{completeness} of the \cbneed type system with respect to \cbneed evaluation follows smoothly along the same lines of the two other systems, \emph{mutatis mutandis}.

%%% Previous formulation of the Linear Substitution for CbNeed Lemma
%%%
%%% \begin{toappendix}
%%% \begin{lemma}[Linear substitution for $\cbneed$]
%%% \label{l:need-linear-substitution}
%%% Let $\tderiv_{N \cwc{\var}} \exder[\cbneedup] \tyjp{(\msteps,\esteps)}{N \cwc{\var}}{\var \colon \mtype ; \typctx}{\mtypethree}$ and $\tderiv_{\val} \exder[\cbneedup] \tyjp{(\msteps',\esteps')}{\val}{\typctxtwo}{\mtype} $ be type derivations, with $\var \notin \fv{\val}$. Then there exist splittings and $\mtype = \mtype_{1} \bigmplus \mtype_{2} $ and $\typctxtwo = \typctxtwo_{1} \bigmplus \typctxtwo_{2}$ such that $\tyjp{(\msteps_{1},\esteps_{1})}{\val}{\typctxtwo_{1}}{\mtype_{1}} $ and $\tyjp{(\msteps + \msteps_{1}, \esteps + \esteps_{1} - 1)}{N \cwc{\val}}{\var \colon \mtype_{2} ; \typctx \bigmplus \typctxtwo_{1}}{\mtypethree}$.
%%% \end{lemma}
%%% \end{toappendix}

%%% Newer version, which is probably more adjusted to the use we give it in the Subject Reduction for CbNeed exponential step base case. The idea of the change is to factor out the splitting of the multiset and simply prove linear substitution. Next, we prove the exponential step base case using both linear substitution and splitting lemmas.

\subsection{\cbneed Correctness} 

\begin{toappendix}
\begin{lemma}[\cbneed linear substitution]
\label{l:need-linear-substitution}
Let $\tderiv_{\cbneedctx \cwc{\var}} \exder[\cbneedup] \tyjp{(\msteps, \esteps)}{\cbneedctx\cwc{\var}}{\var \hastype \mtype;\typctx}{\mtypethree}$ be a derivation and $\val$ a value such that $\mtypethree \neq \emptytype $ and $\cbneedctx$ does not capture the free variables of $\val$. Then there exists a splitting $ \mtype = \mtype_{1} \mplus \mtype_{2}$, with $\mtype_{1} \neq \emptytype$, such that for every derivation $\tderivtwo \exder[\cbneedup] \tyjp{(\mstepstwo, \estepstwo)}{\val}{\typctxtwo}{\mtype_{1}}$ there exists a derivation $\tderiv_{\cbneedctx\cwc{\val}} \exder[\cbneedup] \tyjp{(\msteps + \mstepstwo, \esteps + \estepstwo - 1)}{\cbneedctx\cwc{\val}}{\var \hastype \mtype_{2}; \typctx \mplus \typctxtwo}{\mtypethree}$.
\end{lemma}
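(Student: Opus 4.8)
The plan is to prove the statement by induction on the structure of the \cbneed evaluation context $\cbneedctx$, exactly mirroring the proofs of \cbn and \cbv linear substitution (\reflemma{name-linear-substitution} and \reflemma{value-linear-substitution-bis}), but with one extra context production to treat. Throughout, the requested splitting $\mtype = \mtype_1 \mplus \mtype_2$ with $\mtype_1 \neq \emptytype$ is produced so that $\mtype_1$ is exactly the part of $\var$'s type contributed by the distinguished hole occurrence; the hypothesis $\mtypethree \neq \emptytype$ is what guarantees that this occurrence is actually typed and hence that $\mtype_1$ can be taken non-empty. The condition that $\cbneedctx$ does not capture the free variables of $\val$ (always obtainable by $\alpha$-renaming) is used to combine the type context $\typctxtwo$ of $\val$'s derivation with the rest: by relevance (the \cbneed analogue of \reflemma{name-typctx-varocc}) we have $\dom{\typctxtwo} \subseteq \fv{\val}$, which is disjoint from the variables bound by the explicit substitutions of $\cbneedctx$, so no clash arises when forming $\typctx \mplus \typctxtwo$.

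In the base case $\cbneedctx = \ctxhole$ we have $\cbneedctx\cwc{\var} = \var$, which can only be typed by $\ax$, so $\typctx$ is empty, $\mtypethree = \mtype$, and $(\msteps,\esteps) = (0,1)$. I take $\mtype_1 \defeq \mtype$ and $\mtype_2 \defeq \emptytype$ (legitimate since $\mtype = \mtypethree \neq \emptytype$); for any $\tderivtwo \exder[\cbneedup] \tyjp{(\mstepstwo,\estepstwo)}{\val}{\typctxtwo}{\mtype_1}$ the derivation $\tderivtwo$ itself is the required one, the indices matching because $(0 + \mstepstwo,\, 1 + \estepstwo - 1) = (\mstepstwo,\estepstwo)$. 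In the inductive cases for the application and for the explicit-substitution-in-the-body productions, $\cbneedctx = \cbneedctx'\tm$ and $\cbneedctx = \cbneedctx'\esub{\vartwo}{\tm}$, the last rule is one of $\app$/$\appgc$, resp. $\esrule$/$\esgc$, and the hole lies in the left premise, which types $\cbneedctx'\cwc{\var}$ with a non-empty type. I apply the induction hypothesis to that premise, obtaining a split of the $\var$-component of its context, rebuild $\cbneedctx'\cwc{\val}$ from $\tderivtwo$, and re-apply the same rule (keeping the right premise typing $\tm$ untouched). The splitting for $\cbneedctx$ is then the $\mtype_1$ given by the hypothesis together with $\mtype_2$ the remaining $\var$-components, and a routine check shows the indices add up to $(\msteps + \mstepstwo, \esteps + \estepstwo - 1)$ (the $+1$ introduced on the multiplicative index by $\app$/$\appgc$ is unaffected by the substitution).

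The main obstacle, and the only genuinely new case, is the hereditary production $\cbneedctx = \cbneedctx'\cwc{\vartwo}\esub{\vartwo}{\cbneedctxtwo}$, where the hole sits inside the content $\cbneedctxtwo$ of the substitution while $\vartwo$ is demanded in the body $\cbneedctx'$. Here I must first argue that the last rule is $\esrule$ and not $\esgc$. For this I use the auxiliary fact that whenever $\cbneedctxtwo'\cwc{\vartwo}$ is typed with a non-empty type, the hole variable $\vartwo$ receives a non-empty type in its type context. Applied to the body $\cbneedctx'\cwc{\vartwo}$ (typed with $\mtypethree \neq \emptytype$), this gives $\vartwo$ a non-empty type, contradicting the $\esgc$ side condition $\typctx(\vartwo) = \emptytype$; hence the rule is $\esrule$, and its right premise types the content $\cbneedctxtwo\cwc{\var}$ with some $\mtypetwo \neq \emptytype$. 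I then apply the induction hypothesis to this right premise (legitimate precisely because $\mtypetwo \neq \emptytype$), replace $\cbneedctxtwo\cwc{\var}$ by $\cbneedctxtwo\cwc{\val}$, and re-apply $\esrule$ with the unchanged left premise; the indices again combine to $(\msteps + \mstepstwo, \esteps + \estepstwo - 1)$.

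It is cleanest to establish the auxiliary fact as a separate preliminary lemma, proved by induction on \cbneed contexts: its only delicate step is this same hereditary production, where one recurses into $\cbneedctx'$ to exclude $\esgc$ (forcing $\vartwo$ non-empty) and into $\cbneedctxtwo$ for the conclusion; since both are strict subcontexts the induction is well-founded. With that fact in hand, all four cases of the main induction close uniformly, and in every inductive case $\mtype_1$ inherits its non-emptiness directly from the induction hypothesis, so the constraint $\mtype_1 \neq \emptytype$ is maintained throughout.
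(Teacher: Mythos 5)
Your proposal is correct and takes essentially the same route as the paper: an induction on the \cbneed evaluation context whose only genuinely new case, the hereditary production $\cbneedctx'\cwc{\vartwo}\esub{\vartwo}{\cbneedctxtwo}$, is closed by first ruling out $\esgc$ via the auxiliary fact that a non-empty right-hand type forces the hole variable to have a non-empty type in the context. That auxiliary fact is precisely item 2 of the paper's preliminary lemma on basic properties of \cbneed derivations (\reflemma{need-basic-properties-typing-derivations}), which the paper likewise isolates as a separate lemma and invokes at exactly this point.
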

\end{toappendix}

\begin{toappendix}
\begin{proposition}[Quantitative subject reduction for $\cbneed$]
\label{prop:need-subject-reduction}
Let $\tderiv \exder[\cbneedup] \tyjp{(\msteps, \esteps)}{\tm}{\typctx}{\mtype}$ be a derivation such that $\mtype \neq \emptytype$.
\begin{itemize}
\item \emph{Multiplicative}: if $\tm \tomcbneed \tmtwo$ then $\msteps \geq 1$ and there is a derivation $\tderiv' \exder[\cbneedup] \tyjp{(\msteps - 1, \esteps)}{\tm}{\typctx}{\mtype}$.
\item \emph{Exponential}: if $\tm \toecbneed \tmtwo$ then $\esteps \geq 1$ and there exists a derivation $\tderiv' \exder[\cbneedup] \tyjp{(\msteps, \esteps - 1)}{\tm}{\typctx}{\mtype}$.
\end{itemize}
\end{proposition}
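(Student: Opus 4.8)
The plan is to prove both items simultaneously by induction on the \cbneed evaluation context closing the fired root-step, i.e.\ writing $\tm = \cbneedctxp{r}$ and $\tmtwo = \cbneedctxp{r'}$ with $r \rtom r'$ (resp.\ $r \rtoecbneed r'$), exactly as for \refprop{name-subject-reduction} and \refprop{value-subject-reduction}; all the genuine work sits in the two root-steps, the context cases being bureaucratic. Throughout, the hypothesis $\mtype \neq \emptytype$ is what keeps the count exact: it rules out the degenerate $\emptytype$-typings (which, in the actual system, can only decorate erased subterms as right premises of $\appgc$ and $\esgc$, never a genuine conclusion), so that the last rule of $\tderiv$ is one of $\ax$, $\normal$, $\fun$, $\many$, $\app$, $\appgc$, $\esrule$, $\esgc$, and the displayed indices really track the spine being reduced.

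For the \textbf{multiplicative root-step} $\sctxp{\la\var\tm}\,\tmthree \rtom \sctxp{\tm\esub\var\tmthree}$ the redex is an application, so $\tderiv$ ends in $\app$ or $\appgc$. Peeling the substitution context $\sctx$ through the left premise (a chain of $\esrule$/$\esgc$) exposes a $\many$-then-$\fun$ typing of $\la\var\tm$ with the singleton $\mult{\ty{\mtypetwo}{\mtype}}$ (resp.\ $\mult{\ty{\emptytype}{\mtype}}$). I would then rebuild a derivation of $\sctxp{\tm\esub\var\tmthree}$ by replacing the $\app$ (resp.\ $\appgc$) by an $\esrule$ (resp.\ $\esgc$) fed with the body derivation of $\tm$ (with $\var\hastype\mtypetwo$) and the argument derivation of $\tmthree$ (with $\mtypetwo$), relocating the ESs of $\sctx$ outside; the $\alpha$-condition on the root-step prevents capture. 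Since $\app$/$\appgc$ contribute $+1$ to $\msteps$ while $\esrule$/$\esgc$ contribute $0$, and nothing else changes, $\msteps$ drops by exactly one and $\esteps$ is preserved, giving $\msteps \geq 1$.

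For the \textbf{exponential root-step} $\cbneedctx\cwc\var\esub\var{\sctxp\val} \rtoecbneed \sctxp{\cbneedctx\cwc\val\esub\var\val}$ the redex is an ES. Here I first observe that a variable in \cbneed-evaluation position is necessarily assigned a non-empty type whenever the target is non-empty, so the binder is handled by $\esrule$ (not $\esgc$): the left premise types $\cbneedctx\cwc\var$ with $\var\hastype\mtypeA;\typctx_1$ and target $\mtype$, and the right premise types $\sctxp\val$ with $\mtypeA \neq \emptytype$. Decomposing the right premise through $\sctx$ exposes a $\many$-typing of the value $\val$ with $\mtypeA$. I then apply the \cbneed linear substitution lemma (\reflemma{need-linear-substitution}) to the left premise, obtaining a splitting $\mtypeA = \mtype_1 \mplus \mtype_2$ with $\mtype_1 \neq \emptytype$; feeding it the $\mtype_1$-part of the split $\many$-typing of $\val$ yields a derivation of $\cbneedctx\cwc\val$ of target $\mtype$ with $\esteps$ decreased by one. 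Finally I reattach the binder $\esub\var\val$ using the $\mtype_2$-part (via $\esrule$ if $\mtype_2\neq\emptytype$, via $\esgc$ otherwise) and relocate $\sctx$ outside.

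The main obstacle is precisely this exponential case: verifying that the total count is conserved except for the single unit removed by \reflemma{need-linear-substitution}. The delicate points are (i) splitting the $\many$-typing of $\val$ along $\mtypeA = \mtype_1\mplus\mtype_2$ so that the indices of $\val$ are re-partitioned exactly between its new occurrence inside $\cbneedctx\cwc\val$ and the residual ES, (ii) showing the relocation of $\sctx$ leaves its index contribution untouched, and (iii) justifying that the needed occurrence forces $\esrule$. Once the root-steps are settled, the inductive cases are routine: for $\cbneedctx'\tmthree$ apply the induction hypothesis to the left premise of $\app$/$\appgc$ (typed with a non-empty arrow type); for $\cbneedctx'\esub\varfour\tmthree$ to the left premise of $\esrule$/$\esgc$; and for $\cbneedctx_1\cwc\varfour\esub\varfour{\cbneedctx_2}$, since $\varfour$ is needed and hence typed non-emptily, the last rule is $\esrule$ and the induction hypothesis applies to its right premise (target $\mtypetwo\neq\emptytype$). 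As indices are summed by every rule, the unit removed in the relevant premise propagates unchanged to the conclusion.
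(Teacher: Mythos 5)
Your proposal is correct and follows essentially the same route as the paper's proof: induction on the evaluation context, with the root multiplicative case handled by turning $\app$/$\appgc$ into $\esrule$/$\esgc$ under the relocated substitution context, and the root exponential case handled by the linear substitution lemma (\reflemma{need-linear-substitution}) combined with splitting the value's derivation (\reflemma{need-splitting-multisets}) and the fact that a needed variable is typed non-emptily when the conclusion type is non-empty (\reflemma{need-basic-properties-typing-derivations}), which forces the last rule to be $\esrule$. Only your parenthetical claim that $\emptytype$-typings occur as right premises of $\appgc$/$\esgc$ is off---those rules have no right premise, erased subterms being simply left untyped---but this is motivational framing and plays no role in the argument.
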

\end{toappendix}

\begin{toappendix}
\begin{proposition}[$\mult\normal$ typings for normal forms for \cbneed]
\label{prop:need-normal-forms-forall}
Let $\tderiv \exder[\cbneedup] \tyjp{(\msteps, \esteps)}{\tm}{\typctx}{\mult\normal}$ be a derivation, with $\normalpr{\tm}$. 	Then $\typctx$ is empty, and so $\tderiv$ is tight, and $\msteps = \esteps = 0$.
\end{proposition}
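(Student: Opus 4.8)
The plan is to argue by induction on the derivation of the predicate $\normalpr{\tm}$, whose two clauses force $\tm$ to be an abstraction wrapped in a (possibly empty) list of explicit substitutions, so that only two cases arise. As in the \cbn and \cbv analogues (\refprop{name-normal-forms-forall} and \refprop{value-normal-forms-forall}), the key to making the induction go through is to prove the statement in its \emph{universal} form: \emph{every} derivation $\tderiv \exder[\cbneedup] \tyjp{(\msteps,\esteps)}{\tm}{\typctx}{\mult\normal}$ with $\normalpr{\tm}$ has $\typctx$ empty and $\msteps = \esteps = 0$. This is exactly what I need, since to treat the inductive step I must apply the statement to \emph{all} derivations of the relevant sub-terms, namely those sitting in the premises of the last rule.

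For the base case $\tm = \la\var{\tm'}$, I would first observe that the only rule of \reffig{need-type-system} assigning a \emph{multi} type to an abstraction is $\many$ (the rules $\ax$, $\app$, $\appgc$, $\esrule$, $\esgc$ apply to variables, applications or explicit substitutions). Since the conclusion type $\mult\normal$ is a singleton multi-set, the index set $J$ is a singleton and the unique premise is $\tyjp{(\msteps,\esteps)}{\la\var{\tm'}}{\typctx}{\normal}$. Now the only rule concluding the linear constant $\normal$ for any term is the $\normal$ axiom, whose conclusion has empty type context and indices $(0,0)$ (the rule $\fun$ is excluded, as it concludes an arrow type $\ty{\mtype}{\mtypetwo}$). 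Hence $\typctx$ is empty and $\msteps = \esteps = 0$, as required.

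For the inductive case $\tm = \tm'\esub\var\tmtwo$ with $\normalpr{\tm'}$, the last rule is either $\esgc$ or $\esrule$. In the $\esgc$ case the premise is $\tyjp{(\msteps,\esteps)}{\tm'}{\typctx}{\mult\normal}$, and the induction hypothesis applied to $\tm'$ directly yields that $\typctx$ is empty and $\msteps = \esteps = 0$. The $\esrule$ case is the one I expect to be the crux: here the premises are $\tyjp{(\msteps_1,\esteps_1)}{\tm'}{\typctx_1, \var\hastype\mtypetwo}{\mult\normal}$ and $\tyjp{(\msteps_2,\esteps_2)}{\tmtwo}{\typctx_2}{\mtypetwo}$ together with the side condition $\mtypetwo \neq \emptytype$. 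Applying the induction hypothesis to the first premise (whose subject $\tm'$ is normal and typed with $\mult\normal$) forces the context $\typctx_1, \var\hastype\mtypetwo$ to be empty, so in particular $\mtypetwo = \emptytype$, contradicting the side condition. Thus the $\esrule$ case is vacuous and the induction is complete. It is worth stressing that this impossibility is precisely what the constraint $\mtypetwo \neq \emptytype$ on $\esrule$ (and, dually, the dedicated rule $\esgc$) buys us: it is exactly the correction over the na\"ive system that rules out the bad derivations of $\zero$ with strictly positive indices appearing inside a tight derivation, so that no genuine calculation is needed beyond the case analysis on the last rule.
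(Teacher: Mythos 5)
Your proof is correct and follows essentially the same route as the paper's: induction on the predicate $\normalpr{\tm}$ using the universal form of the statement (type $\mult\normal$, arbitrary type context), with a case analysis on the last typing rule --- the abstraction case forced through $\many$ with a single $\normal$-axiom premise, and the explicit substitution case split into $\esgc$ (direct induction hypothesis) and $\esrule$ (refuted because emptiness of the premise context contradicts the side condition $\mtypetwo \neq \emptytype$). Nothing is missing.
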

\end{toappendix}

\begin{toappendix}
\begin{theorem} [\cbneed tight correctness]
\label{thm:need-correctness}
Let $\tm$ be a closed term. 
%and $\tderiv \exder[\cbneedup] \tyjp{(\msteps, \esteps)}{\tm}{}{\mtype}$ be a derivation. Then 
If $\tderiv \exder[\cbneedup] \tyjp{(\msteps, \esteps)}{\tm}{}{\mtype}$ then there is $\tmtwo$ such that $\deriv \colon \tm \tocbneedn \tmtwo$, $\normalpr{\tmtwo}$, $\sizem{\deriv} \leq \msteps$, $\sizee{\deriv} \leq \esteps$. 
Moreover, if $\tderiv$ is tight then $\sizem{\deriv} = \msteps$ and $\sizee{\deriv} = \esteps$.
\end{theorem}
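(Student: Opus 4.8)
The plan is to prove the statement by well-founded induction on the size $\msteps + \esteps$ of the derivation $\tderiv$, following verbatim the pattern already used for \cbn (\refthm{name-correctness}) and \cbv (\refthm{value-correctness}). The two ingredients are quantitative subject reduction (\refprop{need-subject-reduction}), which lets me peel off one evaluation step while decreasing the size by exactly one, and the analysis of normal forms (\refprop{need-normal-forms-forall}), which pins down the base case. A pleasant by-product of inducting on the size is that termination is not assumed but obtained: the measure is a natural number that strictly decreases along $\tocbneed$, so the evaluation $\deriv$ that I build is necessarily finite.

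First I would split on whether $\tm$ is $\cbneedsym$-normal. If it is, then by \refprop{syntactic-characterization-closed-normal} we have $\normalpr\tm$, the witness is $\tmtwo \defeq \tm$ with the empty evaluation $\deriv$, and $\sizem\deriv = \sizee\deriv = 0$, which trivially satisfies the bounds $0 \le \msteps$ and $0 \le \esteps$. For the tight case, if $\tderiv$ is tight then $\mtype = \mult{\normal}$ and $\typctx$ is empty, so \refprop{need-normal-forms-forall} forces $\msteps = \esteps = 0$, matching the empty evaluation exactly.

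If $\tm$ is not normal, then since $\tm$ is closed, \refprop{syntactic-characterization-closed-normal} guarantees that $\tm$ is reducible, say $\tm \tocbneed \tmtwo$. Before applying subject reduction I note that a closed term cannot be typed with the empty multi type $\zero$ in this system: any derivation of $\zero$ must propagate a $\zero$ requirement down through applications and explicit substitutions until it reaches a variable or an abstraction, neither of which can be assigned $\zero$ (axioms are barred from introducing $\zero$, and $\normal$, $\fun$, $\many$ never conclude $\zero$). Hence $\mtype \neq \zero$ and \refprop{need-subject-reduction} applies. If the step is multiplicative I get $\msteps \ge 1$ and a derivation $\tderivtwo$ of $\tmtwo$ with indices $(\msteps-1,\esteps)$; if it is exponential I get $\esteps \ge 1$ and indices $(\msteps,\esteps-1)$. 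In either case the size of $\tderivtwo$ is exactly one less than that of $\tderiv$, so the induction hypothesis yields an evaluation to a normal form with the claimed bounds, and prefixing the step increments $\sizem\deriv$ or $\sizee\deriv$ in exactly the component that was decremented, preserving the inequalities.

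The only genuinely delicate point is the tight case of the inductive step, where I must check that tightness is preserved by subject reduction. This is immediate, because \refprop{need-subject-reduction} returns a derivation with the same type $\mtype$ and the same (empty) type context $\typctx$; hence if $\tderiv$ is tight so is $\tderivtwo$, and the induction hypothesis supplies the exact equalities for $\tmtwo$ (e.g.\ $\sizem{\deriv} - 1 = \msteps - 1$ and $\sizee{\deriv} = \esteps$ after a multiplicative step), which lift to exact equalities for the whole evaluation. The main obstacle to get right is therefore not the recursion itself but the side condition of \refprop{need-subject-reduction}: one must be certain that the type is never $\zero$, which is exactly what the closedness of $\tm$ secures and what keeps subject reduction---and with it the whole argument---applicable at every step.
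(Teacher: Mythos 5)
Your overall architecture is the paper's one: induction on $\msteps+\esteps$, case split on whether $\tm$ is normal, \refprop{need-normal-forms-forall} for the tight base case, and \refprop{need-subject-reduction} (which preserves the type and the empty context, hence tightness) for the inductive step. The genuine problem is your discharge of the side condition $\mtype \neq \zero$ of \refprop{need-subject-reduction}, which is precisely the one point where the \cbneed proof cannot be copied verbatim from \cbn or \cbv (neither of those subject reduction statements carries such a condition). Your argument---that a derivation of $\zero$ ``propagates a $\zero$ requirement down through applications and explicit substitutions until it reaches a variable or an abstraction''---misreads the rules $\app$ and $\appgc$: their conclusion can be $\zero$ while their premise is typed with the \emph{non-empty} multi type $\mult{\ty{\mtypetwo}{\zero}}$, which the rule $\ax$ is perfectly entitled to introduce. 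Concretely, for \emph{every} term $\tmtwo$ (including $\Omega$) one derives $\tyjp{(1,1)}{\var\tmtwo}{\var\hastype\mult{\ty{\zero}{\zero}}}{\zero}$ by $\ax$ followed by $\appgc$: the derivation never contains a subterm required to have type $\zero$. Since your propagation argument nowhere uses closedness, it would prove that \emph{no} term whatsoever is typable with $\zero$, which this example refutes; so the argument is not repairable locally.

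The fact you actually need---that no derivation $\tyjp{(\msteps,\esteps)}{\tm}{}{\zero}$ exists for closed $\tm$ with empty type context---seems true, but it is a genuinely non-trivial lemma in which closedness must act globally. Note that no syntactic polarity invariant can do the job: composing the example above with $\fun$ and $\many$ gives $\tyjp{(1,1)}{\la{\var}{\var\var}}{}{\mult{\ty{\mult{\ty{\zero}{\zero}}}{\zero}}}$, a closed, empty-context judgement whose type has $\zero$ in target position. What prevents one from then building a closed application typed $\zero$ is that the source $\mult{\ty{\zero}{\zero}}$ is itself uninhabited by closed empty-context derivations---an inhabitation statement, whose proof requires an induction with a substantially strengthened hypothesis (every context entry must be tracked as being backed, through $\esrule$ and $\app$, by a typing of the corresponding argument), not a local propagation. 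As it stands, your inductive step is unjustified whenever $\mtype = \zero$: the tight half of the theorem survives (tightness forces $\mtype = \mult{\normal}$, and the type is preserved along the induction), but the bounds for arbitrary derivations---which is how the theorem is stated---do not. You must either prove the non-derivability of $\zero$ for closed terms as a standalone lemma, or find another treatment of the case $\mtype = \zero$.
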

\end{toappendix}

\subsection{\cbneed Completeness} 

\begin{toappendix}
\begin{proposition}[Normal forms are tightly typable for $\cbneed$]
\label{prop:need-normal-forms-exist}
Let $\tm$ be such that $\normalpr{\tm}$. Then there is a tight derivation $\tderiv \exder[\cbneedup] \tyjp{(0,0)}{\tm}{}{\mult{\normal}}$.
\end{proposition}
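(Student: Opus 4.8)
The plan is to proceed by induction on the predicate $\normalpr{\tm}$, which by its inductive definition has exactly two cases: either $\tm$ is an abstraction, or $\tm$ has the shape $\tm' \esub\var\tmtwo$ with $\normalpr{\tm'}$. In both cases I will exhibit a derivation with empty type context, right-hand type $\mult\normal$, and indices $(0,0)$; this is precisely what tightness demands here, so producing such a derivation establishes the statement.

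For the base case, suppose $\tm = \la\var{\tm'}$ is an abstraction. The $\normal$ rule immediately gives $\tyjp{(0,0)}{\la\var{\tm'}}{}{\normal}$, with empty type context and null indices. Since the target type in the statement is the multi type $\mult\normal$ rather than the linear type $\normal$, I then apply rule $\many$ with a single premise: taking the index set $J$ to be a singleton satisfies $J \neq \emptyset$, and $\many$ is applicable because its subject $\la\var{\tm'}$ is indeed an abstraction. This yields $\tyjp{(0,0)}{\la\var{\tm'}}{}{\mult\normal}$, still with empty type context and indices $(0,0)$, hence a tight derivation.

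For the inductive case, suppose $\tm = \tm' \esub\var\tmtwo$ with $\normalpr{\tm'}$. By the induction hypothesis there is a tight derivation $\tderiv' \exder[\cbneedup] \tyjp{(0,0)}{\tm'}{}{\mult\normal}$, in particular one whose type context is empty. Since the empty context maps $\var$ to $\emptytype$, the side condition $\typctx(\var) = \emptytype$ of rule $\esgc$ is met, so applying $\esgc$ to $\tderiv'$ produces $\tyjp{(0,0)}{\tm' \esub\var\tmtwo}{}{\mult\normal}$. Rule $\esgc$ leaves the type context, the type, and both indices unchanged, so the resulting derivation is again tight, which closes the induction.

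There is no genuine obstacle in this argument: the statement is exactly the base-case ingredient of completeness, and the type system of \reffig{need-type-system} was designed precisely so that normal forms are typable at cost $(0,0)$. The only two points worth stressing are that the base case must route through $\many$ (with a nonempty, here singleton, index set) in order to pass from the linear type $\normal$ to the multi type $\mult\normal$, and that it is the erasing rule $\esgc$---rather than $\esrule$---that lets the explicit substitution $\esub\var\tmtwo$ of a normal form be discarded without perturbing the indices, since the body $\tm'$ never uses $\var$ in a typed position.
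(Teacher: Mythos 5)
Your proof is correct and follows essentially the same route as the paper's: induction on the predicate $\normalpr{\cdot}$, deriving $\normal$ by rule $\normal$ and promoting it to $\mult\normal$ via $\many$ with a singleton index set in the abstraction case, and discharging the explicit substitution with $\esgc$ (whose side condition holds trivially since the inductive hypothesis gives an empty type context) in the other case. Nothing is missing.
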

\end{toappendix}

\begin{toappendix}
\begin{lemma}[Linear removal for $\cbneed$]
\label{l:need-linear-anti-substitution}
Let  $\tderiv \exder[\cbneedup] \tyjp{(\msteps, \esteps)}{\cbneedctx\cwc{\val}}{\typctx}{\mtypethree}$ be a derivation, with $\mtypethree \neq \emptytype$ and $\var \notin \fv\val$. Then  there exist 
\begin{itemize}
\item a multi type $\mtype$,
\item a derivation $\tderiv_{\val}  \exder[\cbneedup] \tyjp{(\msteps_{\val}, \esteps_{\val})}{\val}{\typctx_{\val}}{\mtype}$, and
\item a derivation $\tderiv_{\cbneedctx \cwc{\var}} \exder[\cbneedup] 
 \tyjp{(\msteps', \esteps')}{\cbneedctx\cwc{\var}}{\typctx' \mplus \{\var \colon \mtype\} }{\mtypethree}$
\end{itemize}
such that 
\begin{itemize}
	\item \emph{Type contexts}: $\typctx = \typctx' \mplus \typctx_{\val}$.

	\item \emph{Indices}: $(\msteps, \esteps) = (\msteps' +  \msteps_{\val}, \esteps' + \esteps_{\val} - 1)$.
\end{itemize}
\end{lemma}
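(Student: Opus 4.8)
The plan is to prove this by induction on the call-by-need evaluation context $\cbneedctx$, exactly mirroring the structure of the linear substitution lemma (\reflemma{need-linear-substitution}) but running the extraction in the opposite direction: instead of plugging a value derivation \emph{into} a context derivation, I extract from a derivation of $\cbneedctxp{\val}$ both a standalone derivation of $\val$ and a derivation of $\cbneedctxp{\var}$ where the consumed types have been replaced by a fresh occurrence of $\var$. The grammar of \cbneed contexts has four productions---$\ctxhole$, $\cbneedctx\tm$, $\cbneedctx\esub\var\tm$, and $\cbneedctx\cwc{\vartwo}\esub{\vartwo}{\cbneedctxtwo}$---so there will be a base case and three inductive cases, and the main subtlety will be that in the last production the context descends into the \emph{substituted} subterm, which forces a second, simultaneous use of the induction hypothesis.

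First I would treat the base case $\cbneedctx = \ctxhole$. Here $\cbneedctxp{\val} = \val$, so the hypothesis gives $\tderiv \exder[\cbneedup] \tyjp{(\msteps,\esteps)}{\val}{\typctx}{\mtypethree}$ with $\mtypethree \neq \emptytype$. I set $\mtype \defeq \mtypethree$, take $\tderiv_\val \defeq \tderiv$ (so $\typctx_\val = \typctx$ and the indices of $\tderiv_\val$ are exactly $(\msteps,\esteps)$), and take $\typctx' \defeq$ the empty context. For $\tderiv_{\cbneedctx\cwc{\var}}$ I need a derivation of the bare variable $\cbneedctxp{\var} = \var$ with type $\mtypethree$ and context $\{\var \col \mtype\}$: this is precisely the axiom $\ax$, which yields $\tyjp{(0,1)}{\var}{\var\col\mtype}{\mtype}$. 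The index bookkeeping then checks out, since $(\msteps,\esteps) = (0 + \msteps_\val,\ 1 + \esteps_\val - 1)$ with $(\msteps',\esteps') = (0,1)$ and $(\msteps_\val,\esteps_\val) = (\msteps,\esteps)$; and the context split $\typctx = \typctx' \mplus \typctx_\val$ holds trivially.

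Next come the three inductive cases. For $\cbneedctx = \cbneedctxtwo\tm$, the last rule of $\tderiv$ is either $\appgc$ or $\app$, whose left premise types $\cbneedctxtwop{\val}$; I apply the induction hypothesis to that premise (its right-hand type $\mult{\ty{\mtypetwo}{\mtype}}$ or $\mult{\ty{\emptytype}{\mtype}}$ is nonempty, and $\var \notin \fv\val$ is inherited), extract $\tderiv_\val$ and a derivation of $\cbneedctxtwop{\var}$, and reassemble the same application rule to get a derivation of $\cbneedctxtwop{\var}\tm = \cbneedctxp{\var}$, merging type contexts and propagating the $-1$ on the exponential index through the unchanged additive index bookkeeping of the application rule. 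The case $\cbneedctx = \cbneedctxtwo\esub\var\tm$ is analogous, with the last rule being $\esgc$ or $\esrule$ acting on the premise typing $\cbneedctxtwop{\val}$. The genuinely delicate case is the last one, $\cbneedctx = \cbneedctxtwo\cwc{\vartwo}\esub{\vartwo}{\cbneedctxtwothree}$, where the hole sits inside the content of the explicit substitution: here the last rule is $\esrule$ (the substituted subterm is reached by evaluation, hence gets a nonempty type $\mtypetwo \neq \emptytype$), and I must apply the induction hypothesis to the \emph{right} premise, which types $\cbneedctxtwothreep{\val}$. This produces $\tderiv_\val$ together with a derivation of $\cbneedctxtwothreep{\vartwo'}$ for a fresh variable, which I then graft back under $\esrule$, being careful that the variable $\var$ being removed is genuinely free for the whole context (the side condition $\cbneedctx$ does not capture $\fv\val$, together with $\var \notin \fv\val$) and that the two type contexts recombine correctly as $\typctx = \typctx' \mplus \typctx_\val$. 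The main obstacle throughout is this bookkeeping in the nested-substitution case: keeping straight which premise the induction hypothesis is applied to, ensuring the extracted value type $\mtype$ is threaded consistently into the $\var\col\mtype$ slot of the output context, and verifying that the exponential index loses \emph{exactly} one unit (from the single axiom introduced at the base case) and no more, so that the final indices satisfy $(\msteps,\esteps) = (\msteps' + \msteps_\val,\ \esteps' + \esteps_\val - 1)$.
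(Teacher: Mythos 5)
Your proof is correct and takes essentially the same approach as the paper's: induction on the call-by-need evaluation context, with a case analysis on the last typing rule ($\app$ or $\appgc$ for applications, $\esrule$ or $\esgc$ for explicit substitutions), a single application of the induction hypothesis to the premise typing the subcontext containing the hole, and reassembly of the same rule, the $-1$ on the exponential index coming from the one axiom introduced in the base case. Only note that, contrary to what your opening paragraph announces, the nested case $\cbneedctx\cwc{\vartwo}\esub{\vartwo}{\cbneedctxtwo}$ does not require a ``second, simultaneous'' use of the induction hypothesis---as your own detailed treatment then correctly shows, one application to the right premise suffices, and ruling out $\esgc$ there rests on the paper's auxiliary lemma that a variable in evaluation position under a non-empty final type must be in the domain of the type context.
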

\end{toappendix}

\begin{toappendix}
\begin{proposition}[Quantitative subject expansion for $\cbneed$]
\label{prop:need-subject-expansion}
Let $\tderiv \exder[\cbneedup] \tyjp{(\msteps,\esteps)}{\tmtwo}{\typctx}{\mtype}$ be a derivation such that $\mtype \neq \emptytype$. Then,

\begin{itemize}
\item \emph{Multiplicative:} if $\tm \tomcbneed \tmtwo$ then there is a  derivation $\tderiv' \exder[\cbneedup] \tyjp{(\msteps + 1, \esteps)}{\tm}{\typctx}{\mtype}$,
\item \emph{Exponential}: if $\tm \toecbneed \tmtwo$ then there is a  derivation $\tderiv' \exder[\cbneedup] \tyjp{(\msteps, \esteps + 1)}{\tm}{\typctx}{\mtype} $.
\end{itemize}

\end{proposition}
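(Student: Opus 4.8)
The plan is to prove both items simultaneously by induction on the derivation of the step $\tm \tomcbneed \tmtwo$ (resp. $\tm \toecbneed \tmtwo$), that is, on the structure of the \cbneed evaluation context $\cbneedctx$ in which the underlying root-step fires. The base cases are the two root-steps, taken with $\cbneedctx = \ctxhole$; the inductive cases follow the three non-trivial productions of the \cbneed context grammar. Throughout, the hypothesis $\mtype \neq \emptytype$ is what lets us avoid the rules $\appgc$ and $\esgc$ when necessary and, crucially, what propagates to the sub-derivation singled out in the inductive step so that the induction hypothesis applies. As in the \cbn and \cbv cases, the engine for the exponential root-step is linear removal (\reflemma{need-linear-anti-substitution}).

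For the multiplicative root-step $\sctxp{\la\var\tm}\tmtwo \rtom \sctxp{\tm\esub\var\tmtwo}$, I would first peel off from the given derivation of the reduct the $\esrule$/$\esgc$ rules corresponding to the list $\sctx$ of explicit substitutions, exposing a derivation of $\tm\esub\var\tmtwo$ whose outermost constructor is the substitution $\esub\var\tmtwo$. This derivation ends either with $\esrule$ (when $\var$ has non-empty type $\mtypetwo$ in the body) or with $\esgc$ (when $\mtypetwo = \emptytype$, i.e. $\tmtwo$ is erased). In the first case I rebuild $\la\var\tm$ with $\fun$ to obtain the linear type $\ty\mtypetwo\mtype$, lift it to the singleton $\mult{\ty\mtypetwo\mtype}$ with $\many$, re-attach the $\sctx$-substitutions around $\la\var\tm$ (where they sit in the redex), and close with $\app$ against the premise typing $\tmtwo$ with $\mtypetwo$; in the second case $\appgc$ plays the role of $\app$, with $\ty\emptytype\mtype$ in place of $\ty\mtypetwo\mtype$. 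Since the single new $\app$/$\appgc$ rule is the only change to the indices, this yields a derivation of $\sctxp{\la\var\tm}\tmtwo$ with $\msteps$ increased by exactly one and $\esteps$ unchanged.

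For the exponential root-step $\cbneedctx\cwc\var\esub\var{\sctxp\val} \rtoecbneed \sctxp{\cbneedctx\cwc\val\esub\var\val}$, I would again strip the outer $\sctx$-substitutions from the derivation of the reduct and inspect the binding substitution $\esub\var\val$, whose body is $\cbneedctx\cwc\val$. Applying linear removal (\reflemma{need-linear-anti-substitution}) to the body derivation---legitimate because its type is non-empty and $\var \notin \fv\val$---extracts a type $\mtype'$, a derivation of $\val$ of type $\mtype'$, and a derivation of $\cbneedctx\cwc\var$ in which $\var$ now carries the extra $\mtype'$. I then merge this extracted derivation of $\val$, via $\many$, with the derivation of $\val$ furnished by the original binding substitution, so that $\sctxp\val$ can be typed with the combined multi type demanded by $\var$ in $\cbneedctx\cwc\var$; relocating the $\sctx$-substitutions around $\val$ (their position in the redex) and closing with $\esrule$ produces the derivation of the redex. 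The index equation of the removal lemma, $(\msteps,\esteps) = (\msteps' + \msteps_{\val}, \esteps' + \esteps_{\val} - 1)$, together with the bookkeeping of the $\many$ and $\esrule$ steps, is exactly what makes $\esteps$ increase by one while $\msteps$ stays fixed.

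For the inductive cases the context is $\cbneedctx\,\tmtwo'$, $\cbneedctx\esub\var{\tmtwo'}$, or $\cbneedctx\cwc\var\esub\var{\cbneedctxtwo}$: I would inspect the last rule of the derivation of the reduct, locate the premise typing the subterm that contains the contracted redex, apply the induction hypothesis there, and reassemble the derivation with the same final rule, incrementing the relevant index. The main obstacle---and the only genuinely \cbneed-specific point---is to guarantee that this premise types its subterm with a \emph{non-empty} multi type, so that the induction hypothesis is applicable. For the application contexts this holds because an argument in evaluation position is typed by $\app$, not $\appgc$; for the context $\cbneedctx\cwc\var\esub\var{\cbneedctxtwo}$, where evaluation enters the substitution precisely because $\var$ is needed in the body, the substitution must be typed by $\esrule$ rather than $\esgc$, forcing a non-empty type on the subterm under evaluation. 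Establishing this invariant and matching the $\app$/$\appgc$ and $\esrule$/$\esgc$ dichotomy to the two root-steps is where the care lies; the index arithmetic itself is routine once linear removal is in place.
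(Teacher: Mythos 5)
Your proposal is correct and takes essentially the same route as the paper's proof: induction on the \cbneed evaluation context, with the multiplicative root-step handled by re-arranging the $\esrule$/$\esgc$ versus $\app$/$\appgc$ rules around the substitution context $\sctx$, the exponential root-step handled by linear removal (\reflemma{need-linear-anti-substitution}) followed by merging (\reflemma{need-merging-multisets}), and the non-emptiness hypothesis propagated through the inductive cases exactly as you describe. Two minor slips worth fixing when writing it out: in the inductive case for $\cbneedctx\tm$ the hole is in \emph{function} position, so the induction hypothesis applies because the left premise of both $\app$ and $\appgc$ carries a singleton---hence non-empty---multi type (not because ``the argument is typed by $\app$''); and in the exponential root-step you should also cover the case where the binding substitution is typed by $\esgc$, where no merge is performed and the non-emptiness of the extracted type, needed to close with $\esrule$, follows from \reflemma{need-basic-properties-typing-derivations} applied to the derivation of $\cbneedctx\cwc{\var}$ produced by linear removal.
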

\end{toappendix}

\begin{toappendix}
\begin{theorem} [\cbneed tight completeness]
\label{thm:need-completeness}
Let $\tm$ be a closed term.
If $\deriv \colon \tm \tocbneedn \tmtwo$ and $\normalpr{\tmtwo}$
then there exists a tight  derivation $\tderiv \exder[\cbneedup] \tyjp{(\sizem{\deriv},\sizee{\deriv})}{\tm}{}{\mult{\normal}}$.
\end{theorem}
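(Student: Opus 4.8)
The plan is to mirror the \cbn and \cbv completeness proofs (\refthm{name-completeness} and \refthm{value-completeness}) \emph{verbatim}, proceeding by induction on the length $\size\deriv = \sizem\deriv + \sizee\deriv$ of the evaluation sequence $\deriv \colon \tm \tocbneedn \tmtwo$. The two ingredients driving the induction are already in place: the existence of tight typings for normal forms (\refprop{need-normal-forms-exist}) handles the base case, and quantitative subject expansion (\refprop{need-subject-expansion}) handles the inductive step.

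First I would treat the base case $\size\deriv = 0$: here $\tm = \tmtwo$ and $\normalpr\tm$, so \refprop{need-normal-forms-exist} directly yields a tight derivation $\tderiv \exder[\cbneedup] \tyjp{(0,0)}{\tm}{}{\mult\normal}$, whose indices coincide with $(\sizem\deriv, \sizee\deriv) = (0,0)$, as required. For the inductive step, assuming $\size\deriv \geq 1$, I would factor $\deriv$ as a first step $\tm \tocbneed \tm'$ followed by $\deriv' \colon \tm' \tocbneedn \tmtwo$ with $\normalpr\tmtwo$. The induction hypothesis supplies a tight derivation $\tderiv' \exder[\cbneedup] \tyjp{(\sizem{\deriv'},\sizee{\deriv'})}{\tm'}{}{\mult\normal}$. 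Since the type $\mult\normal$ is non-empty, the side condition $\mtype \neq \emptytype$ of \refprop{need-subject-expansion} is met, so I can expand along the first step: if it is multiplicative I obtain indices $(\sizem{\deriv'}+1, \sizee{\deriv'})$, and if it is exponential I obtain $(\sizem{\deriv'}, \sizee{\deriv'}+1)$. In both cases these are exactly $(\sizem\deriv, \sizee\deriv)$. Crucially, subject expansion leaves both the final type $\mult\normal$ and the (empty) type context untouched, so the resulting derivation of $\tm$ is again tight, closing the induction.

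The theorem itself is thus routine once the supporting results hold; the real work — and the main obstacle — sits in \refprop{need-subject-expansion}, which rests on the linear removal lemma (\reflemma{need-linear-anti-substitution}). The delicate point is precisely the side condition $\mtype \neq \emptytype$: it guarantees that expansion never proceeds through a ``bad'' derivation of $\emptytype$, the pathology that the refined rules $\appgc$ and $\esgc$ were designed to exclude. I expect the hardest case to be the exponential root-step, where linear removal must pull a value back out of an evaluation context $\cbneedctx\cwc\var$ and reconstruct the pre-reduct $\cbneedctx\cwc\var\esub\var{\sctxp\val}$ from its contractum $\sctxp{\cbneedctx\cwc\val\esub\var\val}$; this is exactly the spot where the blend of \cbn-style erasure and \cbv-style duplication in the type system has to be handled with care.
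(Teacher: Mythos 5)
Your proposal is correct and follows essentially the same route as the paper: induction on the length of $\deriv$, using \refprop{need-normal-forms-exist} for the base case and quantitative subject expansion (\refprop{need-subject-expansion}) for the inductive step, with the observation that $\mult{\normal} \neq \emptytype$ discharges the side condition and that expansion preserves the empty type context and final type, hence tightness. Your closing remark about where the real difficulty lies (the linear removal lemma behind subject expansion, and the role of the $\mtype \neq \emptytype$ condition in excluding bad derivations of $\emptytype$) also matches the paper's design discussion.
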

\end{toappendix}

\paragraph{\cbneed Model.} The interpretation $\sem{\tm}_{\vec{\var}}^\cbneed$ with respect to the \cbneed system is defined as the set (where $\vec{\var} = (\var_1, \dots, \var_n)$ is a list of variables suitable for $\tm$):
$$\begin{array}{llll}
    \{((\mtype_1,\dots, \mtype_n),\mtypetwo) \mid \exists 
    \, 
    \tderiv \exder[\cbneedup] \Deri[(\msteps, \esteps)]{\var_1 \colon\! \mtype_1, \dots, \var_n \colon\! \mtype_n}\tm\mtypetwo \mbox{ and $\mtypetwo \neq \zero$}
    \} \,.
\end{array}$$
Note that the right multi type is required to be non-empty. 
The \emph{invariance} %by evaluation 
and the \emph{adequacy} of $\sem{\tm}_{\vec{\var}}^\cbneed$ with respect to \cbneed evaluation are obtained exactly as for the \cbn and \cbv cases. %Maico
 
 % !TEX root = main.tex
\section{A New Fundamental Theorem for Call-by-Need}
\label{sect:new-theorem-need}

\paragraph{\cbneed Erases Wisely.} In the literature, \emph{the} theorem about \cbneed is the fact that it is operationally equivalent to \cbn. This result was first proven independently by two groups, Maraist, Odersky, and Wadler \cite{DBLP:journals/jfp/MaraistOW98}, and Ariola and Felleisen \cite{DBLP:journals/jfp/AriolaF97}, in the nineties, using heavy rewriting techniques. 

Recently, Kesner gave a much simpler proof via \cbn multi types \cite{DBLP:conf/fossacs/Kesner16}. %More precisely, Kesner
She uses multi types to first show termination equivalence of \cbn and \cbneed, from which she then infers operational equivalence. 
Termination equivalence %is the fact 
means that a given term terminates in \cbn if and only if terminates in \cbneed, and it is a consequence of our slogan that \emph{\cbn and \cbneed both erase wisely}.

With our terminology and notations, Kesner's result takes the following form.

\begin{theorem}[Kesner \cite{DBLP:conf/fossacs/Kesner16}]
Let $\tm$ be a closed term.
  \begin{enumerate}
    \item \emph{Correctness}: if $\tderiv \exder[\cbnup]
    \Deri[(\msteps, \esteps)] {}{\tm}{\type}$ then there exists $\tmtwo$ such that $\deriv \colon \tm
  \tocbneedn \tmtwo$, $\normalpr\tmtwo$, $\sizem\deriv \leq \msteps$ and $\sizee\deriv \leq \esteps$.  
  
    \item \emph{Completeness}: if                                                                                                                                                                                                                                                                                                                                                                                                                                                                         $\deriv \colon\! \tm \!\tocbneedn \tmtwo$ and $\normalpr\tmtwo$ then there is 
    $\tderiv \exder[\cbnup\!] \Deri[(\msteps, \esteps)] {}{\tm\!} {\!\normal}$.
  \end{enumerate}
\end{theorem}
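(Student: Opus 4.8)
The plan is to reproduce, for the \cbn type system measured against \cbneed evaluation, the same four-layer pattern used for each system in the excerpt (a linear substitution/removal lemma, a quantitative subject reduction/expansion property, a normal-form typing result, and a final induction on the evaluation length), the one genuinely new ingredient being that the auxiliary lemmas must be re-established over \cbneed evaluation contexts $\cbneedctx$ rather than over \cbn ones $\cbnctx$. A decisive simplification is that, by \refprop{syntactic-characterization-closed-normal}, closed \cbn and \cbneed normal forms coincide (both characterised by the predicate $\normalpr{\cdot}$), so the \cbn normal-form results \refprop{name-normal-forms-forall} and \refprop{name-normal-forms-exist} apply verbatim to the base cases here.

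For correctness (part~1) I would first prove a \emph{quantitative subject reduction for \cbn types along \cbneed steps}: if $\tderiv \exder[\cbnup] \Deri[(\msteps,\esteps)]{}{\tm}{\type}$ and $\tm \tomcbneed \tmtwo$, then $\msteps \geq 1$ and $\tmtwo$ is \cbn-typable with indices $(\msteps',\esteps)$ where $\msteps' \leq \msteps - 1$; and symmetrically, if $\tm \toecbneed \tmtwo$, then $\esteps \geq 1$ and $\tmtwo$ is \cbn-typable with $(\msteps,\esteps')$ where $\esteps' \leq \esteps - 1$. The multiplicative root case is immediate (the $\app$ node contributing $1$ to $\msteps$ disappears), and the exponential case rests on a version of \reflemma{name-linear-substitution} over \cbneed contexts: the substituted object is the value $\val$ pulled out of the explicit substitution, which is just a special case of the term substituted in the \cbn lemma, so one linear type of the needed variable is consumed and $\esteps$ drops. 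The reason the decrease is only $\leq \msteps-1$ (resp. $\leq \esteps-1$), rather than exactly one as in the \cbn-against-\cbn case, is that a \cbneed step may fire inside a \emph{shared} explicit substitution whose content is typed $k \geq 1$ times by the $\many$ rule; replaying that single shared redex in all $k$ premises makes the \cbn indices drop by $k$. Since each step strictly decreases the measure $\msteps + \esteps$ without increasing either component, every \cbneed sequence from a typable $\tm$ is finite; as \cbneed is a deterministic strategy that halts only on normal forms, it reaches some $\tmtwo$ with $\normalpr{\tmtwo}$, and summing the per-step decreases yields $\sizem{\deriv} \leq \msteps$ and $\sizee{\deriv} \leq \esteps$.

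For completeness (part~2) I would argue by induction on the length of $\deriv \colon \tm \tocbneedn \tmtwo$. In the base case $\tm = \tmtwo$ with $\normalpr{\tm}$, so \refprop{name-normal-forms-exist} provides a tight derivation $\exder[\cbnup] \Deri[(0,0)]{}{\tm}{\normal}$. For the inductive step I would prove a \emph{quantitative subject expansion for \cbn types along \cbneed steps}, dual to subject reduction and based on a linear removal lemma over \cbneed contexts (the analogue of \reflemma{name-linear-removal}): given a \cbn derivation of the reduct with type $\normal$, one pulls the value back out of the substituted occurrence, merges its linear type into the multi type of the needed variable, and recovers a \cbn derivation of $\tm$ still typed with $\normal$. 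Composing these expansions along $\deriv$ yields the desired $\tderiv \exder[\cbnup] \Deri[(\msteps,\esteps)]{}{\tm}{\normal}$; consistently with part~1 being only an upper bound, the indices obtained here are existential and need not match $\size{\deriv}$.

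The main obstacle is the extra \cbneed context clause $\cbneedctx\cwc{\var}\esub{\var}{\cbneedctxtwo}$, i.e. evaluation inside the content of a \emph{needed} explicit substitution, which has no counterpart among \cbn contexts: the substitution and removal inductions, well-founded on the structure of $\cbneedctx$, must handle this case by recursing through each of the $k$ premises of the $\many$ rule typing that content. This is precisely where the multiplicities of $\many$ enter the bookkeeping, and it is also the conceptual source of the inequalities in part~1 — \cbn charges once per use of a shared subterm whereas \cbneed charges once overall — so the argument simultaneously recovers Kesner's termination equivalence and explains why \cbn multi types can only over-approximate \cbneed evaluation lengths.
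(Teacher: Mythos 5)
First, note that the paper itself contains \emph{no} proof of this statement: it is Kesner's theorem, imported from \cite{DBLP:conf/fossacs/Kesner16} and merely restated in the paper's notation, and nothing in the appendix concerns it. Your proposal is therefore to be compared with the cited proof, and it matches it in all essentials: Kesner's argument is precisely the scheme you describe --- subject reduction and expansion for the \cbn multi type system taken along \cbneed steps, resting on linear substitution/removal lemmas re-established over \cbneed evaluation contexts, with base cases given by the typability of normal forms, which transfers verbatim because closed \cbn and \cbneed normal forms coincide (\refprop{syntactic-characterization-closed-normal}). Your explanation of why the index decreases are only $\geq 1$ rather than exactly $1$ --- a redex shared under a needed explicit substitution must be replayed in each of the $k$ premises of the $\many$ rule typing its content --- is also the right one, and is exactly why this system cannot provide exact bounds for \cbneed.

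There is, however, one load-bearing claim that you assert rather than prove: that the content of a \emph{needed} explicit substitution is typed $k \geq 1$ times. This is not automatic: in the \cbn system any term whatsoever can be typed with the empty multi type $\zero$ by a $0$-premise $\many$ rule, and a derivation assigning $\zero$ to the content of an ES types nothing inside it; if $k = 0$ could occur for a needed ES, a \cbneed step firing inside that content would leave the \cbn derivation and both its indices unchanged, the measure $\msteps + \esteps$ would not strictly decrease, and the termination argument of part 1 would collapse. What saves the proof is a dedicated lemma --- the \cbn analogue of part 2 of \reflemma{need-basic-properties-typing-derivations} --- stating that if $\tderiv \exder[\cbnup] \Deri[(\msteps, \esteps)]{\typctx}{\cbneedctx\cwc{\var}}{\type}$ with $\type$ a \emph{linear} type and $\cbneedctx$ not capturing $\var$, then $\var \in \dom{\typctx}$. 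It is proved by induction on $\cbneedctx$; the only interesting case is the clause $\cbneedctx'\cwc{\vartwo}\esub{\vartwo}{\cbneedctxtwo}$ with the hole inside $\cbneedctxtwo$, where the induction hypothesis applied to the body gives $\vartwo$ a non-empty multi type, so the content is typed at least once, and with linear types, so the induction hypothesis also applies to each of the $k$ derivations of the content. (This case also shows why your subject reduction statement must be formulated for arbitrary type contexts and arbitrary derivations ending in a linear type, not just for closed terms with empty context: the induction passes through open subterms.) With this lemma made explicit, your plan is sound, and it is, in structure, the proof of the cited paper; note that the gap affects only part 1, since for expansion an untyped content expands trivially.
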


Note that, with respect to the other similar theorems in this paper, the result does not cover tight derivations and it does not provide exact bounds. In fact, the \cbn system \emph{cannot} provide exact bounds for \cbneed, because it does provide them for \cbn evaluation, that in general is slower than \cbneed. 
Consider for instance the term $\tm$ in \refex{evaluation} and its \cbn tight derivation in \refex{cbn-type-deriv}: the derivation provides indices $(5,5)$ for $\tm$ (and so $\tm$ evaluates in 10 \cbn steps), but $\tm$ evaluates in 8 \cbneed steps. 
Closing such a gap is the main motivation behind this paper, achieved by the \cbneed multi type system in \refsect{cbneed}.

\paragraph{\cbneed Duplicates Wisely.} Curiously, in the literature there are no dual results showing that \cbneed duplicates as wisely as \cbv. One of the reasons is that it is a theorem that does not admit a simple formulation such as operational or termination equivalence, because \cbneed and \cbv are not in such relationships. Morally, this is subsumed by the logical interpretation according to which \cbneed corresponds to an affine variant of the linear logic representation of \cbv. Yet, it would be nice to have a precise, formal statement establishing that \emph{\cbneed duplicates as wisely as \cbv}---we provide it here.

Our result is that the \cbv multi type system is correct with respect to \cbneed evaluation. In particular, the indices $(\msteps, \esteps)$ provided by a \cbv type derivation provide bounds for \cbneed evaluation lengths. Two important remarks before we proceed with the formal statement:
\begin{itemize}
  \item \emph{Bounds are not exact}: the indices of a \cbv derivation do not generally provide exacts bounds for \cbneed, not even in the case of tight derivations. The reason is that \cbneed does not evaluate unneeded subterms (\ie those typed with $\zero$), while \cbv does. Consider again the term $\tm$ of \refex{evaluation}, for instance, whose \cbv tight derivation has indices $(5,5)$ (and so $\tm$ evaluates in 10 \cbv steps) but it \cbneed evaluates in 8 steps.
  
  \item \emph{Completeness cannot hold}: we prove correctness but not completeness simply because the \cbv system is not complete with respect to \cbneed evaluation. Consider for instance $(\la\var I) \Omega$: it is \cbv untypable by \refthm{value-completeness}, because it is \cbv divergent, and yet it is \cbneed normalisable.
\end{itemize}

\paragraph{\cbv Correctness with Respect to \cbneed.} Pleasantly, our presentations of \cbv and \cbneed make the proof of the result straightforward. It is enough to observe that, since we do not consider garbage collection and we adopt a non-deterministic formulation of \cbv, \cbneed is a subsystem of \cbv. Formally, if $\tm \tocbneed \tmtwo$ then $\tm \tocbv \tmtwo$, as it is easily seen from the definitions (\cbneed reduces only \emph{some} subterms of applications and ES, while \cbv reduces \emph{all} such subterms). The result is then a corollary of the correctness theorem for \cbv.

\begin{toappendix}
\begin{corollary}[\cbv correctness wrt \cbneed]
	\label{coro:value-need}
	Let $\tm$ be a closed term and $\tderiv \exder[\cbvsym] \Deri[(\msteps,\esteps)]{}{\tm}{\mtype}$ be a  derivation. Then there exists $\tmtwo$ such that $\deriv \colon \tm \tocbneedn \tmtwo$ and $\normalpr{\tmtwo}$, with $\sizem{\deriv} \leq \msteps$ and $\sizee{\deriv} \leq \esteps$.
\end{corollary}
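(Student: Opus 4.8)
The plan is to exploit the slogan already advertised in the text: with garbage collection postponed and a non-deterministic \cbv, every \cbneed step is a \cbv step, so the corollary falls out of \cbv correctness (\refthm{value-correctness}). Concretely, the first and only genuinely new ingredient is to establish the inclusion of reductions $\tocbneed \subseteq \tocbv$, respecting the kind of each step (multiplicative to multiplicative, exponential to exponential).

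To prove the inclusion I would compare the two definitions syntactically. The multiplicative root-step $\rtom$ is literally shared by the two strategies, so it suffices to inspect the closure contexts and the exponential root-step. Every \cbneed context $\cbneedctx$ is a weak context $\wctx$, and since \cbv contexts coincide with weak contexts, $\cbneedctx$ is in particular a \cbv context $\cbvctx$; hence the \cbneed exponential root-step $\cbneedctx\cwc{\var}\esub{\var}{\sctxp\val} \rtoecbneed \sctxp{\cbneedctx\cwc{\val}\esub{\var}{\val}}$ is an instance of the \cbv one $\cbvctx\cwc{\var}\esub{\var}{\sctxp{\val}} \rtoecbv \sctxp{\cbvctx\cwc{\val}\esub{\var}{\val}}$. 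Taking the closure of these root-steps under \cbneed contexts (which are again weak, hence \cbv) yields $\tomcbneed \subseteq \tomcbv$ and $\toecbneed \subseteq \toecbv$, and therefore $\tocbneed \subseteq \tocbv$ with the kind of step preserved.

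With the inclusion in hand I would conclude as follows. Any \cbneed evaluation issued from the \cbv-typable closed term $\tm$ is, through the inclusion, a \cbv evaluation; by \refthm{value-correctness} $\tm$ is \cbv normalising, and by the diamond property of \cbv all its \cbv evaluations are finite with the \emph{same} number of multiplicative and exponential steps, bounded respectively by $\msteps$ and $\esteps$. Consequently every \cbneed evaluation of $\tm$ is finite, so I may pick a maximal one $\deriv \colon \tm \tocbneedn \tmtwo$; since a partial \cbv evaluation has at most as many steps of each kind as a complete one, $\sizem\deriv \leq \msteps$ and $\sizee\deriv \leq \esteps$. Maximality makes $\tmtwo$ \cbneed normal, and $\tmtwo$ is closed because reduction preserves closedness, so $\normalpr{\tmtwo}$ holds by \refprop{syntactic-characterization-closed-normal}.

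Alternatively, and avoiding the diamond property altogether, I could apply \cbv quantitative subject reduction (\refprop{value-subject-reduction}) directly along $\deriv$: each \cbneed step, being a \cbv step of the same kind, strictly decreases the corresponding \cbv index, which remains non-negative; this simultaneously bounds the two step counts by $\msteps$ and $\esteps$ and forces termination. Either way, the one point demanding genuine care is precisely the inclusion $\tocbneed \subseteq \tocbv$ together with the preservation of the kind of step; everything else is a routine consequence of the already-established \cbv theory, which is why the paper can justifiably call the result a corollary of \cbv correctness.
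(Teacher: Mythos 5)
Your proposal is correct, and it actually contains two complete arguments. The kind-preserving inclusion $\tocbneed \subseteq \tocbv$ that you isolate as the crucial ingredient is exactly the observation the paper builds on (``\cbneed is a subsystem of \cbv''), and your verification of it---every \cbneed context is a weak, hence \cbv, context, and the \cbneed exponential root-step is an instance of the \cbv one---is the intended one. Your ``alternative'' conclusion is essentially the paper's own proof: the paper argues by induction on $\msteps+\esteps$, using the characterisation of closed normal forms (\refprop{syntactic-characterization-closed-normal}) to extract a \cbneed step from any closed term $\tm$ with $\lnot\normalpr{\tm}$, and then a subject-reduction property of \cbv typings along \cbneed steps---which is precisely your combination of the inclusion with \refprop{value-subject-reduction}---to decrement the appropriate index and apply the induction hypothesis; termination and the bounds fall out because the indices are non-negative and decrease by exactly one per step. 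Your primary conclusion is genuinely different: it treats \refthm{value-correctness} as a black box and invokes the diamond property of \cbv to propagate the bounds $(\msteps,\esteps)$ from the one maximal \cbv evaluation supplied by that theorem to \emph{all} \cbv evaluations, hence to any maximal \cbneed evaluation viewed as a partial \cbv one. What this buys is reuse of the correctness theorem rather than a re-run of its induction; what it costs is a dependency on the diamond property and on its consequence that all maximal evaluations have the same number of steps of each kind---a fact the paper asserts but leaves as an ``omitted routine check''---together with the bookkeeping of extending partial evaluations to maximal ones and comparing prefix counts with maximal counts. The paper's route (your alternative) is more self-contained, which is presumably why it is the one formalised; both are sound.
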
  
\end{toappendix}

Since the \cbneed system provides exact bounds (\refthm{need-correctness}), we obtain that \cbneed duplicates as wisely as \cbv, when the comparison makes sense, that is, on \cbv normalisable terms.

\begin{toappendix}
\begin{corollary}[\cbneed duplicates as wisely as \cbv]
	\label{cor:value-longer-than-need}
	Let $\deriv\colon \tm \tocbvn \tmthree$ with $\normalcbvpr{\tmthree}$. 
	Then there is $\derivtwo \colon \tm \tocbneedn \tmtwo$ with $\normalpr{\tmtwo}$ and $\sizem{\derivtwo} \leq \sizem{\deriv}$ and $\sizee{\derivtwo} \leq \sizee{\deriv}$.
\end{corollary}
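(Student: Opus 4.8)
The plan is to obtain the result by composing two facts already established: \cbv tight completeness (\refthm{value-completeness}) and \cbv correctness with respect to \cbneed (Corollary~\ref{coro:value-need}). The bridge between the \cbv reduction we start from and the \cbneed reduction we want to construct is a single \cbv type derivation whose indices record, \emph{exactly}, the step counts we need to compare.

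First I would feed the hypothesis $\deriv \colon \tm \tocbvn \tmthree$ with $\normalcbvpr{\tmthree}$ into \refthm{value-completeness} (here $\tm$ is closed, as everywhere in the paper). Because that theorem produces a tight derivation, its indices are not merely some upper bound but coincide exactly with the step counts of $\deriv$: it yields a derivation $\tderiv \exder[\cbvup] \Deri[(\sizem\deriv, \sizee\deriv)]{}{\tm}{\emptymset}$.

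Next I would apply Corollary~\ref{coro:value-need} to $\tderiv$, instantiating its indices $(\msteps, \esteps)$ with $(\sizem\deriv, \sizee\deriv)$ and taking $\mtype = \emptymset$. This directly produces a \cbneed evaluation $\derivtwo \colon \tm \tocbneedn \tmtwo$ with $\normalpr{\tmtwo}$ and $\sizem{\derivtwo} \leq \sizem\deriv$ and $\sizee{\derivtwo} \leq \sizee\deriv$, which is precisely the statement to prove.

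I do not expect any genuine obstacle: the corollary is essentially a one-line consequence of chaining the two cited results. The only point deserving care is \emph{why tightness is needed} in the first step. Corollary~\ref{coro:value-need} bounds the \cbneed lengths by the \emph{indices} of the chosen \cbv derivation, so an arbitrary \cbv derivation would only bound the \cbneed lengths by some possibly-loose index, not by the true \cbv lengths $\sizem\deriv$ and $\sizee\deriv$. Tight completeness is exactly what lets us select a \cbv derivation whose indices equal those lengths, which is what makes the final inequalities compare $\derivtwo$ against the actual \cbv evaluation $\deriv$ rather than against an overapproximation.
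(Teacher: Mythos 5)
Your proposal is correct and matches the paper's own proof exactly: both apply \cbv tight completeness (\refthm{value-completeness}) to extract a tight derivation whose indices equal $\sizem\deriv$ and $\sizee\deriv$, and then feed it into \cbv correctness with respect to \cbneed (\refcoro{value-need}) to obtain the bounded \cbneed evaluation. Your closing remark on why tightness (rather than an arbitrary derivation) is essential is precisely the right observation, and it is implicit in the paper's use of the equalities $\sizem\deriv = \msteps$ and $\sizee\deriv = \esteps$.
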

\end{toappendix}

 % !TEX root = main.tex
\section{Conclusions}
\paragraph{Contributions.} This paper introduces a multi type system for \cbneed evaluation, carefully blending ingredients from multi type systems for \cbn and \cbv evaluation in the literature. Notably, it is the first type system whose minimal derivations---explicitly characterised---provide exact bounds for evaluation lengths. 
It also characterises \cbneed termination, and thus its judgements provide an adequate relational semantics, which is the first one precisely reflecting \cbneed evaluation.

The technical development is simple, and uniform with respect to those of \cbn and \cbv multi type systems. The typing rules count evaluation steps following \emph{exactly} the same schema of the \cbn and \cbv rules. The proofs of correctness and completeness also follow \emph{exactly} the same structure.

A further side contribution of the paper is a new fundamental result of \cbneed, formally stating that it duplicates as wisely as \cbv. More precisely, the \cbv multi type system is (quantitatively) correct with respect to \cbneed evaluation. Pleasantly, our presentations of \cbv and \cbneed provide the result for free. This result dualises the other fundamental theorem stating that \cbneed erases as wisely as \cbn, usually formulated as termination equivalence, and recently re-proved by Kesner  using \cbn multi types \cite{DBLP:conf/fossacs/Kesner16}. 

\paragraph{Future Work.} Recently, Barenbaum et al. extended \cbneed to strong evaluation \cite{DBLP:journals/pacmpl/BalabonskiBBK17}, and it is natural to try to extend  our type system as well. The definition of the system, in particular the extension of \emph{tight} derivations to that setting, seems however far from being evident. Barembaum, Bonelli, and Mohamed also apply \cbn multi types to a \cbneed calculus extended with pattern matching and fixpoints \cite{DBLP:conf/ppdp/BarenbaumBM18}, that might be interesting to refine along the lines of our work.

An orthogonal direction is the study of the denotational models of \cbneed. It would be interesting to have a categorical semantics of \cbneed, as well as a categorical way of discriminating our quantitative precise model from the quantitatively lax one given by \cbn multi types. It would also be interesting to obtain game semantics of \cbneed, hopefully satisfying a strong correspondence with our multi types in the style of what happens in \cbn \cite{DBLP:journals/tcs/GianantonioHL08,DBLP:conf/csl/GianantonioL13,DBLP:conf/lics/TsukadaO16,DBLP:conf/lics/Ong17}.

A further, unconventional direction is to dualise the inception of the \cbneed type system trying to mix silly duplication from \cbn and silly erasure from \cbv, obtaining---presumably---a multi types system measuring a perpetual strategy.

\paragraph{Acknowledgements.} This work has been partially funded by the ANR JCJC grant COCA HOLA (ANR-16-CE40-004-01) and by the EPSRC grant EP/R029121/1 ``Typed Lambda-Calculi with Sharing and Unsharing''.

%
% ---- Bibliography ----

 \bibliographystyle{splncs04}
 \bibliography{\macrospath/biblio}

\begin{thebibliography}{10}
\providecommand{\url}[1]{\texttt{#1}}
\providecommand{\urlprefix}{URL }
\providecommand{\doi}[1]{https://doi.org/#1}

\bibitem{DBLP:conf/rta/Accattoli12}
Accattoli, B.: An abstract factorization theorem for explicit substitutions.
  In: {RTA}'12. LIPIcs, vol.~15, pp. 6--21 (2012)

\bibitem{DBLP:conf/ictac/Accattoli18}
Accattoli, B.: Proof nets and the {L}inear {S}ubstitution {C}alculus. In:
  {ICTAC} 2018. pp. 37--61 (2018)

\bibitem{DBLP:conf/icfp/AccattoliBM14}
Accattoli, B., Barenbaum, P., Mazza, D.: Distilling abstract machines. In: ICFP
  2014. pp. 363--376 (2014)

\bibitem{DBLP:conf/ppdp/AccattoliB17}
Accattoli, B., Barras, B.: Environments and the complexity of abstract
  machines. In: PPDP 2017. pp. 4--16 (2017)

\bibitem{DBLP:conf/aplas/AccattoliB17}
Accattoli, B., Barras, B.: The {N}egligible and {Y}et {S}ubtle {C}ost of
  {P}attern {M}atching. In: {APLAS} 2017. pp. 426--447 (2017)

\bibitem{DBLP:conf/popl/AccattoliBKL14}
Accattoli, B., Bonelli, E., Kesner, D., Lombardi, C.: A nonstandard
  standardization theorem. In: {POPL} 2014. pp. 659--670. {ACM} (2014)

\bibitem{DBLP:journals/pacmpl/AccattoliGK18}
Accattoli, B., Graham{-}Lengrand, S., Kesner, D.: Tight typings and split
  bounds. {PACMPL}  \textbf{2}({ICFP}),  94:1--94:30 (2018)

\bibitem{aplas18}
Accattoli, B., Guerrieri, G.: Types of {F}ireballs. In: Programming Languages
  and Systems - 16th Asian Symposium, {APLAS} 2018. Lecture Notes in Computer
  Science, vol. 11275, pp. 45--66 (2018)

\bibitem{DBLP:conf/wollic/AccattoliC14}
Accattoli, B., Sacerdoti~Coen, C.: On the value of variables. In: {WoLLIC}
  2014. pp. 36--50 (2014)

\bibitem{DBLP:journals/jfp/AriolaF97}
Ariola, Z.M., Felleisen, M.: The call-by-need lambda calculus. J. Funct.
  Program.  \textbf{7}(3),  265--301 (1997)

\bibitem{DBLP:conf/tlca/AriolaHS11}
Ariola, Z.M., Herbelin, H., Saurin, A.: Classical call-by-need and duality. In:
  TLCA. pp. 27--44 (2011)

\bibitem{DBLP:journals/pacmpl/BalabonskiBBK17}
Balabonski, T., Barenbaum, P., Bonelli, E., Kesner, D.: Foundations of strong
  call by need. {PACMPL}  \textbf{1}({ICFP}),  20:1--20:29 (2017)

\bibitem{DBLP:conf/ppdp/BarenbaumBM18}
Barenbaum, P., Bonelli, E., Mohamed, K.: Pattern matching and fixed points:
  Resource types and strong call-by-need: Extended abstract. In: {PPDP} 2018.
  pp. 6:1--6:12 (2018). \doi{10.1145/3236950.3236972}

\bibitem{barras-phd}
Barras, B.: Auto-validation d'un syst\`eme de preuves avec familles inductives.
  Ph.D. thesis, Universit\'e Paris 7 (1999)

\bibitem{Bernadet-Lengrand2013}
Bernadet, A., Graham-Lengrand, S.: Non-idempotent intersection types and strong
  normalisation. Logical Methods in Computer Science  \textbf{9}(4) (2013)

\bibitem{DBLP:journals/apal/BucciarelliEM12}
Bucciarelli, A., Ehrhard, T., Manzonetto, G.: A relational semantics for
  parallelism and non-determinism in a functional setting. Annals of Pure and
  Applied Logic  \textbf{163}(7),  918--934 (2012)

\bibitem{DBLP:journals/lmcs/BucciarelliKR18}
Bucciarelli, A., Kesner, D., Rocca, S.R.D.: Inhabitation for non-idempotent
  intersection types. Logical Methods in Computer Science  \textbf{14}(3)
  (2018). \doi{10.23638/LMCS-14(3:7)2018},
  \url{https://doi.org/10.23638/LMCS-14(3:7)2018}

\bibitem{BKV17}
Bucciarelli, A., Kesner, D., Ventura, D.: Non-idempotent intersection types for
  the lambda-calculus. Logic Journal of the IGPL  \textbf{25}(4),  431--464
  (2017)

\bibitem{DBLP:conf/fossacs/CarraroG14}
Carraro, A., Guerrieri, G.: A semantical and operational account of
  call-by-value solvability. In: Muscholl, A. (ed.) {FoSSaCS} 2014. Lecture
  Notes in Computer Science, vol.~8412, pp. 103--118. Springer (2014)

\bibitem{DBLP:journals/mscs/Carvalho18}
de~Carvalho, D.: Execution time of {\(\lambda\)}-terms via denotational
  semantics and intersection types. Mathematical Structures in Computer Science
   \textbf{28}(7),  1169--1203 (2018). \doi{10.1017/S0960129516000396}

\bibitem{DBLP:journals/iandc/CarvalhoF16}
de~Carvalho, D., Tortora~de Falco, L.: A semantic account of strong
  normalization in linear logic. Information and Computation  \textbf{248},
  104--129 (2016)

\bibitem{DBLP:journals/tcs/CarvalhoPF11}
de~Carvalho, D., Pagani, M., Tortora~de Falco, L.: A semantic measure of the
  execution time in linear logic. Theoretical Computer Science
  \textbf{412}(20),  1884--1902 (2011)

\bibitem{DBLP:conf/esop/ChangF12}
Chang, S., Felleisen, M.: The call-by-need lambda calculus, revisited. In:
  ESOP. pp. 128--147 (2012)

\bibitem{DBLP:journals/aml/CoppoD78}
Coppo, M., Dezani{-}Ciancaglini, M.: A new type assignment for
  {\(\lambda\)}-terms. Arch. Math. Log.  \textbf{19}(1),  139--156 (1978)

\bibitem{DBLP:journals/ndjfl/CoppoD80}
Coppo, M., Dezani{-}Ciancaglini, M.: An extension of the basic functionality
  theory for the {\(\lambda\)}-calculus. Notre Dame Journal of Formal Logic
  \textbf{21}(4),  685--693 (1980)

\bibitem{DBLP:conf/ppdp/DanvyZ13}
Danvy, O., Zerny, I.: A synthetic operational account of call-by-need
  evaluation. In: PPDP. pp. 97--108 (2013)

\bibitem{DBLP:conf/lfcs/Diaz-CaroMP13}
D{\'{\i}}az{-}Caro, A., Manzonetto, G., Pagani, M.: Call-by-value
  non-determinism in a linear logic type discipline. In: Art{\"{e}}mov, S.N.,
  Nerode, A. (eds.) {LFCS} 2013. Lecture Notes in Computer Science, vol.~7734,
  pp. 164--178. Springer (2013)

\bibitem{DBLP:conf/ppdp/DownenMAV14}
Downen, P., Maurer, L., Ariola, Z.M., Varacca, D.: Continuations, processes,
  and sharing. In: {PPDP} 2014. pp. 69--80 (2014).
  \doi{10.1145/2643135.2643155}

\bibitem{DBLP:conf/csl/Ehrhard12}
Ehrhard, T.: Collapsing non-idempotent intersection types. In: C{\'{e}}gielski,
  P., Durand, A. (eds.) {CSL}'12. LIPIcs, vol.~16, pp. 259--273. Schloss
  Dagstuhl - Leibniz-Zentrum fuer Informatik (2012)

\bibitem{DBLP:conf/ppdp/EhrhardG16}
Ehrhard, T., Guerrieri, G.: The bang calculus: an untyped lambda-calculus
  generalizing call-by-name and call-by-value. In: {PPDP} 2016. pp. 174--187.
  {ACM} (2016)

\bibitem{DBLP:conf/popl/GarciaLS09}
Garcia, R., Lumsdaine, A., Sabry, A.: Lazy evaluation and delimited control.
  In: POPL. pp. 153--164 (2009)

\bibitem{DBLP:conf/tacs/Gardner94}
Gardner, P.: Discovering needed reductions using type theory. In: Hagiya, M.,
  Mitchell, J.C. (eds.) {TACS} '94. Lecture Notes in Computer Science,
  vol.~789, pp. 555--574. Springer (1994)

\bibitem{DBLP:journals/tcs/GianantonioHL08}
Gianantonio, P.D., Honsell, F., Lenisa, M.: A type assignment system for game
  semantics. Theor. Comput. Sci.  \textbf{398}(1-3),  150--169 (2008).
  \doi{10.1016/j.tcs.2008.01.023}

\bibitem{DBLP:conf/csl/GianantonioL13}
Gianantonio, P.D., Lenisa, M.: Innocent game semantics via intersection type
  assignment systems. In: {CSL} 2013. pp. 231--247 (2013).
  \doi{10.4230/LIPIcs.CSL.2013.231}

\bibitem{DBLP:journals/tcs/Girard87}
Girard, J.Y.: Linear logic. Theoretical Computer Science  \textbf{50},  1--102
  (1987)

\bibitem{Guerrieri18}
Guerrieri, G.: Towards a semantic measure of the execution time in
  call-by-value lambda-calculus. Tech. rep. (2018), submitted to ITRS 2018

\bibitem{DBLP:conf/fossacs/Kesner16}
Kesner, D.: Reasoning about call-by-need by means of types. In: {FOSSACS} 2016.
  pp. 424--441 (2016). \doi{10.1007/978-3-662-49630-5\_25}

\bibitem{DBLP:conf/rta/KesnerV17}
Kesner, D., Vial, P.: Types as resources for classical natural deduction. In:
  Miller, D. (ed.) {FSCD} 2017. LIPIcs, vol.~84, pp. 24:1--24:17. Schloss
  Dagstuhl - Leibniz-Zentrum fuer Informatik (2017)

\bibitem{KesnerVR18}
Kesner, D., Viso, A., R\'{\i}os, A.: Call-by-need, neededness and all that. In:
  Baier, C., Lago, U.D. (eds.) FoSSaCS'18. Lecture Notes in Computer Science,
  vol. 10803, pp. 241--257. Springer (2018)

\bibitem{DBLP:journals/logcom/Kfoury00}
Kfoury, A.J.: A linearization of the lambda-calculus and consequences. Journal
  of Logic and Computation  \textbf{10}(3),  411--436 (2000)

\bibitem{Kri}
Krivine, J.L.: $\lambda$-calcul, types et mod\`eles. Masson (1990)

\bibitem{DBLP:conf/icfp/KutznerS98}
Kutzner, A., Schmidt{-}Schau{\ss}, M.: A non-deterministic call-by-need lambda
  calculus. In: {ICFP} 1998. pp. 324--335 (1998). \doi{10.1145/289423.289462}

\bibitem{DBLP:conf/popl/Launchbury93}
Launchbury, J.: A natural semantics for lazy evaluation. In: POPL. pp. 144--154
  (1993)

\bibitem{DBLP:journals/tcs/MaraistOTW99}
Maraist, J., Odersky, M., Turner, D.N., Wadler, P.: Call-by-name,
  call-by-value, call-by-need and the linear lambda calculus. Theor. Comput.
  Sci.  \textbf{228}(1-2),  175--210 (1999)

\bibitem{DBLP:journals/jfp/MaraistOW98}
Maraist, J., Odersky, M., Wadler, P.: The call-by-need lambda calculus. J.
  Funct. Program.  \textbf{8}(3),  275--317 (1998)

\bibitem{DBLP:journals/pacmpl/MazzaPV18}
Mazza, D., Pellissier, L., Vial, P.: Polyadic approximations, fibrations and
  intersection types. {PACMPL}  \textbf{2}({POPL}),  6:1--6:28 (2018)

\bibitem{DBLP:conf/icfp/NeergaardM04}
Neergaard, P.M., Mairson, H.G.: Types, potency, and idempotency: why
  nonlinearity and amnesia make a type system work. In: Okasaki, C., Fisher, K.
  (eds.) {ICFP} 2004. pp. 138--149. {ACM} (2004)

\bibitem{DBLP:conf/lics/Ong17}
Ong, C.L.: Quantitative semantics of the lambda calculus: Some generalisations
  of the relational model. In: Ouaknine, J. (ed.) {LICS} 2017. pp. 1--12.
  {IEEE} Computer Society (2017)

\bibitem{DBLP:journals/mscs/PaoliniPR17}
Paolini, L., Piccolo, M., Ronchi Della~Rocca, S.: Essential and relational
  models. Mathematical Structures in Computer Science  \textbf{27}(5),
  626--650 (2017)

\bibitem{DBLP:conf/esop/PedrotS16}
P{\'{e}}drot, P., Saurin, A.: Classical by-need. In: {ESOP} 2016. pp. 616--643.
  Springer (2016)

\bibitem{Pottinger80}
Pottinger, G.: A type assignment for the strongly normalizable $\lambda$-terms.
  In: Seldin, J., Hindley, J. (eds.) To H.B.~Curry: Essays on Combinatory
  Logic, Lambda Calculus and Formalism, pp. 561--578. Academic Press (1980)

\bibitem{DBLP:journals/jfp/Sestoft97}
Sestoft, P.: Deriving a lazy abstract machine. J. Funct. Program.
  \textbf{7}(3),  231--264 (1997),
  \url{http://journals.cambridge.org/action/displayAbstract?aid=44087}

\bibitem{DBLP:conf/lics/TsukadaO16}
Tsukada, T., Ong, C.L.: Plays as resource terms via non-idempotent intersection
  types. In: {LICS} '16. pp. 237--246 (2016). \doi{10.1145/2933575.2934553}

\bibitem{Wad:SemPra:71}
Wadsworth, C.P.: Semantics and pragmatics of the lambda-calculus. {Ph{D}
  Thesis}, Oxford (1971), chapter 4

\end{thebibliography}

 \newpage
 \appendix
 % !TEX root = main.tex

\chapter*{Proof Appendix}

%%%%%%%%%%%%%%%%%%%%
%%%%%%%%%%%%%%%%%%%%

\section{Closed $\lambda$-Calculi (\refsect{closed})}

\begin{remark}
	\label{rem:normal}
	Let $\tm$ be a term. 
	According to definition of the predicate $\normal$, $\normalpr{\tm}$ if and only if $\tm \defeq \sctxp{\la{\var}\tmtwo}$ for some term $\tmtwo$ and substitution context $\sctx$.
\end{remark}

\begin{remark}
	\label{rem:normal-cbv}
	Let $\tm$ be a term. 
	According to definition of the predicate $\normalcbv$, $\normalcbvpr{\tm}$ if and only if $\tm \defeq (\la{\var}\tmtwo) \esub{\vartwo_1}{\tmtwo_1} \dots \esub{\vartwo_n}{\tmtwo_n}$ for some $n \in \nat$ and terms $\tmtwo, \tmtwo_1, \dots, \tmtwo_n$ such that $\normalcbvpr{\tmtwo_i}$ for all $1 \leq i \leq n$.
	In particular, if $\normalcbvpr{\tm}$ then $\normalpr{\tm}$.
\end{remark}

\begin{remark}
	\label{rem:writing}
	Every term can be written in a unique way as $\cbnctxp{\tm}$ for some \cbn context $\cbnctx$, where $\tm$ is either a variable or an abstraction.
\end{remark}

\begin{remark}
	\label{rem:cbn-cbneed-context}
	Every \cbn context is a \cbneed context and a \cbv context.
\end{remark}

\gettoappendix{prop:syntactic-characterization-closed-normal}

\begin{proof}\hfill
	\begin{enumerate}
	\item First, we prove that a closed term $\tm$ is \cbn normal if and only if $\normalpr{\tm}$. 
%	According to \refrem{normal}, $\tm \defeq \sctxp{\la{\var}\tmtwo}$ for some term $\tmtwo$ and substitution context $\sctx$, hence $\tm$ is patently \cbn normal because $\tocbn$ does not reduce inside explicit substitutions or abstractions.
 	\begin{description}
		\item[$\Rightarrow$:] Let $\tm$ be a closed and \cbn normal term. 
		We prove by induction on $\tm$ %a stronger statement: 
		that $\normalpr{\tm}$. 
%		and $\tm \neq \sctxp{\tmtwo\tmthree}$ for any substitution context $\sctx$.
		Cases:
		\begin{itemize}
			\item \emph{Variable}, \ie $\tm \defeq \var$: it is impossible because $\tm$ is closed by hypothesis.
			
			\item \emph{Abstraction}, \ie $\tm \defeq \la{\vartwo}\tmtwo$: then, $\normalpr{\tm}$ according to the definition of the predicate $\normal$.
%			Clearly, $\tm \neq \sctxp{\tmtwo\tmthree}$ for any substitution context $\sctx$.
			
			\item \emph{Application}, \ie $\tm \defeq \tmtwo\tmthree$: this case is impossible, we prove it by contradiction. 
			Suppose $\tm \defeq \tmtwo\tmthree$, then $\tmtwo$ would be closed and \cbn normal (as $\tm$ is so), and hence $\normalpr{\tmtwo}$ by \ih
			According to \refrem{normal}, $\tmtwo = \sctxp{\la{\vartwo}\tmfour}$ and so $\tm = \sctxp{\la{\vartwo}\tmfour}\tmthree$, which is impossible because $\tm$ would be a $\msym$-redex.
			
			\item \emph{Explicit substitution}, \ie $\tm \defeq \tmtwo \esub{\var}{\tmthree}$: then, $\tmtwo$ is \cbn normal.
			There are four subcases:
			\begin{itemize}
				\item $\tmtwo$ is closed, then $\normalpr{\tmtwo}$ by \ih, and hence $\normalpr{\tm}$;
				\item $\tmtwo \defeq \sctxp{\la{\vartwo}\tmfour}$, then $\normalpr{\tmtwo}$ by \refrem{normal}, and hence $\normalpr{\tm}$;
				\item $\tmtwo \defeq \cbnctxp{\vartwo}$: this is impossible because otherwise (since $\tm$ is closed) $\tm = \cbnctx'\hole{\cbnctxtwo\cwc{\vartwo}\esub\vartwo{\tmfour}}$ which is not $\ecbn$-normal; 
				\item $\tmtwo$ is none of the above, then by \refrem{writing}, $\tmtwo \defeq \cbnctxp{\la{\vartwo}\tmfour}$ for some \cbn context $\cbnctx$ that is not a substitution context: this case is impossible because otherwise, by \refrem{writing}, $\tmtwo \defeq \cbnctxtwop{\sctxp{\la{\vartwo}\tmfour}\tmfive}$  for some \cbn context $\cbnctxtwo$, which is not $\mcbn$-normal.
			\end{itemize}
		\end{itemize}
		
		\item[$\Leftarrow$:] We prove by induction on the definition of $\normalpr{\tm}$ a stronger statement: for any term $\tm$, if $\normalpr{\tm}$ then $\tm$ is \cbn normal (we dropped the hypothesis that $\tm$ is closed).
		There are only two cases:
		\begin{itemize}
			\item \emph{Abstraction}, \ie $\tm \defeq \la{\var}\tmtwo$, then $\tm$ is \cbn normal because $\tocbn$ does not reduce under abstractions.
			
			\item \emph{Explicit substitution}, \ie $\tm \defeq \tmtwo \esub{\var}{\tmthree}$ with $\normalpr{\tmtwo}$, then $\tmtwo = \sctxp{\la{\vartwo}\tmfour}$ for some substitution context $\sctx$ according to \refrem{normal}, thus $\tm = \sctxtwop{\la{\vartwo}\tmfour}$ for the substitution context $\sctxtwo = \sctx \esub\var\tmthree$.
			By \ih, $\tmtwo$ is \cbn normal.
			Hence, $\tm$ is $\mcbn$-normal.
			Moreover, $\tm \neq \cbnctxp{\var}$ for any \cbn context $\cbnctx$, so $\tm$ is also $\ecbn$-normal and hence \cbn normal.
		\end{itemize}
	\end{description}
	
	\item We prove that a closed term $\tm$ is \cbneed normal if and only if $\normalpr{\tm}$. 
%	According to \refrem{normal}, $\tm \defeq \sctxp{\la{\var}\tmtwo}$ for some term $\tmtwo$ and substitution context $\sctx$, hence $\tm$ is patently \cbn normal because $\tocbn$ does not reduce inside explicit substitutions or abstractions.
	\begin{description}
		\item[$\Rightarrow$:] Let $\tm$ be a closed and \cbneed normal term. 
		We prove by induction on $\tm$ %a stronger statement:
		that $\normalpr{\tm}$. %and $\tm \neq \sctxp{\tmtwo\tmthree}$ for any substitution context $\sctx$.
		Cases:
		\begin{itemize}
			\item \emph{Variable}, \ie $\tm \defeq \var$: it is impossible because $\tm$ is closed by hypothesis.
			
			\item \emph{Abstraction}, \ie $\tm \defeq \la{\vartwo}\tmtwo$: then, $\normalpr{\tm}$ according to the definition of the predicate $\normal$.
%			Clearly, $\tm \neq \sctxp{\tmtwo\tmthree}$ for any substitution context $\sctx$.
			
			\item \emph{Application}, \ie $\tm \defeq \tmtwo\tmthree$: this case is impossible, we prove it by contradiction. 
			Suppose $\tm \defeq \tmtwo\tmthree$, then $\tmtwo$ would be closed and \cbneed normal (as $\tm$ is so), and hence $\normalpr{\tmtwo}$ by \ih
			According to \refrem{normal}, $\tmtwo = \sctxp{\la{\vartwo}\tmfour}$ and so $\tm = \sctxp{\la{\vartwo}\tmfour}\tmthree$, which is impossible because $\tm$ would be a $\msym$-redex.
			
			\item \emph{Explicit substitution}, \ie $\tm \defeq \tmtwo \esub{\var}{\tmthree}$: then, $\tmtwo$ is \cbneed normal.
			There are four subcases:
			\begin{itemize}
				\item $\tmtwo$ is closed, then $\normalpr{\tmtwo}$ by \ih, and hence $\normalpr{\tm}$;
				\item $\tmtwo \defeq \sctxp{\la{\vartwo}\tmfour}$, then $\normalpr{\tmtwo}$ by \refrem{normal}, and hence $\normalpr{\tm}$;
				\item $\tmtwo \defeq \cbneedctxp{\vartwo}$: this is impossible because otherwise (since $\tm$ is closed) $\tm = \cbneedctx'\hole{\cbneedctxtwo\cwc{\vartwo}\esub\vartwo{\tmfour}}$ for some $\tmfour \defeq \sctxp{\val}$ (by \ih and \refrem
				{normal}, since $\tmfour$ is \cbneed normal), and so $\tm$ would not be $\ecbneed$-normal; 
				\item $\tmtwo$ is none of the above, then by \refrem{writing}, $\tmtwo \defeq \cbnctxp{\la{\vartwo}\tmfour}$ for some \cbn (and hence \cbneed, by \refrem{cbn-cbneed-context}) context $\cbnctx$ that is not a substitution context: this case is impossible because otherwise, according to \refrem{writing}, $\tmtwo \defeq \cbnctxtwop{\sctxp{\la{\vartwo}\tmfour}\tmfive}$ for some \cbn context $\cbnctxtwo$, which is not $\mcbneed$-normal (since $\cbnctxtwo$ is a \cbneed context by \refrem{cbn-cbneed-context}).
			\end{itemize}
		\end{itemize}
		
		\item[$\Leftarrow$:] We prove by induction on the definition of $\normalpr{\tm}$ a stronger statement: for any term $\tm$, if $\normalpr{\tm}$ then $\tm$ is \cbneed normal (we dropped the hypothesis that $\tm$ is closed).
		There are only two cases:
		\begin{itemize}
			\item \emph{Abstraction}, \ie $\tm \defeq \la{\var}\tmtwo$, then $\tm$ is \cbneed normal because $\tocbneed$ does not reduce under abstractions.
			
			\item \emph{Explicit substitution}, \ie $\tm \defeq \tmtwo \esub{\var}{\tmthree}$ with $\normalpr{\tmtwo}$, then $\tmtwo = \sctxp{\la{\vartwo}\tmfour}$ for some substitution context $\sctx$ according to \refrem{normal}, thus $\tm = \sctxtwop{\la{\vartwo}\tmfour}$ for the substitution context $\sctxtwo = \sctx \esub\var\tmthree$.
			By \ih, $\tmtwo$ is \cbneed normal.
			Hence, $\tm$ is $\mcbneed$-normal.
			Moreover, $\tm \neq \cbneedctxp{\var}$ for any \cbneed context $\cbneedctx$, so $\tm$ is also $\ecbneed$-normal and hence \cbneed normal.
		\end{itemize}
	\end{description}

	\item We prove that a closed term $\tm$ is \cbv normal if and only if $\normalcbvpr{\tm}$. 
	%	According to \refrem{normal}, $\tm \defeq \sctxp{\la{\var}\tmtwo}$ for some term $\tmtwo$ and substitution context $\sctx$, hence $\tm$ is patently \cbn normal because $\tocbn$ does not reduce inside explicit substitutions or abstractions.
	\begin{description}
		\item[$\Rightarrow$:] Let $\tm$ be a closed and \cbv normal term. 
		We prove by induction on $\tm$ %a stronger statement:
		that $\normalcbvpr{\tm}$. %and $\tm \neq \sctxp{\tmtwo\tmthree}$ for any substitution context $\sctx$.
		Cases:
		\begin{itemize}
			\item \emph{Variable}, \ie $\tm \defeq \var$: it is impossible because $\tm$ is closed by hypothesis.
			
			\item \emph{Abstraction}, \ie $\tm \defeq \la{\vartwo}\tmtwo$: then, $\normalcbvpr{\tm}$ according to the definition of the predicate $\normalcbv{}$.
			%			Clearly, $\tm \neq \sctxp{\tmtwo\tmthree}$ for any substitution context $\sctx$.
			
			\item \emph{Application}, \ie $\tm \defeq \tmtwo\tmthree$: this case is impossible, we prove it by contradiction. 
			Suppose $\tm \defeq \tmtwo\tmthree$, then $\tmtwo$ would be closed and \cbv normal, and hence $\normalcbvpr{\tmtwo}$ by \ih
			According to \refrem{normal-cbv}, $\tmtwo = \sctxp{\la{\vartwo}\tmfour}$ and so $\tm = \sctxp{\la{\vartwo}\tmfour}\tmthree$, which is impossible because $\tm$ would be a $\msym$-redex.
			
			\item \emph{Explicit substitution}, \ie $\tm \defeq \tmtwo \esub{\var}{\tmthree}$: then, $\tmtwo$ and $\tmthree$ are \cbv normal, and $\tmthree$ is closed (as $\tm$ is so), thus $\normalcbvpr{\tmthree}$ by \ih
			There are four subcases:
			\begin{itemize}
				\item $\tmtwo$ is closed, then $\normalcbvpr{\tmtwo}$ by \ih, and hence $\normalcbvpr{\tm}$;
				\item $\tmtwo \defeq (\la{\vartwo}\tmfour)\esub{\vartwo_1}{\tmtwo_1}\dots \esub{\vartwo_n}{\tmtwo_n}$, then all $\tmfour_i$'s are \cbv normal (as $\tmtwo$ is so);
				by \ih, $\normalcbvpr{\tmfour_i}$ for all $1 \leq i \leq n$, thus $\normalcbvpr{\tmtwo}$ (since $\normalcbvpr{\la{\vartwo}\tmfour}$)
				 and hence $\normalcbvpr{\tm}$;
				\item $\tmtwo \defeq \cbvctxp{\vartwo}$: this is impossible because otherwise (since $\tm$ is closed) $\tm = \cbvctxtwo\hole{\cbvctxthree\cwc{\vartwo}\esub\vartwo{\tmfour}}$ for some $\tmfour \defeq \sctxp{\val}$ (by \ih and \refrem{normal-cbv}, since $\tmfour$ is \cbv normal), and so $\tm$ would not be $\ecbv$-normal; 
				\item $\tmtwo$ is none of the above, then by \refrem{writing}, $\tmtwo \defeq \cbnctxp{\la{\vartwo}\tmfour}$ for some \cbn (and hence \cbv, by \refrem{cbn-cbneed-context}) context $\cbnctx$ that is not a substitution context: this case is impossible because otherwise, by \refrem{writing}, $\tmtwo \defeq \cbnctxtwop{\sctxp{\la{\vartwo}\tmfour}\tmfive}$ for some \cbn context $\cbnctxtwo$, which is not $\mcbv$-normal (since $\cbnctxtwo$ is a \cbv context by \refrem{cbn-cbneed-context}).
			\end{itemize}
		\end{itemize}
		
		\item[$\Leftarrow$:] We prove by induction on the definition of $\normalcbvpr{\tm}$ a stronger statement: for any term $\tm$, if $\normalcbvpr{\tm}$ then $\tm$ is \cbv normal (we dropped the hypothesis that $\tm$ is closed).
		There are only two cases:
		\begin{itemize}
			\item \emph{Abstraction}, \ie $\tm \defeq \la{\var}\tmtwo$, then $\tm$ is \cbv normal because $\tocbv$ does not reduce under abstractions.
			
			\item \emph{Explicit substitution}, \ie $\tm \defeq \tmtwo \esub{\var}{\tmthree}$ with $\normalcbvpr{\tmtwo}$ and $\normalcbvpr{\tmthree}$, then $\tm = (\la{\vartwo}\tmfour)\esub{\vartwo_1}{\tmtwo_1}\dots\esub{\vartwo_n}{\tmtwo_n}\esub{\var}{\tmthree}$ where $\normalcbvpr{\tmtwo_i}$ for all $1 \leq i \leq n$, by \refrem{normal-cbv} applied to $\tmtwo$.
			By \ih, $\tmthree, \tmtwo_1, \dots, \tmtwo_n$ and $\tmtwo$ are \cbv normal.
			Hence, $\tm$ is $\mcbv$-normal.
			Moreover, $\tm \neq \cbvctxp{\var}$ for any \cbv context $\cbvctx$, so $\tm$ is also $\ecbv$-normal and hence \cbv normal.
			\qed
		\end{itemize}
	\end{description}
	
	\end{enumerate}
\end{proof}

\section{Types by Name (\refsect{cbn})}

\subsection{\cbn Correctness}

In order to prove subject reduction, we have to first show that typability is preserved by linear substitutions, via a dedicated lemma. We also need the following splitting property of multi-sets, whose proof is omitted because straightforward.

\begin{lemma}[Splitting multi-sets with respect to derivations]
\label{l:name-splitting-multisets}
Let $\tm$ be a term, $\tderiv \exder[\cbn] \tyjp{(\msteps,\esteps)}{\tm}{\typctx}{\mtype}$ a derivation, and $\mtype = \mtypetwo \mplus \mtypethree$  a splitting. Then there exist two derivations 
\begin{itemize}
  \item $\tderiv_{\mtypetwo} \exder[\cbn] \tyjp{(\msteps_{\mtypetwo},\esteps_{\mtypetwo})}{\tm}{\typctx_{\mtypetwo}}{\mtypetwo}$, and
  \item $\tderiv_{\mtypethree} \exder[\cbn] \tyjp{(\msteps_{\mtypethree},\esteps_{\mtypethree})}{\tm}{\typctx_{\mtypethree}}{\mtypethree}$ 
\end{itemize}
 such that 
	\begin{itemize}
	\item \emph{Type contexts}: $\typctx = \typctx_{\mtypetwo} \mplus \typctx_{\mtypethree}$,
	\item \emph{Indices}: $\msteps = \msteps_{\mtypetwo} + \msteps_{\mtypethree}$ and $\esteps = \esteps_{\mtypetwo} + \esteps_{\mtypethree}$.
	\end{itemize}
\end{lemma}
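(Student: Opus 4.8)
The plan is to reduce the statement to a single structural observation about the type system of \reffig{type-system-cbn}: the rule $\many$ is the only one whose conclusion assigns a multi type on the right of the judgement. Inspecting the rules, $\ax$, $\normal$, $\fun$, $\app$ and $\esrule$ all produce a \emph{linear} type $\type$, so a derivation whose conclusion types $\tm$ with a multi type $\mtype$ must end with an instance of $\many$. Hence, writing $\mtype = \MSigma{\type_i}{i \in J}$, the derivation $\tderiv$ has a family of premise derivations $\tderiv_i$, one for each $i \in J$, with $\tderiv_i \exder[\cbn] \tyjp{(\msteps_i,\esteps_i)}{\tm}{\typctxtwo_i}{\type_i}$, and the bookkeeping of the $\many$ rule gives $\typctx = \bigmplus_{i \in J}\typctxtwo_i$, $\msteps = \sum_{i \in J}\msteps_i$ and $\esteps = \sum_{i \in J}\esteps_i$.

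Next I would transfer the multiset splitting $\mtype = \mtypetwo \mplus \mtypethree$ to a partition of the index set. Since $\MSigma{\type_i}{i \in J}$ lists the elements of $\mtype$ with their multiplicities and $\mtypetwo \mplus \mtypethree$ is the very same multiset, there is a subset $J_2 \subseteq J$, with $J_3 \defeq J \setminus J_2$, such that $\mtypetwo = \MSigma{\type_i}{i \in J_2}$ and $\mtypethree = \MSigma{\type_i}{i \in J_3}$. This step is pure finite-multiset combinatorics and does not mention $\tderiv$ at all.

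Then I would reassemble the output derivations. Let $\tderiv_{\mtypetwo}$ be the instance of $\many$ whose premises are $(\tderiv_i)_{i \in J_2}$, and $\tderiv_{\mtypethree}$ the instance whose premises are $(\tderiv_i)_{i \in J_3}$; when $J_2$ (respectively $J_3$) is empty this is simply the $\many$ rule with no premises, which types $\tm$ with $\zero = \MSigma{\type_i}{i \in \emptyset}$, empty context and indices $(0,0)$, consistently with the degenerate splitting. By the definition of $\many$ these derivations conclude $\tyjp{(\msteps_{\mtypetwo},\esteps_{\mtypetwo})}{\tm}{\typctx_{\mtypetwo}}{\mtypetwo}$ and $\tyjp{(\msteps_{\mtypethree},\esteps_{\mtypethree})}{\tm}{\typctx_{\mtypethree}}{\mtypethree}$, with $\typctx_{\mtypetwo} = \bigmplus_{i \in J_2}\typctxtwo_i$ and $\typctx_{\mtypethree} = \bigmplus_{i \in J_3}\typctxtwo_i$ (and likewise for the indices).

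Finally the side conditions follow by rearranging a finite family along the partition $J = J_2 \uplus J_3$: associativity and commutativity of $\mplus$ give $\typctx_{\mtypetwo} \mplus \typctx_{\mtypethree} = \bigmplus_{i \in J}\typctxtwo_i = \typctx$, and the same splitting of the summation index yields $\msteps_{\mtypetwo} + \msteps_{\mtypethree} = \msteps$ and $\esteps_{\mtypetwo} + \esteps_{\mtypethree} = \esteps$. I do not expect a genuine obstacle: all the content is the single remark that a multi-type judgement is necessarily concluded by $\many$, after which the proof is just splitting an indexed family along a partition of its index set, which is exactly why the authors regard the lemma as straightforward.
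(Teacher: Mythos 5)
Your proof is correct and is exactly the ``straightforward'' argument the paper has in mind (the paper omits the proof of this lemma for that reason): in the \cbn system only the rule $\many$ concludes a judgement whose right-hand side is a multi type, so the derivation must end in $\many$, and splitting its family of premises along a partition of the index set matching $\mtype = \mtypetwo \mplus \mtypethree$ immediately gives the two derivations with the stated contexts and indices. Your handling of the degenerate case where one part of the splitting is $\zero$ (via $\many$ with zero premises) is also the right bookkeeping, and it is worth noting that this argument is specific to \cbn: in the \cbv and \cbneed systems other rules also conclude multi types, which is why the corresponding splitting lemmas there are restricted to values.
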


\gettoappendix {l:name-linear-substitution}
% !TEX root = ../../main.tex
\begin{proof} 
By induction on $\cbnctx$. Cases:
\begin{itemize}
\item \emph{Empty context}, \ie $\cbnctx = \ctxhole$. The typing derivation $\tderiv$ is simply 
$$ \infer[\ax]{\Deri[(0, 1)] {\var \col \single \type} \var \type}{} $$
and $\typctx$ is empty. Then $\mtype = \mult\type$ and so $\mtypetwo$ is empty.  The statement then holds with respect to $\tderiv_{\cbnctx\cwc{\tm}} \defeq \tderivtwo$, because  $\msteps = 0$ and $\esteps = 1$.

\item \emph{Left on an application}, \ie $\cbnctx = \cbnctxtwo \tmtwo$.
The last rule of $\tderiv$ can only be $\app$, and so $\tderiv$ has the form:
			$$
			\infer
				[\app]
				{\tyjp{(\msteps_{\typctxthree} + \msteps_{\typctxfour} + 1, \esteps_{\typctxthree} + \esteps_{\typctxfour})}{\cbnctxtwo \cwc{\var} \tmtwo}{\var \colon (\mtype_{\typctxthree} \bigmplus \mtype_{\typctxfour}) ; (\typctxthree \bigmplus \typctxfour)}{\type}}
				{\tyjp{(\msteps_{\typctxthree},\esteps_{\typctxthree})}{\cbnctxtwo \cwc{\var}}{\var \colon \mtype_{\typctxthree} ; \typctxthree}{\ty{\mtypetwo}{\type}}
				\quad
				\tyjp{(\msteps_{\typctxfour}, \esteps_{\typctxfour})}{\tmtwo}{\var \colon \mtype_{\typctxfour} ; \typctxfour}{\mtypetwo}}
			$$
where $\typctx = \typctxthree \bigmplus \typctxfour$, $\typctxthree(\var) = \typctxfour(\var) = \emptytype$, $\mtype_{\typctxthree} \mplus \mtype_{\typctxfour} = \mtype $, $\msteps = \msteps_{\typctxthree} + \msteps_{\typctxfour} + 1$, and $\esteps = \esteps_{\typctxthree} + \esteps_{\typctxfour}$.

	By \ih, there exists a splitting $\mtype_{\typctxthree} = \mult\typetwo \mplus \mtypethree$ such that for every derivation $\tderivtwo \exder[\cbn]\Deri[(\mstepstwo, \estepstwo)]{\typctxtwo}{\tm}{\typetwo}$ there exists a derivation 
	$$\tderiv_{\cbnctxtwo\cwc{\tm}} \exder[\cbn] \Deri[(\msteps_{\typctxthree}+\mstepstwo, \esteps_{\typctxthree} + \estepstwo-1)]{\var:\mtypethree;
  \typctxthree \bigmplus \typctxtwo}{\cbnctxtwo\cwc{\tm}}{\ty{\mtypetwo}{\type}}$$
By applying an \app rule we obtain:
		$$
			\infer
				[\app]
				{\tyjp{(\msteps_{\typctxthree} + \mstepstwo + \msteps_{\typctxfour} + 1, \esteps_{\typctxthree} + \estepstwo + \esteps_{\typctxfour}-1)}{\cbnctxtwo \cwc{\var} \tmtwo}{\var \colon (\mtypethree \mplus \mtype_{\typctxfour}) ; (\typctxthree \mplus \typctxtwo \mplus  \typctxfour)}{\type}}
				{\Deri[(\msteps_{\typctxthree}+\mstepstwo, \esteps_{\typctxthree} + \estepstwo-1)]{\var:\mtypethree;
  \typctxthree \mplus \typctxtwo}{\cbnctxtwo\cwc{\tm}}{\ty{\mtypetwo}{\type}}
				\quad
				\tyjp{(\msteps_{\typctxfour}, \esteps_{\typctxfour})}{\tmtwo}{\var \colon \mtype_{\typctxfour} ; \typctxfour}{\mtypetwo}}
			$$
Now, by defining $\mtypetwo \defeq  \mtypethree \mplus \mtype_{\typctxfour}$, we obtain $\mtype = \mtype_{\typctxthree} \mplus \mtype_{\typctxfour}  = \mult\typetwo \mplus \mtypethree \mplus \mtype_{\typctxfour}  = \mult\typetwo \mplus \mtypetwo$. Therefore by applying the equalities on the type context the last obtained judgement is in fact:
$$\tyjp{(\msteps_{\typctxthree} + \mstepstwo + \msteps_{\typctxfour} + 1, \esteps_{\typctxthree} + \estepstwo + \esteps_{\typctxfour}-1)}{\cbnctxtwo \cwc{\var} \tmtwo}{\var \colon \mtypetwo ; (\typctx \mplus \typctxtwo)}{\type}$$
and by applying those on the indices we obtain:
$$\tyjp{(\msteps + \mstepstwo, \esteps + \estepstwo-1)}{\cbnctxtwo \cwc{\var} \tmtwo}{\var \colon \mtypetwo ; (\typctx \mplus \typctxtwo)}{\type}$$
as required.

\item \emph{Left of a substitution}, \ie $\cbnctx = \cbnctxtwo \esub\vartwo\tmtwo$. Note that $\var \neq \vartwo$, because the hypothesis $\cbnctx \cwc{\var}$ implies that $\cbnctx$ does not capture $\var$. 

The last rule of $\tderiv$ can only be $\ES$, and so $\tderiv$ has the form:
			$$
			\infer
				[\esrule]
				{\tyjp{(\msteps_{\typctxthree} + \msteps_{\typctxfour} + 1, \esteps_{\typctxthree} + \esteps_{\typctxfour})}{\cbnctxtwo \cwc{\var} \esub\vartwo\tmtwo}{\var \colon (\mtype_{\typctxthree} \mplus \mtype_{\typctxfour}) ;  (\typctxthree \mplus \typctxfour)}{\type}}
				{\tyjp{(\msteps_{\typctxthree},\esteps_{\typctxthree})}{\cbnctxtwo \cwc{\var}}{\var \colon \mtype_{\typctxthree} ; \vartwo \colon \mtype' ; \typctxthree}{\type}
				\quad
				\tyjp{(\msteps_{\typctxfour}, \esteps_{\typctxfour})}{\tmtwo}{\var \colon \mtype_{\typctxfour} ; \typctxfour}{\mtype'}}
			$$
where $\typctx = \typctxthree \mplus \typctxfour$, $\typctxthree(\var) = \typctxfour(\var) = \emptytype$, $\mtype_{\typctxthree} \mplus \mtype_{\typctxfour} = \mtype $, $\msteps = \msteps_{\typctxthree} + \msteps_{\typctxfour} + 1$, and $\esteps = \esteps_{\typctxthree} + \esteps_{\typctxfour}$.

	By \ih, there exists a splitting $\mtype_{\typctxthree} = \mult\typetwo \mplus \mtypethree$ such that for every derivation $\tderivtwo \exder[\cbn]\Deri[(\mstepstwo, \estepstwo)]{\typctxtwo}{\tm}{\typetwo}$ there exists a derivation 
	$$\tderiv_{\cbnctxtwo\cwc{\tm}} \exder[\cbn] \Deri[(\msteps_{\typctxthree}+\mstepstwo, \esteps_{\typctxthree} + \estepstwo-1)]{\var:\mtypethree; \vartwo \colon \mtype' ;
  \typctxthree \mplus \typctxtwo}{\cbnctxtwo\cwc{\tm}}{\type}$$
  
Note that by \reflemma{name-typctx-varocc} and the fact that we are working up to $\alpha$-equivalence, we can prove that $\vartwo \notin \dom{\typctxtwo} $.
By applying a rule \esrule we obtain 
		$$
			\infer
				[\esrule]
				{\tyjp{(\msteps_{\typctxthree} + \mstepstwo + \msteps_{\typctxfour} + 1, \esteps_{\typctxthree} + \estepstwo + \esteps_{\typctxfour}-1)}{\cbnctxtwo \cwc{\var} \esub\vartwo\tmtwo}{\var \hastype \mtypethree \mplus \mtype_{\typctxfour} ; \typctxthree \mplus \typctxtwo \mplus \typctxfour}{\type}}
				{\Deri[(\msteps_{\typctxthree}+\mstepstwo, \esteps_{\typctxthree} + \estepstwo-1)]{\var \hastype \mtypethree; \vartwo \colon \mtype' ;
  \typctxthree \mplus \typctxtwo}{\cbnctxtwo\cwc{\tm}}{\type}
				\quad
				\tyjp{(\msteps_{\typctxfour}, \esteps_{\typctxfour})}{\tmtwo}{\var \colon \mtype_{\typctxfour} ; \typctxfour}{\mtype'}}
			$$
			Now, by defining $\mtypetwo \defeq  \mtypethree \mplus \mtype_{\typctxfour}$, we obtain $\mtype = \mtype_{\typctxthree} \mplus \mtype_{\typctxfour}  = \mult\typetwo \mplus \mtypethree \mplus \mtype_{\typctxfour}  = \mult\typetwo \mplus \mtypetwo$. Therefore by applying the equalities on the type context the last obtained judgement is in fact:
$$\tyjp{(\msteps_{\typctxthree} + \mstepstwo + \msteps_{\typctxfour} + 1, \esteps_{\typctxthree} + \estepstwo + \esteps_{\typctxfour}-1)}{\cbnctxtwo \cwc{\var} \esub\vartwo\tmtwo}{\var \colon \mtypetwo ; (\typctx \mplus \typctxtwo)}{\type}$$
and by applying those on the indices we obtain:
$$\tyjp{(\msteps + \mstepstwo, \esteps + \estepstwo-1)}{\cbnctxtwo \cwc{\var} \esub\vartwo\tmtwo}{\var \colon \mtypetwo ; (\typctx \mplus \typctxtwo)}{\type}$$
as required.\qed
\end{itemize}
\end{proof}

\gettoappendix {prop:name-subject-reduction}
% !TEX root = ../../main.tex
%%% PROOF
\begin{proof} \hfill
\begin{enumerate}
\item By induction on $\tm \tom \tmtwo$. Cases:
\begin{itemize}
\item \emph{Step at top level}, \ie  $\tm =  \sctxp{\la\var \tmthree} \tmfour \tom \sctxp{ \tmthree \esub\var\tmfour } = \tmtwo$. This case is itself by induction on $\sctx$. Two sub-cases:
\begin{itemize}
\item \emph{Empty substitution context}, \ie $\sctx = \ctxhole$. By 
construction the derivation $\tderiv$ is of the form: 
$$
\infer[\app]{
	\Deri[(\msteps_\tmthree +  \msteps_\tmfour +1, \esteps_\tmthree + \esteps_\tmfour)]{\typctx_\tmthree  \mplus  \typctx_\tmfour}{(\la \var \tmthree)\tmfour}{\type}
}{
	\infer[\fun]{
		\Deri[(\msteps_\tmthree, \esteps_\tmthree)]{\typctx_\tmthree}{\la \var \tmthree}{ \mtype \rightarrow \type}
		}{
		\Deri[(\msteps_\tmthree, \esteps_\tmthree)]{\var:\mtype;\typctx_\tmthree}{\tmthree}{\type}
		} 
	\quad 
    \Deri[(\msteps_\tmfour, \esteps_\tmfour)]{\typctx_\tmfour}{\tmfour}{\mtype}  
}$$
With $\typctx = \typctx_\tmthree  \mplus  \typctx_\tmfour$, $\msteps = \msteps_\tmthree + \msteps_\tmfour$, and $\esteps = \esteps_\tmthree + \esteps_\tmfour$. Note that $\msteps \geq 1$ as required. We construct the 
following derivation $\tderivtwo$, verifying the statement:
$$ 
\infer[\esrule]{
	\Deri[(\msteps_\tmthree +  \msteps_\tmfour, \esteps_\tmthree + \esteps_\tmfour)]{\typctx_\tmthree  \mplus  \typctx_\tmfour}{\tmthree \esub\var\tmfour}{\type}
}
{
	\Deri[(\msteps_\tmthree, \esteps_\tmthree)]{\var:\mtype;\typctx_\tmthree}{\tmthree}{\type}
    \quad 
	\Deri[(\msteps_\tmfour, \esteps_\tmfour)]{\typctx_\tmfour}{\tmfour}{\mtype}    
}$$
 
\item \emph{Non-empty substitution context}, \ie $\sctx = \sctxtwo \esub\vartwo\tmfive$. Then $\tderiv$ has the following structure:
$$
\infer[\app]{
	\Deri[(\msteps_\tmthree + \msteps_\tmfive +  \msteps_\tmfour +1, \esteps_\tmthree + \esteps_\tmfive + \esteps_\tmfour)]{\typctx_\tmthree  \mplus  \typctx_\tmfive \mplus  \typctx_\tmfour}{\sctxtwop{\la \var \tmthree}\esub\vartwo\tmfive\tmfour}{\type}
}{
	\infer[\esrule]{
		\Deri[(\msteps_\tmthree + \msteps_\tmfive, \esteps_\tmthree + \esteps_\tmfive)]{\typctx_\tmthree\mplus  \typctx_\tmfive }{\sctxtwop{\la \var \tmthree}\esub\vartwo\tmfive}{ \mtype \rightarrow \type}
		}{
		\Deri[(\msteps_\tmthree, \esteps_\tmthree)]{\vartwo:\mtypetwo; \typctx_\tmthree}{\sctxtwop{\la \var \tmthree}}{ \mtype \rightarrow \type}
		\quad
		    \Deri[(\msteps_\tmfive, \esteps_\tmfive)]{\typctx_\tmfive}{\tmfive}{\mtypetwo}  
		} 
	\quad 
    \Deri[(\msteps_\tmfour, \esteps_\tmfour)]{\typctx_\tmfour}{\tmfour}{\mtype}  
}$$
With $\typctx = \typctx_\tmthree  \mplus \typctx_\tmfive \mplus  \typctx_\tmfour$, $\msteps = \msteps_\tmthree + \msteps_\tmfive + \msteps_\tmfour + 1$, and $\esteps = \esteps_\tmthree + \esteps_\tmfive + \esteps_\tmfour$. Note that $\msteps \geq 1$ as required.

Consider the following derivation, obtained by removing the rule \esrule:
$$
\infer[\app]{
	\Deri[(\msteps_\tmthree +  \msteps_\tmfour +1, \esteps_\tmthree +  \esteps_\tmfour)]{\vartwo:\mtypetwo;\typctx_\tmthree  \mplus  \typctx_\tmfour}{\sctxtwop{\la \var \tmthree}\tmfour}{\type}
}{
	\Deri[(\msteps_\tmthree, \esteps_\tmthree)]{\vartwo:\mtypetwo; \typctx_\tmthree}{\sctxtwop{\la \var \tmthree}}{ \mtype \rightarrow \type} 
	\quad 
    \Deri[(\msteps_\tmfour, \esteps_\tmfour)]{\typctx_\tmfour}{\tmfour}{\mtype}  
}$$
By \ih, we obtain a derivation 
$$\tderivthree \exder \Deri[(\msteps_\tmthree +  \msteps_\tmfour, \esteps_\tmthree +  \esteps_\tmfour)]{\vartwo:\mtypetwo;\typctx_\tmthree  \mplus  \typctx_\tmfour}{\sctxtwop{ \tmthree \esub\var\tmfour}}{\type}$$ 
Now, we apply a rule \esrule with respect to $\vartwo$ and $\tmfive$, obtaining the following derivation $\tderivtwo$, satisfying the statement:
$$
\infer[\esrule]{
	\Deri[(\msteps_\tmthree + \msteps_\tmfive +  \msteps_\tmfour, \esteps_\tmthree + \esteps_\tmfive + \esteps_\tmfour)]{\typctx_\tmthree  \mplus  \typctx_\tmfive \mplus  \typctx_\tmfour}{\sctxtwop{ \tmthree \esub\var\tmfour}\esub\vartwo\tmfive}{\type}
	}{
	\Deri[(\msteps_\tmthree +  \msteps_\tmfour, \esteps_\tmthree +  \esteps_\tmfour)]{\vartwo:\mtypetwo;\typctx_\tmthree  \mplus  \typctx_\tmfour}{\sctxtwop{ \tmthree \esub\var\tmfour}}{\type}
	\quad 
    \Deri[(\msteps_\tmfive, \esteps_\tmfive)]{\typctx_\tmfive}{\tmfive}{\mtypetwo}  
}$$

\end{itemize}

\item \emph{Contextual closure.} We have $\tm = \cbnctxp\tmthree \tom \cbnctxp\tmfour = \tmtwo$. Cases of $\cbnctx$:
	\begin{itemize}
		\item \emph{Left on an application}, \ie $\cbnctx = \cbnctxtwo \tmfive$. The last typing rule in $\tderiv$ is necessarily $\app$ and $\tderiv$ is of the form
		$$
		\infer
			[\app]
			{\tyjp{(\msteps_\tmthree + \msteps_\tmfive + 1, \esteps_\tmthree + \esteps_\tmfive)}{\cbnctxtwo \hole{\tmthree} \tmfive}{\typctx_\tmthree \mplus \typctx_\tmfive}{\type}}
			{\tyjp{(\msteps_\tmthree,\esteps_\tmthree)}{\cbnctxtwop{\tmthree}}{\typctx_\tmthree}{\ty{\mtype}{\type}}
			\quad
			\tyjp{(\msteps_\tmfive,\esteps_\tmfive)}{\tmfive}{\typctx_\tmfive}{\mtype}}
		$$
		With $\typctx = \typctx_\tmthree  \mplus  \typctx_\tmfive$, $\msteps = \msteps_\tmthree + \msteps_\tmfive + 1$, and $\esteps = \esteps_\tmthree + \esteps_\tmfive$. 
		
By \ih, $\msteps_\tmthree \geq 1$ and there exists a derivation $\tyjp{(\msteps_\tmthree-1,\esteps_\tmthree)}{\cbnctxtwop{\tmfour}}{\typctx_\tmthree}{\ty{\mtype}{\type}}$, thus allowing us to construct $\tderivtwo$ as follows:
		$$
		\infer
			[\appsteps]
			{\tyjp{(\msteps_\tmthree + \msteps_\tmfive, \esteps_\tmthree + \esteps_\tmfive)}{\cbnctxtwop{\tmfour} \tmfive}{\typctx_\tmthree \mplus \typctx_\tmfive}{\type}}
			{\tyjp{(\msteps_\tmthree-1,\esteps_\tmthree)}{\cbnctxtwop{\tmfour}}{\typctx_\tmthree}{\ty{\mtype}{\type}}
			\quad
			\tyjp{(\msteps_\tmfive,\esteps_\tmfive)}{\tmfive}{\typctx_\tmfive}{\mtype}}
		$$
Note that $(\msteps_\tmthree + \msteps_\tmfive, \esteps_\tmthree + \esteps_\tmfive) = (\msteps - 1,\esteps) $.

	\item Let $\cbnctx = \cbnctxtwo \esub{\var}{\tmfive}$. The last typing rule in $\tderiv$ is necessarily $\ES$  and $\tderiv$ is of the form
		$$
		\infer
			[\ES]
			{\tyjp{(\msteps_\tmthree + \msteps_\tmfive, \esteps_\tmthree + \esteps_\tmfive)}{\cbnctxtwo \hole{\tmthree} \esub\var\tmfive}{\typctx_\tmthree \mplus \typctx_\tmfive}{\type}}
			{\tyjp{(\msteps_\tmthree,\esteps_\tmthree)}{\cbnctxtwop{\tmthree}}{\typctx_\tmthree, \var:\mtype}{\type}
			\quad
			\tyjp{(\msteps_\tmfive,\esteps_\tmfive)}{\tmfive}{\typctx_\tmfive}{\mtype}}
		$$
		With $\typctx = \typctx_\tmthree  \mplus  \typctx_\tmfive$, $\msteps = \msteps_\tmthree + \msteps_\tmfive$, and $\esteps = \esteps_\tmthree + \esteps_\tmfive$. 
		
By \ih, $\msteps_\tmthree \geq 1$, and so $\msteps \geq 1$, and there exists a derivation $\tyjp{(\msteps_\tmthree-1,\esteps_\tmthree)}{\cbnctxtwop{\tmfour}}{\typctx_\tmthree, \var:\mtype}{\type}$, thus allowing us to construct $\tderivtwo$ as follows:
$$
		\infer
			[\ES]
			{\tyjp{(\msteps_\tmthree + \msteps_\tmfive -1, \esteps_\tmthree + \esteps_\tmfive)}{\cbnctxtwo \hole{\tmthree} \esub\var\tmfive}{\typctx_\tmthree \mplus \typctx_\tmfive}{\type}}
			{\tyjp{(\msteps_\tmthree-1,\esteps_\tmthree)}{\cbnctxtwop{\tmfour}}{\typctx_\tmthree, \var:\mtype}{\type}
			\quad
			\tyjp{(\msteps_\tmfive,\esteps_\tmfive)}{\tmfive}{\typctx_\tmfive}{\mtype}}
		$$
		Note that $(\msteps_\tmthree + \msteps_\tmfive-1, \esteps_\tmthree + \esteps_\tmfive) = (\msteps - 1,\esteps) $.

\end{itemize}
\end{itemize}

\item By induction on $\tm \toe \tmtwo$. 
\begin{itemize}
\item \emph{Step at top level}, \ie  $\tm =  \cbnctx \cwc{\var} \esub\var\tmthree \toe \cbnctx \cwc{\tmthree} \esub\var\tmthree = \tmtwo$. The last typing rule in $\tderiv$ is necessarily $\ES$  and $\tderiv$ is of the form
		$$
		\infer
			[\ES]
			{\tyjp{(\msteps_\cbnctx + \msteps_\tmthree, \esteps_\cbnctx + \esteps_\tmthree)}{\cbnctx \cwc{\var} \esub\var\tmthree}{\typctx_\cbnctx \mplus \typctxtwo}{\type}}
			{\tderiv_{\cbnctx\cwc{\var}} \exder \tyjp{(\msteps_\cbnctx,\esteps_\cbnctx)}{\cbnctx\cwc{\var}}{\typctx_\cbnctx, \var:\mtype}{\type}
			\quad
			\tyjp{(\mstepstwo,\estepstwo)}{\tmthree}{\typctxtwo}{\mtype}}
		$$
		With $\typctx = \typctx_\cbnctx  \mplus  \typctxtwo$, $\msteps = \msteps_\cbnctx + \mstepstwo$, and $\esteps = \esteps_\cbnctx + \estepstwo$. 
		
	Let $\mtype = \mult\typetwo \mplus \mtypetwo$ be the splitting of $\mtype$ given by the linear substitution lemma (\reflemma{name-linear-substitution}) applied to $\tderiv_{\cbnctx\cwc{\var}}$. By the multi-sets splitting lemma (\reflemma{name-splitting-multisets}) there exist two derivations 
\begin{enumerate}
\item  $\tderivtwo_\typetwo \exder[\cbn]\Deri[(\mstepstwo_\typetwo, \estepstwo_\typetwo)]{\typctxtwo_\typetwo}{\tmthree}{\typetwo}$ and

\item  $\tderivtwo_\mtypetwo \exder[\cbn]\Deri[(\mstepstwo_\mtypetwo, \estepstwo_\mtypetwo)]{\typctxtwo_\mtypetwo}{\tmthree}{\mtypetwo}$.
\end{enumerate}
such that $\typctxtwo = \typctxtwo_\typetwo \mplus \typctxtwo_\mtypetwo$, $\mstepstwo = \mstepstwo_\typetwo + \mstepstwo_\mtypetwo$,  and $\estepstwo = \estepstwo_\typetwo + \estepstwo_\mtypetwo$.

Now, by applying again the linear substitution lemma to $\tderiv_{\cbnctx\cwc{\var}}$ with respect to $\tderivtwo_\typetwo$, we obtain a derivation
$$\tderiv_{\cbnctx\cwc{\tmthree}} \exder[\cbn] \Deri[(\msteps_\cbnctx+\mstepstwo_\typetwo, \esteps_\cbnctx + \estepstwo_\typetwo-1)]{\var:\mtypetwo;
  \typctx_\cbnctx \mplus \typctxtwo_\typetwo}{\cbnctx\cwc{\tmthree}}{\type}$$

Then $\tderivtwo$ is built as follows:
$$
		\infer
			[\ES]
			{\tyjp{(\msteps_\cbnctx + \msteps_\typetwo + \msteps_\mtypetwo, \esteps_\cbnctx + \esteps_\typetwo+ \esteps_\mtypetwo -1)}{\cbnctx \cwc{\tmthree} \esub\var\tmthree}{\typctx_\cbnctx \mplus \typctxtwo_\typetwo \mplus \typctxtwo_\mtypetwo}{\type}}
			{\Deri[(\msteps_\cbnctx+\mstepstwo_\typetwo, \esteps_\cbnctx + \estepstwo_\typetwo-1)]{\var:\mtypetwo;
  \typctx_\cbnctx \mplus \typctxtwo_\typetwo}{\cbnctx\cwc{\tmthree}}{\type}
			\quad
			\Deri[(\mstepstwo_\mtypetwo, \estepstwo_\mtypetwo)]{\typctxtwo_\mtypetwo}{\tmthree}{\mtypetwo}}
		$$
	Now, note that the last judgement is in fact
	$$ \tyjp{(\msteps_\cbnctx + \mstepstwo, \esteps_\cbnctx + \estepstwo -1)}{\cbnctx \cwc{\tmthree} \esub\var\tmthree}{\typctx_\cbnctx \mplus \typctxtwo}{\type} $$
	which in turn is 
	$$ \tyjp{(\msteps, \esteps -1)}{\cbnctx \cwc{\tmthree} \esub\var\tmthree}{\typctx}{\type} $$
	as required.
	
	\item \emph{Contextual closure.} As in the $\tom$ case. Note that indeed those cases do not depend on the details of the step itself, but only on the context enclosing it. \qed
\end{itemize}
\end{enumerate}
\end{proof}

\gettoappendix {prop:name-normal-forms-forall}
% !TEX root = ../../main.tex
\begin{proof}
By induction on the derivation of $\normalpr\tm$. Cases:
\begin{itemize}
\item \emph{Base}, \ie $\normalpr{\la{\var}\tm}$. Then $\tderiv$ can only be:
$$      \infer[\normal]{
        \Deri[(0,0)] {} { \la\var\tm } \normal
      }{}
$$
which satisfies the statement.

\item \emph{Inductive}, \ie $\normalpr{ \tm \esub\var\tmtwo }$ because $\normalpr\tm$. Then $\tderiv$ has the following shape.
$$
      \infer[\esrule]{
        \Deri[(\msteps_\tm + \msteps_\tmtwo, \esteps_\tm + \esteps_\tmtwo)] {\typctx_\tm \mplus \typctx_\tmtwo}
             {\tm \esub\var\tmtwo} \normal
      }{
        \tderiv_\tm \exder \Deri[(\msteps_\tm, \esteps_\tm)] {\typctx_\tm; \var \col \M} \tm {\normal}
        \quad
        \tderiv_\tmtwo \exder \Deri[(\msteps_\tmtwo, \esteps_\tmtwo)] {\typctx_\tmtwo} \tmtwo {\M}
      } 
$$
with $\typctx = \typctx_\tm \mplus \typctx_\tmtwo$, $\msteps = \msteps_\tm + \msteps_\tmtwo$, and $\esteps = \esteps_\tm + \esteps_\tmtwo$.
We can apply the \ih to $\tderiv_\tm$, obtaining that $\mtype = \emptytype$, $\typctx_\tm$ is empty, and $\msteps_\tm = \esteps_\tm= 0$. Then $\tderiv_\tmtwo$ is simply a many rule with 0 premises. Therefore, $\typctx_\tmtwo$ is empty and $\msteps_\tmtwo = \esteps_\tmtwo = 0$. Then $\typctx$ is empty and $\msteps = \esteps = 0$.\qed
\end{itemize}
\end{proof}

\gettoappendix {thm:name-correctness}
% !TEX root = ../../main.tex
\begin{proof} 
By induction on $\msteps+\esteps$ and case analysis on whether $\tm$ reduces or not. If $\tm$ is a $\tocbn$ normal form  then we only have to prove the \emph{moreover} part, that states if $\tderiv$ is tight then $\msteps = \esteps = 0$, which follows from \refprop{name-normal-forms-forall}.

Otherwise, two cases:
\begin{enumerate}
\item \emph{Multiplicative steps}: $\tm \tom \tmthree$ and by quantitative subject reduction (\refprop{name-subject-reduction}) there is a derivation $\tderivtwo \exder[\cbn] \Deri[(\msteps-1, \esteps)] {}{\tmthree}{\type}$. By \ih, there exist $\tmtwo$ and ${\deriv'}$ such that $\normalpr \tmtwo$ and ${\deriv'}:\tmthree
  \tocbnn \tmtwo$, $\sizem{\deriv'} \leq \msteps-1$ and $\sizee{\deriv'} \leq \esteps$. Just note that $\tm \tom \tmthree$ and so, if $\deriv:\tm \tocbnn\tmtwo$ is $\deriv'$ preceeded by such a step, we have $\sizem\deriv \leq \msteps$ and $\sizee\deriv \leq \esteps$. 
  
  If $\tderiv$ is tight then $\tderivtwo$ is tight. Then $\sizem{\deriv'} = \msteps-1$ and $\sizee{\deriv'}  = \esteps$ by \ih, that give $\sizem\deriv = \msteps$ and $\sizee\deriv = \esteps$.
  
  \item \emph{Exponential steps}:  $\tm \toe \tmthree$ and by quantitative subject reduction (\refprop{name-subject-reduction}) there is a derivation $\tderivtwo \exder[\cbn] \Deri[(\msteps, \esteps-1)] {}{\tmthree}{\type}$. By \ih, there exist $\tmtwo$ and ${\deriv'}$ such that $\normalpr \tmtwo$ and ${\deriv'}:\tmthree
  \tocbnn \tmtwo$, $\sizem{\deriv'} \leq \msteps$ and $\sizee{\deriv'} \leq \esteps-1$. Just note that $\tm \toe \tmthree$ and so, if $\deriv:\tm \tocbnn\tmtwo$ is $\deriv'$ preceeded by such a step, we have $\sizem\deriv \leq \msteps$ and $\sizee\deriv \leq \esteps$. 
  
  If $\tderiv$ is tight then $\tderivtwo$ is tight. Then $\sizem{\deriv'} = \msteps$ and $\sizee{\deriv'}  = \esteps-1$ by \ih, that give $\sizem\deriv = \msteps$ and $\sizee\deriv = \esteps$.\qed
  \end{enumerate}
\end{proof}

\subsection{\cbn Completeness}
\gettoappendix {prop:name-normal-forms-exist}
% !TEX root = ../../main.tex
\begin{proof} 
By induction on $\normalpr\tm$. Cases:
\begin{itemize}
  \item \emph{Abstraction}: if $\normalpr\tm$ because $\tm = \la\var\tmtwo$ then 
  $$\infer[\normal]{
        \Deri[(0,0)] {} { \la\var\tmtwo } \normal
      }{}$$
  \item \emph{Substitution}: if $\normalpr\tm$ because $\tm = \tmtwo\esub\var\tmthree$ and $\normalpr\tmtwo$ then by \ih there exists a tight derivation $\tderivtwo\exder[\cbn] \Deri[(0, 0)]{}{\tmtwo}\normal$. Then $\tderiv$ is given by:
  $$
  \infer[\esrule]{
        \Deri[(0, 0)] {}
             {\tmtwo \esub\var\tmthree} \normal
      }{
        \tderivtwo\exder\Deri[(0, 0)] {} \tmtwo {\normal}
        \quad
\infer[\many]{
        \Deri[(0, 0)] {} \tmthree {\emptytype}        
        }{}
      }
      $$\qed
\end{itemize}

\end{proof}

In order to prove subject expansion, we have to first show that typability can also be pulled back along substitutions, via a linear removal lemma. We also need the following merging property of multi sets, whose proof is omitted because straightforward.

% \gettoappendix {l:name-linear-anti-substitution}
% \input{\proofspath/name/name-linear-anti-substitution}

\begin{lemma}[Merging of multi-sets with respect to derivations]
\label{l:name-merging-multisets}
Let $\tm$ be a term. For any two derivations 
\begin{itemize}
  \item $\tderiv_{\mtypetwo} \exder[\cbn] \tyjp{(\msteps_{\mtypetwo},\esteps_{\mtypetwo})}{\tm}{\typctx_{\mtypetwo}}{\mtypetwo}$, and
  \item $\tderiv_{\mtypethree} \exder[\cbn] \tyjp{(\msteps_{\mtypethree},\esteps_{\mtypethree})}{\tm}{\typctx_{\mtypethree}}{\mtypethree}$ 
\end{itemize}
 there is a derivation $\tderiv_{\mtypetwo \mplus \mtypethree} \exder[\cbn] \tyjp{(\msteps_{\mtypetwo} + \msteps_{\mtypethree},\esteps_{\mtypetwo} + \esteps_{\mtypethree})}{\tm}{\typctx_{\mtypetwo} \mplus \typctx_{\mtypethree}}{\mtypetwo \mplus \mtypethree}$.
\end{lemma}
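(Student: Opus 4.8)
The plan is to exploit a simple \emph{inversion} property of the \cbn system of \reffig{type-system-cbn}: among its rules, the only one whose conclusion assigns a \emph{multi} type on the right-hand side is $\many$, since $\ax$, $\normal$, $\fun$, $\app$ and $\esrule$ all conclude with a linear type $\type$. Consequently, any derivation whose final judgement types a term with a multi type must end with an application of $\many$, and this observation applies verbatim to both input derivations. In this sense the statement is just the converse packaging of the splitting lemma (\reflemma{name-splitting-multisets}): splitting decomposes one $\many$ into two, merging recombines two $\many$ applications into one.

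Concretely, I would first invert $\tderiv_{\mtypetwo}$. As its conclusion types $\tm$ with $\mtypetwo$, it ends with a $\many$ rule, say over a finite index set $J$, with premise family $(\tyjp{(\msteps_i,\esteps_i)}{\tm}{\typctx_i}{\type_i})_{i \in J}$ such that $\mtypetwo = \MSigma{\type_i}{i \in J}$, $\typctx_{\mtypetwo} = \bigmplus_{i \in J}\typctx_i$, $\msteps_{\mtypetwo} = \sum_{i \in J}\msteps_i$ and $\esteps_{\mtypetwo} = \sum_{i \in J}\esteps_i$. Symmetrically, $\tderiv_{\mtypethree}$ ends with a $\many$ rule over a finite index set $K$, with premise family $(\tyjp{(\mstepstwo_k,\estepstwo_k)}{\tm}{\typctxtwo_k}{\typetwo_k})_{k \in K}$, with $\mtypethree = \MSigma{\typetwo_k}{k \in K}$ and the analogous context and index identities. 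I then build $\tderiv_{\mtypetwo \mplus \mtypethree}$ by a single application of $\many$ whose premise family is the disjoint union of the two families, indexed by $J \uplus K$ (taking the index sets to be disjoint, which is harmless). Its right-hand type is $\MSigma{\type_i}{i \in J} \mplus \MSigma{\typetwo_k}{k \in K} = \mtypetwo \mplus \mtypethree$; its type context is $\bigmplus_{i \in J}\typctx_i \mplus \bigmplus_{k \in K}\typctxtwo_k = \typctx_{\mtypetwo} \mplus \typctx_{\mtypethree}$ by associativity and commutativity of $\mplus$ on contexts; and its indices are $\sum_{i \in J}\msteps_i + \sum_{k \in K}\mstepstwo_k = \msteps_{\mtypetwo} + \msteps_{\mtypethree}$ and, likewise, $\esteps_{\mtypetwo} + \esteps_{\mtypethree}$, exactly as required.

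There is essentially no obstacle: the entire content is the single inversion observation plus the bookkeeping of the (associative, commutative) multi-set unions on the contexts and on the right-hand type. The only point worth an explicit remark is the degenerate case where $J$ or $K$ is empty, i.e.\ $\mtypetwo = \zero$ or $\mtypethree = \zero$: then the corresponding $\many$ application has no premises, contributes the empty context and indices $(0,0)$, and the disjoint-union construction still yields the desired single $\many$ application. This is precisely why the lemma is flagged as straightforward and its proof omitted.
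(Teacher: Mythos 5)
Your proof is correct, and it is precisely the ``straightforward'' argument the paper alludes to when it omits the proof of \reflemma{name-merging-multisets}: since $\many$ is the only \cbn rule whose conclusion carries a multi type, both derivations must end with $\many$, and concatenating their premise families under a single $\many$ application (with the empty-index-set cases degenerating harmlessly) gives exactly the required contexts, type, and indices.
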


\gettoappendix {l:name-linear-removal}
% !TEX root = ../../main.tex
\begin{proof}
By induction on $\cbnctx$. Cases:
\begin{itemize}
  \item \emph{Empty context}, \ie $\cbnctx = \ctxhole$. Then $\tderiv \exder[\cbn] \Deri[(\msteps, \esteps)]{\typctx;\var\hastype\mtype}{\tmtwo}{\type}$. By \reflemma{name-typctx-varocc}, $\var \notin \fv\tmtwo$ implies $\mtype = \emptytype$. Then we simply take 
  \begin{itemize}
  \item $\tderiv_\tmtwo \defeq \tderiv$, that implies $\typetwo \defeq \type$, $\typctx_\tmtwo \defeq \typctx$, $\msteps_\tmtwo \defeq \msteps$, and $\esteps_\tmtwo \defeq \esteps$, and 
  \item $\tderiv_\var$ defined as the axiom 
  $$\infer[\ax]{\Deri[(0, 1)] {\var \col \single \type} \var \type}{} $$
  and for which $\typctx'$ is empty, $\msteps' = 0$, and $\esteps' = 1$. 
  \end{itemize}
  Then the statement holds:
  \begin{itemize}
  \item \emph{Type contexts}: $\typctx = \emptyset \ctxplus \typctx = \emptyset \ctxplus  \typctx_\tmtwo = \typctx' \ctxplus  \typctx_\tmtwo$ and 
  \item \emph{Indices}: $(\msteps, \esteps) = (\msteps_{\tmtwo}, \esteps_{\tmtwo}) = (0 + \msteps_{\tmtwo}, 1 + \esteps_{\tmtwo} - 1) = (\msteps' + \msteps_{\tmtwo}, \esteps' + \esteps_{\tmtwo} - 1) $.
  \end{itemize}
  
  \item \emph{Left of an application}, \ie $\cbnctx = \cbnctxtwo \tmthree$. Then $\tderiv$ has the form
    $$\infer[\app]{
        \Deri[(\msteps_{\cbnctxtwop\tmtwo} + \msteps_\tmthree + 1, \esteps_{\cbnctxtwop\tmtwo} + \esteps_\tmthree)] {\typctx_{\cbnctxtwop\tmtwo} \ctxplus  \typctx_\tmthree;\var\hastype\mtype}
             {\cbnctxtwop\tmtwo \tmthree} \type
      }{
        \tderiv_{\cbnctxtwop\tmtwo} \exder\Deri[(\msteps_{\cbnctxtwop\tmtwo}, \esteps_{\cbnctxtwop\tmtwo})] {\typctx_{\cbnctxtwop\tmtwo};\var\hastype\mtype} {\cbnctxtwop\tmtwo} {\ty{\mtypetwo}{\type}}
        \quad
        \Deri[(\msteps_\tmthree, \esteps_\tmthree)] {\typctx_\tmthree} \tmthree {\mtypetwo}
      }
      $$
  where $\var \notin \dom{\typctx_\tmthree}$ (by \reflemma{name-typctx-varocc}, because $\var \notin\fv\tmthree$ by hypothesis), $\typctx = \typctx_{\cbnctxtwop\tmtwo} \ctxplus \typctx_\tmthree$, $\msteps = \msteps_{\cbnctxtwop\tmtwo} + \msteps_\tmthree + 1$, and $\esteps = \esteps_{\cbnctxtwop\tmtwo} + \esteps_\tmthree$.
  
  Applying the \ih to $\tderiv_{\cbnctxtwop\tmtwo}$ provides a type $\typetwo$ and derivations:
  $$\tderiv_\tmtwo  \exder[\cbn] \Deri[(\msteps_\tmtwo, \esteps_\tmtwo)]{\typctx_\tmtwo}{\tmtwo}{\typetwo}$$
  and 
  $$\tderiv_{\cbnctxtwo\cwc{\var}} \exder[\cbn] 
 \Deri[(\mstepsthree, \estepsthree)]{  \typctx'';\var\hastype\mtype \mplus \mult{\typetwo} }{\cbnctxtwo\cwc{\var}}{\ty{\mtypetwo}{\type}}$$
  such that $\typctx_{\cbnctxtwop\tmtwo} = \typctx'' \ctxplus \typctx_\tmtwo$ and $(\msteps_{\cbnctxtwop\tmtwo}, \esteps_{\cbnctxtwop\tmtwo}) = (\mstepsthree +  \msteps_\tmtwo, \estepsthree + \esteps_\tmtwo- 1)$.
 
 Then $\tderiv_{\cbnctx\cwc{\var}}$ is given by: 
 $$\infer[\app]{
        \Deri[(\mstepsthree + \msteps_\tmthree + 1, \estepsthree + \esteps_\tmthree)] {(\typctx''\ctxplus \typctx_\tmthree);\var\hastype\mtype \mplus \mult{\typetwo}}
             {\cbnctxtwop\var \tmthree} \type
      }{
        \tderiv_{\cbnctxtwo\cwc{\var}} \exder[\cbn] 
 \Deri[(\mstepsthree, \estepsthree)]{  \typctx'' ;\var\hastype \mtype \mplus\mult{\typetwo} }{\cbnctxtwo\cwc{\var}}{\ty{\mtypetwo}{\type}}
        \quad
        \Deri[(\msteps_\tmthree, \esteps_\tmthree)] {\typctx_\tmthree} \tmthree {\mtypetwo}
      }
      $$
  that, by taking $\typctx' \defeq \typctx'' \ctxplus \typctx_\tmthree$, $\mstepstwo = \mstepsthree + \msteps_\tmthree + 1$, and $\estepstwo = \estepsthree + \esteps_\tmthree$, verifies the statement because:
  \begin{itemize}
  \item \emph{Type contexts}: $\typctx = \typctx_{\cbnctxtwop\tmtwo} \ctxplus \typctx_\tmthree = \typctx'' \ctxplus \typctx_\tmtwo \ctxplus \typctx_\tmthree = \typctx' \ctxplus \typctx_\tmtwo$, and 
  \item \emph{Indices}: $(\msteps, \esteps) = (\msteps_{\cbnctxtwop\tmtwo} + \msteps_\tmthree + 1, \esteps_{\cbnctxtwop\tmtwo} + \esteps_\tmthree) = (\mstepsthree +  \msteps_\tmtwo + \msteps_\tmthree + 1, \estepsthree + \esteps_\tmtwo- 1+ \esteps_\tmthree) = (\mstepstwo +  \msteps_\tmtwo, \estepstwo + \esteps_\tmtwo- 1)$.
  \end{itemize}
  
  \item \emph{Left of a substitution}, \ie $\cbnctx = \cbnctxtwo \esub\vartwo\tmthree$. Then $\tderiv$ has the form
    $$\infer[\esrule]{
        \Deri[(\msteps_{\cbnctxtwop\tmtwo} + \msteps_\tmthree, \esteps_{\cbnctxtwop\tmtwo} + \esteps_\tmthree)] {\typctx_{\cbnctxtwop\tmtwo} \ctxplus  \typctx_\tmthree; \var\hastype\mtype}
             {\cbnctxtwop\tmtwo \esub\vartwo\tmthree} \type
      }{
        \tderiv_{\cbnctxtwop\tmtwo} \exder\Deri[(\msteps_{\cbnctxtwop\tmtwo}, \esteps_{\cbnctxtwop\tmtwo})] {\typctx_{\cbnctxtwop\tmtwo}; \var\hastype\mtype;\vartwo\hastype\mtypetwo} {\cbnctxtwop\tmtwo} {\type}
        \quad
        \Deri[(\msteps_\tmthree, \esteps_\tmthree)] {\typctx_\tmthree} \tmthree {\mtypetwo}
      }
      $$
  where $\var \notin \dom{\typctx_\tmthree}$ (by \reflemma{name-typctx-varocc}, because $\var \notin\fv\tmthree$ by hypothesis), $\typctx = \typctx_{\cbnctxtwop\tmtwo} \ctxplus  \typctx_\tmthree$, $\msteps = \msteps_{\cbnctxtwop\tmtwo} + \msteps_\tmthree$, and $\esteps = \esteps_{\cbnctxtwop\tmtwo} + \esteps_\tmthree$.
  
  Applying the \ih to $\tderiv_{\cbnctxtwop\tmtwo}$ provides a type $\typetwo$ and derivations:
  $$\tderiv_\tmtwo  \exder[\cbn] \Deri[(\msteps_\tmtwo, \esteps_\tmtwo)]{\typctx_\tmtwo}{\tmtwo}{\typetwo}$$
  and 
  $$\tderiv_{\cbnctxtwo\cwc{\var}} \exder[\cbn] 
 \Deri[(\mstepsthree, \estepsthree)]{ \typctx''; \var\hastype\mtype \mplus \mult{\typetwo}; \vartwo\hastype\mtypetwo  }{\cbnctxtwo\cwc{\var}}{\type}$$
  such that $\typctx_{\cbnctxtwop\tmtwo} = \typctx'' \ctxplus \typctx_\tmtwo$ and $(\msteps_{\cbnctxtwop\tmtwo}, \esteps_{\cbnctxtwop\tmtwo}) = (\mstepsthree +  \msteps_\tmtwo, \estepsthree + \esteps_\tmtwo- 1)$.
 
 Then $\tderiv_{\cbnctx\cwc{\var}}$ is given by: 
 $$\infer[\esrule]{
        \Deri[(\mstepsthree + \msteps_\tmthree , \estepsthree + \esteps_\tmthree)] {(\typctx''\ctxplus  \typctx_\tmthree); \var\hastype\mtype \mplus \mult{\typetwo}}
             {\cbnctxtwop\var \esub\vartwo\tmthree} \type
      }{
        \tderiv_{\cbnctxtwo\cwc{\var}} \exder[\cbn] 
 \Deri[(\mstepsthree, \estepsthree)]{  \typctx''; \var\hastype\mtype \mplus \mult{\typetwo}; \vartwo\hastype\mtypetwo  }{\cbnctxtwo\cwc{\var}}{\type}
        \quad
        \Deri[(\msteps_\tmthree, \esteps_\tmthree)] {\typctx_\tmthree} \tmthree {\mtypetwo}
      }
      $$
  that, by taking $\typctx' \defeq \typctx'' \ctxplus  \typctx_\tmthree$, $\mstepstwo = \mstepsthree + \msteps_\tmthree$, and $\estepstwo = \estepsthree + \esteps_\tmthree$, verifies the statement because:
  \begin{itemize}
  \item \emph{Type contexts}: $\typctx = \typctx_{\cbnctxtwop\tmtwo} \ctxplus  \typctx_\tmthree = \typctx'' \ctxplus  \typctx_\tmtwo \ctxplus  \typctx_\tmthree = \typctx' \ctxplus \typctx_\tmtwo$, and 
  \item \emph{Indices}: $(\msteps, \esteps) = (\msteps_{\cbnctxtwop\tmtwo} + \msteps_\tmthree, \esteps_{\cbnctxtwop\tmtwo} + \esteps_\tmthree) = (\mstepsthree +  \msteps_\tmtwo + \msteps_\tmthree, \estepsthree + \esteps_\tmtwo- 1+ \esteps_\tmthree) = (\mstepstwo +  \msteps_\tmtwo, \estepstwo + \esteps_\tmtwo- 1)$. \qed
  \end{itemize}
\end{itemize}

\end{proof}

\gettoappendix {prop:name-subject-expansion}
% !TEX root = ../../main.tex
\begin{proof} \hfill
\begin{enumerate}
\item By induction on $\tm \tom \tmtwo$. Cases:
\begin{itemize}
\item \emph{Step at top level}, \ie  $\tm =  \sctxp{\la\var \tmthree} \tmfour \tom \sctxp{ \tmthree \esub\var\tmfour } = \tmtwo$. This case is itself by induction on $\sctx$. Two sub-cases:
\begin{itemize}
\item \emph{Empty substitution context}, \ie $\sctx = \ctxhole$. The derivation $\tderiv$ has the form: 
$$ 
\infer[\ES]{
	\Deri[(\msteps_\tmthree +  \msteps_\tmfour, \esteps_\tmthree + \esteps_\tmfour)]{\typctx_\tmthree  \mplus  \typctx_\tmfour}{\tmthree \esub\var\tmfour}{\type}
}
{
	\Deri[(\msteps_\tmthree, \esteps_\tmthree)]{\var:\mtype;\typctx_\tmthree}{\tmthree}{\type}
    \quad 
	\Deri[(\msteps_\tmfour, \esteps_\tmfour)]{\typctx_\tmfour}{\tmfour}{\mtype}    
}$$
With $\typctx = \typctx_\tmthree  \mplus  \typctx_\tmfour$, $\msteps = \msteps_\tmthree + \msteps_\tmfour$, and $\esteps = \esteps_\tmthree + \esteps_\tmfour$. We construct the 
following derivation $\tderivtwo$, verifying the statement:
$$
\infer[\app]{
	\Deri[(\msteps_\tmthree +  \msteps_\tmfour +1, \esteps_\tmthree + \esteps_\tmfour)]{\typctx_\tmthree  \mplus  \typctx_\tmfour}{(\la \var \tmthree)\tmfour}{\type}
}{
	\infer[\fun]{
		\Deri[(\msteps_\tmthree, \esteps_\tmthree)]{\typctx_\tmthree}{\la \var \tmthree}{ \mtype \rightarrow \type}
		}{
		\Deri[(\msteps_\tmthree, \esteps_\tmthree)]{\var:\mtype;\typctx_\tmthree}{\tmthree}{\type}
		} 
	\quad 
    \Deri[(\msteps_\tmfour, \esteps_\tmfour)]{\typctx_\tmfour}{\tmfour}{\mtype}  
}$$
 
\item \emph{Non-empty substitution context}, \ie $\sctx = \sctxtwo \esub\vartwo\tmfive$. Then $\tderiv$ has the following structure:

$$
\infer[\ES]{
	\Deri[(\msteps_\tmthree + \msteps_\tmfive +  \msteps_\tmfour, \esteps_\tmthree + \esteps_\tmfive + \esteps_\tmfour)]{\typctx_\tmthree  \mplus  \typctx_\tmfive \mplus  \typctx_\tmfour}{\sctxtwop{ \tmthree \esub\var\tmfour}\esub\vartwo\tmfive}{\type}
	}{
	\Deri[(\msteps_\tmthree +  \msteps_\tmfour, \esteps_\tmthree +  \esteps_\tmfour)]{\vartwo:\mtypetwo;\typctx_\tmthree  \mplus  \typctx_\tmfour}{\sctxtwop{ \tmthree \esub\var\tmfour}}{\type}
	\quad 
    \Deri[(\msteps_\tmfive, \esteps_\tmfive)]{\typctx_\tmfive}{\tmfive}{\mtypetwo}  
}$$
With $\typctx = \typctx_\tmthree  \mplus \typctx_\tmfive \mplus  \typctx_\tmfour$, $\msteps = \msteps_\tmthree + \msteps_\tmfive + \msteps_\tmfour + 1$, and $\esteps = \esteps_\tmthree + \esteps_\tmfive + \esteps_\tmfour$. 

By \ih applied to the left premise, we obtain a derivation

$$\Deri[(\msteps_\tmthree +  \msteps_\tmfour +1, \esteps_\tmthree +  \esteps_\tmfour)]{\vartwo:\mtypetwo;\typctx_\tmthree  \mplus  \typctx_\tmfour}{\sctxtwop{\la \var \tmthree}\tmfour}{\type}$$
that has the following structure:
$$
\infer[\app]{
	\Deri[(\msteps_\tmthree +  \msteps_\tmfour +1, \esteps_\tmthree +  \esteps_\tmfour)]{\vartwo:\mtypetwo;\typctx_\tmthree  \mplus  \typctx_\tmfour}{\sctxtwop{\la \var \tmthree}\tmfour}{\type}
}{
	\Deri[(\msteps_\tmthree, \esteps_\tmthree)]{\vartwo:\mtypetwo; \typctx_\tmthree}{\sctxtwop{\la \var \tmthree}}{ \mtype \rightarrow \type} 
	\quad 
    \Deri[(\msteps_\tmfour, \esteps_\tmfour)]{\typctx_\tmfour}{\tmfour}{\mtype}  
}$$
We then construct $\tderivtwo$ has follows:
$$
\infer[\app]{
	\Deri[(\msteps_\tmthree + \msteps_\tmfive +  \msteps_\tmfour +1, \esteps_\tmthree + \esteps_\tmfive + \esteps_\tmfour)]{\typctx_\tmthree  \mplus  \typctx_\tmfive \mplus  \typctx_\tmfour}{\sctxtwop{\la \var \tmthree}\esub\vartwo\tmfive\tmfour}{\type}
}{
	\infer[\ES]{
		\Deri[(\msteps_\tmthree + \msteps_\tmfive, \esteps_\tmthree + \esteps_\tmfive)]{\typctx_\tmthree\mplus  \typctx_\tmfive }{\sctxtwop{\la \var \tmthree}\esub\vartwo\tmfive}{ \mtype \rightarrow \type}
		}{
		\Deri[(\msteps_\tmthree, \esteps_\tmthree)]{\vartwo:\mtypetwo; \typctx_\tmthree}{\sctxtwop{\la \var \tmthree}}{ \mtype \rightarrow \type}
		\quad
		    \Deri[(\msteps_\tmfive, \esteps_\tmfive)]{\typctx_\tmfive}{\tmfive}{\mtypetwo}  
		} 
	\quad 
    \Deri[(\msteps_\tmfour, \esteps_\tmfour)]{\typctx_\tmfour}{\tmfour}{\mtype}  
}$$

\end{itemize}

\item \emph{Contextual closure.} We have $\tm = \cbnctxp\tmthree \tom \cbnctxp\tmfour = \tmtwo$. Cases of $\cbnctx$:
	\begin{itemize}
		\item \emph{Left on an application}, \ie $\cbnctx = \cbnctxtwo \tmfive$. The last typing rule in $\tderiv$ is necessarily $\app$ and $\tderiv$ is of the form
		$$
		\infer
			[\appsteps]
			{\tyjp{(\msteps_\tmfour + \msteps_\tmfive +1, \esteps_\tmfour + \esteps_\tmfive)}{\cbnctxtwop{\tmfour} \tmfive}{\typctx_\tmfour \mplus \typctx_\tmfive}{\type}}
			{\tyjp{(\msteps_\tmfour,\esteps_\tmfour)}{\cbnctxtwop{\tmfour}}{\typctx_\tmfour}{\ty{\mtype}{\type}}
			\quad
			\tyjp{(\msteps_\tmfive,\esteps_\tmfive)}{\tmfive}{\typctx_\tmfive}{\mtype}}
		$$
		With $\typctx = \typctx_\tmfour  \mplus  \typctx_\tmfive$, $\msteps = \msteps_\tmfour + \msteps_\tmfive + 1$, and $\esteps = \esteps_\tmfour + \esteps_\tmfive$. 
		
By \ih, there exists a derivation $\tyjp{(\msteps_\tmfour+1,\esteps_\tmfour)}{\cbnctxtwop{\tmthree}}{\typctx_\tmthree}{\ty{\mtype}{\type}}$, thus allowing us to construct $\tderivtwo$ as follows:
		
		$$
		\infer
			[\app]
			{\tyjp{(\msteps_\tmfour + \msteps_\tmfive + 2, \esteps_\tmfour + \esteps_\tmfive)}{\cbnctxtwo \hole{\tmthree} \tmfive}{\typctx_\tmfour \mplus \typctx_\tmfive}{\type}}
			{\tyjp{(\msteps_\tmfour+1,\esteps_\tmfour)}{\cbnctxtwop{\tmthree}}{\typctx_\tmfour}{\ty{\mtype}{\type}}
			\quad
			\tyjp{(\msteps_\tmfive,\esteps_\tmfive)}{\tmfive}{\typctx_\tmfive}{\mtype}}
		$$
Note that $(\msteps_\tmfour + \msteps_\tmfive+2, \esteps_\tmfour + \esteps_\tmfive) = (\msteps + 1,\esteps) $.

	\item Let $\cbnctx = \cbnctxtwo \esub{\var}{\tmfive}$. The last typing rule in $\tderiv$ is necessarily $\ES$  and $\tderiv$ is of the form
		$$
		\infer
			[\ES]
			{\tyjp{(\msteps_\tmfour + \msteps_\tmfive , \esteps_\tmfour + \esteps_\tmfive)}{\cbnctxtwo \hole{\tmthree} \esub\var\tmfive}{\typctx_\tmfour \mplus \typctx_\tmfive}{\type}}
			{\tyjp{(\msteps_\tmfour,\esteps_\tmfour)}{\cbnctxtwop{\tmfour}}{\typctx_\tmfour, \var:\mtype}{\type}
			\quad
			\tyjp{(\msteps_\tmfive,\esteps_\tmfive)}{\tmfive}{\typctx_\tmfive}{\mtype}}
		$$
		With $\typctx = \typctx_\tmfour  \mplus  \typctx_\tmfive$, $\msteps = \msteps_\tmfour + \msteps_\tmfive$, and $\esteps = \esteps_\tmfour + \esteps_\tmfive$. 
		
By \ih, there exists a derivation $\tyjp{(\msteps_\tmfour+1,\esteps_\tmfour)}{\cbnctxtwop{\tmthree}}{\typctx_\tmfour, \var:\mtype}{\type}$, thus allowing us to construct $\tderivtwo$ as follows:
$$
		\infer
			[\ES]
			{\tyjp{(\msteps_\tmfour + \msteps_\tmfive+1, \esteps_\tmfour + \esteps_\tmfive)}{\cbnctxtwo \hole{\tmthree} \esub\var\tmfive}{\typctx_\tmfour \mplus \typctx_\tmfive}{\type}}
			{\tyjp{(\msteps_\tmfour+1,\esteps_\tmfour)}{\cbnctxtwop{\tmthree}}{\typctx_\tmfour, \var:\mtype}{\type}
			\quad
			\tyjp{(\msteps_\tmfive,\esteps_\tmfive)}{\tmfive}{\typctx_\tmfive}{\mtype}}
		$$

		Note that $(\msteps_\tmfour + \msteps_\tmfive+1, \esteps_\tmfour + \esteps_\tmfive) = (\msteps + 1,\esteps) $.

\end{itemize}
\end{itemize}

\item By induction on $\tm \toe \tmtwo$. 
\begin{itemize}
\item \emph{Step at top level}, \ie  $\tm =  \cbnctx \cwc{\var} \esub\var\tmthree \toe \cbnctx \cwc{\tmthree} \esub\var\tmthree = \tmtwo$. The last typing rule of $\tderiv$ is necessarily $\ES$  and $\tderiv$ is of the form
		$$
		\infer
			[\ES]
			{\tyjp{(\msteps_\cbnctx + \msteps_\tmthree, \esteps_\cbnctx + \esteps_\tmthree)}{\cbnctx \cwc{\tmthree} \esub\var\tmthree}{\typctx_\cbnctx \mplus \typctxtwo}{\type}}
			{\tderiv_{\cbnctx\cwc{\var}} \exder \tyjp{(\msteps_\cbnctx,\esteps_\cbnctx)}{\cbnctx\cwc{\tmthree}}{\typctx_\cbnctx, \var:\mtype}{\type}
			\quad
			\tyjp{(\mstepstwo,\estepstwo)}{\tmthree}{\typctxtwo}{\mtype}}
		$$
		With $\typctx = \typctx_\cbnctx  \mplus  \typctxtwo$, $\msteps = \msteps_\cbnctx + \mstepstwo$, and $\esteps = \esteps_\cbnctx + \estepstwo$. 
		
	The linear removal lemma (\reflemma{name-linear-removal}) applied to $\tderiv_{\cbnctx\cwc{\tmthree}}$ gives a type $\typetwo$ and two derivations 
\begin{enumerate}
\item $\tderiv_\tmthree  \exder[\cbn] \Deri[(\msteps_\tmthree, \esteps_\tmthree)]{\typctx_\tmthree}{\tmthree}{\typetwo}$, and
\item $\tderiv_{\cbnctx\cwc{\var}} \exder[\cbn] \Deri[(\mstepsthree, \estepsthree)]{  \typctx';\var{:}\mtype \mplus \mult{\typetwo} }{\cbnctx\cwc{\var}}{\type}$
\end{enumerate}
such that $\typctx_\cbnctx = \typctx_\tmthree \mplus \typctx'$, $\msteps_\cbnctx = \msteps_\tmthree + \mstepsthree$,  and $\esteps_\cbnctx = \esteps_\tmthree + \estepsthree-1$.

Now, by applying the multi-sets merging lemma (\reflemma{name-merging-multisets}) to $\tderiv_\tmthree $ and the right premise of $\tderiv$:
$$\tyjp{(\mstepstwo,\estepstwo)}{\tmthree}{\typctxtwo}{\mtype} $$
we obtain a derivation
$$\tderivtwo_\tmthree \exder[\cbn] \tyjp{(\mstepstwo + \msteps_\tmthree,\estepstwo + \esteps_\tmthree)}{\tmthree}{\typctxtwo \mplus \typctx_\tmthree}{\mtype\mplus \mult{\typetwo} }$$
Then $\tderivtwo$ is built as follows:
$$
		\infer
			[\ES]
			{\tyjp{(\mstepsthree + \mstepstwo + \msteps_\tmthree, \estepsthree + \estepstwo + \esteps_\tmthree)}{\cbnctx \cwc{\var} \esub\var\tmthree}{\typctx' \mplus \typctxtwo \mplus \typctx_\tmthree}{\type}
			}
			{
			\Deri[(\mstepsthree, \estepsthree)]{  \typctx';\var{:}\mtype \mplus \mult{\typetwo} }{\cbnctx\cwc{\var}}{\type}
			\quad
			\tyjp{(\mstepstwo + \msteps_\tmthree,\estepstwo + \esteps_\tmthree)}{\tmthree}{\typctxtwo \mplus \typctx_\tmthree}{\mtype\mplus \mult{\typetwo} }
			}
		$$
	Now, note that the last judgement is in fact
	$$ \tyjp{(\msteps_\cbnctx + \mstepstwo, \esteps_\cbnctx + \estepstwo +1)}{\cbnctx \cwc{\var} \esub\var\tmthree}{\typctx_\cbnctx \mplus \typctxtwo}{\type} $$
	which in turn is 
	$$ \tyjp{(\msteps, \esteps +1)}{\cbnctx \cwc{\var} \esub\var\tmthree}{\typctx}{\type} $$
	as required.
	
	\item \emph{Contextual closure.} As in the $\tom$ case. Note that indeed those cases do not depend on the details of the step itself, but only on the context enclosing it.\qed
\end{itemize}
\end{enumerate}
\end{proof}

\gettoappendix {thm:name-completeness}
% !TEX root = ../../main.tex
\begin{proof} 
By induction on the length $k \defeq \size{\deriv}$ of the evaluation $\deriv \colon \tm \tocbnn \tmtwo$. If $k = 0$ then $\tm =  \tmtwo$ and $\normalpr\tm$. 
\refprop{name-normal-forms-exist}
gives the existence of  a tight 
derivation $\tderiv \exder[\cbnup]  \Deri[(0,0)] {} \tm \normal$, that satisfies the statement because $\sizem\deriv = \sizee\deriv = 0$. 

If $k > 0$ then
$\tm \tocbn \tmthree \rightarrow_\cbnsym^{k-1} \tmtwo$. Let $\deriv'$ be the evaluation $\tmthree \rightarrow_\cbnsym^{k-1} \tmtwo$. By \ih\  there exists a tight derivation $\tderivtwo\exder[\cbnup]  \Deri[(\sizem{\deriv'},\sizee{\deriv'})]{}
\tmthree \normal$. By quantitative subject expansion \refprop{name-subject-expansion}
there exists a derivation $\tderiv$ of $\tm$ with the same
types in the ending judgement of $\tderivtwo$---then $\tderiv$ is
tight---and with indices $(\sizem\deriv,\sizee\deriv)$. 
\qed
\end{proof}

%%%%%%%%%%%%%%%%%%%%
%%%%%%%%%%%%%%%%%%%%
\section{Types by Value (\refsect{cbv})}

\begin{lemma}[Type contexts and variable occurrences for \cbv]
	\label{l:value-typctx-varocc}
	Let  $\tderiv \exder[\cbvsym] \Deri[(\msteps, \esteps)] \typctx \tm \mtype$ be a derivation. 
	If $\var \not \in \fv\tm$ then $\var \notin \dom\typctx$.
\end{lemma}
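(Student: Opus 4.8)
The plan is to prove the statement by a straightforward induction on the structure of the derivation $\tderiv$, exactly mirroring the proof of its \cbn counterpart (\reflemma{name-typctx-varocc}). As in the \cbn case, the statement is established for arbitrary, possibly open, terms, so that the induction hypothesis is available on every subderivation. One technical point to flag at the outset: the induction must range over \emph{all} derivable judgements, whether their right-hand side is a linear type (as produced by $\fun$) or a multi type, since $\fun$-conclusions feed into $\many$; the property to be maintained is uniformly "$\vartwo \notin \fv\tm$ implies $\vartwo \notin \dom\typctx$", irrespective of the shape of the conclusion type.

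First I would dispatch the axiom and the non-binding cases. For $\ax$, the derivation types $\var$ with context $\var \col \mtype$; its only free variable is $\var$, so if $\vartwo \notin \fv{\var}$ then $\vartwo \neq \var$, and since $\dom{(\var \col \mtype)} \subseteq \set{\var}$ we get $\vartwo \notin \dom{(\var \col \mtype)}$ --- note this holds even when $\mtype = \emptytype$, because a variable mapped to $\emptytype$ lies outside the domain by definition of $\dom{\cdot}$. For $\app$, the conclusion context is $\typctx \mplus \typctxtwo$, whose domain is $\dom\typctx \cup \dom\typctxtwo$; from $\fv{\tm\tmtwo} = \fv\tm \cup \fv\tmtwo$ the hypothesis $\vartwo \notin \fv{\tm\tmtwo}$ gives $\vartwo \notin \fv\tm$ and $\vartwo \notin \fv\tmtwo$, so the induction hypothesis on both premises yields $\vartwo \notin \dom\typctx$ and $\vartwo \notin \dom\typctxtwo$, hence $\vartwo \notin \dom{(\typctx \mplus \typctxtwo)}$. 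The rule $\many$ is identical: the conclusion context $\bigmplus_{i \in J}\typctxtwo_i$ has domain the union of the $\dom{\typctxtwo_i}$, the free variables of $\la\var\tm$ are the same in every premise, and the induction hypothesis applies to each; the degenerate case $J = \emptyset$ is trivial since the union is then the empty context.

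The only cases requiring genuine care are the two binding rules $\fun$ and $\esrule$, where the bound variable disappears both from the free variables and from the conclusion context. For $\fun$, with conclusion context $\typctx$ and premise context $\typctx, \var \col \mtypetwo$, assume $\vartwo \notin \fv{\la\var\tm} = \fv\tm \setminus \set\var$. If $\vartwo = \var$, then $\vartwo \notin \dom\typctx$ follows from the freshness convention carried by the notation $\typctx, \var \col \mtypetwo$ (which presupposes $\var \notin \dom\typctx$, playing here the role that the explicit restriction $\typctx \sm \var$ plays in \cbn); this presupposition is in fact \emph{necessary}, as dropping it would make the lemma false for a vacuous abstraction. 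If $\vartwo \neq \var$, then $\vartwo \notin \fv\tm$, and since $\dom\typctx \subseteq \dom{(\typctx, \var \col \mtypetwo)}$ the induction hypothesis on the premise gives $\vartwo \notin \dom{(\typctx, \var \col \mtypetwo)}$, whence $\vartwo \notin \dom\typctx$. The rule $\esrule$ combines this argument on its left premise (context $\typctx, \var \col \mtypetwo$) with the plain non-binding argument on its right premise (context $\typctxtwo$, term $\tmtwo$), using $\fv{\tm \esub\var\tmtwo} = (\fv\tm \setminus \set\var) \cup \fv\tmtwo$ to propagate the hypothesis to both. I expect this case split on whether $\vartwo$ equals the bound variable to be the only real obstacle; everything else is routine bookkeeping on multi-set unions of contexts and on the definition of $\dom{\cdot}$.
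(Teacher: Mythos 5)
Your proof is correct and takes exactly the paper's route: the paper disposes of this lemma with a one-line ``by straightforward induction on the derivation $\tderiv$'', and your argument is precisely that induction spelled out case by case, including the right key observation that in the $\fun$ and $\esrule$ cases the notation $\typctx, \var \col \mtypetwo$ carries the presupposition $\var \notin \dom\typctx$, which is what makes the binding cases go through.
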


\begin{proof}
	By straightforward induction on the derivation $\tderiv$.
	\qed
\end{proof}

\subsection{\cbv Correctness}

\begin{lemma}[Typing values]
	\label{l:value-typing} % \reflemmap{value-typing}{empty}
	Let $\val$ be a value and $\tderiv \exder[\cbvsym] \tyjp{(\msteps, \esteps)}{\val}{\typctx}{\mtype}$ be a derivation for it. Then,
	\begin{enumerate}
		\item \label{p:value-typing-empty}
		\emph{Empty multi-set implies null size}: %if $\concl{\tder}{\typctx}{\val}{\emptymset}$ then all types in $\typctx$ are $\emptymset$ and $\size\tder = 0$.
		if $\mtype = \emptymset$, then $\dom{\typctx} = \emptyset$ with $\msteps = 0 = \esteps$.
		
		\item \label{p:value-typing-dec}
		\emph{Multi-set splitting}: if %$\concl{\tder}{\typctx}{\val}{\ptypetwo \uplus \ptypethree}$
		$\mtype = \mtypetwo \uplus \mtypethree$, then there are two type contexts $\typctxtwo$ and $\typctxthree$ and two derivations $\tderivtwo \exder[\cbvsym] \Deri[(\mstepstwo,\estepstwo)]{\typctxtwo}{\val}{\mtypetwo}$ and
		$\tderivthree \exder[\cbvsym]\Deri[(\mstepsthree,\estepsthree)]{\typctxthree}{\val}{\mtypethree}$ such that $\typctx = \typctxtwo \uplus \typctxthree$, $\msteps = \mstepstwo +  \mstepsthree$ and $\esteps = \estepstwo +  \estepsthree$
	\end{enumerate}
		% \emph{Multiset decomposition}: if $\ptype = \ptype_1 \uplus \dots \uplus \ptype_p$ then for all $1 \leq i \leq p$ there are a type context $\typctx_i$ and a derivation $\concl{\tder_i}{\typctx_i}{\val}{\ptype_i}$ such that $\Dom{\typctx_1} = \ldots = \Dom{\typctx_p} = \Dom{\typctx}$, $\typctx = \biguplus_{i=1}^p \typctx_i$, and $\size{\tder} = \sum_{i=1}^p\size{\tder_i}$.
\end{lemma}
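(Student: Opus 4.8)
The plan is to prove both items by a single inversion step on the last rule of $\tderiv$, with no induction required. Since $\val$ is a value, it is an abstraction $\la\var\tm$, and the conclusion types it with a \emph{multi} type $\mtype$. Among the \cbv rules, $\ax$ and $\esrule$ conclude with a variable and an explicit substitution respectively, $\app$ concludes with an application, and $\fun$ concludes with an abstraction but assigns a \emph{linear} type $\ty{\mtypetwo}{\mtype}$; hence the only rule that can assign a multi type to an abstraction is $\many$. Therefore $\tderiv$ necessarily ends with an instance of $\many$, for some index set $J$, with premises $(\Deri[(\msteps_i,\esteps_i)]{\typctx_i}{\val}{\type_i})_{i\in J}$ and with $\mtype = \MSigma{\type_i}{i\in J}$, $\typctx = \bigmplus_{i\in J}\typctx_i$, $\msteps = \sum_{i\in J}\msteps_i$ and $\esteps = \sum_{i\in J}\esteps_i$. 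This one inversion supplies all the data both items need.

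For item 1, I would observe that $\mtype = \emptymset$ forces $J = \emptyset$, since a nonempty $J$ would make $\MSigma{\type_i}{i\in J}$ nonempty. With no premises, the empty union $\bigmplus_{i\in\emptyset}\typctx_i$ is the empty type context, so $\dom{\typctx} = \emptyset$, and the empty sums give $\msteps = 0 = \esteps$, exactly as claimed.

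For item 2, I would use the elementary combinatorics of multi-sets: a splitting $\mtype = \mtypetwo \uplus \mtypethree$ induces a partition $J = J_2 \uplus J_3$ of the index set with $\mtypetwo = \MSigma{\type_i}{i\in J_2}$ and $\mtypethree = \MSigma{\type_i}{i\in J_3}$. Re-applying rule $\many$ to the subfamily of premises indexed by $J_2$ yields a derivation $\tderivtwo \exder[\cbvsym] \Deri[(\mstepstwo,\estepstwo)]{\typctxtwo}{\val}{\mtypetwo}$ with $\typctxtwo \defeq \bigmplus_{i\in J_2}\typctx_i$, $\mstepstwo \defeq \sum_{i\in J_2}\msteps_i$ and $\estepstwo \defeq \sum_{i\in J_2}\esteps_i$; symmetrically for $J_3$, giving $\tderivthree$. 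The required identities $\typctx = \typctxtwo \uplus \typctxthree$, $\msteps = \mstepstwo + \mstepsthree$ and $\esteps = \estepstwo + \estepsthree$ then follow by splitting the indexed union and the two sums along $J = J_2 \uplus J_3$.

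There is essentially no hard part; the only two points requiring care are structural. First, one must justify that $\many$ is genuinely the unique rule that can conclude such a derivation, which rests on $\val$ being an abstraction together with the shape of the \cbv rules, and is precisely where the restriction of the statement to values is used. Second, the re-application of $\many$ in item 2 must remain legitimate even when $J_2$ or $J_3$ is empty; this is fine exactly because the \cbv $\many$ rule admits zero premises (assigning $\emptymset$), in contrast to the $\many_{>0}$ rule of the na\"ive \cbneed system. No linear-substitution or subject-reduction machinery is involved.
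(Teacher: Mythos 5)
Your proof is correct and is essentially the paper's argument: since a value is an abstraction and the judgement assigns a multi type, the derivation must end in the $\many$ rule, and both items then follow by reading off (item 1) or repartitioning and re-applying (item 2) its premises. Your two points of care—uniqueness of the concluding rule, and the legitimacy of $\many$ with an empty index set in the \cbv system—are exactly the right ones to flag.
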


\gettoappendix {l:value-linear-substitution-bis}
% !TEX root = ../../main.tex
\begin{proof}
	By induction on $\cbvctx$.
	Cases:
	\begin{itemize}
		\item \emph{Hole}, \ie $\cbvctx = \ctxhole$, then $\cbvctx\cwc{\val} = \val$ and $\cbvctx\cwc{\var} = \var$, hence $\tderiv$ consists only of an $\ax$-rule and so $\mtypetwo = \mtype$ and $\dom{\typctx} = \emptyset$, with $\msteps = 0$ and $\esteps = 1$.
		Let $\mtypethree \defeq \mtype$ and $\mtypefour \defeq \emptymset$.
		Thus, every $\tderivtwo\exder[\cbvsym] \Deri[(\mstepstwo, \estepstwo)]{\typctxtwo}{\val}{\mtypethree}$ coincides with a type derivation $\tderiv' \exder[\cbvsym] \Deri[(\msteps+\mstepstwo, \esteps+\estepstwo - 1)]{\typctx \mplus \typctxtwo, \var \col \mtypefour}{\cbvctx\cwc{\val}}{\mtypetwo}$, since $\typctx \mplus \typctxtwo, \var \col \mtypefour = \typctx$ and $\mtypetwo = \mtypethree$ and $(\msteps+\mstepstwo, \esteps+\estepstwo-1) = (\mstepstwo,\estepstwo)$.

		\item \emph{Left application}, \ie $\cbvctx \defeq \cbvctxtwo\tm$.
		Then, the derivation $\tderiv$ has the form
		\begin{equation*}
				\infer[\app]{
			\Deri[(\msteps, \esteps)] {\var \col \mtype, \typctx}
			{\cbvctxtwo\cwc{\var}\tm} \mtypetwo
		}{
			\Deri[(\msteps_{1}, \esteps_{1})] {\var \col \mtype_{1}, \typctx_{1}} {\cbvctxtwo\cwc{\var}} {\single{\ty{\mtypetwo'}{\mtypetwo}}}
			\qquad
			\Deri[(\msteps_{2}, \esteps_{2})] {\var \col \mtype_{2}, \typctx_{2}} \tm {\mtypetwo'}
		}
		\end{equation*}
		
		where $\mtype = \mtype_{1} \mplus \mtype_{2}$, $\typctx = \typctx_{1} \mplus \typctx_{2}$, $\msteps = \msteps_{1} + \msteps_{2} + 1$ and $\esteps = \esteps_{1} + \esteps_{2}$.
		By \ih, there exists a splitting $\mtype_{1} = \mtypethree \mplus \mtypefour'$ such that, for every derivation $\tderivtwo \exder[\cbvsym] \Deri[(\mstepstwo,\estepstwo)]{\typctxtwo}{\val}{\mtypethree}$, there exists a derivation with conclusion $\Deri[(\msteps_{1}+\mstepstwo, \esteps_{1}+\estepstwo-1)]{\typctx_{1} \mplus \typctxtwo, \var \col \mtypefour'}{\cbvctxtwo\cwc{\val}}{\single{\ty{\mtypetwo'}{\mtypetwo}}}$.
		So, we can construct the following derivation $\tderiv'$
		\begin{equation*}
		\infer[\app]{
			\Deri[(\msteps+\mstepstwo, \esteps + \estepstwo-1)] {\var \col \mtype_{2} \mplus \mtypefour', \typctx \mplus \typctxtwo}
			{\cbvctxtwo\cwc{\val}\tm} \mtypetwo
		}{
			\Deri[(\msteps_{1}+\mstepstwo, \esteps_{1}+\estepstwo-1)]{\typctx_{1} \mplus \typctxtwo, \var \col \mtypefour'}{\cbvctxtwo\cwc{\val}}{\single{\ty{\mtypetwo'}{\mtypetwo}}}
			\qquad
			\Deri[(\msteps_{2}, \esteps_{2})] {\var \col \mtype_{2}, \typctx_2} \tm {\mtypetwo'}
		}
		\end{equation*}
		where $\mtypefour \defeq \mtype_{2} \mplus \mtypefour'$ and $\mtype = \mtype_{1} \mplus \mtype_{2} = \mtypethree \mplus \mtypefour' \mplus \mtype_{2} = \mtypethree \mplus \mtypefour$.

		\item \emph{Right application}, \ie $\cbvctx \defeq \tm\cbvctxtwo$.
		Then, the derivation $\tderiv$ has the form
		\begin{equation*}
		\infer[\app]{
			\Deri[(\msteps, \esteps)] {\var \col \mtype, \typctx}
			{\tm\,\cbvctxtwo\cwc{\var}} \mtypetwo
		}{
			\Deri[(\msteps_1, \esteps_1)] {\var \col \mtype_1, \typctx_1} {\tm} {\single{\ty{\mtypetwo'}{\mtypetwo}}}
			\qquad
			\Deri[(\msteps_2, \esteps_2)] {\var \col \mtype_2, \typctx_2} {\cbvctxtwo\cwc{\var}} {\mtypetwo'}
		}
		\end{equation*}
		where $\mtype = \mtype_1 \mplus \mtype_2$, $\typctx = \typctx_1 \mplus \typctx_2$, $\msteps = \msteps_1 + \msteps_2 + 1$ and $\esteps = \esteps_1 + \esteps_2$.
		By \ih, there exists a splitting $\mtype_{2} = \mtypethree \mplus \mtypefour'$ such that, for every derivation $\tderivtwo \exder[\cbvsym] \Deri[(\mstepstwo,\estepstwo)]{\typctxtwo}{\val}{\mtypethree}$, there exists a derivation with conclusion $\Deri[(\msteps_{2}+\mstepstwo, \esteps_{2}+\estepstwo-1)]{\typctx_2 \mplus \typctxtwo, \var \col \mtypefour'}{\cbvctxtwo\cwc{\val}}{\mtypetwo'}$.
		So, we can construct the following  derivation $\tderiv'$
		\begin{equation*}
		\infer[\app]{
			\Deri[(\msteps+\mstepstwo, \esteps + \estepstwo-1)] {\var \col \mtype_{1} \mplus \mtypefour', \typctx \mplus \typctxtwo}
			{\tm\cbvctxtwo\cwc{\val}} \mtypetwo
		}{
			\Deri[(\msteps_{1}, \esteps_{1})] {\var \col \mtype_{1}, \typctx_{1}} {\tm} {\single{\ty{\mtypetwo'}{\mtypetwo}}}
			\qquad
			\Deri[(\msteps_{2}+\mstepstwo, \esteps_{2}+\estepstwo-1)]{\typctx_{2} \mplus \typctxtwo, \var \col \mtypefour'}{\cbvctxtwo\cwc{\val}}{\mtypetwo'}
		}
		\end{equation*}
		where $\mtypefour \defeq \mtype_{1} \mplus \mtypefour'$ and $\mtype = \mtype_{1} \mplus \mtype_{2} = \mtype_{1} \mplus \mtypethree \mplus \mtypefour' = \mtypethree \mplus \mtypefour$.
	
		\item \emph{Left explicit substitution}, \ie $\cbvctx \defeq \cbvctxtwo\esub{\vartwo}\tm$.
		We can suppose without loss of generality that $\vartwo \notin \fv{\tm} \cup \fv{\val} \cup \{ \var\}$, and hence $\vartwo \notin \dom{\typctxtwo}$ by \reflemma{value-typctx-varocc}.
		So, the derivation $\tderiv$ has the form
		\begin{equation*}
		\infer[\esrule]{
			\Deri[(\msteps, \esteps)] {\var \col \mtype, \typctx}
			{\cbvctxtwo\cwc{\var}\esub{\vartwo}\tm} \mtypetwo
		}{
			\Deri[(\msteps_1, \esteps_1)] {\var \col \mtype_1, \vartwo \col{\mtypetwo'}, \typctx_1} {\cbvctxtwo\cwc{\var}} {\mtypetwo}
			\qquad
			\Deri[(\msteps_2, \esteps_2)] {\var \col \mtype_2, \typctx_2} \tm {\mtypetwo'}
		}
		\end{equation*}
		where $\mtype = \mtype_1 \mplus \mtype_2$, $\typctx = \typctx_1 \mplus \typctx_2$, $\msteps = \msteps_1 + \msteps_2$ and $\esteps = \esteps_1 + \esteps_2$.
		By \ih, there exists a splitting $\mtype_{1} = \mtypethree \mplus \mtypefour'$ such that, for every derivation $\tderivtwo \exder[\cbvsym] \Deri[(\mstepstwo,\estepstwo)]{\typctxtwo}{\val}{\mtypethree}$, there exists a derivation of $\Deri[(\msteps_{1}+\mstepstwo, \esteps_{1}+\estepstwo-1)]{\typctx_{1} \mplus \typctxtwo, \var \col \mtypefour', \vartwo \col \mtypetwo'}{\cbvctxtwo\cwc{\val}}{\mtypetwo}$.
		Therefore, we can construct the following  derivation $\tderiv'$
		\begin{equation*}
		\infer[\esrule]{
			\Deri[(\msteps+\mstepstwo, \esteps + \estepstwo-1)] {\var \col \mtype_{2} \mplus \mtypefour', \typctx \mplus \typctxtwo}
			{\cbvctxtwo\cwc{\val}\esub{\vartwo}\tm} \mtypetwo
		}{
			\Deri[(\msteps_{1}+\mstepstwo, \esteps_{1}+\estepstwo-1)]{\typctx_{1} \mplus \typctxtwo, \var \col \mtypefour', \vartwo \col \mtypetwo'}{\cbvctxtwo\cwc{\val}}{\mtypetwo}
			\qquad
			\Deri[(\msteps_{2}, \esteps_{2})] {\var \col \mtype_{2}, \typctx_2} \tm {\mtypetwo'}
		}
		\end{equation*}
		where $\mtypefour \defeq \mtype_{2} \mplus \mtypefour'$ and $\mtype = \mtype_{1} \mplus \mtype_{2} = \mtypethree \mplus \mtypefour' \mplus \mtype_{2} = \mtypethree \mplus \mtypefour$.
		
		\item \emph{Right explicit substitution}, \ie $\cbvctx \defeq \tm\esub{\vartwo}\cbvctxtwo$.
		We can suppose without loss of generality that $\vartwo \notin \fv{\val} \cup \{ \var\}$, and hence $\vartwo \notin \dom{\typctxtwo}$ by \reflemma{value-typctx-varocc}.
		Then, the derivation $\tderiv$ has the form
		\begin{equation*}
		\infer[\esrule]{
			\Deri[(\msteps, \esteps)] {\var \col \mtype, \typctx}
			{\tm\esub{\vartwo}{\cbvctxtwo\cwc{\var}}} \mtypetwo
		}{
			\Deri[(\msteps_1, \esteps_1)] {\var \col \mtype_1, \vartwo \col \mtypetwo', \typctx_1} {\tm} {\mtypetwo}
			\qquad
			\Deri[(\msteps_2, \esteps_2)] {\var \col \mtype_2, \typctx_2} {\cbvctxtwo\cwc{\var}} {\mtypetwo'}
		}
		\end{equation*}
		where $\mtype = \mtype_1 \mplus \mtype_2$, $\typctx = \typctx_1 \mplus \typctx_2$, $\msteps = \msteps_1 + \msteps_2$ and $\esteps = \esteps_1 + \esteps_2$.
		By \ih, there is a splitting $\mtype_{2} = \mtypethree \mplus \mtypefour'$ such that, for every derivation $\tderivtwo \exder[\cbvsym] \Deri[(\mstepstwo,\estepstwo)]{\typctxtwo}{\val}{\mtypethree}$, there exists a derivation of $\Deri[(\msteps_{2}+\mstepstwo, \esteps_{2}+\estepstwo-1)]{\typctx_2 \mplus \typctxtwo, \var \col \mtypefour'}{\cbvctxtwo\cwc{\val}}{\mtypetwo'}$.
		So, we can construct the following derivation $\tderiv'$
		\begin{equation*}
		\infer[\esrule]{
			\Deri[(\msteps+\mstepstwo, \esteps + \estepstwo-1)] {\var \col \mtype_{1} \mplus \mtypefour', \typctx \mplus \typctxtwo}
			{\tm \esub{\vartwo}{\cbvctxtwo\cwc{\val}}} \mtypetwo
		}{
			\Deri[(\msteps_{1}, \esteps_{1})] {\var \col \mtype_{1}, \vartwo \col \mtypetwo', \typctx_{1}} {\tm} {\mtypetwo}
			\qquad
			\Deri[(\msteps_{2}+\mstepstwo, \esteps_{2}+\estepstwo-1)]{\typctx_{2} \mplus \typctxtwo, \var \col \mtypefour'}{\cbvctxtwo\cwc{\val}}{\mtypetwo'}
		}
		\end{equation*}
		where $\mtypefour \defeq \mtype_{1} \mplus \mtypefour'$ and $\mtype = \mtype_{1} \mplus \mtype_{2} = \mtype_{1} \mplus \mtypethree \mplus \mtypefour' = \mtypethree \mplus \mtypefour$.
		\qed
	\end{itemize}
\end{proof}

\gettoappendix {prop:value-subject-reduction}

\begin{proof}
	By induction on the reduction relation $\tocbv$, with the root rules $\rtom$ and $\rtoecbv$ as the base case, and the closure by $\cbv$ contexts of $\rtocbv \,\defeq\, \rtom \cup \rtoecbv$ as the inductive one.
	\begin{itemize}
		\item \emph{Root step for $\tomcbv$} \ie $\tm = \sctxp{\la{\var}{\tmthree}} \tmfour \rtom \sctxp{\tmthree \esub{\var}{\tmfour}} = \tm'$ where $\sctx \defeq \esub{\vartwo_1}{\tmtwo_1}\dots \esub{\vartwo_n}{\tmtwo_n}$ for some $n \geq 0$.
		We proceed by induction on $n \in \nat$.
		
		If $n = 0$ then $\sctx = \ctxhole$ and so $\tm = \sctxp{\la{\var}\tmthree} \tmfour = (\la{\var}\tmthree) \tmfour$ and $\tm' = \sctxp{\tmthree \esub{\var}{\tmfour}} = \tmthree \esub{\var}{\tmfour}$. 
		Hence, $\tderiv$ has the form
		$$
		\infer
		[\app]
		{\tyjp{(1 + \msteps' + \msteps'',\esteps' + \esteps'')}{(\la{\var}{\tmthree})\tmfour}{\typctxtwo \mplus \typctxthree }{\mtype}}
		{\infer
			[\many]
			{\tyjp{(\msteps',\esteps')}{\la{\var}{\tmthree}}{\typctxtwo}{\mult{\ty{\mtypethree}{\mtype}}}}
			{\infer
				[\fun]
				{ \tyjp{(\mstepstwo,\estepstwo)}{\la{\var}{\tmthree}}{\typctxtwo}{\ty{\mtypethree}{\mtype}}}
				{\tderivtwo \exder[\cbvsym]\tyjp{(\mstepstwo,\estepstwo)}{\tmthree}{\typctxtwo, \var \col \mtypethree}{\mtype}}}
			\quad
			{\tderivthree \exder[\cbvsym]\tyjp{(\mstepsthree,\estepsthree)}{\tmfour}{\typctxthree}{\mtypethree}}
		}
		%			\quad
		%			{\tyjp{(\mstepsthree,\estepsthree)}{\tmfour}{\typctxthree}{\mtypethree}
		%			}
		$$			
		where $\typctx \defeq \typctxtwo \mplus \typctxthree$, $\msteps \defeq 1 + \mstepstwo + \mstepsthree$ and $\esteps \defeq \estepstwo + \estepsthree$.
		Therefore, $\msteps \geq 1$.
		We can construct the following derivation $\tderiv'$:
		$$
		\infer
		[\esrule]
		{\tyjp{(\msteps'+\msteps'',\esteps'+\esteps'')}{\tmthree\esub{\var}{\tmfour}}{\typctx}{\mtype}}
		{\tderivtwo \exder[\cbvsym]
			\tyjp{(\msteps',\esteps')}{\tmthree}{\typctxtwo, \var \col \mtypethree}{\mtype}
			\quad
			\tderivthree \exder[\cbvsym] \tyjp{(\msteps'',\esteps'')}{\tmfour}{\typctxthree}{\mtypethree}
		}
		$$
		where $(\msteps' + \msteps'', \esteps' + \esteps'') = (\msteps - 1, \esteps)$.
		
		Suppose now $n >  0$. 
		Let $\sctxtwo \defeq \esub{\vartwo_1}{\tmtwo_1} \dots \esub{\vartwo_{n-1}}{\tmtwo_{n-1}}$: then, $\tm = \sctxp{\la{\var}\tmthree} \tmfour = \sctxtwop{\la{\var}\tmthree}\esub{\vartwo_n}{\tmtwo_n} \tmfour$ and $\tm' = \sctxp{\tmthree \esub{\var}{\tmfour}} = \sctxptwo{\tmthree \esub{\var}{\tmfour}}\esub{\vartwo_n}{\tmtwo_n}$. 
		Hence, $\tderiv$ has the form 
		{\small
			\begin{equation*}
			\infer[\app]{
				\Deri[(\mstepsthree + \msteps_n + \mstepstwo_0 + 1,\estepsthree + \esteps_n + \estepstwo_0)] {\typctxtwo \mplus \typctx_n \mplus \typctx_0'}{\sctxp{\la{\var}\tmthree}{\tmfour}}{\mtype}
			}
			{
				\infer[\esrule]{
					\Deri[(\mstepsthree + \msteps_n, \estepsthree + \esteps_n)]{\typctxtwo \mplus \typctx_n}{\sctxp{\la{\var}\tmthree}}{\mtype}
				}{
					\tderivtwo'' \exder[\cbvsym] \Deri[(\mstepsthree,\estepsthree)]{\typctxtwo, \vartwo_n \col \mtypetwo_n}{\sctxtwop{\la{\var}\tmthree}}{\mtype}
					\qquad
					%			\tderivthree \exder[\cbvsym] \Deri[(\mstepstwo_0,\estepstwo_0)]{\typctx_0'}{\tmfour}{\mtypethree}
					\tderivtwo_n \exder[\cbvsym] \Deri[(\msteps_n,\esteps_n)]{\typctx_n}{\tmtwo_n}{\mtypetwo_n}
				}
				\qquad
				\tderivthree \exder[\cbvsym] \Deri[(\mstepstwo_0,\estepstwo_0)]{\typctx_0'}{\tmfour}{\mtypethree}
			}
			\end{equation*}
		}
		\noindent where $\typctx \defeq \typctxtwo \mplus \typctx_n \mplus \typctx_0'$ and $(\msteps, \esteps) \defeq (\mstepsthree + \msteps_n + \mstepstwo_0 + 1,\estepsthree + \esteps_n + \estepstwo_0)$.
		Note that $\msteps \geq 1$ as required.
		Consider the following derivation $\tderivtwo$
		\begin{equation*}
		\infer[\app]{
			\Deri[(\mstepsthree + \mstepstwo_0 + 1,\estepsthree + \estepstwo_0)] {\typctxtwo \mplus \typctx_0', \vartwo_n \col \mtypetwo_n}{\sctxtwop{\la{\var}\tmthree}{\tmfour}}{\mtype}
		}
		{
			\tderivtwo'' \exder[\cbvsym] \Deri[(\mstepsthree,\estepsthree)]{\typctxtwo, \vartwo_n \col \mtypetwo_n}{\sctxtwop{\la{\var}\tmthree}}{\mtype}
			\qquad
			\tderivthree \exder[\cbvsym] \Deri[(\mstepstwo_0,\estepstwo_0)]{\typctx_0'}{\tmfour}{\mtypethree}
		}
		\end{equation*}
		By \ih applied to $\tderivtwo$ (since $\sctxtwop{\la{\var}\tmthree}\tmfour \rtom \sctxtwop{\tmthree\esub{\var}{\tmfour}}$), there is a derivation $$\tderivtwo' \exder[\cbvsym] \Deri[(\mstepstwo, \estepstwo)]{\typctxtwo \mplus \typctx_0', \vartwo_n \col \mtypetwo_n}{\sctxtwop{\tmthree\esub{\var}{\tmfour}}}{\mtype}$$
		where $(\mstepstwo, \estepstwo) = (\mstepsthree+\mstepstwo_0, \estepsthree+\estepstwo_0)$.
		We can then construct the following  derivation $\tderiv'$
		\begin{equation*}
		\infer[\esrule]{
			\Deri[(\mstepstwo+\msteps_n,\estepstwo+\esteps_n)]{\typctx}{\sctxptwo{\tmthree \esub{\var}{\tmfour}}\esub{\vartwo_n}{\tmtwo_n}}{\mtype}
		}
		{ 
			\tderivtwo' \exder[\cbvsym] \Deri[(\mstepstwo, \estepstwo)]{\typctxtwo \mplus \typctx_0', \vartwo_n \col \mtypetwo_n}{\sctxtwop{\tmthree\esub{\var}{\tmfour}}}{\mtype}
			\qquad
			\tderivtwo_n \exder[\cbvsym] \Deri[(\msteps_n,\esteps_n)]{\typctx_n}{\tmtwo_n}{\mtypetwo_n}
		}
		\end{equation*}
		where $(\mstepstwo+\msteps_n, \estepstwo+\esteps_n) = (\mstepsthree + \mstepstwo_0 + \msteps_n, \estepsthree + \estepstwo_0+\esteps_n) = (\msteps-1, \esteps)$.

		\item \emph{Root step for \!\!$\toecbv$\!\!}, \ie $\tm \defeq \cbvctx\cwc{\var} \esub{\var}{\sctxp{\val}} \rtoecbv\! \sctxp{\cbvctx\cwc{\val} \esub{\var}{\val}} \eqdef \tm'$ with $\sctx \defeq \esub{\vartwo_1}{\tmtwo_1}\dots \esub{\vartwo_n}{\tmtwo_n}$ for some $n \geq 0$. 
		We proceed by induction on $n \in \nat$.
		
		If $n = 0$ then $\sctx = \ctxhole$ and so $\tm = \cbvctx\cwc{\var} \esub{\var}{\val}$ and $\tm' = \cbvctx\cwc{\val} \esub{\var}{\val}$. 
		Hence, $\tderiv$ has the form
		$$
		\infer
		[\esrule]
		{\tyjp{(\msteps' + \msteps_0, \esteps' + \esteps_0)}{\cbvctx \cwc{\var} \esub{\var}{\val}}{\typctxtwo \mplus \typctx_0}{\mtype}}
		{\tderivtwo \exder[\cbvsym] \tyjp{(\msteps',\esteps')}{\cbvctx\cwc{\var}}{\typctxtwo, \var \col \mtypethree}{\mtype}
			\qquad
			\tderivthree \exder[\cbvsym] \tyjp{(\msteps_0, \esteps_0)}{\val}{\typctx_0}{\mtypethree}		}
		$$
		where $\typctx \defeq \typctxtwo \mplus \typctx_0$, and $\msteps \defeq  \mstepstwo + \msteps_0$ and $\esteps \defeq \estepstwo + \esteps_0$.
		%		By linear substitution (\reflemma{value-linear-substitution-bis}), there are splittings $\mtypethree = \mtypethree' \mplus \mtypethree''$ and $\typctx_0 = \typctx_0' \mplus \typctx_0''$ and $\mtypetwo_i = \mtypetwo_i' \mplus \mtypetwo_i''$ for all $1 \leq i \leq n$ such that $\tyjp{(\msteps_0',\esteps_0')}{\val}{\typctx_{0}', \vartwo_1 \col \mtypetwo_1', \dots, \vartwo_n \col \mtypetwo_n'}{\mtypethree'}$ and $\Deri[(\mstepsthree_0,\estepsthree_0)]{\typctx_{0}'', \vartwo_1 \col \mtypetwo_1'', \dots, \vartwo_n \col \mtypetwo_n''}{\val}{\mtypethree''}{}$ and $\tyjp{(\mstepstwo + \mstepstwo_0,\estepstwo+\estepstwo_0-1)}{\cbvctx\cwc{\val}}{\typctxtwo \mplus \typctx_0', \var \col \mtypethree'', \vartwo_1 \col \mtypetwo_1', \dots, \vartwo_n \col \mtypetwo_n'}{\mtype}$ are \cbv-derivable, with $\msteps_0 = \mstepstwo_0 + \mstepsthree_0$ and $\esteps_0 = \estepstwo_0 + \estepsthree_0$.
		Let $\mtypethree = \mtypethree' \mplus \mtypethree''$  be the splitting of $\mtypethree$ given by linear substitution for \cbv (\reflemma{value-linear-substitution-bis}).
		According to the multiset splitting property (\reflemmap{value-typing}{dec}), there exist a splitting $\typctx_0 = \typctx_0' \mplus \typctx_0''$ and the derivations $\tderivthree' \exder[\cbvsym] \tyjp{(\msteps_0',\esteps_0')}{\val}{\typctx_{0}'}{\mtypethree'}$ and $\tderivthree'' \exder[\cbvsym] \Deri[(\mstepsthree_0,\estepsthree_0)]{\typctx_{0}''}{\val}{\mtypethree''}$, with $\msteps_0 = \mstepstwo_0 + \mstepsthree_0$ and $\esteps_0 = \estepstwo_0 + \estepsthree_0$.
		By linear substitution for \cbv (\reflemma{value-linear-substitution-bis}), there exists a derivation $\tderivtwo' \exder[\cbvsym] \tyjp{(\mstepstwo + \mstepstwo_0,\estepstwo+\estepstwo_0-1)}{\cbvctx\cwc{\val}}{\typctxtwo \mplus \typctx_0', \var \col \mtypethree''}{\mtype}$.
		We can then construct the following derivation $\tderiv'$
		\begin{equation*}
		\infer[\esrule]{
			\tyjp{(\mstepstwo+\mstepstwo_0 + \mstepsthree_0,\estepstwo+\estepstwo_0 + \estepsthree_0-1)}{\cbvctx\cwc\val\esub{\var}{\val}}{\typctxtwo \mplus \typctx_0' \mplus \typctx_0''}{\mtype}
		}
		{
			\tderivtwo' \exder[\cbvsym] \tyjp{(\mstepstwo + \mstepstwo_0, \estepstwo+\estepstwo_0-1)} {\cbvctx\cwc{\val}}{\typctxtwo \mplus \typctx_0', \var \col \mtypethree''}{\mtype}
			\qquad
			\tderivthree'' \exder[\cbvsym]\tyjp{(\mstepsthree_0,\estepsthree_0)}{\val}{\typctx_{0}''}{\mtypethree''}
		}
		\end{equation*}
		where $\typctxtwo \mplus \typctx_0' \mplus \typctx_0'' = \typctxtwo \mplus \typctx_0 = \typctx$ and $(\mstepstwo+\mstepstwo_0+\mstepsthree_0,\estepstwo+\estepstwo_0+\estepsthree_0-1) = (\mstepstwo + \msteps_0, \estepstwo + \esteps_0-1) = (\msteps,\esteps-1)$.
		
		\indent Suppose now $n > 0$.
		Let $\sctxtwo \defeq \esub{\vartwo_1}{\tmtwo_1} \dots \esub{\vartwo_{n-1}}{\tmtwo_{n-1}}$: then, $\sctxp{\val} = \sctxtwop{\val}\esub{\vartwo_n}{\tmtwo_n}$ and so $\tm = \cbvctx\cwc{\var}\esub{\var}{\sctxtwop{\val}\esub{\vartwo_n}{\tmtwo_n}}$ and $\tm' = \sctxp{\cbvctx\cwc{\val} \esub{\var}{\val}} = \sctxptwo{\cbvctx\cwc{\val} \esub{\var}{\val}}\esub{\vartwo_n}{\tmtwo_n}$. 
		Hence, $\tderiv$ has the form 
		
		{\small
			\begin{equation*}
			\infer[\esrule]{
				\tyjp{(\mstepstwo_0+\mstepsthree+\msteps_n, \estepstwo_0+\estepsthree+\esteps_n)}{\cbvctx\cwc{\var} \esub{\var}{\sctxp{\val}}}{\typctx_0' \mplus \typctx_0'' \mplus \typctx_n}{\mtype}
			}
			{
				\tderivthree \exder[\cbvsym] \Deri[(\mstepstwo_0, \estepstwo_0)]{\typctx_0', \var \col \mtypetwo}{\cbvctx\cwc{\var}}{\mtype}
				\quad
				\infer[\esrule]{
					\Deri[(\mstepsthree_0+\msteps_n, \estepsthree_0+\esteps_n)]{\typctx_0'' \mplus \typctx_n}{\sctxp{\val}}{\mtypetwo}
				}
				{ 
					\tderivthree'' \exder[\cbvsym\!\!] \Deri[(\mstepsthree_0,\estepsthree_0)]{\vartwo_n \col \mtypetwo_n, \typctx_0''}{\sctxtwop{\val}}{\mtypetwo}
					\quad
					\tderivthree_n \exder[\cbvsym\!\!] \Deri[(\msteps_n,\esteps_n)]{\typctx_n}{\tmtwo_n}{\mtypetwo_n}
				}
			}
			\end{equation*}
		}
		where $\typctx \defeq \typctx_0' \mplus \typctx_0'' \mplus \typctx_n$ and $(\msteps, \esteps) \defeq (\mstepstwo_0+\mstepsthree+\msteps_n, \estepstwo_0+\estepsthree+\esteps_n)$.
		Consider the following derivation $\tderivtwo$
		\begin{equation*}
		\infer[\esrule]{
			\tyjp{(\mstepstwo_0+\mstepsthree, \estepstwo_0+\estepsthree)}{\cbvctx\cwc{\var} \esub{\var}{\sctxtwop{\val}}}{\typctx_0' \mplus \typctx_0'', \vartwo_n \col \mtypetwo_n}{\mtype}
		}
		{
			\tderivthree \exder[\cbvsym] \Deri[(\mstepstwo_0, \estepstwo_0)]{\typctx_0', \var \col \mtypetwo}{\cbvctx\cwc{\var}}{\mtype}
			\quad
			\tderivthree'' \exder[\cbvsym\!\!] \Deri[(\mstepsthree_0,\estepsthree_0)]{\vartwo_n \col \mtypetwo_n, \typctx_0''}{\sctxtwop{\val}}{\mtypetwo}
		}
		\end{equation*}
		By \ih applied to $\tderivtwo$ (since $\cbvctx\cwc{\var} \esub{\var}{\sctxtwop{\val}} \rtoecbv \sctxtwop{\cbvctx\cwc{\val}\esub{\var}{\val}}$), one has $\estepstwo_0 + \estepsthree \geq 1$ and there exists a derivation 
		$$\tderivtwo' \exder[\cbvsym] \Deri[(\mstepstwo, \estepstwo)]{\typctx_0' \mplus \typctx_0'', \vartwo_n \col \mtypetwo_n}{\sctxtwop{\cbvctx\cwc{\val}\esub{\var}{\val}}}{\mtype}$$
		where $(\mstepstwo, \estepstwo) \defeq (\mstepstwo_0+\mstepsthree, \estepstwo_0+\estepsthree-1)$.
		We can then construct the following derivation $\tderiv'$:
		\begin{equation*}
		\infer[\esrule]{
			\Deri[(\mstepstwo+\msteps_n, \estepstwo+\esteps_n)]{\typctx}{\sctxtwop{\cbvctx\cwc{\val}\esub{\var}{\val}}\esub{\vartwo_n}{\tmtwo_n}}{\mtype}
		}{
			\tderivtwo' \exder[\cbvsym] \Deri[(\mstepstwo, \estepstwo)]{\typctx_0' \mplus \typctx_0'', \vartwo_n \col \mtypetwo_n}{\sctxtwop{\cbvctx\cwc{\val}\esub{\var}{\val}}}{\mtype}
			\qquad
			\tderivthree_n \exder[\cbvsym\!\!] \Deri[(\msteps_n,\esteps_n)]{\typctx_n}{\tmtwo_n}{\mtypetwo_n}
		}
		\end{equation*}
		where $(\mstepstwo + \msteps_n, \estepstwo + \esteps_n) =  (\mstepstwo_0+\mstepsthree + \msteps_n, \estepstwo_0+\estepsthree-1 + \esteps_n) = (\msteps, \esteps -1)$.
		
		\item \emph{Application left}, \ie $\tm \defeq \tmtwo\tmthree \torule  \tmtwo'\tmthree \eqdef \tm'$ with $\tmtwo \torule \tmtwo'$ and $\Rule \in \{\mcbv, \ecbv\}$.
		So, $\tderiv$ has the form 
		\begin{equation*}
		\infer[\app]{
			\tyjp{(\msteps_1 + \msteps_2 +1,\esteps_1 + \esteps_2)}{\tm}{\typctx_1 \mplus \typctx_2}{\mtype}
		}
		{
			\tderivtwo_1 \exder[\cbvsym] \tyjp{(\msteps_1,\esteps_1)}{\tmtwo}{\typctx_1}{[\ty{\mtypetwo}{\mtype}]}
			\qquad
			\tderivtwo_2 \exder[\cbvsym] \tyjp{(\msteps_2, \esteps_2)}{\tmthree}{\typctx_2}{\mtypetwo}
		}
		\end{equation*}
		where $\typctx \defeq \typctx_1 \mplus \typctx_2$ and $\msteps \defeq \msteps_1 + \msteps_2 + 1$ and $\esteps \defeq \esteps_1 + \esteps_2$.
		By \ih applied to $\tderivtwo_1$, there is a  derivation $\tderivtwo\exder[\cbv] \Deri[(\mstepstwo, \estepstwo)]{\typctx_1}{\tmtwo'}{[\ty{\mtypetwo}{\mtype}]}$ where 
		\begin{itemize}
			\item $\mstepstwo \defeq \msteps_1 - 1$ and $\estepstwo \defeq \esteps_1$ if $\Rule = \mcbv$,
			\item $\mstepstwo \defeq \msteps_1$ and $\estepstwo\defeq \esteps_1 - 1$ if $\Rule = \ecbv$.
		\end{itemize}
		Thus, we can construct the following  derivation $\tderiv'$
		\begin{equation*}
		\infer[\app]{
			\tyjp{(\mstepstwo + \msteps_2 +1, \estepstwo + \esteps_2)}{\tm'}{\typctx_1 \mplus \typctx_2}{\mtype}
		}
		{
			\tderivtwo \exder[\cbvsym] \Deri[(\mstepstwo, \estepstwo)]{\typctx_1}{\tmtwo'}{[\ty{\mtypetwo}{\mtype}]}
			\qquad
			\tderivtwo_2 \exder[\cbvsym] \tyjp{(\msteps_2, \esteps_2)}{\tmthree}{\typctx_2}{\mtypetwo}
		}
		\end{equation*}
		where 
		\begin{itemize}
			\item $\mstepstwo +  \msteps_2+1 = \msteps_1 - 1 + \msteps_2 + 1 = \msteps - 1$ and $\estepstwo + \esteps_2 = \esteps_1 + \esteps_2 = \esteps$ if $\Rule = \mcbv$,
			\item $\mstepstwo + \msteps_2 + 1 = \msteps_1 + \msteps_2 + 1 = \msteps$ and $\estepstwo + \esteps_2 = \esteps_1 -1 + \esteps_2 = \esteps - 1$ if $\Rule = \ecbv$.
		\end{itemize}
		
		\item \emph{Application right}, \ie $\tm \defeq \tmthree\tmtwo \torule  \tmthree\tmtwo' \eqdef \tm'$ with $\tmtwo \torule \tmtwo'$ and $\Rule \in \{\mcbv, \ecbv\}$.
		So, $\tderiv$ has the form 
		\begin{equation*}
		\infer[\app]{
			\tyjp{(\msteps_1 + \msteps_2 +1,\esteps_1 + \esteps_2)}{\tm}{\typctx_1 \mplus \typctx_2}{\mtype}
		}
		{
			\tderivtwo_1 \exder[\cbvsym] \tyjp{(\msteps_1,\esteps_1)}{\tmthree}{\typctx_1}{[\ty{\mtypetwo}{\mtype}]}
			\qquad
			\tderivtwo_2 \exder[\cbvsym] \tyjp{(\msteps_2, \esteps_2)}{\tmtwo}{\typctx_2}{\mtypetwo}
		}
		\end{equation*}
		where $\typctx \defeq \typctx_1 \mplus \typctx_2$ and $\msteps \defeq \msteps_1 + \msteps_2 + 1$ and $\esteps \defeq \esteps_1 + \esteps_2$.
		By \ih applied to $\tderivtwo_2$, there is a derivation $\tderivtwo\exder[\cbv] \tyjp{(\mstepstwo, \estepstwo)}{\tmtwo'}{\typctx_2}{\mtypetwo}$ where 
		\begin{itemize}
			\item $\mstepstwo \defeq \msteps_2 - 1$ and $\estepstwo \defeq \esteps_2$ if $\Rule = \mcbv$,
			\item $\mstepstwo \defeq \msteps_2$ and $\estepstwo\defeq \esteps_2 - 1$ if $\Rule = \ecbv$.
		\end{itemize}
		Thus, we can construct the following derivation $\tderiv'$
		\begin{equation*}
		\infer[\app]{
			\tyjp{(\msteps_1 + \mstepstwo + 1, \esteps_1 + \estepstwo)}{\tm'}{\typctx_1 \mplus \typctx_2}{\mtype}
		}
		{
			\tderivtwo_1 \exder[\cbvsym] \tyjp{(\msteps_1, \esteps_1)}{\tmthree}{\typctx_1}{[\ty{\mtypetwo}{\mtype}]}
			\qquad
			\tderivtwo \exder[\cbvsym] \tyjp{(\mstepstwo, \estepstwo)}{\tmtwo'}{\typctx_2}{\mtypetwo}
		}
		\end{equation*}
		where 
		\begin{itemize}
			\item $\msteps_1 + \mstepstwo + 1 = \msteps_1 + \msteps_2 - 1 + 1 = \msteps - 1$ and $\esteps_1 + \estepstwo = \esteps_1 + \esteps_2 = \esteps$ if $\Rule = \mcbv$,
			\item $\msteps_1 + \mstepstwo + 1 = \msteps_1 + \msteps_2 + 1 = \msteps$ and $\esteps_1 + \estepstwo = \esteps_1 + \esteps_2 -1 = \esteps - 1$ if $\Rule = \ecbv$.
		\end{itemize}
		
		\item \emph{Left explicit substitution}, \ie $\tm \defeq \tmtwo\esub\var\tmthree \torule  \tmtwo'\esub\var\tmthree \eqdef \tm'$ with $\tmtwo \torule \tmtwo'$ and $\Rule \in \{\mcbv, \ecbv\}$.
		So, $\tderiv$ has the form 
		\begin{equation*}
		\infer[\esrule]{
			\tyjp{(\msteps_1 + \msteps_2,\esteps_1 + \esteps_2)}{\tm}{\typctx_1 \mplus \typctx_2}{\mtype}
		}
		{
			\tyjp{(\msteps_1,\esteps_1)}{\tmtwo}{\typctx_1, \var \col \mtypetwo}{\mtype}
			\qquad
			\tyjp{(\msteps_2, \esteps_2)}{\tmthree}{\typctx_2}{\mtypetwo}
		}
		\end{equation*}
		where $\typctx \defeq \typctx_1 \mplus \typctx_2$ and $\msteps \defeq \msteps_1 + \msteps_2$ and $\esteps \defeq \esteps_1 + \esteps_2$.
		By \ih applied to $\tderivtwo_1$, there is a derivation $\tderivtwo\exder[\cbv] \tyjp{(\mstepstwo, \estepstwo)}{\tmtwo'}{\typctx_1, \var \col \mtypetwo}{\mtype}$ where 
		\begin{itemize}
			\item $\mstepstwo \defeq \msteps_1 - 1$ and $\estepstwo \defeq \esteps_1$ if $\Rule = \mcbv$,
			\item $\mstepstwo \defeq \msteps_1$ and $\estepstwo\defeq \esteps_1 - 1$ if $\Rule = \ecbv$.
		\end{itemize}
		Thus, we construct the following derivation $\tderiv'$
		\begin{equation*}
		\infer[\esrule]{
			\tyjp{(\mstepstwo + \msteps_2, \estepstwo + \esteps_2)}{\tm'}{\typctx_1 \mplus \typctx_2}{\mtype}
		}
		{
			\tderivtwo \exder[\cbvsym] \tyjp{(\mstepstwo, \estepstwo)}{\tmtwo'}{\typctx_1, \var \col \mtypetwo}{\mtype}
			\qquad
			\tderivtwo_2 \exder[\cbvsym] \tyjp{(\msteps_2, \esteps_2)}{\tmthree}{\typctx_2}{\mtypetwo}
		}
		\end{equation*}
		where 
		\begin{itemize}
			\item $\mstepstwo +  \msteps_2 = \msteps_1 - 1 + \msteps_2 = \msteps - 1$ and $\estepstwo + \esteps_2 = \esteps_1 + \esteps_2 = \esteps$ if $\Rule = \mcbv$,
			\item $\mstepstwo + \msteps_2 = \msteps_1 + \msteps_2 = \msteps$ and $\estepstwo + \esteps_2 = \esteps_1 -1 + \esteps_2 = \esteps - 1$ if $\Rule = \ecbv$.
		\end{itemize}
		
		\item \emph{Right explicit substitution}, \ie $\tm \defeq \tmtwo\esub\var\tmthree \torule  \tmtwo\esub\var{\tmthree'} \eqdef \tm'$ with $\tmthree \torule \tmthree'$ and $\Rule \in \{\mcbv, \ecbv\}$.
		So, $\tderiv$ has the form 
		\begin{equation*}
		\infer[\esrule]{
			\tyjp{(\msteps_1 + \msteps_2,\esteps_1 + \esteps_2)}{\tm}{\typctx_1 \mplus \typctx_2}{\mtype}
		}
		{
			\tyjp{(\msteps_1,\esteps_1)}{\tmtwo}{\typctx_1, \var \col \mtypetwo}{\mtype}
			\qquad
			\tyjp{(\msteps_2, \esteps_2)}{\tmthree}{\typctx_2}{\mtypetwo}
		}
		\end{equation*}
		where $\typctx \defeq \typctx_1 \mplus \typctx_2$ and $\msteps \defeq \msteps_1 + \msteps_2$ and $\esteps \defeq \esteps_1 + \esteps_2$.
		By \ih applied to $\tderivtwo_2$, there is a derivation $\tderivtwo\exder[\cbv] \tyjp{(\mstepstwo, \estepstwo)}{\tmthree'}{\typctx_2}{\mtypetwo}$ where 
		\begin{itemize}
			\item $\mstepstwo \defeq \msteps_2 - 1$ and $\estepstwo \defeq \esteps_2$ if $\Rule = \mcbv$,
			\item $\mstepstwo \defeq \msteps_2$ and $\estepstwo\defeq \esteps_2 - 1$ if $\Rule = \ecbv$.
		\end{itemize}
		Thus, we can construct the following derivation $\tderiv'$
		\begin{equation*}
		\infer[\esrule]{
			\tyjp{(\msteps_1 + \mstepstwo, \esteps_1 + \estepstwo)}{\tm'}{\typctx_1 \mplus \typctx_2}{\mtype}
		}
		{
			\tderivtwo_1 \exder[\cbvsym] \tyjp{(\msteps_1, \esteps_1)}{\tmtwo}{\typctx_1, \var \col \mtypetwo}{\mtype}
			\qquad
			\tderivtwo \exder[\cbvsym] \tyjp{(\mstepstwo, \estepstwo)}{\tmthree'}{\typctx_2}{\mtypetwo}
		}
		\end{equation*}
		where 
		\begin{itemize}
			\item $\msteps_1 +  \mstepstwo = \msteps_1 + \msteps_2 - 1 = \msteps - 1$ and $\esteps_1 + \estepstwo = \esteps_1 + \esteps_2 = \esteps$ if $\Rule = \mcbv$,
			\item $\msteps_1 + \mstepstwo = \msteps_1 + \msteps_2 = \msteps$ and $\esteps_1 + \estepstwo = \esteps_1 + \esteps_2 - 1 = \esteps - 1$ if $\Rule = \ecbv$.
			\qed
		\end{itemize}
	\end{itemize}
\end{proof}

\gettoappendix {prop:value-normal-forms-forall}
% !TEX root = ../../main.tex
\begin{proof}
	By induction on the derivation of $\normalcbvpr{\tm}$.
	Cases:
	\begin{itemize}
		\item \emph{Base}, \ie $\tm \defeq \la{\var}{\tmtwo}$.
		Since $\tderiv\exder[\cbvsym] \Deri[(\msteps,\esteps)]{\typctx}{\tm}{\emptymset}$, the last rule of $\tderiv$ can only be a 0-ary instance of $\many$, thus $\dom{\typctx} = \emptyset$ and $\msteps = \esteps=0$.
		
		\item \emph{Inductive step}, \ie $\tm \defeq \tmtwo \esub{\var}{\tmthree}$ with $\normalcbvpr{\tmtwo}$ and $\normalcbvpr{\tmthree}$.
		Hence, $\tderiv$ has the form
		\begin{equation*}
			\infer[\esrule]{
				\tyjp{(\msteps,\esteps)}{\tm}{\typctx}{\emptymset}
			}
			{
				\tderivtwo \exder[\cbvsym] \tyjp{(\mstepstwo,\estepstwo)}{\tmtwo}{\typctxtwo, \var \col \mtypetwo}{\emptymset}
				\qquad
				\tderivthree \exder[\cbvsym] \tyjp{(\mstepsthree,\estepsthree)}{\tmthree}{\typctxthree}{\mtypetwo}
			}
		\end{equation*}
		where $\typctx \defeq \typctxtwo \mplus \typctxthree$ and $\msteps \defeq \mstepstwo + \mstepsthree$ and $\esteps \defeq \estepstwo + \estepsthree$.
		By \ih applied to $\tderivtwo$, $\dom{\typctxtwo} = \emptyset$ and $\mtypetwo = \emptymset$ and $\mstepstwo = 0 = \estepstwo$.
		By \ih applied to $\tderivthree$ (as $\mtypetwo = \emptymset$), $\dom{\typctxthree} = \emptyset$ and $\mstepsthree = 0 = \estepstwo$.
		Therefore, $\dom{\typctx} = \emptyset$ and $\msteps = 0 = \esteps$.
\qed
	\end{itemize} 
\end{proof}

\gettoappendix {thm:value-correctness}
% !TEX root = ../../main.tex
\begin{proof} 
By induction on $\msteps+\esteps$ and case analysis on whether $\tm$ reduces or not.
	
	If $\normalcbvpr{\tm}$ then the statement holds with $\tmtwo \defeq \tm$ and $\deriv$ the empty evaluation, so that $\sizem{\deriv} = 0 = \sizee{\deriv}$.
	If moreover $\tderiv$ is tight then  $\sizem{\deriv} = 0 = \msteps$ and $\sizee{\deriv} = 0 = \esteps$ by \refprop{value-normal-forms-forall}.
	
	Otherwise $\lnot \normalcbvpr{\tm}$, then $\tm \tocbv \tmthree$ according to the syntactic characterization of closed $\cbvsym$-normal forms (\refprop{syntactic-characterization-closed-normal}), since $\tm$ is closed.
	As $\tm \tocbv \tmthree$ means either $\tm \tomcbv \tmthree$ or $\tm \toecbv \tmthree$,
	by quantitative subject reduction (\refprop{value-subject-reduction}) there is $\tderivtwo \exder[\cbvsym] \!\tyjp{(\mstepstwo,\estepstwo)}{\tmthree}{\typctx}{\mtype}$
 	with:
 	\begin{itemize}
 		\item $\mstepstwo \defeq \msteps - 1$ and $\estepstwo = \esteps$ if $\tm \tomcbv \tmthree$,
 		\item $\mstepstwo \defeq \msteps$ and $\estepstwo = \esteps - 1$ if $\tm \toecbv \tmthree$.
 	\end{itemize}
	By \ih (since $\mstepstwo + \estepstwo = \msteps + \esteps - 1$), there is a term $\tmtwo$ such that $\deriv' \colon \tmthree \tocbvn \tmtwo$ and $\normalcbvpr{\tmtwo}$, with $\sizem{\deriv'} \leq \mstepstwo$ and $\sizee{\deriv'} \leq \estepstwo$; and if, moreover, $\tderivtwo$ is tight, then $\sizem{\deriv'} = \mstepstwo$ and $\sizee{\deriv'} = \estepstwo$.
	The evaluation $\deriv \colon \tm \tocbvn \tmtwo$ obtained by prefixing $\deriv'$ with the step $\tm \tocbv \tmthree$ verifies $\sizem{\deriv} \leq \msteps$ and $\sizee{\deriv} \leq \esteps$ (and $\sizem{\deriv} = \msteps$ and $\sizee{\deriv} = \esteps$ if moreover $\tderiv$---and hence $\tderivtwo$ %according to \refremark{value-properties-tight}
	since $\dom{\typctx} = \emptyset$ and $\mtype = \emptymset$---is tight) because:
	\begin{itemize}
		\item if $\tm \tomcbv \tmthree$ then $\sizem{\deriv} = \sizem{\deriv'} + 1 \leq \mstepstwo + 1 = \msteps$ and $\sizee{\deriv} = \sizee{\deriv'} \leq \estepstwo = \esteps$ (and $\sizem{\deriv} = \sizem{\deriv'} + 1 = \mstepstwo + 1 = \msteps$ and $\sizee{\deriv} = \sizee{\deriv'} = \estepstwo = \esteps$ if moreover $\tderiv$ is tight),
		\item if $\tm \toecbv \tmthree$ then $\sizem{\deriv} = \sizem{\deriv'} \leq \mstepstwo = \msteps$ and $\sizee{\deriv} = \sizee{\deriv'} + 1 \leq \estepstwo + 1 = \esteps$ (and $\sizem{\deriv} = \sizem{\deriv'} = \mstepstwo = \msteps$ and $\sizee{\deriv} = \sizee{\deriv'} + 1 = \estepstwo + 1 = \esteps$ if moreover $\tderiv$ is tight).
\qed
	\end{itemize}
\end{proof}

\subsection{\cbv Completeness}

\gettoappendix{prop:value-normal-forms-exist}
% !TEX root = ../../main.tex
\begin{proof} 
	By induction on $\normalcbvpr\tm$. Cases:
	\begin{itemize}
		\item \emph{Abstraction}: if $\normalcbvpr\tm$ because $\tm = \la\var\tmtwo$ then  $\tderiv$ is given by
		$$\infer[\many]{
			\Deri[(0,0)] {} { \la\var\tmtwo } {\emptymset}
		}{}$$
		\item \emph{Substitution}: if $\normalcbvpr\tm$ because $\tm = \tmtwo\esub\var\tmthree$ with $\normalcbvpr\tmtwo$ and $\normalcbvpr\tmthree$ then by \ih there exist tight derivations $\tderivtwo\exder[\cbvsym] \Deri[(0, 0)]{}{\tmtwo}\emptymset$ and $\tderivthree\exder[\cbvsym] \Deri[(0, 0)]{}{\tmthree}\emptymset$. Then $\tderiv$ is given by:
		$$
		\infer[\esrule]{
			\Deri[(0, 0)] {}
			{\tmtwo \esub\var\tmthree} \emptymset
		}{
			\tderivtwo\exder\Deri[(0, 0)] {} \tmtwo {\emptymset}
			\qquad
			\tderivthree\exder\Deri[(0, 0)] {} \tmthree {\emptymset}        
		}
		$$
		\qed
	\end{itemize}
	
\end{proof}

\begin{lemma}[Typability of values]
	\label{l:value-typability}
	Let $\val$ be a value.
	\begin{enumerate}
		\item \label{p:value-typability-judg}
		\emph{Empty judgement}: There is a derivation $\tderivtwo \exder[\cbvsym] \Deri[(0,0)]{}{\val}{\emptymset}$.
		
		\item \label{p:value-typability-merg}
		\emph{Multi-set merging}: If there are derivations $\tderiv \exder[\cbvsym] \tyjp{(\msteps, \esteps)}{\val}{\typctx}{\mtype}$ and  $\tderivtwo \exder[\cbvsym]\Deri[(\mstepstwo,\estepstwo)]{\typctxtwo}{\val}{\mtypetwo}$, then there is a  derivation 
		${\tderivthree} \exder[\cbvsym] \Deri[(\mstepsthree,\estepsthree)]{\typctx \uplus \typctxtwo}{\val}{\mtype \uplus \mtypetwo}$ with 
		$\mstepsthree = \msteps + \mstepstwo$ and $\estepsthree = \esteps + \estepstwo$.
	\end{enumerate}
\end{lemma}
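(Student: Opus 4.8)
The plan is to prove both parts by inspecting the shape of derivations of values, with no induction required. Recall that a value $\val$ is an abstraction $\la\var\tm$, and that in the \cbv system the $\many$ rule is the \emph{only} rule that can assign a multi type to an abstraction: the rule $\fun$ assigns a linear type (so it cannot be the last rule of a derivation whose conclusion is a multi type), while $\ax$, $\app$ and $\esrule$ do produce multi types but apply respectively to variables, applications and explicit substitutions, none of which is an abstraction. This inversion observation is essentially the only ingredient I will use, and it makes the lemma the constructive dual of \reflemma{value-typing}.

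For Part~1, I would simply instantiate the $\many$ rule on $\val$ with the empty index set $J = \emptyset$. With no premises, the rule yields conclusion type $\MSigma{\type_i}{i \in \emptyset} = \emptymset$, the empty type context $\bigmplus_{i \in \emptyset}\typctx_i$, and indices $\sum_{i \in \emptyset}\msteps_i = 0 = \sum_{i \in \emptyset}\esteps_i$. This application is legitimate precisely because $\val$ is an abstraction, as demanded by $\many$, and it produces the desired derivation $\Deri[(0,0)]{}{\val}{\emptymset}$.

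For Part~2, I would apply the inversion observation to both input derivations. Hence $\tderiv$ ends in a $\many$ rule with a family of premises $(\Deri[(\msteps_i,\esteps_i)]{\typctx_i}{\val}{\type_i})_{i \in J}$ for some finite $J$, where $\mtype = \MSigma{\type_i}{i \in J}$, $\typctx = \bigmplus_{i \in J}\typctx_i$, $\msteps = \sum_{i \in J}\msteps_i$ and $\esteps = \sum_{i \in J}\esteps_i$; and symmetrically $\tderivtwo$ ends in a $\many$ rule with premises $(\Deri[(\mstepstwo_k,\estepstwo_k)]{\typctxtwo_k}{\val}{\typetwo_k})_{k \in K}$, where I may assume $K$ disjoint from $J$ up to renaming. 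I would then glue the two families into a single $\many$ rule over the index set $J \uplus K$. Its conclusion carries the type $\MSigma{\type_i}{i \in J} \mplus \MSigma{\typetwo_k}{k \in K} = \mtype \mplus \mtypetwo$, the type context $\bigmplus_{i \in J}\typctx_i \mplus \bigmplus_{k \in K}\typctxtwo_k = \typctx \mplus \typctxtwo$, and the indices $\msteps + \mstepstwo$ and $\esteps + \estepstwo$, exactly as required.

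I do not expect any genuine obstacle. The only points requiring care are the inversion step itself---making explicit that an abstraction with a multi-type conclusion can only be typed by $\many$---and the bookkeeping that renames the two premise families over disjoint index sets before merging them. Everything else is a straightforward reassembly of $\many$-premises, and the additivity of the indices and of the contexts then holds by the very definition of the $\many$ rule.
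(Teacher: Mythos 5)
Your proof is correct and is exactly the intended argument: the paper treats this lemma as a straightforward consequence of the shape of the rules (as with the analogous \cbn merging lemma, whose proof is ``omitted because straightforward''), and the inversion-plus-gluing you describe is that argument. Both parts check out: applying $\many$ with $J = \emptyset$ gives Part~1, and since $\many$ is the only rule that can conclude a multi type for an abstraction, concatenating the two premise families under a single $\many$ rule yields Part~2 with precisely the required type context, type, and indices.
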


%\begin{lemma}[Linear removal for \cbv]
%\label{l:value-linear-anti-substitution}
%	If $\tyjp{(\mstepsthree_0,\estepsthree_0)}{\val}{\typctx_{0}'', \vartwo_1 \col \mtypetwo_1'', \dots, \vartwo_n \col \mtypetwo_n''}{\mtypethree''}$ and $\tyjp{(\mstepstwo + \mstepstwo_0, \estepstwo+\estepstwo_0)} {\cbvctx\cwc{\val}}{\typctxtwo \mplus \typctx_0', \var \col \mtypethree'', \vartwo_1 \col \mtypetwo_1', \dots, \vartwo_n \col \mtypetwo_n'}{\mtype}$ are \cbvsym-derivable,
%	then $\tyjp{(\msteps_0, \esteps_0)}{\val}{\typctx_0, \vartwo_1 \col \mtypetwo_1, \dots, \vartwo_n \col \mtypetwo_n}{\mtypethree}$ and $\tyjp{(\msteps',\esteps')}{\cbvctx\cwc{\var}}{\typctxtwo, \var \col \mtypethree}{\mtype}
%	$ are \cbv-derivable, where $\mtypethree = \mtypethree' \mplus \mtypethree''$ and $\typctx_{0} = \typctx_{0}' \mplus \typctx_{0}''$ and $\mtypetwo_i = \mtypetwo_i' \mplus \mtypetwo_i''$ for all $1 \leq i \leq n$.
%\end{lemma}

\gettoappendix{l:value-linear-anti-substitution}
% !TEX root = ../../main.tex
\begin{proof} 
	By induction on $\cbvctx$.
	Cases:
	\begin{itemize}
		\item \emph{Hole}, \ie $\cbvctx \defeq \ctxhole$:
		then, $\cbvctxp{\val} = \val$ and $\cbvctxp{\var} = \var$.
		Since $\var \notin \fv{\val}$, then $\mtype = \emptymset$ according to \reflemma{value-typctx-varocc}.
		Let $\mtype' \defeq \mtypetwo$ and $\typctx' \defeq \typctx$ and $\typctxtwo$ be such that $\dom{\typctxtwo} = \emptyset$: hence, $\typctx  = \typctx' \mplus \typctxtwo$.
		Let $\tderiv' \defeq \tderiv$ and $\tderivtwo$ be the following derivation
		\begin{equation*}
			\infer[\ax]{\Deri[(0,1)]{\var \col \mtypetwo}{\var}{\mtypetwo}}{}
		\end{equation*}
		
		Thus, $\tderivtwo \exder[\cbvsym] \Deri[(\mstepsthree,\estepsthree)]{\typctxtwo, \var \col \mtype \mplus \mtype'}{\cbvctx\cwc{\var}}{\mtypetwo}$ with $(\mstepsthree, \estepsthree) \defeq (0,1)$, because $\typctxtwo, \var \col \mtype \mplus \mtype' = \var \col \mtype' = \var \col \mtypetwo$;
		and $\tderiv' \exder[\cbvsym] \Deri[(\msteps, \esteps)]{\typctx'}{\val}{\mtype}$ because $\typctx, \var \col \mtype = \typctx'$.
		Moreover, $(\msteps, \esteps) = (\msteps + 0, \esteps + 1 - 1) = (\msteps + \mstepsthree, \esteps + \estepsthree - 1)$.
				
		\item \emph{Left application}, \ie $\cbvctx = \cbvctxtwo\tm$:
		then, $\cbvctx\cwc{\var} = \cbvctxtwo\cwc{\var} \tm$ and $\cbvctx\cwc{\val} = \cbvctxtwo\cwc{\val} \tm$.
		So, $\tderiv$ has the form 
		\begin{equation*}
			\infer[\app]{
				\Deri[(\msteps,\esteps)]{\typctx,\var \col \mtype}{\cbvctxtwo\cwc{\val} \tm}{\mtypetwo}
			}{
				\tderiv_1 \exder[\cbvsym]\Deri[(\msteps_1,\esteps_1)]{\typctx_1,\var \col \mtype_1}{\cbvctxtwo\cwc{\val}}{\single{\ty{\mtypethree}{\mtypetwo}}}
				\qquad
				\tderiv_2 \exder[\cbvsym]\Deri[(\msteps_2,\esteps_2)]{\typctx_2,\var \col \mtype_2}{ \tm}{\mtypethree}
			}
		\end{equation*}
		where $\typctx \defeq \typctx_1 \mplus \typctx_2$ and $\mtype \defeq \mtype_1 \mplus \mtype_2$ and $(\msteps, \esteps) \defeq (\msteps_1 +\msteps_2+1, \esteps_1+\esteps_2)$.
		By \ih, there exist a multi type $\mtype'$, two type contexts $\typctx_1'$ and $\typctxtwo_1$, and two derivations $\tderivtwo_1 \exder[\cbvsym] \Deri[(\mstepsthree_1,\estepsthree_1)]{\typctxtwo_1, \var \col \mtype_1 \mplus \mtype'}{\cbvctxtwo\cwc{\var}}{\single{\ty{\mtypethree}{\mtypetwo}}}$ and $\tderiv' \exder[\cbvsym] \Deri[(\mstepstwo,\estepstwo)]{\typctx'}{\val}{\mtype'}$ such that $\typctx_1 =\typctx' \mplus \typctxtwo_1$ ad $(\msteps_1,\esteps_1) = (\mstepstwo+\mstepsthree_1,\estepstwo+\estepsthree_1-1)$.
		We can then construct the following derivation $\tderivtwo$
		\begin{equation*}
			\infer[\app]{
				\Deri[(\mstepsthree_1 + \msteps_2+1,\estepsthree_1 + \esteps_2)]{\typctxtwo_1 \mplus \typctx_2,\var \col \mtype_1 \mplus \mtype' \mplus \mtype_2}{\cbvctxtwo\cwc{\var} \tm}{\mtypetwo}
			}{
				\tderivtwo_1 \exder[\cbvsym] \Deri[(\mstepsthree_1,\estepsthree_1)]{\typctxtwo_1, \var \col \mtype_1 \mplus \mtype'}{\cbvctxtwo\cwc{\var}}{\single{\ty{\mtypethree}{\mtypetwo}}}
				\qquad
				\tderiv_2 \exder[\cbvsym]\Deri[(\msteps_2,\esteps_2)]{\typctx_2,\var \col \mtype_2}{ \tm}{\mtypethree}
			}			
		\end{equation*}
		where $\mtype_1 \mplus \mtype' \mplus \mtype_2 = \mtype \mplus \mtype'$. If we set $\typctxtwo \defeq \typctxtwo_1 \mplus \typctx_2$ and $(\mstepsthree, \estepsthree) \defeq (\mstepsthree_1 + \msteps_2+1,\estepsthree_1 + \esteps_2)$, then
		we have $\typctx' \mplus \typctxtwo = \typctx' \mplus \typctxtwo_1 \mplus \typctx_2  = \typctx_1 \mplus \typctx_2 = \typctx$ and  $(\mstepstwo+\mstepsthree, \estepstwo+\estepsthree - 1) = (\mstepstwo + \mstepsthree_1 + \msteps_2+1, \estepstwo + \estepsthree_1+\esteps_2 - 1) = (\msteps_1 + \msteps_2 + 1, \esteps_1+\esteps_2) = (\msteps, \esteps)$, as required.

		\item \emph{Right application}, \ie $\cbvctx = \tm\cbvctxtwo$:
		then, $\cbvctx\cwc{\var} = \tm\cbvctxtwo\cwc{\var}$ and $\cbvctx\cwc{\val} = \tm\cbvctxtwo\cwc{\val}$.
		So, $\tderiv$ has the form 
		\begin{equation*}
		\infer[\app]{
			\Deri[(\msteps,\esteps)]{\typctx,\var \col \mtype}{\tm\cbvctxtwo\cwc{\val}}{\mtypetwo}
		}{
			\tderiv_1 \exder[\cbvsym]\Deri[(\msteps_1,\esteps_1)]{\typctx_1,\var \col \mtype_1}{\tm}{\single{\ty{\mtypethree}{\mtypetwo}}}
			\qquad
			\tderiv_2 \exder[\cbvsym]\Deri[(\msteps_2,\esteps_2)]{\typctx_2,\var \col \mtype_2}{ \cbvctxtwo\cwc{\val}}{\mtypethree}
		}
		\end{equation*}
		where $\typctx \defeq \typctx_1 \mplus \typctx_2$ and $\mtype \defeq \mtype_1 \mplus \mtype_2$ and $(\msteps, \esteps) \defeq (\msteps_1 +\msteps_2+1, \esteps_1+\esteps_2)$.
		By \ih, there exist a multi type $\mtype'$, two type contexts $\typctx_2'$ and $\typctxtwo_2$, and two derivations $\tderivtwo_2 \exder[\cbvsym] \Deri[(\mstepsthree_2,\estepsthree_2)]{\typctxtwo_2, \var \col \mtype_2 \mplus \mtype'}{\cbvctxtwo\cwc{\var}}{\mtypethree}$ and $\tderiv' \exder[\cbvsym] \Deri[(\mstepstwo,\estepstwo)]{\typctx'}{\val}{\mtype'}$ such that $\typctx_2 =\typctx' \mplus \typctxtwo_2$ ad $(\msteps_2,\esteps_2) = (\mstepstwo+\mstepsthree_2,\estepstwo+\estepsthree_2-1)$.
		We can then construct the following derivation $\tderivtwo$
		\begin{equation*}
		\infer[\app]{
			\Deri[(\mstepsthree_2 + \msteps_1+1,\estepsthree_2 + \esteps_1)]{\typctxtwo_2 \mplus \typctx_1,\var \col \mtype_2 \mplus \mtype' \mplus \mtype_1}{\cbvctxtwo\cwc{\var} \tm}{\mtypetwo}
		}{
			\tderiv_1 \exder[\cbvsym]\Deri[(\msteps_1,\esteps_1)]{\typctx_1,\var \col \mtype_1}{ \tm}{\single{\ty{\mtypethree}{\mtypetwo}}}
			\qquad
			\tderivtwo_2 \exder[\cbvsym] \Deri[(\mstepsthree_2,\estepsthree_2)]{\typctxtwo_2, \var \col \mtype_2 \mplus \mtype'}{\cbvctxtwo\cwc{\var}}{\mtypethree}
		}			
		\end{equation*}
		where $\mtype_1 \mplus \mtype' \mplus \mtype_2 = \mtype \mplus \mtype'$.
		If we set $\typctxtwo \defeq \typctxtwo_2 \mplus \typctx_1$ and $(\mstepsthree, \estepsthree) \defeq (\mstepsthree_2 + \msteps_1+1,\estepsthree_2 + \esteps_1)$, then
		we have $\typctx' \mplus \typctxtwo = \typctx' \mplus \typctxtwo_2 \mplus \typctx_1  = \typctx_1 \mplus \typctx_2 = \typctx$ and  $(\mstepstwo+\mstepsthree, \estepstwo+\estepsthree - 1) = (\mstepstwo + \mstepsthree_2 + \msteps_1+1, \estepstwo + \estepsthree_2+\esteps_1 - 1) = (\msteps_1 + \msteps_2 + 1, \esteps_1+\esteps_2) = (\msteps, \esteps)$, as required.
		
		\item \emph{Left explicit substitution}, \ie $\cbvctx = \cbvctxtwo \esub{\vartwo}\tm$:
		then, $\cbvctx\cwc{\var} = \cbvctxtwo\cwc{\var} \esub{\vartwo}\tm$ and $\cbvctx\cwc{\val} = \cbvctxtwo\cwc{\val} \esub{\vartwo}\tm$ where  $\vartwo \notin \fv{\val} \cup \{\var\}$.
		We can suppose without loss of generality that $\vartwo \notin \fv{\tm}$.
		So, $\tderiv$ has the form 
		\begin{equation*}
		\infer[\esrule]{
			\Deri[(\msteps,\esteps)]{\typctx,\var \col \mtype}{\cbvctxtwo\cwc{\val} \esub{\vartwo} \tm}{\mtypetwo}
		}{
			\tderiv_1 \exder[\cbvsym]\Deri[(\msteps_1,\esteps_1)]{\typctx_1,\var \col \mtype_1, \vartwo \col \mtypethree}{\cbvctxtwo\cwc{\val}}{\mtypetwo}
			\qquad
			\tderiv_2 \exder[\cbvsym]\Deri[(\msteps_2,\esteps_2)]{\typctx_2,\var \col \mtype_2}{ \tm}{\mtypethree}
		}
		\end{equation*}
		where $\typctx \defeq \typctx_1 \mplus \typctx_2$ and $\mtype \defeq \mtype_1 \mplus \mtype_2$ and $(\msteps, \esteps) \defeq (\msteps_1 +\msteps_2, \esteps_1+\esteps_2)$.
%		 and $\vartwo \notin \dom{\typctx_2}$ because of \reflemma{name-typctx-varocc}.
		By \ih, there exist a multi type $\mtype'$, two type contexts $\typctx_1'$ and $\typctxtwo_1$, and two derivations $\tderivtwo_1 \exder[\cbvsym] \Deri[(\mstepsthree_1,\estepsthree_1)]{\typctxtwo_1, \var \col \mtype_1 \mplus \mtype', \vartwo \col \mtypethree}{\cbvctxtwo\cwc{\var}}{\mtypetwo}$ and $\tderiv' \exder[\cbvsym] \Deri[(\mstepstwo,\estepstwo)]{\typctx'}{\val}{\mtype'}$ such that $\typctx_1 =\typctx' \mplus \typctxtwo_1$ ad $(\msteps_1,\esteps_1) = (\mstepstwo+\mstepsthree_1,\estepstwo+\estepsthree_1-1)$ (note that $\vartwo \notin \dom{\typctx'}$ because of \reflemma{value-typctx-varocc}, since $\vartwo \notin \fv{\val}$).
		We can then construct the following derivation $\tderivtwo$
		\begin{equation*}
		\infer[\esrule]{
			\Deri[(\mstepsthree_1 + \msteps_2,\estepsthree_1 + \esteps_2)]{\typctxtwo_1 \mplus \typctx_2,\var \col \mtype_1 \mplus \mtype' \mplus \mtype_2}{\cbvctxtwo\cwc{\var} \tm}{\mtypetwo}
		}{
			\tderivtwo_1 \exder[\cbvsym] \Deri[(\mstepsthree_1,\estepsthree_1)]{\typctxtwo_1, \var \col \mtype_1 \mplus \mtype', \vartwo \col \mtypethree}{\cbvctxtwo\cwc{\var}}{\mtypetwo}
			\qquad
			\tderiv_2 \exder[\cbvsym]\Deri[(\msteps_2,\esteps_2)]{\typctx_2,\var \col \mtype_2}{ \tm}{\mtypethree}
		}			
		\end{equation*}
		where $\mtype_1 \mplus \mtype' \mplus \mtype_2 = \mtype \mplus \mtype'$. If we set $\typctxtwo \defeq \typctxtwo_1 \mplus \typctx_2$ and $(\mstepsthree, \estepsthree) \defeq (\mstepsthree_1 + \msteps_2,\estepsthree_1 + \esteps_2)$, then
		we have $\typctx' \mplus \typctxtwo = \typctx' \mplus \typctxtwo_1 \mplus \typctx_2  = \typctx_1 \mplus \typctx_2 = \typctx$ and  $(\mstepstwo+\mstepsthree, \estepstwo+\estepsthree - 1) = (\mstepstwo + \mstepsthree_1 + \msteps_2, \estepstwo + \estepsthree_1+\esteps_2 - 1) = (\msteps_1 + \msteps_2, \esteps_1+\esteps_2) = (\msteps, \esteps)$, as required.
		
		\item \emph{Right explicit substitution}, \ie $\cbvctx = \tm\esub{\vartwo}\cbvctxtwo$:
		then, $\cbvctx\cwc{\var} = \tm\esub{\vartwo}{\cbvctxtwo\cwc{\var}}$ and $\cbvctx\cwc{\val} = \tm\esub{\vartwo}{\cbvctxtwo\cwc{\val}}$.
		So, $\tderiv$ has the form 
		\begin{equation*}
		\infer[\esrule]{
			\Deri[(\msteps,\esteps)]{\typctx,\var \col \mtype}{\tm \esub{\vartwo}{\cbvctxtwo\cwc{\val}}}{\mtypetwo}
		}{
			\tderiv_1 \exder[\cbvsym]\Deri[(\msteps_1,\esteps_1)]{\typctx_1,\var \col \mtype_1, \vartwo \col \mtypethree}{\cbvctxthree\cwc\vartwo}{\mtypetwo}
			\qquad
			\tderiv_2 \exder[\cbvsym]\Deri[(\msteps_2,\esteps_2)]{\typctx_2,\var \col \mtype_2}{ \cbvctxtwo\cwc{\val}}{\mtypethree}
		}
		\end{equation*}
		where $\typctx \defeq \typctx_1 \mplus \typctx_2$ and $\mtype \defeq \mtype_1 \mplus \mtype_2$ and $(\msteps, \esteps) \defeq (\msteps_1 +\msteps_2, \esteps_1+\esteps_2)$.
		By \ih, there exist a multi type $\mtype'$, two type contexts $\typctx_2'$ and $\typctxtwo_2$, and two derivations $\tderivtwo_2 \exder[\cbvsym] \Deri[(\mstepsthree_2,\estepsthree_2)]{\typctxtwo_2, \var \col \mtype_2 \mplus \mtype', \vartwo \col \mtypethree}{\cbvctxtwo\cwc{\var}}{\mtypethree}$ and $\tderiv' \exder[\cbvsym] \Deri[(\mstepstwo,\estepstwo)]{\typctx'}{\val}{\mtype'}$ such that $\typctx_2 =\typctx' \mplus \typctxtwo_2$ ad $(\msteps_2,\esteps_2) = (\mstepstwo+\mstepsthree_2,\estepstwo+\estepsthree_2-1)$ (note that $\vartwo \notin \dom{\typctx'}$ because of \reflemma{value-typctx-varocc}, since $\vartwo \notin \fv{\val}$).
		We can then construct the following derivation $\tderivtwo$
		\begin{equation*}
		\infer[\esrule]{
			\Deri[(\mstepsthree_2 + \msteps_1+1,\estepsthree_2 + \esteps_1)]{\typctxtwo_2 \mplus \typctx_1,\var \col \mtype_2 \mplus \mtype' \mplus \mtype_1}{\tm \esub{\vartwo}{\cbvctxtwo\cwc{\var}}}{\mtypetwo}
		}{
			\tderiv_1 \exder[\cbvsym]\Deri[(\msteps_1,\esteps_1)]{\typctx_1,\var \col \mtype_1, \vartwo \col \mtypethree}{\tm}{\mtypetwo}
			\qquad
			\tderivtwo_2 \exder[\cbvsym] \Deri[(\mstepsthree_2,\estepsthree_2)]{\typctxtwo_2, \var \col \mtype_2 \mplus \mtype'}{\cbvctxtwo\cwc{\var}}{\mtypethree}
		}			
		\end{equation*}
		where $\mtype_1 \mplus \mtype' \mplus \mtype_2 = \mtype \mplus \mtype'$.
		If we set $\typctxtwo \defeq \typctxtwo_2 \mplus \typctx_1$ and $(\mstepsthree, \estepsthree) \defeq (\mstepsthree_2 + \msteps_1,\estepsthree_2 + \esteps_1)$, then
		we have $\typctx' \mplus \typctxtwo = \typctx' \mplus \typctxtwo_2 \mplus \typctx_1  = \typctx_1 \mplus \typctx_2 = \typctx$ and  $(\mstepstwo+\mstepsthree, \estepstwo+\estepsthree - 1) = (\mstepstwo + \mstepsthree_2 + \msteps_1, \estepstwo + \estepsthree_2+\esteps_1 - 1) = (\msteps_1 + \msteps_2, \esteps_1+\esteps_2) = (\msteps, \esteps)$, as required.
		\qed
		\end{itemize}
\end{proof}

\gettoappendix{prop:value-subject-expansion}
% !TEX root = ../../main.tex
\begin{proof}
	By induction on the reduction relation $\tocbv$, with the root rules $\rtom$ and $\rtoecbv$ as the base case, and the closure by $\cbv$ contexts of $\rtocbv \,\defeq\, \rtom \cup \rtoecbv$ as the inductive one.
	\begin{itemize}
		\item \emph{Root step for $\tomcbv\!$} \ie $\tm \defeq \sctxp{\la{\var}{\tmthree}} \tmfour \rtom \sctxp{\tmthree \esub{\var}{\tmfour}} \eqdef \tm'$ where $\sctx \defeq \esub{\vartwo_1}{\tmtwo_1}\dots \esub{\vartwo_n}{\tmtwo_n}$ for some $n \geq 0$.
		We proceed by induction on $n \in \nat$.
		
		If $n = 0$ then $\sctx = \ctxhole$ and so $\tm = \sctxp{\la{\var}\tmthree} \tmfour = (\la{\var}\tmthree) \tmfour$ and $\tm' = \sctxp{\tmthree \esub{\var}{\tmfour}} = \tmthree \esub{\var}{\tmfour}$. 
		Hence, $\tderiv'$ has the form
		$$
		\infer[\esrule]{
			\tyjp{(\msteps'+\msteps'',\esteps'+\esteps'')}{\tmthree\esub{\var}{\tmfour}}{\typctx}{\mtype}}
			{
				\tderivtwo \exder[\cbvsym] \tyjp{(\msteps',\esteps')}{\tmthree}{\typctxtwo, \var \col \mtypethree}{\mtype}
				\qquad
				\tderivthree \exder[\cbvsym] \tyjp{(\msteps'',\esteps'')}{\tmfour}{\typctxthree}{\mtypethree}
			}
		$$
		where $\typctx \defeq \typctxtwo \mplus \typctxthree$ and $\msteps \defeq \mstepstwo + \mstepsthree $ and $\esteps \defeq \estepstwo + \estepsthree$.
		We can then construct the following typing derivation $\tderiv$:
		$$
		\infer[\app]
		{\tyjp{(1 + \msteps' + \msteps'',\esteps' + \esteps'')}{(\la{\var}{\tmthree})\tmfour}{\typctx }{\mtype}}
			{\infer
				[\many]
				{\tyjp{(\msteps',\esteps')}{\la{\var}{\tmthree}}{\typctxtwo}{\mult{\ty{\mtypethree}{\mtype}}}}
				{\infer
					[\fun]
					{\tyjp{(\mstepstwo,\estepstwo)}{\la{\var}{\tmthree}}{\typctxtwo}{\ty{\mtypethree}{\mtype}}}
					{\tderivtwo \exder[\cbvsym]\tyjp{(\mstepstwo,\estepstwo)}{\tmthree}{\typctxtwo, \var \col \mtypethree}{\mtype}}}
			\qquad
			\tderivthree \exder[\cbvsym]{\tyjp{(\mstepsthree,\estepsthree)}{\tmfour}{\typctxthree}{\mtypethree}
			}
		}
		$$			
		where $(1 + \msteps' + \msteps'', \esteps' + \esteps'') = (\msteps + 1, \esteps)$.
		
		Suppose now $n >  0$. 
		Let $\sctxtwo \defeq \esub{\vartwo_1}{\tmtwo_1} \dots \esub{\vartwo_{n-1}}{\tmtwo_{n-1}}$: then, $\tm = \sctxp{\la{\var}\tmthree} \tmfour = \sctxtwop{\la{\var}\tmthree}\esub{\vartwo_n}{\tmtwo_n} \tmfour$ and $\tm' = \sctxp{\tmthree \esub{\var}{\tmfour}} = \sctxptwo{\tmthree \esub{\var}{\tmfour}}\esub{\vartwo_n}{\tmtwo_n}$. 
		Hence, $\tderiv'$ has the form 
%		(as $\vartwo_n \notin \dom{\typctx_0'}$ by \reflemma{name-typctx-varocc}, since $\vartwo_n \notin \fv{\tmfour}$)
		\begin{equation*}
		\infer[\esrule]{
			\Deri[(\msteps,\esteps)]{\typctx}{\sctxptwo{\tmthree \esub{\var}{\tmfour}}\esub{\vartwo_n}{\tmtwo_n}}{\mtype}
		}
		{ 
%			\infer[\esrule]{
				\tderivtwo' \exder[\cbvsym] \Deri[(\mstepstwo, \estepstwo)]{\typctx', \vartwo_n \col \mtypetwo_n}{\sctxtwop{\tmthree\esub{\var}{\tmfour}}}{\mtype}
%			}{
%				\Deri[(,)]{\var \col \mtypethree, \vartwo_n \col \mtypetwo_n}{\sctxtwop{\tmthree}}{\mtype}
%				\qquad
%				\Deri[(,)]{}{\tmfour}{\mtypethree}
%			}
			\qquad
			\tderivtwo_n \exder[\cbvsym] \Deri[(\msteps_n,\esteps_n)]{\typctx_n}{\tmtwo_n}{\mtypetwo_n}
		}
		\end{equation*}
		where $\typctx \defeq \typctx' \mplus \typctx_n$ and $(\msteps, \esteps) \defeq (\mstepstwo + \msteps_n, \estepstwo+\esteps_n)$.
		By \ih applied to $\tderivtwo'$ (since $\sctxtwop{\la{\var}\tmthree}\tmfour \rtom \sctxtwop{\tmthree\esub{\var}{\tmfour}}$), there exists a derivation with conclusion $\Deri[(\mstepstwo+1,\estepstwo)]{\typctx', \vartwo_n \col \mtypetwo_n}{\sctxtwop{\la{\var}\tmthree}\tmfour}{\mtype}$, which necessarily has the form (as $\vartwo_n \notin \dom{\typctx_0'}$ by \reflemma{value-typctx-varocc}, since $\vartwo_n \notin \fv{\tmfour}$)
		\begin{equation*}
		\infer[\esrule]{
		\Deri[(\mstepstwo, \estepstwo)]{\typctx', \vartwo_n \col \mtypetwo_n}{\sctxtwop{\la{\var}\tmthree}{\tmfour}}{\mtype}
					}{
						\tderivtwo \exder[\cbvsym] \Deri[(\mstepsthree,\estepsthree)]{\typctxtwo, \var \col \mtypethree, \vartwo_n \col \mtypetwo_n}{\sctxtwop{\la{\var}\tmthree}}{\mtype}
						\qquad
						\tderivthree \exder[\cbvsym] \Deri[(\mstepstwo_0,\estepstwo_0)]{\typctx_0'}{\tmfour}{\mtypethree}
					}
		\end{equation*}
		where $\typctx' \defeq \typctxtwo \mplus \typctx_0'$ and $(\mstepstwo, \estepstwo) = (\mstepsthree+\mstepstwo_0, \estepsthree+\estepstwo_0)$.
		Therefore, we can construct the following derivation $\tderiv$:
		{\small
		\begin{equation*}
		\infer[\app]{
			\Deri[(\mstepsthree + \msteps_n + \mstepstwo_0 + 1,\estepsthree + \esteps_n + \estepstwo_0)] {\typctxtwo \mplus \typctx_n \mplus \typctx_0'}{\sctxp{\la{\var}\tmthree}{\tmfour}}{\mtype}
		}
		{
			\infer[\esrule]{
				\Deri[(\mstepsthree + \msteps_n, \estepsthree + \esteps_n)]{\typctxtwo \mplus \typctx_n, \var \col \mtypethree}{\sctxp{\la{\var}\tmthree}}{\mtype}
			}{
				\tderivtwo \exder[\cbvsym] \Deri[(\mstepsthree,\estepsthree)]{\typctxtwo, \var \col \mtypethree, \vartwo_n \col \mtypetwo_n}{\sctxtwop{\la{\var}\tmthree}}{\mtype}
				\qquad
	%			\tderivthree \exder[\cbvsym] \Deri[(\mstepstwo_0,\estepstwo_0)]{\typctx_0'}{\tmfour}{\mtypethree}
				\tderivtwo_n \exder[\cbvsym] \Deri[(\msteps_n,\esteps_n)]{\typctx_n}{\tmtwo_n}{\mtypetwo_n}
			}
			\qquad
			\tderivthree \exder[\cbvsym] \Deri[(\mstepstwo_0,\estepstwo_0)]{\typctx_0'}{\tmfour}{\mtypethree}
		}
		\end{equation*}
	}
	where $ \typctxtwo \mplus \typctx_n \mplus \typctx_0' = \typctx' \mplus \typctx_n = \typctx$	and $(\mstepsthree + \msteps_n + \mstepstwo_0 + 1, \estepsthree + \esteps_n + \estepstwo_0) = (\mstepstwo+\msteps_n+1, \estepstwo+\esteps_n + 1)= (\msteps + 1, \esteps)$.

		\item \emph{Root step for $\!\!\toecbv\!\!\!$} \ie $\tm \defeq \cbvctx\cwc{\var} \esub{\var}{\sctx \hole{\val}} \rtoecbv \sctxp{\cbvctx\cwc{\val} \esub{\var}{\val}} \eqdef \tm'$ with $\sctx \defeq \esub{\vartwo_1}{\tmtwo_1}\dots \esub{\vartwo_n}{\tmtwo_n}$ for some $n \geq 0$. 
		We proceed by induction on $n \in \nat$.
		
		\indent If $n = 0$ then $\sctx = \ctxhole$ and so $\tm = \sctxp{\val} = \val$ and $\tm' = \sctxp{\cbvctx\cwc{\val} \esub{\var}{\val}} = \cbvctx\cwc{\val} \esub{\var}{\val}$. 
		Hence, $\tderiv'$ has the form
		\begin{equation*}
			\infer[\esrule]{
			\tyjp{(\msteps, \esteps)}{\cbvctx\cwc\val\esub{\var}{\val}}{\typctx}{\mtype}
			}
			{
				\tderivtwo_0 \exder[\cbvsym] \Deri[(\msteps_0, \esteps_0)]{\typctx_0, \var \col \mtypetwo}{\cbvctx\cwc{\val}}{\mtype}
					\qquad
				\tderivthree_1 \exder[\cbvsym] \Deri[(\msteps_1,\esteps_1)]{\typctx_1}{\val}{\mtypetwo}
			}
		\end{equation*}
	
		\noindent where $\typctx \defeq \typctx_0 \mplus \typctx_1$, and $\msteps \defeq  \msteps_0 + \msteps_1 $ and $\esteps \defeq \esteps_0 + \esteps_1$.
		By linear removal (\reflemma{value-linear-anti-substitution}), there are a multi type $\mtypetwo'$, two type contexts $\typctx_0'$ and $\typctxtwo$ and two derivations $\tderivthree' \exder[\cbvsym] \Deri[(\mstepstwo_0,\mstepstwo_0)]{\typctx_0'}{\val}{\mtypetwo'}$ and 
		$\tderivtwo \exder[\cbvsym] \Deri[(\mstepsthree,\estepsthree)]{\typctxtwo, \var \col \mtypetwo \mplus \mtypetwo'}{\cbvctx\cwc{\val}}{\mtype}$ 
		such that $\typctx_0 = \typctx_0' \mplus \typctxtwo$ and $(\msteps_0, \esteps_0) = (\mstepstwo_0 + \mstepsthree,  \estepstwo_0 + \estepsthree -1)$.
		Note that $\var \notin \dom{\typctx_0'}$ by \reflemma{value-typctx-varocc}, since $\var \notin \fv{\val}$. 
		By merging of multitypes (\reflemmap{value-typability}{merg}), there is a derivation $\tderivthree \exder[\cbvsym] \Deri[(\mstepstwo_0+\msteps_1,\estepstwo_0+\esteps_1)]{\typctx_0' \mplus \typctx_1 }{\val}{\mtypetwo \mplus \mtypetwo'}$.
		We can construct the following  derivation $\tderiv$:
		$$
		\infer
		[\esrule]
		{\tyjp{(\mstepsthree + \mstepstwo_0 + \msteps_1, \estepsthree + \estepstwo_0 + \esteps_1)}{\cbvctx \cwc{\var} \esub{\var}{\sctxp{\val}}}{\typctxtwo \mplus \typctx_0' \mplus \typctx_1}{\mtype}}
		{\tderivtwo \exder[\cbvsym] \Deri[(\mstepsthree,\estepsthree)]{\typctxtwo, \var \col \mtypetwo \mplus \mtypetwo'}{\cbvctx\cwc{\var}}{\mtype}
			\qquad
			\tderivthree \exder[\cbvsym] \Deri[(\mstepstwo_0+\msteps_1, \esteps_0'+\esteps_1)]{\typctx_0' \mplus \typctx_1}{\val}{\mtypetwo \mplus \mtypetwo'}
		}
		$$
		where $\typctxtwo \mplus \typctx_0' \mplus \typctx_1 = \typctx_0 \mplus \typctx_1 = \typctx$ and $(\mstepsthree+\mstepstwo_0+\msteps_1, \estepsthree+\estepstwo_0+\esteps_1) = (\msteps_0 + \msteps_1, \esteps_0 + 1 + \esteps_1) = (\msteps,\esteps+1)$. 	
		
		\indent Suppose now $n > 0$.
		Let $\sctxtwo \defeq \esub{\vartwo_1}{\tmtwo_1} \dots \esub{\vartwo_{n-1}}{\tmtwo_{n-1}}$: then, $\tm = \sctxp{\val} = \sctxtwop{\val}\esub{\vartwo_n}{\tmtwo_n}$ and $\tm' = \sctxp{\cbvctx\cwc{\val} \esub{\var}{\val}} = \sctxptwo{\cbvctx\cwc{\val} \esub{\var}{\val}}\esub{\vartwo_n}{\tmtwo_n}$. 
		Hence, $\tderiv'$ has the form 
		\begin{equation*}
		\infer[\esrule]{
			\tyjp{(\msteps, \esteps)}{\sctxp{\cbvctx\cwc{\val} \esub{\var}{\val}}}{\typctx}{\mtype}
		}
		{
			\tderivtwo' \exder[\cbvsym] \Deri[(\msteps_0, \esteps_0)]{\typctx_0, \vartwo_n \col \mtypetwo_n}{\sctxtwop{\cbvctx\cwc{\val}\esub{\var}{\val}}}{\mtype}
			\qquad
			\tderivthree_n \exder[\cbvsym] \Deri[(\msteps_n,\esteps_n)]{\typctx_n}{\tmtwo_n}{\mtypetwo_n}
		}
		\end{equation*}
		
		\noindent where $\typctx \defeq \typctx_0 \mplus \typctx_n$ and $(\msteps,\esteps) = (\msteps_0 + \msteps_n, \esteps_0+\esteps_n)$.
		By \ih applied to $\tderivtwo'$ (since $\cbvctx\cwc{\var} \esub{\var}{\sctxtwop{\val}} \rtoecbv \sctxtwop{\cbvctx\cwc{\val}\esub{\var}{\val}}$), there exists a derivation with conclusion $\Deri[(\msteps_0, \esteps_0+1)]{\typctx_0, \vartwo_n \col \mtypetwo_n}{\cbvctx\cwc{\var}\esub{\var}{\sctxtwop{\val}}}{\mtype}$,
		which necessarily has the form (as $\vartwo_n \notin \dom{\typctx_0'}$ by \reflemma{value-typctx-varocc}, since $\vartwo_n \notin \fv{\cbvctx\cwc{\var}}$)
		\begin{equation*}
		\infer[\esrule]{
			\tyjp{(\msteps_0, \esteps_0+1)}{\cbvctx\cwc{\var} \esub{\var}{\sctxtwop{\val}}}{\typctx_0, \vartwo_n \col \mtypetwo_n}{\mtype}
		}
		{
			\tderivtwo \exder[\cbvsym] \Deri[(\mstepstwo_0, \estepstwo_0)]{\typctx_0', \var \col \mtypetwo}{\cbvctx\cwc{\var}}{\mtype}
			\qquad
			\tderivtwo'' \exder[\cbvsym] \Deri[(\mstepsthree_0,\estepsthree_0)]{\vartwo_n \col \mtypetwo_n, \typctx_0''}{\sctxtwop{\val}}{\mtypetwo}
		}
		\end{equation*}
		where $\typctx_0 = \typctx_0' \mplus \typctx_0''$ and $(\msteps_0, \esteps_0+1) = (\mstepstwo_0+ \mstepsthree_0, \estepstwo_0+ \estepsthree_0)$.
		Therefore, we can construct the following derivation $\tderiv$:
		{\small
		\begin{equation*}
		\infer[\esrule]{
			\tyjp{(\mstepstwo_0+\mstepsthree+\msteps_n, \estepstwo_0+\estepsthree+\esteps_n)}{\cbvctx\cwc{\var} \esub{\var}{\sctxp{\val}}}{\typctx_0' \mplus \typctx_0'' \mplus \typctx_n}{\mtype}
		}
		{
			\tderivtwo \exder[\cbvsym] \Deri[(\mstepstwo_0, \estepstwo_0)]{\typctx_0', \var \col \mtypetwo}{\cbvctx\cwc{\var}}{\mtype}
			\quad
			\infer[\esrule]{
				\Deri[(\mstepsthree_0+\msteps_n, \estepsthree_0+\esteps_n)]{\typctx_0'' \mplus \typctx_n}{\sctxp{\val}}{\mtypetwo}
			}
			{ 
				\tderivtwo'' \exder[\cbvsym\!\!] \Deri[(\mstepsthree_0,\estepsthree_0)]{\vartwo_n \col \mtypetwo_n, \typctx_0''}{\sctxtwop{\val}}{\mtypetwo}
				\quad
				\tderivthree_n \exder[\cbvsym\!\!] \Deri[(\msteps_n,\esteps_n)]{\typctx_n}{\tmtwo_n}{\mtypetwo_n}
			}
		}
		\end{equation*}}
		where $\typctx_0' \mplus \typctx_0'' \mplus \typctx_n = \typctx_0 \mplus \typctx_n = \typctx$ and $(\mstepstwo_0+\mstepsthree_0+\msteps_n, \estepstwo_0+\estepsthree_0+\esteps_n) = (\msteps_0+\msteps_n,\esteps_0+1+\esteps_n) = (\msteps, \esteps+1)$. 

		\item \emph{Application left}, \ie $\tm \defeq \tmtwo\tmthree \torule  \tmtwo'\tmthree \eqdef \tm'$ with $\tmtwo \torule \tmtwo'$ and $\Rule \in \{\mcbv, \ecbv\}$.
		So, $\tderiv'$ has the form 
		\begin{equation*}
		\infer[\app]{
			\tyjp{(\mstepstwo + \msteps_2 + 1, \estepstwo + \esteps_2)}{\tm'}{\typctx_1 \mplus \typctx_2}{\mtype}
		}
		{
			\tderivtwo_1 \exder[\cbvsym] \Deri[(\mstepstwo, \estepstwo)]{\typctx_1}{\tmtwo'}{[\ty{\mtypetwo}{\mtype}]}
			\qquad
			\tderivtwo_2 \exder[\cbvsym] \tyjp{(\msteps_2, \esteps_2)}{\tmthree}{\typctx_2}{\mtypetwo}
		}
		\end{equation*}
		where $\typctx \defeq \typctx_1 \mplus \typctx_2$ and $\msteps \defeq \mstepstwo + \msteps_2 + 1$ and $\esteps \defeq \estepstwo + \esteps_2$.
		By \ih applied to $\tderivtwo_1$, there is a derivation $\tderivtwo\exder[\cbvsym] \Deri[(\msteps_1, \esteps_1)]{\typctx_1}{\tmtwo}{[\ty{\mtypetwo}{\mtype}]}$ where 
		\begin{itemize}
			\item $\msteps_1 \defeq \mstepstwo + 1$ and $\esteps_1 \defeq \estepstwo$ if $\Rule = \mcbv$,
			\item $\msteps_1 \defeq \msteps_1$ and $\esteps_1 \defeq \estepstwo + 1$ if $\Rule = \ecbv$.
		\end{itemize}
		Thus, we have the derivation $\tderiv$
		\begin{equation*}
		\infer[\app]{
			\tyjp{(\msteps_1 + \msteps_2 +1,\esteps_1 + \esteps_2)}{\tm}{\typctx_1 \mplus \typctx_2}{\mtype}
		}
		{
			\tderivtwo \exder[\cbvsym] \tyjp{(\msteps_1,\esteps_1)}{\tmtwo}{\typctx_1}{[\ty{\mtypetwo}{\mtype}]}
			\qquad
			\tderivtwo_2 \exder[\cbvsym] \tyjp{(\msteps_2, \esteps_2)}{\tmthree}{\typctx_2}{\mtypetwo}
		}
		\end{equation*}
		where 
		\begin{itemize}
			\item $\msteps_1 +  \msteps_2+1 = \mstepstwo + 1 + \msteps_2 + 1 = \msteps + 1$ and $\esteps_1 + \esteps_2 = \estepstwo + \esteps_2 = \esteps$ if $\Rule = \mcbv$,
			\item $\msteps_1 + \msteps_2 + 1 = \mstepstwo + \msteps_2 + 1 = \msteps$ and $\esteps_1 + \esteps_2 = \estepstwo + 1 + \esteps_2 = \esteps + 1$ if $\Rule = \ecbv$.
		\end{itemize}
		
		\item \emph{Application right}, \ie $\tm \defeq \tmthree\tmtwo \torule  \tmthree\tmtwo' \eqdef \tm'$ with $\tmtwo \torule \tmtwo'$ and $\Rule \in \{\mcbv, \ecbv\}$.
		So, $\tderiv'$ has the form 
		\begin{equation*}
		\infer[\app]{
			\tyjp{(\msteps_1 + \mstepstwo + 1, \esteps_1 + \estepstwo)}{\tm'}{\typctx_1 \mplus \typctx_2}{\mtype}
		}
		{
			\tderivtwo_1 \exder[\cbvsym] \tyjp{(\msteps_1, \esteps_1)}{\tmthree}{\typctx_1}{[\ty{\mtypetwo}{\mtype}]}
			\qquad
			\tderivtwo_2 \exder[\cbvsym] \tyjp{(\mstepstwo, \estepstwo)}{\tmtwo'}{\typctx_2}{\mtypetwo}
		}
		\end{equation*}
		where $\typctx \defeq \typctx_1 \mplus \typctx_2$ and $\msteps \defeq \msteps_1 + \mstepstwo + 1$ and $\esteps \defeq \esteps_1 + \estepstwo$.
		By \ih applied to $\derivtwo_2$, there is a derivation $\tderivtwo\exder[\cbvsym] \tyjp{(\msteps_2, \esteps_2)}{\tmtwo}{\typctx_2}{\mtypetwo}$ where 
		\begin{itemize}
			\item $\msteps_2 \defeq \mstepstwo + 1$ and $\esteps_2 \defeq \estepstwo$ if $\Rule = \mcbv$,
			\item $\msteps_2 \defeq \mstepstwo$ and $\esteps_2 \defeq \estepstwo + 1$ if $\Rule = \ecbv$.
		\end{itemize}
		Thus, we have the derivation $\tderiv$
		\begin{equation*}
		\infer[\app]{
			\tyjp{(\msteps_1 + \msteps_2 +1,\esteps_1 + \esteps_2)}{\tm}{\typctx_1 \mplus \typctx_2}{\mtype}
		}
		{
			\tderivtwo_1 \exder[\cbvsym] \tyjp{(\msteps_1,\esteps_1)}{\tmthree}{\typctx_1}{[\ty{\mtypetwo}{\mtype}]}
			\qquad
			\tderivtwo \exder[\cbvsym] \tyjp{(\msteps_2, \esteps_2)}{\tmtwo}{\typctx_2}{\mtypetwo}
		}
		\end{equation*}
		where 
		\begin{itemize}
			\item $\msteps_1 + \msteps_2 + 1 = \msteps_1 + \mstepstwo + 1 + 1 = \msteps + 1$ and $\esteps_1 + \esteps_2 = \esteps_1 + \estepstwo = \esteps$ if $\Rule = \mcbv$,
			\item $\msteps_1 + \msteps_2 + 1 = \msteps_1 + \mstepstwo + 1 = \msteps$ and $\esteps_1 + \esteps_2 = \esteps_1 + \estepstwo + 1 = \esteps +1 1$ if $\Rule = \ecbv$.
		\end{itemize}
		
		\item \emph{Left explicit substitution}, \ie $\tm \defeq \tmtwo\esub\var\tmthree \torule  \tmtwo'\esub\var\tmthree \eqdef \tm'$ with $\tmtwo \torule \tmtwo'$ and $\Rule \in \{\mcbv, \ecbv\}$.
		So, $\tderiv'$ has the form 
		\begin{equation*}
		\infer[\esrule]{
			\tyjp{(\mstepstwo + \msteps_2, \estepstwo + \esteps_2)}{\tm'}{\typctx_1 \mplus \typctx_2}{\mtype}
		}
		{
			\tderivtwo_1 \exder[\cbvsym] \tyjp{(\mstepstwo, \estepstwo)}{\tmtwo'}{\typctx_1, \var \col \mtypetwo}{\mtype}
			\qquad
			\tderivtwo_2 \exder[\cbvsym] \tyjp{(\msteps_2, \esteps_2)}{\tmthree}{\typctx_2}{\mtypetwo}
		}
		\end{equation*}
		where $\typctx \defeq \typctx_1 \mplus \typctx_2$ and $\msteps \defeq \mstepstwo + \msteps_2$ and $\esteps \defeq \estepstwo + \esteps_2$.
		By \ih applied to $\tderivtwo_1$, there is a derivation $\tderivtwo \exder[\cbv] \tyjp{(\msteps_1, \esteps_1)}{\tmtwo}{\typctx_1, \var \col \mtypetwo}{\mtype}$ where 
		\begin{itemize}
			\item $\msteps_1 \defeq \mstepstwo + 1$ and $\esteps_1 \defeq \estepstwo$ if $\Rule = \mcbv$,
			\item $\msteps_1 \defeq \mstepstwo$ and $\esteps_1 \defeq \estepstwo + 1$ if $\Rule = \ecbv$.
		\end{itemize}
		Thus, we have the derivation $\tderiv$
		\begin{equation*}
		\infer[\esrule]{
			\tyjp{(\msteps_1 + \msteps_2,\esteps_1 + \esteps_2)}{\tm}{\typctx_1 \mplus \typctx_2}{\mtype}
		}
		{
			\tderivtwo \exder[\cbvsym] \tyjp{(\msteps_1,\esteps_1)}{\tmtwo}{\typctx_1, \var \col \mtypetwo}{\mtype}
			\qquad
			\tderivtwo_2 \exder[\cbvsym] \tyjp{(\msteps_2, \esteps_2)}{\tmthree}{\typctx_2}{\mtypetwo}
		}
		\end{equation*}
		where 
		\begin{itemize}
			\item $\msteps_1 +  \msteps_2 = \mstepstwo + 1 + \msteps_2 = \msteps + 1$ and $\esteps_1 + \esteps_2 = \estepstwo + \esteps_2 = \esteps$ if $\Rule = \mcbv$,
			\item $\msteps_1 + \msteps_2 = \mstepstwo + \msteps_2 = \msteps$ and $\esteps_1 + \esteps_2 = \estepstwo + 1 + \esteps_2 = \esteps + 1$ if $\Rule = \ecbv$.
			
		\end{itemize}

		\item \emph{Right explicit substitution}, \ie $\tm \defeq \tmtwo\esub\var\tmthree \torule  \tmtwo\esub\var{\tmthree'} \eqdef \tm'$ with $\tmthree \torule \tmthree'$ and $\Rule \in \{\mcbv, \ecbv\}$.
		So, $\tderiv'$ has the form 
		\begin{equation*}
		\infer[\esrule]{
			\tyjp{(\msteps_1 + \mstepstwo, \esteps_1 + \estepstwo)}{\tm'}{\typctx_1 \mplus \typctx_2}{\mtype}
		}
		{
			\tderivtwo_1 \exder[\cbvsym] \tyjp{(\msteps_1, \esteps_1)}{\tmtwo}{\typctx_1, \var \col \mtypetwo}{\mtype}
			\qquad
			\tderivtwo_2 \exder[\cbvsym] \tyjp{(\mstepstwo, \estepstwo)}{\tmthree'}{\typctx_2}{\mtypetwo}
		}
		\end{equation*}
		where $\typctx \defeq \typctx_1 \mplus \typctx_2$ and $\msteps \defeq \msteps_1 + \mstepstwo$ and $\esteps \defeq \esteps_1 + \estepstwo$.
		By \ih applied to $\tderivtwo_2$, there is a derivation $\tderivtwo \exder[\cbv] \tyjp{(\msteps_2, \esteps_2)}{\tmthree}{\typctx_2}{\mtype}$ where 
		\begin{itemize}
			\item $\msteps_2 \defeq \mstepstwo + 1$ and $\esteps_2 \defeq \estepstwo$ if $\Rule = \mcbv$,
			\item $\msteps_2 \defeq \mstepstwo$ and $\esteps_2 \defeq \estepstwo + 1$ if $\Rule = \ecbv$.
		\end{itemize}
		Thus, we have the derivation $\tderiv$
		\begin{equation*}
		\infer[\esrule]{
			\tyjp{(\msteps_1 + \msteps_2,\esteps_1 + \esteps_2)}{\tm}{\typctx_1 \mplus \typctx_2}{\mtype}
		}
		{
			\tderivtwo_1 \exder[\cbvsym] \tyjp{(\msteps_1,\esteps_1)}{\tmtwo}{\typctx_1, \var \col \mtypetwo}{\mtype}
			\qquad
			\tderivtwo \exder[\cbvsym] \tyjp{(\msteps_2, \esteps_2)}{\tmthree}{\typctx_2}{\mtypetwo}
		}
		\end{equation*}
		where 
		\begin{itemize}
			\item $\msteps_1 +  \msteps_2 = \msteps_1 + \mstepstwo + 1 = \msteps + 1$ and $\esteps_1 + \esteps_2 = \esteps_1 + \estepstwo = \esteps$ if $\Rule = \mcbv$,
			\item $\msteps_1 + \msteps_2 = \msteps_1 + \mstepstwo = \msteps$ and $\esteps_1 + \esteps_2 = \esteps_1 + \estepstwo + 1 = \esteps + 1$ if $\Rule = \ecbv$.
\qed	
		\end{itemize}
		
	\end{itemize}
\end{proof}

\gettoappendix{thm:value-completeness}
% !TEX root = ../../main.tex
\begin{proof}
	By induction on the length $k \defeq \size{\deriv}$ of the evaluation $\deriv \colon \tm \tocbvn \tmtwo$. 
	
	If $k = 0$ then $\tm =  \tmtwo$ and $\normalcbvpr\tm$. 
	\refprop{value-normal-forms-exist}
	gives the existence of  a tight 
	derivation $\tderiv \exder[\cbvup]  \Deri[(0,0)] {} \tm \emptymset$, that satisfies the statement because $\sizem\deriv = \sizee\deriv = 0$. 
	
	If $k > 0$ then
	$\deriv \colon \tm \tocbv \tmthree \rightarrow_\cbvsym^{k-1} \tmtwo$. 
	Let $\deriv'$ be the evaluation $\tmthree \rightarrow_\cbvsym^{k-1} \tmtwo$.
	Thus, if $\tm \tomcbv \tmthree$ then $\sizem{\deriv} = \sizem{\deriv'} + 1$ and $\sizee{\deriv} = \sizee{\deriv'}$; otherwise $\tm \toecbv \tmthree$ then $\sizem{\deriv} = \sizem{\deriv'}$ and $\sizee{\deriv} = \sizee{\deriv'} + 1$
	By \ih, there exists a tight derivation $\tderivtwo\exder[\cbvsym]  \Deri[(\sizem{\deriv'},\sizee{\deriv'})]{}
	\tmthree {\emptymset}$. By quantitative subject expansion (\refprop{value-subject-expansion}),
	there exists a derivation $\tderiv \exder[\cbvsym] \Deri[(\sizem{\deriv},\sizee{\deriv})]{}{\tmthree}{\emptymset}$, in particular $\tderiv$ is tight and with indices $(\sizem\deriv,\sizee\deriv)$.
%	of $\tm$ with the same
%	types in the ending judgement of $\tderivtwo$---then $\tderiv$ is
%	tight---and with indices $(\sizem\deriv,\sizee\deriv)$.
	\qed
\end{proof}

%%%%%%%%%%%%%%%%%%%%
%%%%%%%%%%%%%%%%%%%%
\section{Types by Need (\refsect{cbneed})}

\subsection{\cbneed Correctness}

% \begin{toappendix}
\begin{lemma}[Basic properties of derivations in \cbneed]
\label{l:need-basic-properties-typing-derivations}
Let $\tderiv \exder[\cbneed] \tyjp{(\msteps,\esteps)}{\tm}{\typctx}{\mtype}$ be a derivation. Then,
\begin{enumerate}
\item if $\var \notin \fv{\tm}$ then $\var \notin \dom{\typctx} $,
\item if $\tm = \cbneedctx \cwc{\var}$ and $\mtype \neq \emptytype$ then $\var \in \dom{\typctx}$.
\end{enumerate}
\end{lemma}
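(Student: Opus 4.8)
The plan is to prove the two items by two different inductions, since they are of quite different nature. Statement~1 is the \cbneed incarnation of the relevance property already established for \cbn in \reflemma{name-typctx-varocc}, so I would prove it by a straightforward induction on the derivation $\tderiv$, inspecting its last rule (\reffig{need-type-system}) under the assumption $\var \notin \fv\tm$. The only cases that are not immediate from the induction hypothesis are those that bind or discard a variable. In rule $\ax$ the subject is a variable $\vartwo$ and, since axioms cannot introduce $\emptytype$, the type context is exactly $\vartwo \hastype \mtype$ with $\mtype \neq \emptytype$; as $\var \notin \fv\vartwo$ forces $\var \neq \vartwo$, we get $\var \notin \dom\typctx$. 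In rules $\fun$ and $\esrule$, if $\var$ coincides with the bound variable then it is absent from the conclusion context by the convention that the notation $\typctx, \vartwo \hastype \mtypetwo$ presupposes $\vartwo \notin \dom\typctx$, and otherwise $\var \notin \fv\tm$ propagates to the relevant premise and the induction hypothesis applies. In rule $\esgc$, if $\var$ is the substituted variable the side condition $\typctx(\var) = \emptytype$ gives the claim directly, and otherwise $\var$ is free in neither subterm touched by the rule. The remaining cases ($\normal$, $\many$, $\appgc$, $\app$) follow using $\dom{\typctx \mplus \typctxtwo} = \dom\typctx \cup \dom\typctxtwo$ together with the induction hypothesis on the premises.

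For statement~2 I would proceed by induction on the structure of the \cbneed context $\cbneedctx$, reading off the last rule of $\tderiv$ and using the hypothesis $\mtype \neq \emptytype$ crucially. Throughout, the no-capture convention behind the notation $\cbneedctx\cwc{\var}$ guarantees that $\var$ is free in $\cbneedctx\cwc{\var}$, hence distinct from every explicit-substitution variable crossed by $\cbneedctx$. In the base case $\cbneedctx = \ctxhole$ the subject is $\var$ itself, which can only be typed by $\ax$, so $\typctx = \var \hastype \mtype$ and $\var \in \dom\typctx$ because $\mtype \neq \emptytype$. If $\cbneedctx = \cbneedctx' \tmthree$, the last rule is $\appgc$ or $\app$; in both cases the left premise types $\cbneedctx'\cwc{\var}$ with a \emph{singleton}, hence non-empty, multi type, so the induction hypothesis yields $\var \in \dom\typctx$ (the conclusion context only grows in the $\app$ case). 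If $\cbneedctx = \cbneedctx'\esub{\vartwo}\tmthree$, the last rule is $\esgc$ or $\esrule$; here $\var \neq \vartwo$, the body $\cbneedctx'\cwc{\var}$ keeps the non-empty type $\mtype$, and the induction hypothesis gives membership of $\var$ in the body context, which survives the removal of the distinct bound variable $\vartwo$ and the multiset union.

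The main obstacle, and the only genuinely informative case, is the needed-substitution context $\cbneedctx = \cbneedctx_1\cwc{\vartwo}\esub{\vartwo}{\cbneedctxtwo}$, for which $\tm = \cbneedctx\cwc{\var} = (\cbneedctx_1\cwc{\vartwo})\esub{\vartwo}{\cbneedctxtwo\cwc{\var}}$ with $\var \neq \vartwo$. The last rule is $\esgc$ or $\esrule$, and the key observation is that $\esgc$ is \emph{impossible}: its premise would type the body $\cbneedctx_1\cwc{\vartwo}$ with the same non-empty type in a context $\typctx$ subject to $\typctx(\vartwo) = \emptytype$, whereas applying statement~2 inductively to the smaller context $\cbneedctx_1$ (with the needed variable $\vartwo$) forces $\vartwo \in \dom\typctx$, a contradiction. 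Hence the last rule must be $\esrule$, whose right premise types $\cbneedctxtwo\cwc{\var}$ with a non-empty multi type $\mtypetwo$ in some context $\typctxtwo$; the induction hypothesis applied to $\cbneedctxtwo$ gives $\var \in \dom{\typctxtwo}$, and since $\var \neq \vartwo$ this is preserved by the multiset union forming the conclusion context, so $\var \in \dom\typctx$. I expect this interplay---using statement~2 on the body context to rule out garbage collection of a binding that is in fact needed---to be the crux of the proof, everything else being bookkeeping with the domains of type contexts.
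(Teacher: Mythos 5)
Your proof is correct and takes essentially the same route as the paper: item~1 by a straightforward induction on the derivation (the \cbneed analogue of the \cbn relevance lemma), and item~2 by induction on the structure of the \cbneed evaluation context. In particular, your treatment of the hereditary case $\cbneedctx_1\cwc{\vartwo}\esub{\vartwo}{\cbneedctxtwo}$---using the induction hypothesis on $\cbneedctx_1$ to contradict the side condition $\typctx(\vartwo) = \emptytype$ of $\esgc$, and then exploiting the condition $\mtypetwo \neq \emptytype$ of $\esrule$ to apply the induction hypothesis to $\cbneedctxtwo$---is exactly the crux of the paper's argument.
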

% \end{toappendix}
% \gettoappendix {l:need-basic-properties-typing-derivations}
% !TEX root = ../../main.tex
\begin{proof}
\begin{enumerate}
	\item We prove that $\dom{\typctx} \subseteq \fv{\tm}$ by induction on $\tderiv$:
		\begin{itemize}
			\item Rules $\ax$ and $\normal$ satisfy the statement, as can be observed by a simple analysis of the typing rules.
			\item Rules $\app$, $\appgc$ and $\many$ satisfy the statement by a trivial application of the \ih.
			\item Rule $\fun$: By \ih, $\var \cup \dom{\Gamma} \subseteq \fv{\tm}$, and so $\dom{\Gamma} = \dom{\var : \mtype ; \typctx} \setminus \{\var\} \subseteq \fv{\tm} \setminus \{\var\} = \fv{\la{\var}{\tm}}$.
			\item \emph{Rule $\ES$}: We first apply \ih on the left-hand side premise to obtain that $\dom{\var : \mtype ; \typctx} \subseteq \fv{\tm}$, which in turn implies that $\dom{\typctx} \subseteq \fv{\tm} \setminus \{\var\}$. We then apply \ih on the right-hand side premise to obtain that $\dom{\typctxtwo} \subseteq \fv{\tmtwo} $. Hence, $\dom{\typctx} \cup \dom{\typctxtwo} \subseteq (\fv{\tm} \setminus \{\var\}) \cup \fv{\tmtwo} = \fv{\tm \esub{\var}{\tmtwo}}$.
			\item \emph{Rule $\ESgc$}: By \ih, $\dom{\typctx} \subseteq \fv{\tm}$. But $\var \notin \dom{\typctx}$, and so $\dom{\typctx} = \dom{\typctx} \setminus \{\var\} \subseteq \fv{\tm} \setminus \{\var\} \subseteq (\fv{\tm} \setminus \{\var\}) \cup \fv{\tmtwo} = \fv{\tm \esub{\var}{\tmtwo}}$.
		\end{itemize}
	\item By induction on the construction of $\cbneedctx$:
		\begin{itemize}
		\item Let $\cbneedctx = \ctxhole$. Then $\tm = \var$ and so $\tderiv$ is of the form 
			$$
			\infer
				[\ax]
				{\tyjp{(\msteps,\esteps)}{\var}{\var : \mtype}{\mtype}}
				{}
			$$
		Clearly, if $\mtype \neq \emptytype$ then $\var \in \dom{\typctx}$.	
		
		\item If $\cbneedctx = \cbneedctx_{1} \tmtwo$, then $\tm = \cbneedctx_{1} \cwc{\var} \tmtwo$. Then $\tderiv$ can only have either $\app$ or $\appgc$ as the last typing rule. 
			\begin{itemize}
			\item Let $\tderiv$ be of the form
				$$
				\infer
					[\app]
					{\tyjp{(\msteps' + \msteps'' + 1, \esteps' + \esteps'')}{\cbneedctx_{1} \cwc{\var} \tmtwo}{\typctxtwo \mplus \typctxthree}{\mtype}}
					{\tderiv_{\cbneedctx_{1} \cwc{\var}} \exder[\cbneed] \tyjp{(\msteps', \esteps')}{\cbneedctx_{1} \cwc{\var}}{\typctxtwo}{\mult{\ty{\mtypetwo}{\mtype}}}
					\quad
					\tyjp{(\msteps'', \esteps'')}{\tmtwo}{\typctxthree}{\mtypetwo}}
				$$
			Then by application of \ih on $\tderiv_{\cbneedctx_{1} \cwc{\var}}$ we obtain that $\var \in \dom{\typctxtwo}$, finally obtaining $\var \in \dom{\typctxtwo \mplus \typctxthree}$.
			\item If $\tderiv$ has $\appgc$ as its last typing rule instead, then $\var \in \dom{\typctx}$ simply by \ih.
		\end{itemize}
		\item If $\cbneedctx = \cbneedctx_{1} \esub{\vartwo}{\tmtwo}$, then $\tm = \cbneedctx \cwc{\var} = \cbneedctx_{1} \cwc{\var} \esub{\vartwo}{\tmtwo}$, with $\var \neq \vartwo$. Then $\tderiv$ can only have either $\ES$ or $\ESgc$ as the last typing rule.
		\begin{itemize}
			\item Let $\tderiv$ be of the form
				$$
				\infer
					[\ES]
					{\tyjp{(\msteps' + \esteps'', \esteps' + \esteps'')}{\cbneedctx_{1} \cwc{\var} \esub{\vartwo}{\tmtwo}}{\typctxtwo \mplus \typctxthree}{\mtype}}
					{\tderiv_{\cbneedctx_{1} \cwc{\var}} \exder[\cbneed] \tyjp{(\msteps', \esteps')}{\cbneedctx_{1} \cwc{\var}}{\vartwo : \mtypetwo ; \typctxtwo}{\mtype}
					\quad
					\tyjp{(\msteps'', \esteps'')}{\tmtwo}{\typctxthree}{\mtypetwo}}
				$$
			Then by application of \ih on $\tderiv_{\cbneedctx_{1} \cwc{\var}}$ we obtain that $\var \in \dom{\vartwo : \mtypetwo ; \typctxtwo}$, which implies that $\var \in \dom{\typctxtwo}$ and so $\var \in \dom{\typctxtwo \mplus \typctxthree}$.
			
			\item If $\tderiv$ has $\ESgc$ as its last typing rule instead, then $\var \in \dom{\typctx}$ simply by \ih.
		\end{itemize}
		\item Let $\cbneedctx = \cbneedctx_{1} \cwc{\vartwo} \esub{\vartwo}{\cbneedctx_{2}}$, and so $\tm = \cbneedctx \cwc{\var} = \cbneedctx_{1} \cwc{\vartwo} \esub{\vartwo}{\cbneedctx_{2} \cwc{\var}}$, with $\var \neq \vartwo$ because $\var$ is a free variable of $\tm$ while $\vartwo$ is a bound variable of $\tm$, and we are working up to $\alpha$-equivalence. Suppose now that $\ESgc$ was the last typing rule of $\tderiv$. This means that $\tderiv$ is of the form
			$$
			\infer
				[\ESgc]
				{\tyjp{(\msteps,\esteps)}{\cbneedctx_{1} \cwc{\vartwo} \esub{\vartwo}{\cbneedctx_{2} \cwc{\var}}}{\typctx}{\mtype}}
				{\tderiv_{\cbneedctx_{1} \cwc{\vartwo}} \exder[\cbneed] \tyjp{(\msteps,\esteps)}{\cbneedctx_{1} \cwc{\vartwo}}{\typctx}{\mtype}
				\quad
				\vartwo \notin \dom{\typctx}}
			$$
		However, by applying \ih on $\tderiv_{\cbneedctx_{1} \cwc{\vartwo}}$ we obtain that $\vartwo \in \dom{\typctx}$, which is in contradiction with the constraint of rule $\ESgc$.
		
		Hence, $\tderiv$ can only have $\ES$ as the last typing rule. Thus, $\tderiv$ is of the form
			$$
			\infer
				[\ES]
				{\tyjp{(\msteps' + \msteps'', \esteps' + \esteps'')}{\cbneedctx_{1} \cwc{\vartwo} \esub{\vartwo}{\cbneedctx_{2} \cwc{\var}}}{\typctxtwo \mplus \typctxthree}{\mtype}}
				{\tyjp{(\msteps',\esteps')}{\cbneedctx_{1} \cwc{\vartwo}}{\vartwo : \mtypetwo ; \typctxtwo}{\mtype}
				\quad
				\tderiv_{\cbneedctx_{2} \cwc{\var}} \exder[\cbneed] \tyjp{(\msteps'',\esteps'')}{\cbneedctx_{2} \cwc{\var}}{\typctxthree}{\mtypetwo}
				\quad
				\mtypetwo \neq \emptytype}
			$$
			
		We can then apply \ih on $\tderiv_{\cbneedctx_{2} \cwc{\var}}$ to obtain that $\var \in \dom{\typctxthree}$ and so $\var \in \dom{\typctxtwo \mplus \typctxthree}$. \qed
		\end{itemize}
\end{enumerate}
\end{proof}

\begin{lemma}[Splitting of multi-sets with respect to derivations]
\label{l:need-splitting-multisets}
Let $\val$ be a value and $\tderiv \exder[\cbneed] \tyjp{(\msteps,\esteps)}{\val}{\typctx}{\mtype}$ be a  derivation such that $\size{\mtype} \geq 2$. 
Then, for every splitting $\mtype = \mtypetwo \mplus \mtypethree$ such that $\size{\mtypetwo}, \size{\mtypethree} \geq 1$ there are type contexts $\typctx_{\mtypetwo}$ and $\typctx_{\mtypethree} $ and  derivations $\tderiv_{\mtypetwo} \exder[\cbneed] \tyjp{(\msteps_{\mtypetwo}, \esteps_{\mtypetwo})}{\val}{\typctx_{\mtypetwo}}{\mtypetwo} $ and $\tderiv_{\mtypethree} \exder[\cbneed] \tyjp{(\msteps_{\mtypethree}, \esteps_{\mtypethree})}{\val}{\typctx_{\mtypethree}}{\mtypethree} $ such that 
	\begin{itemize}
	\item $\typctx = \typctx_{\mtypetwo} \mplus \typctx_{\mtypethree}$,
	\item $\msteps = \msteps_{\mtypetwo} + \msteps_{\mtypethree} $, and
	\item $\esteps = \esteps_{\mtypetwo} + \esteps_{\mtypethree} $.
	\end{itemize}
\end{lemma}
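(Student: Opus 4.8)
The plan is to proceed by inversion on the last rule of $\tderiv$ and then recombine its premises according to the given splitting. Since $\val$ is a value, it is an abstraction $\la\var\tmtwo$, and in the conclusion it is typed with the multi type $\mtype$, which is non-empty because $\size\mtype \geq 2$. I first observe that the only typing rule that assigns a non-empty multi type to an abstraction is $\many$: the rules $\normal$ and $\fun$ both assign a \emph{linear} type (namely $\normal$ and $\ty{\mtype}{\mtypetwo}$, which are not multi-sets), while $\ax$, $\app$, $\appgc$, $\esrule$ and $\esgc$ all require the subject to be a variable, an application, or an explicit substitution, none of which is an abstraction. Hence $\tderiv$ necessarily ends with an instance of $\many$, so there is a family of premises $(\tderiv_i \exder[\cbneed] \tyjp{(\msteps_i,\esteps_i)}{\val}{\typctx_i}{\type_i})_{i\in J}$ with $J \neq \emptyset$ and
$$\mtype = \mult{\type_i}_{i\in J}, \qquad \typctx = \bigmplus_{i\in J}\typctx_i, \qquad \msteps = \sum_{i\in J}\msteps_i, \qquad \esteps = \sum_{i\in J}\esteps_i.$$
In particular, $\size\mtype\geq 2$ gives $\size{J}\geq 2$.

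Next I realise the multi-set splitting at the level of the index set. Since $\mtypetwo \mplus \mtypethree = \mtype = \mult{\type_i}_{i\in J}$, I may choose a partition $J = J_{\mtypetwo} \uplus J_{\mtypethree}$ such that $\mult{\type_i}_{i\in J_{\mtypetwo}} = \mtypetwo$ and $\mult{\type_i}_{i\in J_{\mtypethree}} = \mtypethree$ (for each element of $\mtypetwo$ counted with multiplicity pick a distinct witnessing index of $J$, and collect the remaining indices in $J_{\mtypethree}$). The hypotheses $\size\mtypetwo \geq 1$ and $\size\mtypethree \geq 1$ ensure that both $J_{\mtypetwo}$ and $J_{\mtypethree}$ are non-empty.

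I then define $\tderiv_{\mtypetwo}$ as the derivation obtained by applying $\many$ to the subfamily indexed by $J_{\mtypetwo}$,
$$\infer[\many]{\tyjp{(\sum_{i\in J_{\mtypetwo}}\msteps_i,\ \sum_{i\in J_{\mtypetwo}}\esteps_i)}{\val}{\bigmplus_{i\in J_{\mtypetwo}}\typctx_i}{\mtypetwo}}{(\tderiv_i)_{i\in J_{\mtypetwo}} \quad J_{\mtypetwo}\neq\emptyset}$$
and symmetrically $\tderiv_{\mtypethree}$ using $J_{\mtypethree}$; the side condition of $\many$ is met precisely because $J_{\mtypetwo}$ and $J_{\mtypethree}$ are non-empty. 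Setting $\typctx_{\mtypetwo} \defeq \bigmplus_{i\in J_{\mtypetwo}}\typctx_i$ and $\typctx_{\mtypethree} \defeq \bigmplus_{i\in J_{\mtypethree}}\typctx_i$, the required identities $\typctx = \typctx_{\mtypetwo} \mplus \typctx_{\mtypethree}$, $\msteps = \msteps_{\mtypetwo} + \msteps_{\mtypethree}$ and $\esteps = \esteps_{\mtypetwo} + \esteps_{\mtypethree}$ follow immediately from the additivity of $\mplus$ and of the finite sums over the partition $J = J_{\mtypetwo} \uplus J_{\mtypethree}$.

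The whole argument is bookkeeping and presents no real obstacle. The only points deserving a moment's care are the inversion step---recognising that a value typed with a non-empty multi type can only be typed by $\many$---and the elementary observation that a splitting of $\mtype$ into two non-empty submulti-sets lifts to a partition of the index set $J$ into two non-empty parts, which is exactly what feeds the side condition $J_{\mtypetwo},J_{\mtypethree}\neq\emptyset$ of the reassembled $\many$ instances.
\qed
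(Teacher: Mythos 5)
Your proof is correct and follows essentially the same route as the paper's: inversion showing that a value typed with a multi type of size at least~$2$ must end in a $\many$ rule, followed by repartitioning the premises into two non-empty subfamilies and reassembling each with $\many$. The two points you flag as needing care (the inversion step and the non-emptiness of both parts of the partition, which feeds the $J \neq \emptyset$ side condition) are exactly the right ones, so nothing is missing.
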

% !TEX root = ../../main.tex
\begin{proof}
By a simple observation of the typing rules with an abstraction as the term of the final type judgement, we note that $\tderiv$ can only be of the form
	$$
	\infer
		[\many]
		{\tyjp{(\sum_{\iI} \msteps_{i}, \sum_{\iI} \esteps_{i})}{\val}{\bigmplus_{\iI} \typctx_{i}}{\mult{\type_{i}}_{\iI}}}
		{(\tyjp{(\msteps_{i}, \esteps_{i})}{\val}{\typctx_{i}}{\type_{i}})_{\iI}}
	$$
We then appropriately define $\J = \{\iI : \type_{i} \in \mtypetwo \} $ and $\K = \{\iI : \type_{i} \in \mtypethree\}$; \ie, $\mult{\type_{j}}_{\jJ} = \mtypetwo$, $\mult{\type_{k}}_{\kK} = \mtypethree $, and making sure that $\J \cap \K = \emptyset$. Note that $\J, \K \neq \emptyset$, since $\mtypetwo, \mtypethree \neq \emptytype$. Thus, we obtain $\tderiv_{\mtypetwo}$ as
	$$
	\infer
		[\many]
		{\tyjp{(\sum_{\jJ} \msteps_{j}, \sum_{\jJ} \esteps_{j})}{\val}{\bigmplus_{\jJ} \typctx_{j}}{\mult{\type_{j}}_{\jJ}}}
		{(\tyjp{(\msteps_{j}, \esteps_{j})}{\val}{\typctx_{j}}{\type_{j}})_{\jJ}}
	$$
and $\tderiv_{\mtypethree}$ as 
	$$
	\infer
		[\many]
		{\tyjp{(\sum_{\kK} \msteps_{k}, \sum_{\kK} \esteps_{k})}{\val}{\bigmplus_{\kK} \typctx_{k}}{\mult{\type_{k}}_{\kK}}}
		{(\tyjp{(\msteps_{k}, \esteps_{k})}{\val}{\typctx_{k}}{\type_{k}})_{\kK}}
	$$
where $\bigmplus_{\jJ} \typctx_{j} = \typctx_{\mtypetwo}$, $\bigmplus_{\kK} \typctx_{k} = \typctx_{\mtypethree}$, $(\sum_{\jJ} \msteps_{j}, \sum_{\jJ} \esteps_{j}) = (\msteps_{\mtypetwo}, \esteps_{\mtypetwo}) $, and $(\sum_{\kK} \msteps_{k}, \sum_{\kK} \esteps_{k}) = (\msteps_{\mtypethree}, \esteps_{\mtypethree}) $.
\qed
\end{proof}

\gettoappendix {l:need-linear-substitution}
% !TEX root = ../../main.tex
\begin{proof}
We prove this by induction on $N$:
	\begin{itemize}
	\item \emph{Empty context, \ie $N = \ctxhole$}. Then $\typctx = \emptyset$, $\mtypethree = \mtype$, and $\tderiv_{\cbneedctx \cwc{\var}}$ is of the form
		$$
		\infer
			[\ax]
			{\tyjp{(0,1)}{\var}{\var \colon \mtype}{\mtype}}
			{}
		$$
	Therefore, by defining $\mtype_{1} \defeq \mtype$ and $\mtype_{2} \defeq \zero$, the statement holds for every $\tderivtwo \exder[\cbneed] \tyjp{(\msteps', \esteps')}{\val}{\typctxtwo}{\mtype_{1}} $ by taking $\tderiv_{\cbneedctx\cwc{\val}} \defeq \tderivtwo $. In particular, note that $(\msteps + \mstepstwo, \esteps + \estepstwo - 1) = (0 + \msteps', 1 + \estepstwo - 1) = (\mstepstwo, \estepstwo)$.

	\item \emph{Left of an application, \ie $N = N_{1} \tmtwo$}. There are two possible last rules in $\tderiv_{\cbneedctx \cwc{\var}}$, namely $\appsteps$ or $\appgc$.
	\begin{itemize}
	\item Let $\tderiv_{\cbneedctx \cwc{\var}}$ be of the form 
			$$
			\infer
				[\appsteps]
				{\tyjp{(\msteps_{\typctxthree} + \msteps_{\typctxfour} + 1, \esteps_{\typctxthree} + \esteps_{\typctxfour})}{\cbneedctx_{1} \cwc{\var} \tmtwo}{\var \colon (\mtype_{\typctxthree} \mplus \mtype_{\typctxfour}) ; (\typctxthree \mplus \typctxfour)}{\mtypethree}}
				{\tyjp{(\msteps_{\typctxthree},\esteps_{\typctxthree})}{N_{1} \cwc{\var}}{\var \colon \mtype_{\typctxthree} ; \typctxthree}{\mult{\ty{\mtypethree'}{\mtypethree}}}
				\quad
				\tyjp{(\msteps_{\typctxfour}, \esteps_{\typctxfour})}{\tmtwo}{\var \colon \mtype_{\typctxfour} ; \typctxfour}{\mtypethree'}}
			$$
where $\typctx = \typctxthree \mplus \typctxfour$, $\typctxthree(\var) = \typctxfour(\var) = \emptytype$ and $\mtype = \mtype_{\typctxthree} \mplus \mtype_{\typctxfour} $.

	By applying the \ih on the left-hand side premise we obtain that there exists a splitting $\mtype_{\typctxthree} = \mtype_{\typctxthree, 1} \mplus \mtype_{\typctxthree, 2}$, with $\mtype_{\typctxthree, 1} \neq \emptytype$, such that for every derivation $\tderivtwo \exder_{\cbneed} \tyjp{(\mstepstwo, \estepstwo)}{\val}{\typctxtwo}{\mtype_{\typctxthree, 1}}$ there exists a  derivation $\tderiv_{\cbneedctx_{1} \cwc{\val}} \exder[\cbneed] \tyjp{(\msteps_{\typctxthree} + \mstepstwo, \esteps_{\typctxthree} + \estepstwo -1)}{\cbneedctx_{1} {\cwc{\val}}}{\var \colon \mtype_{\typctxthree, 2} ; \typctxthree \mplus \typctxtwo}{\mult{\ty{\mtypethree'}{\mtypethree}}} $. We can then construct $\tderiv_{\cbneedctx \cwc{\val}} $ for such a $\tderivtwo$ as follows
		$$
		\infer
			[\appsteps]
			{\tyjp{(\msteps_{\typctxthree} + \mstepstwo + \msteps_{\typctxfour} + 1, \esteps_{\typctxthree} + \estepstwo - 1 + \esteps_{\typctxfour})}{\cbneedctx_{1} \cwc{\val} \tmtwo}{\var \colon \mtype_{\typctxthree, 2} \mplus \mtype_{\typctxfour} ; \typctxthree \mplus \typctxtwo \mplus \typctxfour}{\mtypethree}}
			{\tyjp{(\msteps_{\typctxthree} + \mstepstwo, \esteps_{\typctxthree} + \estepstwo -1)}{\cbneedctx_{1} {\cwc{\val}}}{\var \colon \mtype_{\typctxthree, 2} ; \typctxthree \mplus \typctxtwo}{\mult{\ty{\mtypethree'}{\mtypethree}}}
			\quad
			\tyjp{(\msteps_{\typctxfour}, \esteps_{\typctxfour})}{\tmtwo}{\var \colon \mtype_{\typctxfour} ; \typctxfour}{\mtypethree'}}
		$$
Note that $\tderiv_{\cbneedctx \cwc{\val}}$ is as desired by splitting $\mtype$ into $\mtype_{1} \defeq \mtype_{\typctxthree,1}$ and $\mtype_{2} \defeq \mtype_{\typctxthree, 2} \mplus \mtype_{\typctxfour}$.
	
	\item Let $\tderiv_{\cbneedctx \cwc{\var}}$ be of the form
		$$
		\infer
			[\appgc]
			{\tyjp{(\msteps,\esteps)}{\cbneedctx_{1} \cwc{\var} \tmtwo}{\var \colon \mtype ; \typctx}{\mtypethree}}
			{\tderiv_{\cbneedctx_{1} \cwc{\var}} \exder[\cbneed] \tyjp{(\msteps - 1, \esteps)}{\cbneedctx_{1} \cwc{\var}}{\var \colon \mtype ; \typctx}{\mult{\ty{\emptytype}{\mtypethree}}}}
		$$
		
	By applying the \ih on $\tderiv_{\cbneedctx_{1} \cwc{\var}} $ we obtain that there exists a splitting $\mtype = \mtype_{1} \mplus \mtype_{2}$, with $\mtype_{1} \neq \emptytype$, such that for every  derivation $\tderivtwo \exder[\cbneed] \tyjp{(\mstepstwo, \estepstwo)}{\val}{\typctxtwo}{\mtype_{1}}$ there exists
a  derivation	$\tderiv_{\cbneedctx_{1} \cwc{\val}} \exder[\cbneed] \tyjp{(\msteps - 1 + \mstepstwo, \esteps + \estepstwo - 1)}{\cbneedctx_{1} \cwc{\val}}{\var \colon \mtype_{2}; \typctx \mplus \typctxtwo}{\mult{\ty{\emptytype}{\mtypethree}}}$. We can then construct $\tderiv_{\cbneedctx \cwc{\val}}$ for such a $\tderivtwo$ as follows
	$$
	\infer
		[\appgc]
		{\tyjp{(\msteps + \mstepstwo, \esteps + \estepstwo - 1)}{\cbneedctx_{1} \cwc{\val} \tmtwo}{\var \colon \mtype_{2} ; \typctx \mplus \typctxtwo}{\mtypethree}}
		{\tyjp{(\msteps - 1 + \mstepstwo, \esteps + \estepstwo - 1)}{\cbneedctx_{1} \cwc{\val}}{\var \colon \mtype_{2} ; \typctx \mplus \typctxtwo}{\mult{\ty{\emptytype}{\mtypethree}}}}
	$$
	\end{itemize}
	
	\item \emph{Left of a substitution; \ie $\cbneedctx = \cbneedctx_{1} \esub{\vartwo}{\tmtwo}$}. Note that $\var \neq \vartwo$, because the hypothesis $\cbneedctx \cwc{\var}$ impies that $\cbneedctx$ does not capture $\var$. There are two possible last rules in $\tderiv_{\cbneedctx \cwc{\var}}$, namely $\ES$ and $\ESgc$.
	\begin{itemize}
	\item Let $\tderiv_{\cbneedctx \cwc{\var}}$ be of the form
		$$
		\infer
			[\ES]
			{\tyjp{(\msteps_{\typctxthree} + \msteps_{\typctxfour}, \esteps_{\typctxthree} + \esteps_{\typctxfour})}{\cbneedctx_{1}\cwc{\var} \esub{\vartwo}{\tmtwo}}{\var \colon (\mtype_{\typctxthree} \mplus \mtype_{\typctxfour}) ; (\typctxthree \sm \vartwo) \mplus \typctxfour}{\mtypethree}}
			{\tyjp{(\msteps_{\typctxthree},\esteps_{\typctxthree})}{\cbneedctx_{1}\cwc{\var}}{\var \colon \mtype_{\typctxthree}; \typctxthree}{\mtypethree}
			\quad
			\tyjp{(\msteps_{\typctxfour},\esteps_{\typctxfour})}{\tmtwo}{\var \colon \mtype_{\typctxfour}; \typctxfour}{\typctxthree(\vartwo)}
			\quad
			\typctxthree(\vartwo) \neq \emptytype}
		$$
where $\mtype = \mtype_{\typctxthree} \mplus \mtype_{\typctxfour} $ and $\typctx = (\typctxthree \sm \vartwo) \mplus \typctxfour$.

	By applying the \ih on the leftmost premise we obtain a splitting $\mtype_{\typctxthree} = \mtype_{\typctxthree,1} \mplus \mtype_{\typctxthree,2}$, with $\mtype_{\typctxthree,1} \neq \emptytype $, such that for every  derivation $\tderivtwo \exder[\cbneed] \tyjp{(\mstepstwo, \estepstwo)}{\val}{\typctxtwo}{\mtype_{\typctxthree,1}}$ there exists a  derivation $\tderiv_{\cbneedctx_{1} \cwc{\val}} \exder[\cbneed] \tyjp{(\msteps_{\typctxthree} + \mstepstwo, \esteps_{\typctxthree} + \estepstwo - 1)}{\cbneedctx_{1} \cwc{\val}}{\var \colon \mtype_{\typctxthree,2} ; \typctxthree \mplus \typctxtwo}{\mtypethree} $. Note however that if $\vartwo \in \dom{\typctxtwo}$, then \reflemma{need-basic-properties-typing-derivations} applied on $\tderivtwo$ would imply that $\vartwo \in \fv{\val}$, which contradicts the hypothesis that $\cbneedctx$ does not capture the free variables of $\val$; \ie, $\vartwo \notin \dom{\typctxtwo}$, and so $(\typctxtwo \mplus \typctxthree)(\vartwo) = \typctxthree(\vartwo) $. We can then construct $\tderiv_{\cbneedctx \cwc{\val}}$ for such a $\tderivtwo$ as follows
	{\scriptsize
	$$
	\infer
		[\ES]
		{\tyjp{(\msteps_{\typctxthree} + \mstepstwo + \msteps_{\typctxfour}, \esteps_{\typctxthree} + \estepstwo - 1 + \esteps_{\typctxfour})}{\cbneedctx_{1} \cwc{\val} \esub{\vartwo}{\tmtwo}}{\var \colon \mtype_{\typctxthree,2} \mplus \mtype_{\typctxfour} ; ((\typctxthree \mplus \typctxtwo) \sm \vartwo) \mplus \typctxfour}{\mtypethree}}
		{\tyjp{(\msteps_{\typctxthree} + \mstepstwo, \esteps_{\typctxthree} + \estepstwo - 1)}{\cbneedctx_{1} \cwc{\val}}{\var \colon \mtype_{\typctxthree,2} ; \typctxthree \mplus \typctxtwo}{\mtypethree}
		\quad
		\tyjp{(\msteps_{\typctxfour},\esteps_{\typctxfour})}{\tmtwo}{\var \colon \mtype_{\typctxfour}; \typctxfour}{\typctxthree(\vartwo)}
		\quad
		\typctxthree(\vartwo) \neq \emptytype}
	$$
	}

by splitting $\mtype$ into $\mtype_{1} \defeq \mtype_{\typctxthree,1}$ and $\mtype_{2} \defeq  \mtype_{\typctxthree,2} \mplus \mtype_{\typctxfour}$. Since $\vartwo \notin \dom{\typctxtwo}$, then $((\typctxthree \mplus \typctxtwo) \sm \vartwo) \mplus \typctxfour = (\typctxthree \sm \vartwo) \mplus \typctxtwo \mplus \typctxfour = \typctx \mplus \typctxtwo$. 
		
	\item Let $\tderiv_{\cbneedctx \cwc{\var}}$ be of the form
		$$
		\infer
			[\ESgc]
			{\tyjp{(\msteps,\esteps)}{\cbneedctx_{1} \cwc{\var} \esub{\vartwo}{\tmtwo}}{\var \colon \mtype ; \typctx}{\mtypethree}}
			{\tyjp{(\msteps, \esteps)}{\cbneedctx_{1} \cwc{\var}}{\var \colon \mtype ; \typctx}{\mtypethree}
			\quad
			\typctx(\vartwo) = \emptytype}
		$$
		
	By applying \ih on the premise we obtain a splitting $\mtype = \mtype_{1} \mplus \mtype_{2}$, with $\mtype_{1} \neq \emptytype$, such that for every  derivation $\tderivtwo \exder[\cbneed] \tyjp{(\mstepstwo, \estepstwo)}{\val}{\typctxtwo}{\mtype_{1}}$ there exists a derivation $\tderiv_{\cbneedctx_{1} \cwc{\val}} \exder[\cbneed] \tyjp{(\msteps + \mstepstwo, \esteps + \estepstwo - 1)}{\cbneedctx_{1} \cwc{\val}}{\var \colon \mtype_{2} ; \typctx \mplus \typctxtwo}{\mtypethree}$. Note that $\vartwo \notin \dom{\typctxtwo}$, because applying \reflemma{need-basic-properties-typing-derivations} on $\tderivtwo$ would otherwise imply that $\vartwo \in \fv{\val}$, which contradicts the hypothesis that $\cbneedctx$ does not capture the free variables of $\val$. 
	Hence, we can then construct $\tderiv_{\cbneedctx \cwc{\val}}$ for such a $\tderivtwo$ as follows
		$$
		\infer
			[\ESgc]
			{\tyjp{(\msteps + \mstepstwo, \esteps + \estepstwo - 1)}{\cbneedctx_{1} \cwc{\val} \esub{\vartwo}{\tmtwo}}{\var \colon \mtype_{2} ; \typctx \mplus \typctxtwo}{\mtypethree}}
			{\tyjp{(\msteps + \mstepstwo, \esteps + \estepstwo - 1)}{\cbneedctx \cwc{\val}}{\var \colon \mtype_{2} ; \typctx \mplus \typctxtwo}{\mtypethree}
			\quad
			(\typctx \mplus \typctxtwo)(\vartwo) = \typctx(\vartwo) = \emptytype}
		$$
	\end{itemize}
	\item Let $\cbneedctx = \cbneedctx_{1} \cwc{\vartwo} \esub{\vartwo}{\cbneedctx_{2}}$. We can safely assume that $\var \neq \vartwo$, since we are working up to $\alpha$-equivalence. \reflemma{need-basic-properties-typing-derivations} implies $\vartwo \in \dom{\var \colon \mtype ; \typctx}$, and so $\tderiv_{\cbneedctx \cwc{\var}}$ can only have $\ES$ as the final type judgement and be of the form
		$$
		\infer
			[\ES]
			{\tyjp{(\msteps_{\typctxthree} + \msteps_{\typctxfour},\esteps_{\typctxthree} + \esteps_{\typctxfour})}{\cbneedctx_{1} \cwc{\vartwo} \esub{\vartwo}{\cbneedctx_{2} \cwc{\var}}}{\var \colon (\mtype_{\typctxthree} \mplus \mtype_{\typctxfour}) ; (\typctxthree \sm \vartwo) \mplus \typctxfour}{\mtypethree}}
			{\tyjp{(\msteps_{\typctxthree}, \esteps_{\typctxthree})}{\cbneedctx_{1} \cwc{\vartwo}}{\var \colon \mtype_{\typctxthree} ; \typctxthree}{\mtypethree}
			\quad
			\tyjp{(\msteps_{\typctxfour}, \esteps_{\typctxfour})}{\cbneedctx_{2} \cwc{\var}}{\var \colon \mtype_{\typctxfour} ; \typctxfour}{\typctxthree(\vartwo)}
			\quad
			\typctxthree(\vartwo) \neq \emptytype}
		$$
		
 \noindent where $\mtype = \mtype_{\typctxthree} \mplus \mtype_{\typctxfour} $, $\typctx =  (\typctxthree \sm \vartwo) \mplus \typctxfour$, and $(\msteps,\esteps) = (\msteps_{\typctxthree} + \msteps_{\typctxfour},\esteps_{\typctxthree} + \esteps_{\typctxfour})$. 

	We can then apply the \ih on the premise in the middle to obtain a splitting $\mtype_{\typctxfour} = \mtype_{\typctxfour,1} \mplus \mtype_{\typctxfour,2} $, with $\mtype_{\typctxfour,1} \neq \emptytype$, such that for every  derivation $\tderivtwo \exder[\cbneed] \tyjp{(\mstepstwo, \estepstwo)}{\val}{\typctxtwo}{\mtype_{\typctxfour,1}} $ there exists a  derivation $\tderiv_{\cbneedctx_{2} \cwc {\val}} \exder[\cbneed] \tyjp{(\msteps_{\typctxfour} + \mstepstwo, \esteps_{\typctxfour} + \estepstwo - 1)}{\cbneedctx_{2} \cwc{\val}}{\var \colon \mtype_{\typctxfour,2} ; \typctxfour \mplus \typctxtwo}{\typctxthree(\vartwo)} $. We can then construct $\tderiv_{\cbneedctx \cwc{\val}}$ for such $\tderivtwo$ as follows
		{\scriptsize
		$$
		\infer
			[\ES]
			{\tyjp{(\msteps_{\typctxthree} + \msteps_{\typctxfour} + \mstepstwo, \esteps_{\typctxthree} + \esteps_{\typctxfour} + \estepstwo - 1)}{\cbneedctx_{1} \cwc{\vartwo} \esub{\vartwo}{\cbneedctx_{2} \cwc{\val}}}{\var \colon (\mtype_{\typctxthree} \mplus \mtype_{\typctxfour,2}) ; (\typctxthree \sm \vartwo) \mplus \typctxfour \mplus \typctxtwo}{\mtypethree}}
			{\tyjp{(\msteps_{\typctxthree}, \esteps_{\typctxthree})}{\cbneedctx_{1} \cwc{\vartwo}}{\var \colon \mtype_{\typctxthree} ; \typctxthree}{\mtypethree}
			\quad
			\tyjp{(\msteps_{\typctxfour} + \mstepstwo, \esteps_{\typctxfour} + \estepstwo - 1)}{\cbneedctx_{2} \cwc{\val}}{\var \colon \mtype_{\typctxfour,2} ; \typctxfour \mplus \typctxtwo}{\typctxthree(\vartwo)}
			\quad
			\typctxthree(\vartwo) \neq \emptytype}
		$$
		}
where we take $\mtype_{1} \defeq \mtype_{\typctxfour,1}$ and $\mtype_{2} \defeq \mtype_{\typctxthree} \mplus \mtype_{\typctxfour,2}$.
\qed
	\end{itemize}
\end{proof}

\gettoappendix {prop:need-subject-reduction}
% !TEX root = ../../main.tex
\begin{proof}
By induction on the reduction relation $\tocbneed$, with $\rtomcbneed$ and $r\toecbneed$ as the base cases, and the closure by $\cbneed$ contexts of $\rtomcbneed \cup \rtoecbneed$ as the inductive one.

\begin{itemize}
\item \emph{Root step for $\tomcbneed$.} Let us assume that $\tm = \sctx \hole{\la{\var}{\tmthree}} \tmfour \rtom \sctx \hole{\tmthree \esub{\var}{\tmfour}} = \tmtwo$, and proceed by induction on \sctx:
	\begin{itemize}
	\item Let $\sctx = \chole$. Then $\tm = (\la{\var}{\tmthree})\tmfour$ and so the last rule of $\tderiv$ is either $\appsteps$ or $\appgc$, because they are the only rules whose term in the conclusion type judgement is an application.
		\begin{itemize}
		\item
		If $\appsteps$ is the last rule of $\tderiv$, then the latter is of the form
			$$
			\infer
				[\appsteps]
				{\tyjp{(\msteps' + \msteps'' + 1,\esteps' + \esteps'')}{(\la{\var}{\tmthree})\tmfour}{(\typctxtwo \sm \var) \bigmplus \typctxthree }{\mtype}}
				{\infer
					[\many]
					{\tyjp{(\msteps',\esteps')}{\la{\var}{\tmthree}}{\typctxtwo \sm \var}{\mult{\ty{\typctxtwo(\var)}{\mtype}}}}
					{\infer
						[\fun]
						{\tyjp{(\msteps',\esteps')}{\la{\var}{\tmthree}}{\typctxtwo \sm \var}{\ty{\typctxtwo(\var)}{\mtype}}}
						{\tyjp{(\msteps',\esteps')}{\tmthree}{\typctxtwo}{\mtype}}}
				\quad
				\tyjp{(\msteps'',\esteps'')}{\tmfour}{\typctxthree}{\typctxtwo(\var)}
				\quad
				\typctxtwo(\var) \neq \emptytype}
			$$
	Therefore, $\msteps \geq 1$. Since $\typctxtwo(\var) \neq \emptytype$, then we can construct $\tderiv'$ as follows:
			$$
			\infer
				[\ES]
				{\tyjp{(\msteps' + \msteps'', \esteps' + \esteps'')}{\tmthree \esub{\var}{\tmfour}}{(\typctxtwo \sm \var) \bigmplus \typctxthree}{\mtype}}
				{\tyjp{(\msteps',\esteps')}{\tmthree}{\typctxtwo}{\mtype}
				\quad
				\tyjp{(\msteps'',\esteps'')}{\tmfour}{\typctxthree}{\typctxtwo(\var)}
				\quad
				\typctxtwo(\var) \neq \emptytype}
			$$
	Note that $(\msteps' + \msteps'', \esteps' + \esteps') = (\msteps - 1, \esteps)$.

		\item If $\appgc$ is the las typing rule of $\tderiv$, then the latter is of the form
			$$
			\infer
				[\appgc]
				{\tyjp{(\msteps' + 1, \esteps')}{(\la{\var}{\tmthree})\tmfour}{\typctxtwo}{\mtype}}
				{\infer
					[\many]
					{\tyjp{(\msteps', \esteps')}{\la{\var}{\tmthree}}{\typctxtwo}{\mult{\ty{\zero}{\mtype}}}}
					{\infer
						[\fun]
						{\tyjp{(\msteps', \esteps')}{\la{\var}{\tmthree}}{\typctxtwo}{\ty{\zero}{\mtype}}}
						{\tyjp{(\msteps', \esteps')}{\tmthree}{\typctxtwo}{\mtype}}
					}
				}
			$$
		with $(\msteps, \esteps) = (\msteps' + 1, \esteps') $ and $\typctx = \typctxtwo \sm \var$. Note that $\var \notin \dom{\typctxtwo}$, because $\tmthree$ is typed with $\mtype$ and $\la{\var}{\tmthree}$ is typed with $\ty{\zero}{\mtype}$, so we can construct $\tderiv'$ as follows:

			$$
			\infer
				[\ESgc]
				{\tyjp{(\msteps', \esteps')}{\tmthree \esub{\var}{\tmfour}}{\typctxtwo}{\mtype}}
				{\tyjp{(\msteps', \esteps')}{\tmthree}{\typctxtwo}{\mtype}
				\quad
				\typctxtwo(\var) = \zero}
			$$
			
		\end{itemize}

	\item Let $\sctx = \sctxtwo \esub{\vartwo}{\tmfive}$. Then $\tm = \sctx \hole{\la{\var}{\tmthree}} \tmfour  = ((\sctxtwo \hole{\la{\var}{\tmthree}})\esub{\vartwo}{\tmfive}) \tmfour \rtom  (\sctxtwo \hole{\tmthree \esub{\var}{\tmfour}}) \esub{\vartwo}{\tmfive} = \sctx \hole{\tmthree \esub{\var}{\tmfour}} = \tmtwo$. Since we are working up to $\alpha$-equivalence, it is safe to assume that $\vartwo \notin \fv{\tmfive}$ and $\vartwo \notin \fv{\tmfour} $. There are several possible forms of $\tderiv$, namely:
		\begin{itemize}
		\item If the last rule is $\appgc$ and $\esub{\vartwo}{\tmfive}$ is appended through rule $\ESgc$, then $\tderiv$ is of the form
			$$
			\infer
				[\appgc]
				{\tyjp{(\msteps' + 1, \esteps')}{\sctx \hole{\la{\var}{\tmthree}} \tmfour}{\typctx}{\mtypetwo}}
				{\infer
					[\ESgc]
					{\tyjp{(\msteps', \esteps')}{\sctx \hole{\la{\var}{\tmthree}}}{\typctx}{\mult{\ty{\emptytype}{\mtypetwo}}}}
					{\tyjp{(\msteps',\esteps')}{\sctxtwo \hole{\la{\var}{\tmthree}}}{\typctx}{\mult{\ty{\emptytype}{\mtypetwo}}}
					\quad
					\typctx(\vartwo) = \emptytype}
				}
			$$
We then construct the following derivation 
			$$
			\infer
				[\appgc]
				{\tyjp{(\msteps' + 1,\esteps')}{\sctxtwo \hole{\la{\var}{\tmthree}} \tmfour}{\typctx}{\mtypetwo}}
				{\tyjp{(\msteps',\esteps')}{\sctxtwo \hole{\la{\var}{\tmthree}}}{\typctx}{\mult{\ty{\emptytype}{\mtypetwo}}}}
			$$
and apply \ih on it to obtain $\msteps = \msteps' + 1 \geq 1$. 
Moreover, the \ih also yields a  derivation $\tderiv'' \exder[\cbneed] \tyjp{(\msteps', \esteps')}{\sctxtwo \hole{\tmthree \esub{\var}{\tmfour}}}{\typctx}{\mtypetwo}$, with which we can then construct $\tderiv'$ as follows:
			$$
			\infer
				[\ESgc]
				{\tyjp{(\msteps', \esteps')}{\sctx \hole{\tmthree \esub{\var}{\tmfour}}}{\typctx}{\mtypetwo}}
				{\tderiv'' \exder[\cbneed] \tyjp{(\msteps', \esteps')}{\sctxtwo \hole{\tmthree \esub{\var}{\tmfour}}}{\typctx}{\mtypetwo}
				\quad
				\typctx(\vartwo) \neq \zero}
			$$
Finally, note that $(\msteps', \esteps') = (\msteps - 1, \esteps) $.
		\item If the last rule is $\appgc$ and $\esub{\vartwo}{\tmfive}$ is appended through rule $\ES$, then $\tderiv$ is 
			$$
			\infer
				[\appgc]
				{\tyjp{(\msteps' + \msteps'' + 1, \esteps' + \esteps'')}{\sctx \hole{\la{\var}{\tmthree}} \tmfour}{(\typctxtwo \sm \vartwo) \bigmplus \typctxthree}{\mtype}}
				{\infer
					[\ES]
					{\tyjp{(\msteps' + \msteps'', \esteps' + \esteps'')}{\sctx \hole{\la{\var}{\tmthree}}}{(\typctxtwo \sm \vartwo) \bigmplus \typctxthree}{\mult{\ty{\emptytype}{\mtype}}}}
					{\tyjp{(\msteps',\esteps')}{\sctxtwo \hole{\la{\var}{\tmthree}}}{\typctxtwo}{\mult{\ty{\emptytype}{\mtype}}}
					\quad
					\tyjp{(\msteps'',\esteps'')}{\tmfour}{\typctxthree}{\mtypetwo}
					\quad
					\typctxtwo(\vartwo) = \mtypetwo \neq \emptytype}
				}
			$$
We can then construct the following derivation
			$$
			\infer
				[\appgc]
				{\tyjp{(\msteps' + 1,\esteps')}{\sctxtwo \hole{\la{\var}{\tmthree}} \tmfour}{\typctxtwo}{\mtype}}
				{\tyjp{(\msteps',\esteps')}{\sctxtwo \hole{\la{\var}{\tmthree}}}{\typctxtwo}{\mult{\ty{\emptytype}{\mtype}}}}
			$$
Applying \ih on it yields a  derivation $\tderiv'' \exder[\cbneed] \tyjp{(\msteps',\esteps')}{\sctxtwo \hole{\tmthree \esub{\var}{\tmfour}}}{\typctxtwo}{\mtype}$ and implies the fact that $\msteps = \msteps' + \msteps'' + 1 \geq \msteps' + 1 \geq 1$. Finally, we construct $\tderiv'$ as follows:
			$$
			\infer
				[\ES]
				{\tyjp{(\msteps' + \msteps'', \esteps' + \esteps'')}{\sctx \hole{\tmthree \esub{\var}{\tmfour}}}{(\typctxtwo \sm \vartwo) \bigmplus \typctxthree}{\mtype}}
				{\tderiv'' \exder[\cbneed] \tyjp{(\msteps',\esteps')}{\sctxtwo \hole{\tmthree \esub{\var}{\tmfour}}}{\typctxtwo}{\mtype}
				\quad
				\tyjp{(\msteps'',\esteps'')}{\tmfour}{\typctxthree}{\mtypetwo}
				\quad
				\typctxtwo(\vartwo) = \mtypetwo \neq \emptytype}
			$$
Note that $(\msteps' + \msteps'', \esteps' + \esteps'') = (\msteps - 1, \esteps) $.
		\item If the last rule is $\appsteps$ and $\esub{\vartwo}{\tmfive}$ is appended through rule $\ESgc$, then $\tderiv$ is 
			$$
			\infer
				[\app]
				{\tyjp{(\msteps' + \msteps'' + 1, \esteps' + \esteps'')}{\sctx \hole{\la{\var}{\tmthree}} \tmfour}{\typctxtwo \mplus \typctxthree}{\mtypetwo}}
				{\infer
					[\esgc]
					{\tyjp{(\msteps',\esteps')}{\sctx \hole{\la{\var}{\tmthree}}}{\typctxtwo}{\mult{\ty{\mtype}{\mtypetwo}}}}
					{\tyjp{(\msteps',\esteps')}{\sctxtwo \hole{\la{\var}{\tmthree}}}{\typctxtwo}{\mult{\ty{\mtype}{\mtypetwo}}}
					\quad
					\typctxtwo(\vartwo) = \emptytype
					}
				\quad
				\tyjp{(\msteps'',\esteps'')}{\tmfour}{\typctxthree}{\mtype}
				\quad
				\mtype \neq \emptytype
				}
			$$
We are now able to give the following derivation
			$$
			\infer
				[\app]
				{\tyjp{(\msteps' + \msteps'' + 1, \esteps' + \esteps'')}{\sctxtwo \hole{\la{\var}{\tmthree}} \tmfour}{\typctxtwo \mplus \typctxthree}{\mtypetwo}}
				{\tyjp{(\msteps',\esteps')}{\sctxtwo \hole{\la{\var}{\tmthree}}}{\typctxtwo}{\mult{\ty{\mtype}{\mtypetwo}}}
				\quad
				\tyjp{(\msteps'',\esteps'')}{\tmfour}{\typctxthree}{\mtype}
				}
			$$
on which application of \ih gives that $\msteps = \msteps' + \msteps'' + 1 \geq 1$, and yields a  derivation $\tderiv'' \exder[\cbneed] \tyjp{(\msteps' + \msteps'', \esteps' + \esteps'')}{\sctxtwo \hole{\tmthree \esub{\var}{\tmfour}}}{\typctxtwo \mplus \typctxthree}{\mtypetwo}$, thus allowing us to construct $\tderiv'$ as follows:
			$$
			\infer
				[\esgc]
				{\tderiv'' \exder[\cbneed] \tyjp{(\msteps' + \msteps'', \esteps' + \esteps'')}{\sctx \hole{\tmthree \esub{\var}{\tmfour}}}{\typctxtwo \mplus \typctxthree}{\mtypetwo}}
				{\tderiv'' \exder[\cbneed] \tyjp{(\msteps' + \msteps'', \esteps' + \esteps'')}{\sctxtwo \hole{\tmthree \esub{\var}{\tmfour}}}{\typctxtwo \mplus \typctxthree}{\mtypetwo}
				\quad
				(\typctxtwo \mplus \typctxthree)(\vartwo) = \emptytype}
			$$
Note that $\vartwo \notin \fv{\tmfive}$ and so via \reflemma{need-basic-properties-typing-derivations} we know that $\typctxthree(\vartwo) = \emptytype$; hence, the use of rule $\esgc$ in $\tderiv'$ is correct. Moreover, note that $(\msteps' + \msteps'', \esteps' + \esteps'') = (\msteps, \esteps) $.
			
		\item If the last rule is $\app$ and $\esub{\vartwo}{\tmfive}$ is appended through rule $\esrule$, then $\tderiv$ is 
			$$
			\infer
				[\app]
				{\tyjp{((\msteps' + \msteps'') + \msteps''' + 1, (\esteps' + \esteps'') + \esteps''')}{\sctx \hole{\la{\var}{\tmthree}} \tmfour}{((\typctxtwo \sm \vartwo) \mplus \typctxthree) \mplus \typctxfour}{\mtypetwo}}
				{\infer
					[\esrule]
					{\tyjp{(\msteps' + \msteps'',\esteps' + \esteps'')}{\sctx \hole{\la{\var}{\tmthree}}}{(\typctxtwo \sm \vartwo) \mplus \typctxthree}{\mult{\ty{\mtype}{\mtypetwo}}}}
					{\tyjp{(\msteps',\esteps')}{\sctxtwo \hole{\la{\var}{\tmthree}}}{\typctxtwo}{\mult{\ty{\mtype}{\mtypetwo}}}
					\quad
					\tyjp{(\msteps'',\esteps'')}{\tmfive}{\typctxthree}{\typctxtwo(\vartwo)}
					\quad
					\typctxtwo(\vartwo) \neq \emptytype}
				\quad
				\tyjp{(\msteps''',\esteps''')}{\tmfour}{\typctxfour}{\mtype}
				}
			$$
Since $\vartwo \notin \fv{\tmfour} \cup \fv{\tmfive}$ then we know through \reflemma{need-basic-properties-typing-derivations} that $\vartwo \notin \dom{\typctxthree}$ and $\vartwo \notin \dom{\typctxfour}$. 
Now, applying \ih on the following  derivation
			$$
			\infer
				[\app]
				{\tyjp{(\msteps' + \msteps''' + 1, \esteps' + \esteps''')}{\sctxtwo \hole{\la{\var}{\tmthree}} \tmfour}{\typctxtwo \mplus \typctxfour}{\mtypetwo}}
				{\tyjp{(\msteps',\esteps')}{\sctxtwo \hole{\la{\var}{\tmthree}}}{\typctxtwo}{\mult{\ty{\mtype}{\mtypetwo}}}
				\quad
				\tyjp{(\msteps''',\esteps''')}{\tmfour}{\typctxfour}{\mtype}}
			$$
yields a derivation $\tderiv'' \exder[\cbneed] \tyjp{(\msteps' + \msteps''', \esteps' + \esteps''')}{L' \hole{\tmthree \esub{\var}{\tmfour}}}{\typctxtwo \mplus \typctxfour}{\mtypetwo}$ and implies $\msteps = (\msteps' +  \msteps'') + \msteps''' + 1 \geq \msteps' + \msteps''' + 1 \geq 1$. 
Finally, given that $(\typctxtwo \mplus \typctxfour)(\vartwo) = \typctxtwo(\vartwo) $, we can finally construct $\tderiv'$ as follows:
			$$
			\infer
				[\ES]
				{\tyjp{((\msteps' + \msteps''') + \msteps'', (\esteps' + \esteps''') + \esteps'')}{\sctx \hole{\tmthree \esub{\var}{\tmfour}}}{((\typctxtwo \mplus \typctxfour) \sm \vartwo) \mplus \typctxthree}{\mtypetwo}}
				{\tderiv'' \exder[\cbneed] \tyjp{(\msteps' + \msteps''', \esteps' + \esteps''')}{\sctxtwo \hole{\tmthree \esub{\var}{\tmfour}}}{\typctxtwo \mplus \typctxfour}{\mtypetwo}
				\quad
				\tyjp{(\msteps'',\esteps'')}{\tmfive}{\typctxthree}{\typctxtwo(\vartwo)}
				\quad
				\typctxtwo(\vartwo) \neq \emptytype}
			$$
Note that $((\msteps' + \msteps''') + \msteps'', (\esteps' + \esteps''') + \esteps'') = (\msteps - 1, \esteps) $, and that since $\vartwo \notin \dom{\typctxfour}$ then $((\typctxtwo \mplus \typctxfour) \sm \vartwo) \mplus \typctxthree = ((\typctxtwo \sm \vartwo) \mplus \typctxthree) \mplus \typctxfour$.
		\end{itemize}
All other typing rules are not possible as the final rule in $\tderiv$. In particular, rule $\many$ is not possible because the term in its final judgement has to be an abstraction, not an application term.
	\end{itemize}
\item \emph{Root step for $\toe$.} Let $\tm = \cbneedctx \cwc{\var} \esub{\var}{\sctx \hole{\val}} \rtoe S \hole{\cbneedctx \cwc{\val} \esub{\var}{\val}}$. We can infer from \reflemma{need-basic-properties-typing-derivations} that $\tderiv$ can only have $\ES$ as its last typing rule, and so can only be of the form
	$$
	\infer
		[\ES]
		{\tyjp{(\msteps' + \msteps'', \esteps' + \esteps'')}{\cbneedctx \cwc{\var} \esub{\var}{\sctxp{\val}}}{\typctxtwo \mplus \typctxthree}{\mtype}}
		{\tderiv_{\cbneedctx \cwc{\var}} \exder[\cbneed] \tyjp{(\msteps',\esteps')}{\cbneedctx \cwc{\var}}{\var \colon \mtypethree; \typctxtwo}{\mtype}
		\quad
		\tderiv_{\sctxp{\val}} \exder[\cbneed] \tyjp{(\msteps'',\esteps'')}{\sctxp{\val}}{\typctxthree}{\mtypethree}
		\quad
		\mtypethree \neq \emptytype}
	$$

	Note that $\var \notin \dom{\typctxthree}$, because otherwise \reflemma{need-basic-properties-typing-derivations} would imply $\var \in \fv{\sctxp{\val}}$ and this cannot be the case, given that we are working up to $\alpha$-equivalence.
	
	We now proceed to prove by induction on $\sctx$ that whenever we have $\tderiv_{\cbneedctx \cwc{\var}} $ and $\tderiv_{\sctxp{\val}} $ we can derive $\tderiv' \exder[\cbneed] \tyjp{(\msteps' + \msteps'', \esteps' + \esteps'' - 1)}{\sctxp{\cbneedctx \cwc{\val} \esub{\var}{\val}}}{\typctxtwo \bigmplus \typctxthree}{\mtype} $.
	\begin{itemize}
	\item Let $\sctx \defeq \ctxhole$. First of all, applying \reflemma{need-linear-substitution} on $\tderiv_{\cbneedctx \cwc{\var}} $ yields a splitting $\mtypethree = \mtypethree_{1} \mplus \mtypethree_{2}$ such that for every  derivation $\tderivtwo \exder[\cbneed] \tyjp{(\msteps''',\esteps''')}{\val}{\typctxfour}{\mtypethree_{1}}$ there exists a  derivation $\tyjp{(\msteps' + \msteps''', \esteps' + \esteps''' - 1)}{\cbneedctx \cwc{\val}}{\var \colon \mtypethree_{2} ; \typctxtwo \bigmplus \typctxfour}{\mtype} $. 
	In particular, if $\mtypethree_{2} = \emptytype$, then we can construct the desired  derivation $\tderiv'$ as follows
		$$
		\infer
			[\ESgc]
			{\tyjp{(\msteps' + \msteps'', \esteps' + \esteps'' - 1)}{\cbneedctx \cwc{\val} \esub{\var}{\val}}{\typctxtwo \bigmplus \typctxthree}{\mtype}}
			{
			\infer
				[\reflemma{need-linear-substitution}]
				{\tyjp{(\msteps' + \msteps'', \esteps' + \esteps'' - 1)}{\cbneedctx \cwc{\val}}{\typctxtwo \bigmplus \typctxthree}{\mtype}}
				{\tyjp{(\msteps',\esteps')}{\cbneedctx \cwc{\var}}{\var \colon \mtypethree; \typctxtwo}{\mtype}
				\qquad
				\tyjp{(\msteps'',\esteps'')}{\val}{\typctxthree}{\mtypethree}}}
		$$

	On the other hand, if $\mtypethree_{2} \neq \emptytype$, we can then apply \reflemma{need-splitting-multisets} on $\tderiv_{\sctxp{\val}}$ to yield  derivations $\tderiv_{\mtypethree_{1}} \exder[\cbneed] \tyjp{(\msteps''_{\mtypethree_{1}}, \esteps''_{\mtypethree_{1}})}{\val}{\typctxthree_{\mtypethree_{1}}}{\mtypethree_{1}}$ and $\tderiv_{\mtypethree_{2}} \exder[\cbneed] \tyjp{(\msteps''_{\mtypethree_{2}}, \esteps''_{\mtypethree_{2}})}{\val}{\typctxthree_{\mtypethree_{2}}}{\mtypethree_{2}}$ such that $\typctxthree = \typctxthree_{\mtypethree_{1}} \mplus \typctxthree_{\mtypethree_{2}} $ and $(\msteps'', \esteps'') = (\msteps''_{\mtypethree_{1}} + \msteps''_{\mtypethree_{2}}, \esteps''_{1} + \esteps''_{\mtypethree_{2}}) $. Thus, we are now able to combine all these  derivations to construct $\tderiv'$ as follows
		$$
		\infer
			[\ES]
			{\tyjp{(\msteps' + \msteps''_{\mtypethree_{1}} + \msteps''_{\mtypethree_{2}}, \esteps' + \esteps''_{\mtypethree_{1}} - 1 + \esteps''_{\mtypethree_{2}})}{\cbneedctx \cwc{\val} \esub{\var}{\val}}{\typctxtwo \bigmplus \typctxthree_{\mtypethree_{1}} \bigmplus \typctxthree_{\mtypethree_{2}}}{\mtype}}
			{\infer
				[\reflemmaeq{need-linear-substitution}]
				{\tyjp{(\msteps' + \msteps''_{\mtypethree_{1}}, \esteps' + \esteps''_{\mtypethree_{1}} - 1)}{\cbneedctx \cwc{\val}}{\var \colon \mtypethree_{2} ; \typctxtwo \bigmplus \typctxthree_{\mtypethree_{1}}}{\mtype}}
				{\tyjp{(\msteps',\esteps')}{\cbneedctx \cwc{\var}}{\var \colon \mtypethree; \typctxtwo}{\mtype}
				\quad
				\tyjp{(\msteps''_{\mtypethree_{1}}, \esteps''_{\mtypethree_{1}})}{\val}{\typctxthree_{\mtypethree_{1}}}{\mtypethree_{1}}}
			\quad
			\tyjp{(\msteps''_{\mtypethree_{2}}, \esteps''_{\mtypethree_{2}})}{\sctxp{\val}}{\typctxthree_{\mtypethree_{2}}}{\mtypethree_{2}}}
		$$
	
	\item Let $\sctx \defeq \sctx' \esub{\vartwo}{\tm}$. There are two possible final typing rules in $\tderiv_{\sctxp{\val}}$, namely $\ES$ and $\ESgc$.
		\begin{itemize}
			\item Let $\tderiv_{\sctxp{\val}}$ be of the form
				$$
				\infer
					[\ESgc]
					{\tyjp{(\mstepsthree, \estepsthree)}{\sctxp{\val}}{\typctxthree}{\mtypethree}}
					{\tyjp{(\mstepsthree, \estepsthree)}{\sctxptwo{\val}}{\typctxthree}{\mtypethree}
					\quad
					\typctxthree(\vartwo) = \emptytype}
				$$
			Note that since we are working up to $\alpha$-equivalence we can safely assume that $\vartwo \notin \fv{\cbneedctx \cwc{\var}}$, and so via \reflemma{need-basic-properties-typing-derivations} we have that $\vartwo \notin \dom{\typctxtwo} $. We can then construct $\tderiv'$ by application of the \ih as follows
				$$
				\infer
					[\ESgc]
					{\tyjp{(\mstepstwo + \mstepsthree, \estepstwo + \estepsthree)}{\sctxp{\cbneedctx \cwc{\val} \esub{\var}{\val}}}{\typctxtwo \bigmplus \typctxthree}{\mtype}}
					{\infer
						[\ih]
						{\tyjp{(\mstepstwo + \mstepsthree, \estepstwo + \estepsthree)}{\sctxptwo{\cbneedctx \cwc{\val} \esub{\var}{\val}}}{\typctxtwo \bigmplus \typctxthree}{\mtype}}
						{\tyjp{(\mstepstwo, \estepstwo)}{\cbneedctx \cwc{\var}}{\var \colon \mtypethree ; \typctxtwo}{\mtype}
						\quad
						\tyjp{(\mstepsthree, \estepsthree)}{\sctxptwo{\val}}{\typctxthree}{\mtypethree}}}
				$$
			\item Let $\tderiv_{\sctxp{\val}}$ be of the form
				$$
				\infer
					[\ES]
					{\tyjp{(\msteps''_{1} + \msteps''_{2}, \esteps''_{1} + \esteps''_{2})}{\sctxp{\val}}{\typctxthree_{1} \bigmplus \typctxthree_{2}}{\mtypethree}}
					{\tyjp{(\msteps''_{1}, \esteps''_{1})}{\sctx' \hole{\val}}{\vartwo \colon \mtypefour; \typctxthree_{1}}{\mtypethree}
					\quad
					\tyjp{(\msteps''_{2}, \esteps''_{2})}{\tm}{\typctxthree_{2}}{\mtypefour}
					\quad
					\mtypefour \neq \emptytype}
				$$
			where $\typctxthree = \typctxthree_{1} \bigmplus \typctxthree_{2} $ and  $(\mstepsthree, \estepsthree) = (\mstepsthree_{1} + \mstepsthree_{2}, \estepsthree_{1} + \estepsthree_{2})$. Note that $\vartwo \notin \fv{\cbneedctx \cwc{\var}}$, and so via \reflemma{need-basic-properties-typing-derivations} we have that $\vartwo \notin \dom{\typctxtwo} $.
			
			We can then construct $\tderiv'$ by application of the \ih and a rearranging of $\tderiv$ as follows
				$$
				\infer
					[\ES]
					{\tyjp{(\msteps' + \msteps''_{1} + \msteps''_{2}, \esteps' + \esteps''_{1} - 1 + \esteps''_{2})}{\sctxp{\cbneedctx \cwc{\val} \esub{\var}{\val}}}{\typctxtwo \bigmplus \typctxthree_{1} \bigmplus \typctxthree_{2}}{\mtype}}
					{\infer
						[\ih]
						{\tyjp{(\msteps' + \msteps''_{1}, \esteps' + \esteps''_{1} - 1)}{\sctx' \hole{\cbneedctx \cwc{\val} \esub{\var}{\val}}}{\vartwo \colon \mtypefour ; \typctxtwo \bigmplus \typctxthree_{1}}{\mtype}}
						{\tyjp{(\msteps', \esteps')}{\cbneedctx \cwc{\var}}{\var \colon \mtypethree; \typctxtwo}{\mtype}
						\quad
						\tyjp{(\msteps''_{1}, \esteps''_{1})}{\sctx' \hole{\val}}{\vartwo \colon \mtypefour; \typctxthree_{1}}{\mtypethree}}
					\quad
					\tyjp{(\msteps''_{2}, \esteps''_{2})}{\tm}{\typctxthree_{2}}{\mtypefour}}
				$$
		
		\end{itemize}
	
	\end{itemize}
	
\item \emph{Contextual closure.} We proceed by induction on the derivation of $\tm = \cbneedctx \hole{\tm_{1}} \tond \cbneedctx \hole{\tm_{2}} = \tmtwo$:
	\begin{itemize}
	\item If $\cbneedctx = \chole$, then $\tm \rtom \tmtwo$ or $\tm \rtoe \tmtwo$, and the statement holds as we have just proved.
	\item Let $\cbneedctx = \cbneedctx_{1} \tmthree$. This implies that the last typing rule in $\tderiv$ is either $\appgc$ or $\appsteps$. We will only cover the case where $\cbneedctx \hole{\tm_{1}} \tom \cbneedctx \hole{\tm_{2}}$ and $\tderiv$ ends in rule $\appsteps$, leaving the rest of the (analogous) cases to the reader.
	
	Now, $\tderiv$ is of the form
		$$
		\infer
			[\appsteps]
			{\tyjp{(\msteps' + \msteps'' + 1, \esteps' + \esteps'')}{\cbneedctx_{1} \hole{\tm_{1}} \tmthree}{\typctxtwo \bigmplus \typctxthree}{\mtype}}
			{\tyjp{(\msteps',\esteps')}{\cbneedctx_{1} \hole{\tm_{1}}}{\typctxtwo}{\mult{\ty{\mtypetwo}{\mtype}}}
			\quad
			\tyjp{(\msteps'',\esteps'')}{\tmthree}{\typctxthree}{\mtypetwo}}
		$$
Then we apply \ih on the left premise of the last rule, obtaining a type derivation whose final judgement is $\tyjp{(\msteps' - 1, \esteps')}{\cbneedctx_{1} \hole{\tm_{2}}}{\typctxtwo}{\mult{\ty{\mtypetwo}{\mtype}}} $, thus allowing us to construct $\tderiv'$ as folows:
		$$
		\infer
			[\appsteps]
			{\tyjp{(\msteps' + \msteps'', \esteps' + \esteps'')}{\cbneedctx_{1} \hole{\tm_{2}} \tmthree}{\typctxtwo \bigmplus \typctxthree}{\mtype}}
			{\tyjp{(\msteps' - 1, \esteps')}{\cbneedctx_{1} \hole{\tm_{2}}}{\typctxtwo}{\mult{\ty{\mtypetwo}{\mtype}}}
			\quad
			\tyjp{(\msteps'',\esteps'')}{\tmthree}{\typctxthree}{\mtypetwo}}
		$$
		
Note that $(\msteps' + \msteps'', \esteps' + \esteps'') = (\msteps - 1,\esteps) $.
	\item Let $\cbneedctx = \cbneedctx_{1} \esub{\var}{\tmthree}$. This implies that the last typing rule in $\tderiv$ is either $\ESgc$ or $\ES$. We will only cover the case where $\cbneedctx \hole{\tm_{1}} \tom \cbneedctx \hole{\tm_{2}}$ and $\tderiv$ ends in rule $\ES$, leaving the rest of the (analogous) cases to the reader.
	
	Now, $\tderiv$ is of the form
		$$
		\infer
			[\ES]
			{\tyjp{(\msteps' + \msteps'', \esteps' + \esteps'')}{\cbneedctx_{1} \hole{\tm_{1}} \esub{\var}{\tmthree}}{(\typctxtwo \sm \var) \bigmplus \typctxthree}{\mtype}}
			{\tyjp{(\msteps',\esteps')}{\cbneedctx_{1} \hole{\tm_{1}}}{\typctxtwo}{\mtype}
			\quad
			\tyjp{(\msteps'', \esteps'')}{\tmthree}{\typctxthree}{\typctxtwo(\var)}
			\quad
			\typctxtwo(\var) \neq \emptytype}
		$$
	
	Applying \ih on the left premise of the last rule yields a  derivation whose final judgement is $\tyjp{(\msteps' - 1, \esteps')}{\cbneedctx_{1} \hole{\tm_{2}}}{\typctxtwo}{\mtype} $, thus allowing us to construct $\tderiv'$ as follows:
		$$
		\infer
			[\ES]
			{\tyjp{(\msteps' - 1 + \msteps'', \esteps' + \esteps'')}{\cbneedctx_{1} \hole{\tm_{2}} \esub{\var}{\tmthree}}{(\typctxtwo \sm \var) \mplus \typctxthree}{\mtype}}
			{\tyjp{(\msteps' - 1, \esteps')}{\cbneedctx_{1} \hole{\tm_{2}}}{\typctxtwo}{\mtype} 
			\quad
			\tyjp{(\msteps'', \esteps'')}{\tmthree}{\typctxthree}{\typctxtwo(\var)}
			\quad
			\typctxtwo(\var) \neq \emptytype
			}
		$$
	
Note that $(\msteps' - 1 + \msteps'', \esteps' + \esteps'') = (\msteps - 1, \esteps)$
		
		\item Let $\cbneedctx = \cbneedctx_{1} \cwc{\var} \esub{\var}{\cbneedctx_{2}}$. We will only consider the case where $\[\cbneedctx_{1} \cwc{\var} \esub{\var}{\cbneedctx_{2}\hole{\tm_{1}}} \tom \cbneedctx_{1} \cwc{\var} \esub{\var}{\cbneedctx_{2} \hole{\tm_{2}}} \]$ leaving the other (analogous) case to the reader.
		
		First of all, \reflemma{need-basic-properties-typing-derivations} implies that the last rule in $\tderiv$ is $\esrule$; \ie, $\tderiv$ is of the form
		$$
		\infer
			[\ES]
			{\tyjp{(\msteps' + \msteps'', \esteps' + \esteps'')}{\cbneedctx_{1} \cwc{\var} \esub{\var}{\cbneedctx_{2}\hole{\tm_{1}}}}{(\typctxtwo \sm \var) \mplus \typctxthree}{\mtype}}
			{\tyjp{(\msteps', \esteps')}{\cbneedctx_{1} \hole{\var}}{\typctxtwo}{\mtype}
			\quad
			\tyjp{(\msteps'',\esteps'')}{\cbneedctx_{2} \hole{\tm_{1}}}{\typctxthree}{\typctxtwo(\var)}
			\quad 
			\typctxtwo(\var) \neq \emptytype}
		$$
		
Applying now the \ih on the premise in the middle of the last rule yields a  derivation with conclusion $\tyjp{(\msteps'' - 1,\esteps'')}{\cbneedctx_{2} \hole{\tm_{2}}}{\typctxthree}{\typctxtwo(\var)} $, thus allowing us to construct $\tderiv'$ as follows
		$$
		\infer
			[\ES]
			{\tyjp{(\msteps' + \msteps'' - 1, \esteps' + \esteps'')}{\cbneedctx_{1} \cwc{\var} \esub{\var}{\cbneedctx_{2}\hole{\tm_{2}}}}{(\typctxtwo \sm \var) \mplus \typctxthree}{\mtype}}
			{\tyjp{(\msteps', \esteps')}{\cbneedctx_{1} \cwc{\var}}{\typctxtwo}{\mtype}
			\quad
			\tyjp{(\msteps'' - 1,\esteps'')}{\cbneedctx_{2} \hole{\tm_{2}}}{\typctxthree}{\typctxtwo(\var)}
			\quad 
			\typctxtwo(\var) \neq \emptytype}
		$$
verifying that $(\msteps' + \msteps'' - 1, \esteps' + \esteps'') = (\msteps - 1, \esteps)$.
\qed
	\end{itemize}

\end{itemize}

\end{proof}

\gettoappendix {prop:need-normal-forms-forall}
% !TEX root = ../../main.tex
\begin{proof}
By induction on $\normalpr\tm$.

\begin{itemize}
\item If $\normalpr\tm$ because $\tm = \la{\var}{\tmtwo}$ then $\tderiv$ can only be of the form 
	$$
	\infer
		[\many]
		{\tyjp{(0,0)}{\la{\var}{\tmtwo}}{}{\mult{\normal}}}
		{\infer
			[\normal]
			{\tyjp{(0,0)}{\la{\var}{\tmtwo}}{}{\normal}}
			{}
		}
	$$
\item If $\normalpr\tm$ because $\tm = \tmtwo \esub{\vartwo}{\tmthree}$ and $\normalpr\tmtwo$ then,  in principle, there are two possible last typing rules to $\tderiv$, namely $\ES$ and $\ESgc$. If we assume that it is $\ES$, then $\tderiv$ is of the form
	$$
	\infer
		[\ES]
		{\tyjp{(\msteps'_{1} + \msteps'_{2}, \esteps'_{1} + \esteps'_{2})}{\tmtwo\esub{\vartwo}{\tmthree}}{\typctxtwo_{1} \mplus \typctxtwo_{2}}{\mult{\normal}}}
		{\tyjp{(\msteps'_{1}, \esteps'_{1})}{\tmtwo}{\vartwo : \mtypethree; \typctxtwo_{1}}{\mult{\normal}}
		\quad
		\tyjp{(\msteps'_{2}, \esteps'_{2})}{\tmthree}{\typctxtwo_{2}}{\mtypethree}
		\quad
		\mtypethree \neq \zero}
	$$
	
with $\typctxtwo = \typctxtwo_{1} \mplus \typctxtwo_{2} $ and $(\msteps', \esteps') = (\msteps'_{1} + \msteps'_{2}, \esteps'_{1} + \esteps'_{2})$. However, application of the \ih on the left-hand side premise gives that $\vartwo : \mtypethree ; \typctxtwo_{1}$ is empty, in turn implying that $\mtypethree = \zero$, which is in contradiction with the constraints of the $\ES$ typing rule.

Therefore, $\ES$ could not be the last typing rule of $\tderiv$, and so the latter can only be of the form
	$$
	\infer
		[\ESgc]
		{\tyjp{(\msteps', \esteps')}{\tmtwo\esub{\vartwo}{\tmthree}}{\typctxtwo}{\mult{\normal}}}
		{\tyjp{(\msteps', \esteps')}{\tmtwo}{\typctxtwo}{\mult{\normal}}
		\quad
		\vartwo \notin \dom{\typctxtwo}
		}
	$$
	
Finally, it suffices to apply \ih on the premise to obtain that $\typctxtwo$ is empty and $\msteps' = \esteps' = 0$.\qed
\end{itemize}
\end{proof}

\gettoappendix {thm:need-correctness}
% !TEX root = ../../main.tex
\begin{proof} 
By induction on $\msteps+\esteps$ and case analysis on whether $\tm$ reduces or not. If $\tm$ is in $\tocbneed$-normal form, then we only have to prove the \emph{moreover} part, which states that if $\tderiv$ is $\tight$ then $\msteps = \esteps = 0$, which follows from \refprop{need-normal-forms-forall}.

Otherwise, there are 2 cases:

\begin{enumerate}
\item \emph{Multiplicative steps}: If $\tm \tomcbneed \tmthree$, then by Quantitative Subject Reduction For $\cbneed$ (\refprop{need-subject-reduction}) there exists a typing derivation $\tderivtwo \exder[\cbneed] \tyjp{(\msteps - 1, \esteps)}{\tmthree}{\typctx}{\mtype}$. By \ih there exist $\tmtwo$ and $\deriv'$ such that $\normalpr{\tmtwo}$, $\deriv' : \tmthree \tocbneedn \tmtwo$, $\size{\deriv'}_{\msteps} \leq \msteps - 1$, and $\size{\deriv'}_{\esteps} \leq \esteps$. Just note that $\tm \tomcbneed \tmthree$ and so, since $\deriv'$ is preceeded by such a step, then we have $\size{\deriv}_{\msteps} = \size{\deriv'}_{\msteps} + 1 \leq \msteps$ and $\size{\deriv}_{\esteps} = \size{\deriv'}_{\esteps} \leq \esteps$.

If $\tderiv $ is $\tight$ then so is $\tderivtwo$. Then by \ih $\size{\deriv'}_{\msteps} = \msteps - 1$ and $\size{\deriv'}_{\esteps} = \esteps$, which finally implies that $\size{\deriv}_{\msteps} = \size{\deriv'}_{\msteps} + 1 = \msteps$ and $\size{\deriv}_{\esteps} = \size{\deriv'}_{\esteps} = \esteps$.

\item \emph{Exponential steps}: If $\tm \toecbneed \tmthree$, then by Quantitative Subject Reduction (\refprop{need-subject-reduction}) there exists a typing derivation $\tderivtwo \exder[\cbneed] \tyjp{(\msteps, \esteps - 1)}{\tmthree}{\typctx}{\mtype}$. By \ih there exists $\tmtwo$ and $\deriv'$ such that $\normalpr{\tmtwo}$, $\deriv' : \tmthree \tocbneedn \tmtwo$, $\size{\deriv'}_{\msteps} \leq \msteps$, and $\size{\deriv'}_{\esteps} \leq \esteps - 1$. Just note that $\tm \toecbneed \tmthree$ and so, since $\deriv'$ is preceeded by such a step, we have $\size{\deriv}_{\msteps} = \size{\deriv'}_{\msteps} \leq \msteps$ and $\size{\deriv}_{\esteps} = \size{\deriv'}_{\esteps} + 1 \leq \esteps$.

If $\tderiv $ is $\tight$ then so is $\tderivtwo$. Then by \ih $\size{\deriv'}_{\msteps} = \msteps$ and $\size{\deriv'}_{\esteps} = \esteps - 1 $, which finally implies that $\size{\deriv}_{\msteps} = \size{\deriv'}_{\msteps}  = \msteps$ and $\size{\deriv}_{\esteps} = \size{\deriv'}_{\esteps} + 1 = \esteps$.
\qed
\end{enumerate}
\end{proof}

\subsection{\cbneed Completeness}
\gettoappendix {prop:need-normal-forms-exist}
% !TEX root = ../../main.tex
\begin{proof} 
We can easily prove by induction on $\normalpr{}$ that if $\normalpr{\tm}$ then $\tm = \sctxp{\la{\var}{\tmtwo}}$, for some abstraction $\la{\var}{\tmtwo}$ and substitution context $\sctx = \ctxhole \esub{\var_{1}}{\tm_{1}} ... \esub{\var_{n}}{\tm_{n}}$, with $n \geq 0$.

Therefore, we can derive $\tderiv$ as follows
	$$
	\infer
		[\ESgc]
		{\tyjp{(0,0)}{\la{\var}{\tmtwo} \esub{\var_{1}}{\tm_{1}} \esub{\var_{n}}{\tm_{n}}}{}{\mult{\normal}}}
		{\infer
			[\ESgc]
			{...}
			{\infer
				[\ESgc]
				{\tyjp{(0,0)}{\la{\var}{\tmtwo} \esub{\var_{1}}{\tm_{1}}}{}{\mult{\normal}}}
				{\infer
					[\many]
					{\tyjp{(0,0)}{\la{\var}{\tmtwo}}{}{\mult{\normal}}}
					{\infer
						[\normal]
						{\tyjp{(0,0)}{\la{\var}{\tmtwo}}{}{\normal}}
						{}
					}
				}
			}
		}
	$$
\qed
\end{proof}

\begin{lemma}[Merging of multi-sets with respect to  derivations]
\label{l:need-merging-multisets}
Given a value $\val$, for any two  derivations $\tderiv_{\mtypetwo} \exder[\cbneed] \tyjp{(\msteps_{\mtypetwo}, \esteps_{\mtypetwo})}{\val}{\typctx_{\mtypetwo}}{\mtypetwo} $ and $\tderiv_{\mtypethree} \exder[\cbneed] \tyjp{(\msteps_{\mtypethree}, \esteps_{\mtypethree})}{\val}{\typctx_{\mtypethree}}{\mtypethree} $, there is a derivation $\tderiv_{\mtypetwo \mplus \mtypethree} \exder[\cbneed] \tyjp{(\msteps_{\mtypetwo} + \msteps_{\mtypethree}, \esteps_{\mtypetwo} + \esteps_{\mtypethree})}{\val}{\typctx_{\mtypetwo} \mplus \typctx_{\mtypethree}}{\mtypetwo \mplus \mtypethree} $.
\end{lemma}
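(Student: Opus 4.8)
The plan is to exploit the fact that $\val$ is a \emph{value}, hence an abstraction $\la\var\tm$, and that in the \cbneed system a multi type can be assigned to an abstraction \emph{only} by the rule $\many$. Indeed, rule $\ax$ has a variable in its conclusion, rules $\app$, $\appgc$, $\esrule$, $\esgc$ have an application or an explicit substitution, and rules $\fun$ and $\normal$ assign a \emph{linear} type; so the sole rule whose conclusion is an abstraction typed by a multi type is $\many$. Consequently, both given derivations $\tderiv_{\mtypetwo}$ and $\tderiv_{\mtypethree}$ must end with an instance of $\many$. Note in passing that, since $\many$ requires a non-empty index set, $\mtypetwo$ and $\mtypethree$ are necessarily non-empty (no value is typable with $\zero$ in this system), so the case analysis has no degenerate subcase.

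First I would invert $\tderiv_{\mtypetwo}$: its last rule being $\many$, there are a non-empty finite set $J_2$, linear types $(\type_i)_{i \in J_2}$, and subderivations $\tderiv_i \exder[\cbneed] \tyjp{(\msteps_i, \esteps_i)}{\val}{\typctx_i}{\type_i}$ for $i \in J_2$, such that $\mtypetwo = \mult{\type_i}_{i \in J_2}$, $\typctx_{\mtypetwo} = \bigmplus_{i \in J_2} \typctx_i$, $\msteps_{\mtypetwo} = \sum_{i \in J_2} \msteps_i$, and $\esteps_{\mtypetwo} = \sum_{i \in J_2} \esteps_i$. Symmetrically, inverting $\tderiv_{\mtypethree}$ yields a non-empty finite set $J_3$, which up to renaming of indices I may assume disjoint from $J_2$, together with subderivations $\tderiv_i \exder[\cbneed] \tyjp{(\msteps_i, \esteps_i)}{\val}{\typctx_i}{\type_i}$ for $i \in J_3$ enjoying the analogous identities.

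Then I would simply reapply $\many$ to the combined family $(\tderiv_i)_{i \in J_2 \uplus J_3}$, which is legitimate because $J_2 \uplus J_3 \neq \emptyset$. The resulting derivation $\tderiv_{\mtypetwo \mplus \mtypethree}$ types $\val$ with $\mult{\type_i}_{i \in J_2 \uplus J_3} = \mtypetwo \mplus \mtypethree$, in the type context $\bigmplus_{i \in J_2 \uplus J_3} \typctx_i = \typctx_{\mtypetwo} \mplus \typctx_{\mtypethree}$, and with indices $\sum_{i \in J_2 \uplus J_3} \msteps_i = \msteps_{\mtypetwo} + \msteps_{\mtypethree}$ and $\sum_{i \in J_2 \uplus J_3} \esteps_i = \esteps_{\mtypetwo} + \esteps_{\mtypethree}$, exactly as required. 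This is the mirror image of \reflemma{need-splitting-multisets}, obtained by reading the $\many$ rule bottom-up there and top-down here.

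There is essentially no hard part: the only real content is the structural observation that a value's multi type is always introduced by $\many$, after which the statement reduces to the associativity of multi-set union and of finite sums over a disjoint union of index sets. The one subtlety to keep in mind is that the system forbids $\many$ with an empty premise set, so the argument genuinely relies on $\mtypetwo, \mtypethree \neq \zero$; this is automatic here precisely because the two derivations are assumed to exist.
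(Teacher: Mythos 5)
Your proof is correct and takes essentially the same route as the paper's: since $\many$ is the only rule of the \cbneed system whose conclusion assigns a multi type to an abstraction, both derivations are inverted into their families of linear-type premises, which are then recombined by a single $\many$ application (legitimate since the union of the index sets is non-empty), with type contexts and indices summing exactly as required.
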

% !TEX root = ../../main.tex
\begin{proof}
Among the different rules that type abstractions (namely $\normal $, $\fun $ and $\many $), only rule $\many$ types them with a multi type. Thus, by properly defining $\J$ and $\K$ such that $\mtypetwo = \mult{\type_{j}}_{\jJ}$ and $\mtypethree = \mult{\type_{k}}_{\kK}$, we have that $\tderiv_{\mtypetwo}$ is of the form
	$$
	\infer
		[\many]
		{\tyjp{(\sum_{\jJ} \msteps_{j},\sum_{\jJ} \esteps_{j})}{\val}{\bigmplus_{\jJ} \typctx_{j}}{\mult{\type_{j}}_{\jJ}}}
		{(\tyjp{(\msteps_{j}, \esteps_{j})}{\val}{\typctx_{j}}{\type_{j}})_{\jJ}}
	$$
and $\tderiv_{\mtypethree}$ is of the form
	$$
	\infer
		[\many]
		{\tyjp{(\sum_{\kK} \msteps_{k},\sum_{\kK} \esteps_{k})}{\val}{\bigmplus_{\kK} \typctx_{k}}{\mult{\type_{k}}_{\kK}}}
		{(\tyjp{(\msteps_{k}, \esteps_{k})}{\val}{\typctx_{k}}{\type_{k}})_{\kK}}
	$$
	
Therefore, we define $\I = \J \cup \K$ and finally obtain $\tderiv_{\mtypetwo \mplus \mtypethree}$ as follows
	$$
	\infer
		[\many]
		{\tyjp{(\sum_{\iI} \msteps_{i},\sum_{\iI} \esteps_{i})}{\val}{\bigmplus_{\iI} \typctx_{i}}{\mult{\type_{i}}_{\iI}}}
		{(\tyjp{(\msteps_{i}, \esteps_{i})}{\val}{\typctx_{i}}{\type_{i}})_{\iI}}
	$$
\qed
\end{proof}

\gettoappendix {l:need-linear-anti-substitution}
% !TEX root = ../../main.tex
\begin{proof}
We prove this by induction on the context $\cbneedctx$:

\begin{itemize}
\item Let $\cbneedctx = \chole$. Note that $\mtypethree \neq \emptytype$, a fact that is verifiable simply by checking the typing rules for abstractions. Now, by taking $\typctx_{\val} \defeq \typctx$, $\typctx' \defeq \emptyset$, $\mtype \defeq \mtypethree$, $(\msteps_{\val}, \esteps_{\val}) \defeq (\msteps, \esteps)$, and $(\msteps', \esteps') \defeq (0,1)$, we can then take $\tderiv_{\val} \defeq \tderiv$ and construct $\tderiv_{\cbneedctx\cwc{\var}}$ as follows:
	$$
	\infer
		[\ax]
		{\tyjp{(0,1)}{\var}{\var \colon \mtype}{\mtype}}
		{}
	$$
verifying that $(\msteps, \esteps) = (\msteps_{\val}, \esteps_{\val}) = (0 + \msteps_{\val}, 1 + \esteps_{\val} - 1) = (\msteps' + \msteps_{\val}, \esteps' + \esteps_{\val} - 1) $ and $\typctx = \emptyset \bigmplus \typctx = \typctx' \bigmplus \typctx_{\val}$.

\item Let $\cbneedctx = \cbneedctx_{1} \tm$, and so $\cbneedctx \cwc{\val} = \cbneedctx_{1} \cwc{\val} \tm$. There are two possible last rules in $\tderiv$, namely $\appsteps$ or $\appgc$.

Let us assume $\tderiv$ is of the form
	$$
	\infer
		[\appsteps]
		{\tyjp{(\msteps_{\typctxtwo} + \msteps_{\typctxthree} + 1, \esteps_{\typctxtwo} + \esteps_{\typctxthree})}{\cbneedctx_{1} \cwc{\val} \tm}{\typctxtwo \bigmplus \typctxthree}{\mtypethree}}
		{\tderiv_{\cbneedctx_{1} \cwc{\val}} \exder[\cbneed] \tyjp{(\msteps_{\typctxtwo}, \esteps_{\typctxtwo})}{\cbneedctx_{1} \cwc{\val}}{\typctxtwo}{\mult{\ty{\mtypethree'}{\mtypethree}}}
		\quad
		\tyjp{(\msteps_{\typctxthree}, \esteps_{\typctxthree})}{\tm}{\typctxthree}{\mtypethree'}
		\quad
		\mtypethree' \neq \emptytype}
	$$

Then we can apply the \ih on $\tderiv_{\cbneedctx_{1} \cwc{\val}}$ to obtain a type $\mtype$ and typing derivations
	$$
		\tderiv_{\val} \exder[\cbneed] \tyjp{(\msteps_{\val},\esteps_{\val})}{\val}{\typctxtwo_{\val}}{\mtype}
	$$
and
	$$
		\tderiv_{\cbneedctx_{1} \cwc{\var}} \exder[\cbneed] \tyjp{(\msteps'', \esteps'')}{\cbneedctx_{1} \cwc{\var}}{\typctxtwo' \bigmplus \{\var \colon \mtype\}}{\mult{\ty{\mtypethree'}{\mtypethree}}}
	$$
such that $\typctxtwo = \typctxtwo' \bigmplus \typctxtwo_{\val} $ and $(\msteps_{\typctxtwo}, \esteps_{\typctxtwo}) = (\msteps'' + \msteps_{\val}, \esteps'' + \esteps_{\val} - 1)$.
	
Thus, we are able to construct $\tderiv_{\cbneedctx \cwc{\var}} $ as follows
	$$
	\infer
		[\appsteps]
		{\tyjp{(\msteps'' + \msteps_{\typctxthree} + 1, \esteps'' + \esteps_{\typctxthree})}{\cbneedctx_{1} \cwc{\var} \tm}{\typctxtwo' \bigmplus \{\var \colon \mtype\} \bigmplus \typctxthree}{\mtypethree}}
		{\tderiv_{\cbneedctx_{1} \cwc{\var}} \exder[\cbneed] \tyjp{(\msteps'', \esteps'')}{\cbneedctx_{1} \cwc{\var}}{\typctxtwo' \bigmplus \{\var \colon \mtype\}}{\mult{\ty{\mtypethree'}{\mtypethree}}}
		\quad
		\tyjp{(\msteps_{\typctxthree}, \esteps_{\typctxthree})}{\tm}{\typctxthree}{\mtypethree'}
		\quad
		\mtypethree' \neq \emptytype}
	$$
and, by taking $\typctx' \defeq \typctxtwo' \bigmplus \typctxthree $, $\typctx_{\val} \defeq \typctxtwo_{\val} $, and $(\msteps', \esteps') \defeq (\msteps'' + \msteps_{\typctxthree} + 1, \esteps'' + \esteps_{\typctxthree})$, then we verify that 
	$$
	\typctx = \typctxtwo \bigmplus \typctxthree = \typctxtwo' \bigmplus \typctxtwo_{\val} \bigmplus \typctxthree = \typctx' \bigmplus \typctx_{\val}
	$$
and
	$$
	(\msteps, \esteps) = 
	(\msteps_{\typctxtwo} + \msteps_{\typctxthree} + 1, \esteps_{\typctxtwo} + \esteps_{\typctxthree}) = 
	(\msteps'' + \msteps_{\val} + \msteps_{\typctxthree} + 1, \esteps'' + \esteps_{\val} - 1 + \esteps_{\typctxthree}) = (\msteps' + \msteps_{\val}, \esteps' + \esteps_{\val} - 1)
	$$

Now, let us assume $\tderiv$ is of the form
	$$
	\infer
		[\appgc]
		{\tyjp{(\msteps, \esteps)}{\cbneedctx_{1} \cwc{\val} \tm}{\typctx}{\mtypethree}}
		{\tderiv_{\cbneedctx_{1} \cwc{\val}} \exder[\cbneed] \tyjp{(\msteps - 1, \esteps)}{\cbneedctx_{1} \cwc{\val}}{\typctx}{\mult{\ty{\emptytype}{\mtypethree}}}}
	$$
We then apply the \ih on $\tderiv_{\cbneedctx_{1} \cwc{\val}}$ to obtain type $\mtype$ and typing derivations
	$$
	\tderiv_{\val} \exder[\cbneed] \tyjp{(\msteps_{\val},\esteps_{\val})}{\val}{\typctx_{\val}}{\mtype}
	$$
and
	$$
	\tderiv_{N_{1} \cwc{\var}} \exder[\cbneed] \tyjp{(\msteps'', \esteps'')}{N_{1} \cwc{\var}}{\typctx' \bigmplus \{\var \colon \mtype\}}{\mult{\ty{\mtypethree'}{\mtypethree}}}
	$$
such that $\typctx = \typctx' \bigmplus \typctx_{\val} $ and $(\msteps - 1,\esteps) = (\msteps'' + \msteps_{\val}, \esteps'' + \esteps_{\val} - 1) $.

Thus, we are able to construct $\tderiv_{\cbneedctx \cwc{\var}}$ as follows
	$$
	\infer
		[\appgc]
		{\tyjp{(\msteps'' + 1, \esteps'')}{\cbneedctx_{1} \cwc{\var} \tm}{\typctx' \bigmplus \{\var \colon \mtype\}}{\mtypethree}}
		{\tderiv_{\cbneedctx_{1} \cwc{\var}} \exder[\cbneed] \tyjp{(\msteps'', \esteps'')}{\cbneedctx_{1} \cwc{\var}}{\typctx' \bigmplus \{\var \colon \mtype\}}{\mult{\ty{\mtypethree'}{\mtypethree}}}}
	$$
and, by taking $(\msteps', \esteps') = (\msteps'' + 1, \esteps'')$, then verify that
	$$
	(\msteps, \esteps) =
	(\msteps'' + \msteps_{\val} + 1, \esteps'' + \esteps_{\val} - 1) = 
	(\msteps' + \msteps_{\val}, \esteps' + \esteps_{\val} - 1)
	$$

Finally, if $\tderiv$ is $\tight$

\item Let $\cbneedctx = \cbneedctx_{1} \esub{\vartwo}{\tm}$, and so $\cbneedctx \cwc{\val} = \cbneedctx_{1} \cwc{\val} \esub{\vartwo}{\tm}$. Note that we can safely assume that $\var \neq \vartwo$, since we are working up to $\alpha$-equivalence and $\vartwo$ has a binding occurrence in $\cbneedctx$ while $\var$ represents a free variable. Moreover, note that $\vartwo \notin \fv{\val}$, since otherwise $\cbneedctx\cwc{\val}$ would not be well-defined. 

There are two possible last rules in $\tderiv$, namely $\ES$ or $\ESgc$.

Let $\tderiv$ be of the form
	$$
	\infer
		[\ES]
		{\tyjp{(\msteps_{\typctxtwo} + \msteps_{\typctxthree}, \esteps_{\typctxtwo} + \esteps_{\typctxthree})}{\cbneedctx_{1} \cwc{\val} \esub{\vartwo}{\tm}}{(\typctxtwo \sm \vartwo) \bigmplus \typctxthree}{\mtypethree}}
		{\tderiv_{\cbneedctx_{1} \cwc{\val}} \exder[\cbneed] \tyjp{(\msteps_{\typctxtwo}, \esteps_{\typctxtwo})}{\cbneedctx_{1} \cwc{\val}}{\typctxtwo}{\mtypethree}
		\quad
		\tyjp{(\msteps_{\typctxthree}, \esteps_{\typctxthree})}{\tm}{\typctxthree}{\typctxtwo(\vartwo)}
		\quad
		\typctxtwo(\vartwo) \neq \emptytype}
	$$
where $\typctx = (\typctxtwo \sm \vartwo) \bigmplus \typctxthree $, $(\msteps, \esteps) = (\msteps_{\typctxtwo} + \msteps_{\typctxthree}, \esteps_{\typctxtwo} + \esteps_{\typctxthree}) $.

Then we can apply the \ih on $\tderiv_{\cbneedctx_{1} \cwc{\val}} $ to obtain type $\mtype$ and typing derivations
	$$
		\tderiv_{\val} \exder[\cbneed] \tyjp{(\msteps_{\val},\esteps_{\val})}{\val}{\typctxtwo_{\val}}{\mtype}
	$$
and
	$$
		\tderiv_{\cbneedctx_{1} \cwc{\var}} \exder[\cbneed] \tyjp{(\msteps'', \esteps'')}{\cbneedctx_{1} \cwc{\var}}{\typctxtwo' \bigmplus \{\var \colon \mtype\}}{\mtypethree}
	$$
such that $\typctxtwo = \typctxtwo' \bigmplus \typctxtwo_{\val} $ and $(\msteps_{\typctxtwo}, \esteps_{\typctxtwo}) = (\msteps'' + \msteps_{\val}, \esteps'' + \esteps_{\val} - 1)$. 

Moreover, since $\vartwo \neq \var$ and $\vartwo \notin \dom{\typctxtwo_{\val}}$ (otherwise \reflemma{need-basic-properties-typing-derivations} would imply that $\vartwo \in \fv{\val}$, which we already know not to be the case), then $(\typctxtwo' \bigmplus \{\var \colon \mtype\})(\vartwo) = \typctxtwo(\vartwo) $ and so we are able to construct $\tderiv_{\cbneedctx \cwc{\var}}$ as follows
	$$
	\infer
		[\ES]
		{\tyjp{(\msteps'' + \msteps_{\typctxthree}, \esteps'' + \esteps_{\typctxthree})}{\cbneedctx_{1} \cwc{\var} \esub{\vartwo}{\tm}}{((\typctxtwo' \bigmplus \{\var \colon \mtype\}) \sm \vartwo) \bigmplus \typctxthree}{\mtypethree}}
		{\tderiv_{\cbneedctx_{1} \cwc{\var}} \exder[\cbneed] \tyjp{(\msteps'', \esteps'')}{\cbneedctx_{1} \cwc{\var}}{\typctxtwo' \bigmplus \{\var \colon \mtype\}}{\mtypethree}
		\quad
		\tyjp{(\msteps_{\typctxthree}, \esteps_{\typctxthree})}{\tm}{\typctxthree}{\typctxtwo(\vartwo)}
		\quad
		\typctxtwo(\vartwo) \neq \emptytype}
	$$
Now, by taking $\typctx' \defeq (\typctxtwo' \sm \vartwo) \bigmplus \typctxthree$, $\typctx_{\val} \defeq \typctxtwo_{\val} $, and $(\msteps', \esteps') \defeq (\msteps'' + \msteps_{\typctxthree}, \esteps'' + \esteps_{\typctxthree}) $, we can verify that 	
	$$
	\typctx = 
	(\typctxtwo \sm \vartwo) \bigmplus \typctxthree =
	((\typctxtwo' \bigmplus \typctxtwo_{\val}) \sm \vartwo) \bigmplus  \typctxthree =
	(\typctxtwo' \sm \vartwo) \bigmplus \typctxtwo_{\val} \bigmplus \typctxthree =
	\typctx' \bigmplus \typctx
	$$
and
	$$
	(\msteps, \esteps) =
	(\msteps_{\typctxtwo} + \msteps_{\typctxthree}, \esteps_{\typctxtwo} + \esteps_{\typctxthree}) =
	((\msteps'' + \msteps_{\val}) + \msteps_{\typctxthree}, (\esteps'' + \esteps_{\val} - 1) + \esteps_{\typctxthree}) =
	(\msteps' + \msteps_{\val}, \esteps' + \esteps_{\val})
	$$
	
If $\tderiv$ is instead of the form
	$$
	\infer
		[\ESgc]
		{\tyjp{(\msteps, \esteps)}{\cbneedctx_{1} \cwc{\val} \esub{\vartwo}{\tm}}{\typctx}{\mtypethree}}
		{\tderiv_{\cbneedctx_{1} \cwc{\val}} \exder[\cbneed] \tyjp{(\msteps, \esteps)}{\cbneedctx_{1} \cwc{\val}}{\typctx}{\mtypethree}
		\quad
		\typctx(\vartwo) = \emptytype}
	$$
then we can apply the \ih on $\tderiv_{\cbneedctx_{1} \cwc{\val}}$ to obtain a type $\mtype$ and typing derivations 
	$$
		\tderiv_{\val} \exder[\cbneed] \tyjp{(\msteps_{\val},\esteps_{\val})}{\val}{\typctxtwo_{\val}}{\mtype}
	$$
and
	$$
		\tderiv_{\cbneedctx_{1} \cwc{\var}} \exder[\cbneed] \tyjp{(\msteps', \esteps')}{\cbneedctx_{1} \cwc{\var}}{\typctx' \bigmplus \{\var \colon \mtype\}}{\mtypethree}
	$$
such that $\typctx = \typctx' \bigmplus \typctx_{\val}$ and $(\msteps, \esteps) = (\msteps' + \msteps_{\val}, \esteps' + \esteps_{\val} - 1) $. Note that this type context and these indices are exactly as desired, and so we can finally construct $\tderiv_{\cbneedctx \cwc{\var}}$ as follows:
	$$
	\infer
		[\ESgc]
		{\tyjp{(\msteps', \esteps')}{\cbneedctx_{1} \cwc{\var} \esub{\vartwo}{\tm}}{\typctx' \bigmplus \{\var \colon \mtype\}}{\mtypethree}}
		{\tderiv_{\cbneedctx_{1} \cwc{\var}} \exder[\cbneed] \tyjp{(\msteps', \esteps')}{\cbneedctx_{1} \cwc{\var}}{\typctx' \bigmplus \{\var \colon \mtype\}}{\mtypethree}}
	$$

\item Let $\cbneedctx = \cbneedctx_{1} \cwc{\vartwo} \esub{\vartwo}{\cbneedctx_{2}}$, and so $\cbneedctx \cwc{\val} = \cbneedctx_{1} \cwc{\vartwo} \esub{\vartwo}{\cbneedctx_{2} \cwc{\val}} $. Once again, we will assume $\var \neq \vartwo$. \reflemma{need-basic-properties-typing-derivations} implies there is only one possible form of $\tderiv$, namely:
	$$
	\infer
		[\ES]
		{\tyjp{(\msteps_{\typctxtwo} + \msteps_{\typctxthree}, \esteps_{\typctxtwo} + \esteps_{\typctxthree})}{\cbneedctx_{1} \cwc{\vartwo} \esub{\vartwo}{\cbneedctx_{2} \cwc{\val}}}{(\typctxtwo \sm \vartwo) \bigmplus \typctxthree}{\mtypethree}}
		{ \tyjp{(\msteps_{\typctxtwo}, \esteps_{\typctxtwo})}{\cbneedctx_{1} \cwc{\vartwo}}{\typctxtwo}{\mtypethree}
		\quad
		\tderiv_{\cbneedctx_{2} \cwc{\val}} \exder[\cbneed] \tyjp{(\msteps_{\typctxthree}, \esteps_{\typctxthree})}{\cbneedctx_{2} \cwc{\val}}{\typctxthree}{\typctxtwo(\vartwo)}
		\quad
		\typctxtwo(\vartwo) \neq \emptytype}
	$$
where $\typctx = (\typctxtwo \sm \vartwo) \bigmplus \typctxthree $, $(\msteps, \esteps) = (\msteps_{\typctxtwo} + \msteps_{\typctxthree}, \esteps_{\typctxtwo} + \esteps_{\typctxthree}) $.

Then we can apply the \ih on $\tderiv_{\cbneedctx_{2} \cwc{\val}}$ to obtain type $\mtype$ and typing derivations
	$$
	\tderiv_{\val} \exder[\cbneed] \tyjp{(\msteps_{\val}, \esteps_{\val})}{\val}{\typctxthree_{\val}}{\mtype}
	$$
and
	$$
	\tderiv_{\cbneedctx_{2} \cwc{\var}} \exder[\cbneed] \tyjp{(\msteps'', \esteps'')}{\cbneedctx_{2} \cwc{\var}}{\typctxthree' \bigmplus \{\var \colon \mtype\}}{\typctxtwo(\vartwo)}
	$$
such that $\typctxthree = \typctxthree' \bigmplus \typctxthree_{\val}$ and $(\msteps_{\typctxthree}, \esteps_{\typctxthree}) = (\msteps'' + \msteps_{\val}, \esteps'' + \esteps_{\val} - 1)$.

We can then construct $\tderiv_{\cbneedctx \cwc{\var}}$ as follows
	$$
	\infer
		[\ES]
		{\tyjp{(\msteps_{\typctxtwo} + \msteps'', \esteps_{\typctxtwo} + \esteps'')}{\cbneedctx_{1} \cwc{\vartwo} \esub{\vartwo}{\cbneedctx_{2} \cwc{\var}}}{(\typctxtwo \sm \vartwo) \bigmplus \typctxthree'}{\mtypethree}}
		{\tyjp{(\msteps_{\typctxtwo}, \esteps_{\typctxtwo})}{\cbneedctx_{1} \cwc{\vartwo}}{\typctxtwo}{\mtypethree}
		\quad
		\tderiv_{\cbneedctx_{2} \cwc{\var}} \exder[\cbneed] \tyjp{(\msteps'', \esteps'')}{\cbneedctx_{2} \cwc{\var}}{\typctxthree' \bigmplus \{\var \colon \mtype\}}{\typctxtwo(\vartwo)}
		\quad
		\typctxtwo(\vartwo) \neq \emptytype}
	$$
and, by taking $\typctx' \defeq (\typctxtwo \sm \vartwo) \bigmplus \typctxthree'$, $\typctx_{\val} \defeq  \typctxthree_{\val}$, $(\msteps', \esteps') = (\msteps_{\typctxtwo} + \msteps'', \esteps_{\typctxtwo} + \esteps'')$, then verify that
	$$
	\typctx = 
	(\typctxtwo \sm \vartwo) \bigmplus \typctxthree =
	(\typctxtwo \sm \vartwo) \bigmplus (\typctxthree' \bigmplus \typctxthree_{\val}) =
	\typctx' \bigmplus \typctx_{\val}
	$$
and
	$$
	(\msteps, \esteps) =
	(\msteps_{\typctxtwo} + \msteps_{\typctxthree}, \esteps_{\typctxtwo} + \esteps_{\typctxthree}) =
	(\msteps_{\typctxtwo} + (\msteps'' + \msteps_{\val}), \esteps_{\typctxtwo} + (\esteps'' + \esteps_{\val} - 1)) =
	(\msteps' + \msteps_{\val}, \esteps' + \esteps_{\val} - 1)
	$$
\qed	
\end{itemize}
\end{proof}

\gettoappendix {prop:need-subject-expansion}
% !TEX root = ../../main.tex
\begin{proof}
By induction on the derivation $\tm \tond \tmtwo $, with the root rules $\rtom$ and $\rtoe$ as the base case, and the closure by $\cbneed$ contexts of $\rtond$ as the inductive one.
\begin{itemize}
\item \emph{Root step for $\tom$.} Let $\tm = \sctx \hole{\la{\var}{\tmthree}} \tmfour \rtom S \hole{\tmthree \esub{\var}{\tmfour}} = \tmtwo$, and proceed by induction on $\sctx$:
	\begin{itemize}
	\item Let $\sctx \defeq \ctxhole$. Then $\tm = (\la{\var}{\tmthree}) \tmfour$ and $\tmtwo = \tmthree \esub{\var}{\tmfour}$, and so $\tderiv$ has either $\ES$ or $\ESgc$ as its last typing rule.
		\begin{itemize}
		\item If $\tderiv$ is of the form
			$$
			\infer
				[\ESgc]
				{\tyjp{(\msteps,\esteps)}{\tmthree \esub{\var}{\tmfour}}{\typctx}{\mtype}}
				{\tyjp{(\msteps,\esteps)}{\tmthree}{\typctx}{\mtype}
				\quad
				\typctx(\var) = \emptytype}
			$$
		then we can construct $\tderiv'$ as follows
			$$
			\infer
				[\appgc]
				{\tyjp{(\msteps + 1,\esteps)}{(\la{\var}{\tmthree})\tmfour}{\typctx}{\mtype}}
				{\infer
					[!]
					{\tyjp{(\msteps,\esteps)}{\la{\var}{\tmthree}}{\typctx}{\mult{\ty{\emptytype}{\mtype}}}}
					{\infer
						[\fun]
						{\tyjp{(\msteps,\esteps)}{\la{\var}{\tmthree}}{\typctx}{\ty{\emptytype}{\mtype}}}
						{\tyjp{(\msteps,\esteps)}{\tmthree}{\typctx}{\mtype}}}}
			$$
		\item Let $\tderiv$ be of the form
			$$
			\infer
				[\ES]
				{\tyjp{(\mstepstwo + \mstepsthree, \estepstwo + \estepsthree)}{\tmthree \esub{\var}{\tmfour}}{\typctxtwo \bigmplus \typctxthree}{\mtype}}
				{\tyjp{(\mstepstwo, \estepstwo)}{\tmthree}{\var \colon \mtypethree; \typctxtwo}{\mtype}
				\quad
				\tyjp{(\mstepsthree, \estepsthree)}{\tmfour}{\typctxthree}{\mtypethree}
				\quad
				\mtypethree \neq \emptytype}
			$$
		where $\typctx = \typctxtwo \bigmplus \typctxthree$ and $(\msteps, \esteps) = (\mstepstwo + \mstepsthree, \estepstwo + \estepsthree) $.
		
		We can then construct $\tderiv'$ as follows
			$$
			\infer
				[\appsteps]
				{\tyjp{(\mstepstwo + \mstepsthree + 1, \estepstwo + \estepsthree)}{(\la{\var}{\tmthree}) \tmfour}{\typctxtwo \bigmplus \typctxthree}{\mtype}}
				{\infer
					[!]
					{\tyjp{(\mstepstwo, \estepstwo)}{\la{\var}{\tmthree}}{\typctxtwo}{\mult{\ty{\mtypethree}{\mtype}}}}
					{\infer
						[\fun]
						{\tyjp{(\mstepstwo, \estepstwo)}{\la{\var}{\tmthree}}{\typctxtwo}{\ty{\mtypethree}{\mtype}}}
						{\tyjp{(\mstepstwo, \estepstwo)}{\tmthree}{\var \colon \mtypethree; \typctxtwo}{\mtype}}}
				\quad
				\tyjp{(\mstepsthree, \estepsthree)}{\tmfour}{\typctxthree}{\mtypethree}}
			$$
		\end{itemize}
	\item Let $\sctx \defeq \sctxtwo \esub{\vartwo}{\tmfive}$. Then $\tm = (\sctxptwo{\la{\var}{\tmthree}} \esub{\vartwo}{\tmfive}) \tmfour$ and $\tmtwo = \sctxptwo{\tmthree \esub{\var}{\tmfour}} \esub{\vartwo}{\tmfive}$. Note that since we are working up to $\alpha$-equivalence, $\vartwo \notin \fv{\tmfour}$. There are two possible last typing rules of $\tderiv$, namely $\ES$ and $\ESgc$.
		\begin{itemize}
		\item Let $\tderiv$ be of the form
			$$
			\infer
				[\ES]
				{\tyjp{(\mstepstwo + \mstepsthree, \estepstwo + \estepsthree)}{\sctxptwo{\tmthree \esub{\var}{\tmfour}} \esub{\vartwo}{\tmfive}}{\typctxtwo \bigmplus \typctxthree}{\mtype}}
				{\tyjp{(\mstepstwo, \estepstwo)}{\sctxptwo{\tmthree \esub{\var}{\tmfour}}}{\vartwo \colon \mtypethree ; \typctxtwo}{\mtype}
				\quad
				\tyjp{(\mstepsthree, \estepsthree)}{\tmfive}{\typctxthree}{\mtypethree}
				\quad
				\mtypethree \neq \emptytype}
			$$
		where $\typctx = \typctxtwo \bigmplus \typctxthree$ and $(\msteps, \esteps) = (\mstepstwo + \mstepsthree, \estepstwo + \estepsthree) $. We can then apply the \ih on the leftmost premise to obtain a typing derivation $\tderiv'_{\ih} \exder[\cbneed] \tyjp{(\mstepstwo + 1, \estepstwo)}{\sctxptwo{\la{\var}{\tmthree}} \tmfour}{\vartwo \colon \mtypethree; \typctxtwo}{\mtype}$. We then analyze the two possibilities of the last typing rule in $\tderiv'_{\ih}$, namely $\appsteps$ or $\appgc$
			\begin{itemize}
				\item Let $\tderiv'_{\ih}$ be of the form
					$$
					\infer
						[\appsteps]
						{\tyjp{(\mstepstwo + 1, \estepstwo)}{\sctxptwo{\la{\var}{\tmthree}} \tmfour}{\vartwo \colon \mtypethree; \typctxtwo}{\mtype}}
						{\tyjp{(\mstepstwo_{1}, \estepstwo_{1})}{\sctxptwo{\la{\var}{\tmthree}}}{\vartwo \colon \mtypethree ; \typctxtwo_{1}}{\mult{\ty{\mtypefour}{\mtype}}}
						\quad
						\tyjp{(\mstepstwo_{2}, \estepstwo_{2})}{\tmfour}{\typctxtwo_{2}}{\mtypefour}
						\quad
						\mtypefour \neq \emptytype}
					$$ 
				where $(\mstepstwo, \estepstwo) = (\mstepstwo_{1} + \mstepstwo_{2}, \estepstwo_{1} + \estepstwo_{2})$, $\typctxtwo = \typctxtwo_{1} \bigmplus \typctxtwo_{2} $, and $\vartwo \notin \dom{\typctxtwo_{2}}$, since otherwise \reflemma{need-basic-properties-typing-derivations} would imply $\vartwo \in \fv{\tmfour}$.
				
				We can then construct $\tderiv'$ as follows
					$$
					\infer
						[\appsteps]
						{\tyjp{(\mstepstwo_{1} + \mstepsthree + \mstepstwo_{2} + 1, \estepstwo_{1} + \estepsthree + \estepstwo_{2})}{\sctxp{\la{\var}{\tmthree}} \tmfour}{\typctxthree \bigmplus \typctxtwo_{1} \bigmplus \typctxtwo_{2}}{\mtype}}
						{\infer
							[\ES]
							{\tyjp{(\mstepstwo_{1} + \mstepsthree, \estepstwo_{1} + \estepsthree)}{\sctxp{\la{\var}{\tmthree}}}{\typctxthree \bigmplus \typctxtwo_{1}}{\mult{\ty{\mtypefour}{\mtype}}}}
							{\tyjp{(\mstepstwo_{1}, \estepstwo_{1})}{\sctxptwo{\la{\var}{\tmthree}}}{\vartwo \colon \mtypethree ; \typctxtwo_{1}}{\mult{\ty{\mtypefour}{\mtype}}}
							\quad
							\tyjp{(\mstepsthree, \estepsthree)}{\tmfive}{\typctxthree}{\mtypethree}}
						\quad
						\tyjp{(\mstepstwo_{2}, \estepstwo_{2})}{\tmfour}{\typctxtwo_{2}}{\mtypefour}}
					$$
				\item Let $\tderiv'_{\ih}$ be of the form
					$$
					\infer
						[\appgc]
						{\tyjp{(\mstepstwo + 1, \estepstwo)}{\sctxptwo{\la{\var}{\tmthree}} \tmfour}{\vartwo \colon \mtypethree; \typctxtwo}{\mtype}}
						{\tyjp{(\mstepstwo, \estepstwo)}{\sctxptwo{\la{\var}{\tmthree}}}{\vartwo \colon \mtypethree; \typctxtwo}{\mult{\ty{\emptytype}{\mtype}}}}
					$$
				We can then construct $\tderiv'$ as follows
					$$
					\infer
						[\appgc]
						{\tyjp{(\mstepstwo + \mstepsthree + 1, \estepstwo + \estepsthree)}{\sctxp{\la{\var}{\tmthree}} \tmfour}{\typctxtwo \bigmplus \typctxthree}{\mtype}}
						{\infer
							[\ES]
							{\tyjp{(\mstepstwo + \mstepsthree, \estepstwo + \estepsthree)}{\sctxp{\la{\var}{\tmthree}}}{\typctxtwo \bigmplus \typctxthree}{\mult{\ty{\emptytype}{\mtype}}}}
							{\tyjp{(\mstepstwo, \estepstwo)}{\sctxptwo{\la{\var}{\tmthree}}}{\vartwo \colon \mtypethree; \typctxtwo}{\mult{\ty{\emptytype}{\mtype}}}
							\quad
							\tyjp{(\mstepsthree, \estepsthree)}{\tmfive}{\typctxthree}{\mtypethree}}}
					$$
			\end{itemize}
			
		\item Let $\tderiv$ be of the form
			$$
			\infer
				[\ESgc]
				{\tyjp{(\msteps, \esteps)}{\sctxptwo{\tmthree \esub{\var}{\tmfour}} \esub{\vartwo}{\tmfive}}{\typctx}{\mtype}}
				{\tyjp{(\msteps, \esteps)}{\sctxptwo{\tmthree \esub{\var}{\tmfour}}}{\typctx}{\mtype}
				\quad
				\typctx(\vartwo) = \emptytype}
			$$
		We then apply the \ih on the leftmost premise to obtain a typing derivation $\tderiv'_{\ih} \exder[\cbneed] \tyjp{(\msteps + 1, \esteps)}{\sctxptwo{\la{\var}{\tmthree}} \tmfour}{\typctx}{\mtype}$ for which there are two possible last typing rules, namely $\appsteps$ and $\appgc$.
			\begin{itemize}
				\item Let $\tderiv'_{\ih}$ be of the form
					$$
					\infer
						[\appsteps]
						{\tyjp{(\msteps + 1, \esteps)}{\sctxptwo{\la{\var}{\tmthree}} \tmfour}{\typctx}{\mtype}}
						{\tyjp{(\mstepstwo, \estepstwo)}{\sctxptwo{\la{\var}{\tmthree}}}{\typctxtwo}{\mult{\ty{\mtypethree}{\mtype}}}
						\quad
						\tyjp{(\mstepsthree, \estepsthree)}{\tmfour}{\typctxthree}{\mtypethree}
						\quad
						\mtypethree \neq \emptytype}
					$$
				where $\typctx = \typctxtwo \bigmplus \typctxthree $ and $(\msteps, \esteps) = (\mstepstwo + \mstepsthree, \estepstwo + \estepsthree) $. 
				
				We can then construct $\tderiv'$ as follows
					$$
					\infer
						[\appsteps]
						{\tyjp{(\mstepstwo + \mstepsthree + 1, \estepstwo + \estepsthree)}{\sctxp{\la{\var}{\tmthree}} \tmfour}{\typctxtwo \bigmplus \typctxthree}{\mtype}}
						{\infer
							[\ESgc]
							{\tyjp{(\mstepstwo, \estepstwo)}{\sctxp{\la{\var}{\tmthree}}}{\typctxtwo}{\mult{\ty{\mtypethree}{\mtype}}}}
							{\tyjp{(\mstepstwo, \estepstwo)}{\sctxptwo{\la{\var}{\tmthree}}}{\typctxtwo}{\mult{\ty{\mtypethree}{\mtype}}}}
						\quad
						\tyjp{(\mstepsthree, \estepsthree)}{\tmfour}{\typctxthree}{\mtypethree}}
					$$
				\item Let $\tderiv'_{\ih}$ be of the form
					$$
					\infer
						[\appgc]
						{\tyjp{(\msteps + 1, \esteps)}{\sctxptwo{\la{\var}{\tmthree}} \tmfour}{\typctx}{\mtype}}
						{\tyjp{(\msteps, \esteps)}{\sctxptwo{\la{\var}{\tmthree}}}{\typctx}{\mult{\ty{\emptytype}{\mtype}}}}
					$$
					
				We can then construct $\tderiv'$ as follows
					$$
					\infer
						[\appgc]
						{\tyjp{(\msteps + 1, \esteps)}{\sctxp{\la{\var}{\tmthree}} \tmfour}{\typctx}{\mtype}}
						{\infer
							[\ESgc]
							{\tyjp{(\msteps, \esteps)}{\sctxp{\la{\var}{\tmthree}}}{\typctx}{\mult{\ty{\emptytype}{\mtype}}}}
							{\tyjp{(\msteps, \esteps)}{\sctxptwo{\la{\var}{\tmthree}}}{\typctx}{\mult{\ty{\emptytype}{\mtype}}}}}
					$$
			\end{itemize}
		\end{itemize}
	\end{itemize}

\item \emph{Root step for $\toe$.} Let $\tm = \cbneedctx \cwc{\var} \esub{\var}{\sctxp{\val}} \rtoe \sctxp{\cbneedctx \cwc{\val} \esub{\var}{\val}} = \tmtwo$, and proceed by induction on $\sctx$:
	\begin{itemize}
	\item Let $\sctx \defeq \ctxhole$.Then $\tm = \cbneedctx \cwc{\var} \esub{\var}{\val} \rtoe \cbneedctx \cwc{\val} \esub{\var}{\val} = \tmtwo$, and so $\tderiv$ has either $\ES$ or $\ESgc$ as its last typing rule.
		\begin{itemize}
		\item Let $\tderiv$ be of the form 
			$$
			\infer
				[\ESgc]
				{\tyjp{(\msteps, \esteps)}{\cbneedctx \cwc{\val} \esub{\var}{\val}}{\typctx}{\mtype}}
				{\tderiv_{\cbneedctx \cwc{\val}} \exder[\cbneed] \tyjp{(\msteps,\esteps)}{\cbneedctx \cwc{\val}}{\typctx}{\mtype}
				\quad
				\typctx(\var) = \emptytype}
			$$
		We apply \reflemma{need-linear-anti-substitution} on $\tderiv_{\cbneedctx \cwc{\val}}$ to obtain typing derivations $\tderiv_{\val} \exder[\cbneed] \tyjp{(\msteps_{\val}, \esteps_{\val})}{\val}{\typctx_{\val}}{\mtypethree} $ and $\tderiv_{\cbneedctx \cwc{\var}} \exder[\cbneed] \tyjp{(\msteps', \esteps')}{\cbneedctx \cwc{\var}}{\typctx' \bigmplus \{\var \colon \mtypethree\}}{\mtype} $ such that $\typctx = \typctx' \bigmplus \typctx_{\val} $ and $(\msteps,\esteps) = (\msteps' + \msteps_{\val}, \esteps' + \esteps_{\val} - 1) $. We can then construct $\tderiv'$ with such derivations as follows
			$$
			\infer
				[\ES]
				{\tyjp{(\msteps' + \msteps_{\val}, \esteps' + \esteps_{\val})}{\cbneedctx \cwc{\var} \esub{\var}{\val}}{\typctx' \bigmplus \typctx_{\val}}{\mtype}}
				{\tyjp{(\msteps', \esteps')}{\cbneedctx \cwc{\var}}{\typctx' \bigmplus \{\var \colon \mtypethree\}}{\mtype}
				\quad
				\tyjp{(\msteps_{\val}, \esteps_{\val})}{\val}{\typctx_{\val}}{\mtypethree}}
			$$
		In particular, note that $(\msteps' + \msteps_{\val}, \esteps' + \esteps_{\val}) = (\msteps, \esteps + 1)$.
		\item Let $\tderiv$ be of the form
			$$
			\infer
				[\ES]
				{\tyjp{(\mstepstwo + \mstepsthree, \estepstwo + \estepsthree)}{\cbneedctx \cwc{\val} \esub{\var}{\val}}{\typctxtwo \bigmplus \typctxthree}{\mtype}}
				{\tyjp{(\mstepstwo, \estepstwo)}{\cbneedctx \cwc{\val}}{\var \colon \mtypethree; \typctxtwo}{\mtype}
				\quad
				\tyjp{(\mstepsthree, \estepsthree)}{\val}{\typctxthree}{\mtypethree}
				\quad
				\mtypethree \neq \emptytype}
			$$
		We can then apply \reflemma{need-linear-anti-substitution} on the leftmost premise with respect to $\var$ to obtain a multi type $\mtypefour$ and typing derivations $\tderiv_{\val} \exder[\cbneed] \tyjp{(\msteps_{\val}, \esteps_{\val})}{\val}{\typctxtwo_{\val}}{\mtypefour} $ and $\tderiv_{\cbneedctx \cwc{\var}} \exder[\cbneed] \tyjp{(\msteps_{\cbneedctx \cwc{\var}}, \esteps_{\cbneedctx \cwc{\var}})}{\cbneedctx \cwc{\var}}{\typctxtwo_{\cbneedctx \cwc{\var}} \bigmplus \{\var \colon \mtypefour \}}{\mtype} $ such that $\var \colon \mtypethree; \typctxtwo = \typctxtwo_{\val} \bigmplus \typctxtwo_{\cbneedctx \cwc{\var}} $ and $(\mstepstwo, \estepstwo) = (\msteps_{\cbneedctx \cwc{\var}} + \msteps_{\val}, \esteps_{\cbneedctx \cwc{\var}} + \esteps_{\val} - 1)$. Note how \reflemma{need-basic-properties-typing-derivations} implies that $\var \notin \dom{\typctxtwo_{\val}}$ -given that $\var \notin \fv{\val}$-, and so $\tderiv_{\cbneedctx \cwc{\var}}$ can be rewritten as $\tderiv_{\cbneedctx \cwc{\var}} \exder[\cbneed] \tyjp{(\msteps_{\cbneedctx \cwc{\var}}, \esteps_{\cbneedctx \cwc{\var}})}{\cbneedctx \cwc{\var}}{\var \colon \mtypethree \bigmplus \mtypefour; \typctxtwo'_{\cbneedctx \cwc{\var}}}{\mtype}$, where $\typctxtwo_{\cbneedctx \cwc{\var}} = \var \colon \mtypethree ; \typctxtwo'_{\cbneedctx \cwc{\var}}$ and so $\typctxtwo = \typctxtwo'_{\cbneedctx \cwc{\var}} \bigmplus \typctxtwo_{\val}$.

		Furthermore, we can apply \reflemma{need-merging-multisets} on $\tyjp{(\mstepsthree, \estepsthree)}{\val}{\typctxthree}{\mtypethree} $ and $\tderiv_{\val}$ to obtain a typing derivation $\tderiv_{\mtypethree \bigmplus \mtypefour} \exder[\cbneed] \tyjp{(\mstepsthree + \msteps_{\val}, \estepsthree + \esteps_{\val})}{\val}{\typctxthree \bigmplus \typctxtwo_{\val}}{\mtypethree \bigmplus \mtypefour} $.
		
		Finally, we can construct $\tderiv'$ as follows
			$$
			\infer
				[\ES]
				{\tyjp{(\msteps_{\cbneedctx \cwc{\var}} + \mstepsthree + \msteps_{\val}, \esteps_{\cbneedctx \cwc{\var}} + \estepsthree + \esteps_{\val})}{\cbneedctx \cwc{\var} \esub{\var}{\val}}{\typctxtwo'_{\cbneedctx \cwc{\var}} \bigmplus \typctxthree \bigmplus \typctxtwo_{\val}}{\mtype}}
				{\tyjp{(\msteps_{\cbneedctx \cwc{\var}}, \esteps_{\cbneedctx \cwc{\var}})}{\cbneedctx \cwc{\var}}{\var \colon \mtypethree \bigmplus \mtypefour; \typctxtwo'_{\cbneedctx \cwc{\var}}}{\mtype}
				\quad
				\tyjp{(\mstepsthree + \msteps_{\val}, \estepsthree + \esteps_{\val})}{\val}{\typctxthree \bigmplus \typctxtwo_{\val}}{\mtypethree \bigmplus \mtypefour}}
			$$
		Note that $(\msteps_{\cbneedctx \cwc{\var}} + \mstepsthree + \msteps_{\val}, \esteps_{\cbneedctx \cwc{\var}} + \estepsthree + \esteps_{\val}) =(\mstepstwo + \mstepsthree, \estepstwo + 1 + \estepsthree) = (\msteps, \esteps + 1) $.
		\end{itemize}
	
	\item Let $\sctx \defeq \sctx \esub{\vartwo}{\tmfour}$. Then $\tm = \cbneedctx \cwc{\var} \esub{\var}{\sctxptwo{\val} \esub{\vartwo}{\tmfour}} \rtoe \sctxptwo{\cbneedctx \cwc{\val} \esub{\var}{\val}} \esub{\vartwo}{\tmfour} = \tmtwo$. Note that $\vartwo \notin \fv{\cbneedctx \cwc{\var}}$, since $\vartwo$ is bound in $\sctxp{\val}$ and we are working up to $\alpha$-equivalence. Then, $\tderiv$ has either $\ES$ or $\ESgc$ as its last typing rule.
	\begin{itemize}
	\item Let $\tderiv$ be of the form
		$$
		\infer
			[\ESgc]
			{\tyjp{(\msteps, \esteps)}{\sctxptwo{\cbneedctx \cwc{\val} \esub{\var}{\val}} \esub{\vartwo}{\tmfour}}{\typctx}{\mtype}}
			{\tyjp{(\msteps, \esteps)}{\sctxptwo{\cbneedctx \cwc{\val} \esub{\var}{\val}}}{\typctx}{\mtype}
			\quad
			\typctx(\vartwo) = \emptytype}
		$$
	We can then apply the \ih on the premise to obtain a typing derivation $\tderiv'_{\ih} \exder[\cbneed] \tyjp{(\msteps,\esteps+1)}{\cbneedctx \cwc{\var} \esub{\var}{\sctxptwo{\val}}}{\typctx}{\mtype} $. Moreover, note that $\tderiv'_{\ih}$ can only have $\ES$ as its last typing rule, by application of \reflemma{need-basic-properties-typing-derivations}. $\tderiv'_{\ih}$ is hence of the form
		$$
		\infer
			[\ES]
			{\tyjp{(\msteps,\esteps+1)}{\cbneedctx \cwc{\var} \esub{\var}{\sctxptwo{\val}}}{\typctx}{\mtype}}
			{\tyjp{(\msteps_{\cbneedctx \cwc{\var}},\esteps_{\cbneedctx \cwc{\var}})}{\cbneedctx \cwc{\var}}{\var \colon \mtypethree ; \typctxtwo}{\mtype}
			\quad
			\tderiv_{\sctxptwo{\val}} \exder[\cbneed] \tyjp{(\msteps_{\sctxptwo{\val}}, \esteps_{\sctxptwo{\val}})}{\sctxptwo{\val}}{\typctxthree}{\mtypethree}
			\quad
			\mtypethree \neq \emptytype}
		$$
	where $\typctxtwo \bigmplus \typctxthree = \typctx$ and $(\msteps, \esteps + 1) = (\msteps_{\cbneedctx \cwc{\var}} + \msteps_{\sctxptwo{\val}}, \esteps_{\cbneedctx \cwc{\var}} + \esteps_{\sctxptwo{\val}}) $.
		
		Since $\typctx(\vartwo) = \emptytype$ then $\typctxthree(\vartwo) = \emptytype $ and we can construct $\tderiv'$ as follows
		$$
		\infer
			[\ES]
			{\tyjp{(\msteps_{\cbneedctx \cwc{\var}} + \msteps_{\sctxptwo{\val}}, \esteps_{\cbneedctx \cwc{\var}} + \esteps_{\sctxptwo{\val}})}{\cbneedctx \cwc{\var} \esub{\var}{\sctxptwo{\val} \esub{\vartwo}{\tmfour}}}{\typctxtwo \bigmplus \typctxthree}{\mtype}}
			{\tyjp{(\msteps_{\cbneedctx \cwc{\var}},\esteps_{\cbneedctx \cwc{\var}})}{\cbneedctx \cwc{\var}}{\var \colon \mtypethree ; \typctxtwo}{\mtype}
			\quad
			\infer
				[\ESgc]
				{\tyjp{(\msteps_{\sctxptwo{\val}}, \esteps_{\sctxptwo{\val}})}{\sctxptwo{\val} \esub{\vartwo}{\tmfour}}{\typctxthree}{\mtypethree}}
				{\tyjp{(\msteps_{\sctxptwo{\val}}, \esteps_{\sctxptwo{\val}})}{\sctxptwo{\val}}{\typctxthree}{\mtypethree}
				\quad
				\typctxthree(\vartwo) = \emptytype}
			}
		$$

	\item Let $\tderiv$ be of the form
		$$
		\infer
			[\ES]
			{\tyjp{(\msteps_{1} + \msteps_{2}, \esteps_{1} + \esteps_{2})}{\sctxptwo{\cbneedctx \cwc{\val} \esub{\var}{\val}} \esub{\vartwo}{\tmfour}}{\typctxtwo \bigmplus \typctxthree}{\mtype}}
			{\tyjp{(\msteps_{1}, \esteps_{1})}{\sctxptwo{\cbneedctx \cwc{\val} \esub{\var}{\val}}}{\vartwo \colon \mtypethree ; \typctxtwo}{\mtype}
			\quad
			\tyjp{(\msteps_{2}, \esteps_{2})}{\tmfour}{\typctxthree}{\mtypethree}}
		$$
	where $\typctxtwo \bigmplus \typctxthree = \typctx$ and $(\msteps_{1} + \msteps_{2}, \esteps_{1} + \esteps_{2}) = (\msteps, \esteps)$. We can then apply the \ih on the leftmost premise to obtain a typing derivation $\tderiv'_{\ih} \exder[\cbneed] \tyjp{(\msteps_{1}, \esteps_{1} + 1)}{\cbneedctx \cwc{\var} \esub{\var}{\sctxptwo{\val}}}{\vartwo \colon \mtypethree ; \typctxtwo}{\mtype}$ which has to have $\ES$ as its last typing rule -via \reflemma{need-basic-properties-typing-derivations}-, as follows
		$$
		\infer
			[\ES]
			{\tyjp{(\msteps_{1,1} + \msteps_{1,2}, \esteps_{1,1} + \esteps_{1,2})}{\cbneedctx \cwc{\var} \esub{\var}{\sctxptwo{\val}}}{\vartwo \colon \mtypethree ; \typctxtwo_{1} \bigmplus \typctxtwo_{2}}{\mtype}}
			{\tyjp{(\msteps_{1,1}, \esteps_{1,1})}{\cbneedctx \cwc{\var}}{\var \colon \mtypefour;\typctxtwo_{1}}{\mtype}
			\quad
			\tyjp{(\msteps_{1,2}, \esteps_{1,2})}{\sctxptwo{\val}}{\vartwo \colon \mtypethree ; \typctxtwo_{2}}{\mtypefour}}
		$$
	where $\typctxtwo_{1} \bigmplus \typctxtwo_{2} = \typctxtwo$, $(\msteps_{1,1} + \msteps_{1,2}, \esteps_{1,1} + \esteps_{1,2}) = (\msteps_{1}, \esteps_{1} + 1) $. We then construct $\tderiv'$ as follows
		$$
		\infer
			[\ES]
			{\tyjp{(\msteps_{1,1} + \msteps_{1,2} + \msteps_{2}, \esteps_{1,1} + \esteps_{1,2} + \esteps_{2})}{\cbneedctx \cwc{\var} \esub{\var}{\sctxptwo{\val}\esub{\vartwo}{\tmfour}}}{\typctxtwo_{1} \bigmplus \typctxtwo_{2} \bigmplus \typctxthree}{\mtype}}
			{\tyjp{(\msteps_{1,1}, \esteps_{1,1})}{\cbneedctx \cwc{\var}}{\var \colon \mtypefour;\typctxtwo_{1}}{\mtype}
			\quad
			\infer
				[\ES]
				{\tyjp{(\msteps_{1,2} + \msteps_{2}, \esteps_{1,2} + \esteps_{2})}{\sctxptwo{\val}\esub{\vartwo}{\tmfour}}{\typctxtwo_{2} \bigmplus \typctxthree}{\mtypefour}}
				{\tyjp{(\msteps_{1,2}, \esteps_{1,2})}{\sctxptwo{\val}}{\vartwo \colon \mtypethree ; \typctxtwo_{2}}{\mtypefour}
				\quad
				\tyjp{(\msteps_{2}, \esteps_{2})}{\tmfour}{\typctxthree}{\mtypethree}}
			}
		$$
	Note that $\typctxtwo_{1} \bigmplus \typctxtwo_{2} \bigmplus \typctxthree = \typctxtwo \bigmplus \typctxthree = \typctx $ and $(\msteps_{1,1} + \msteps_{1,2} + \msteps_{2}, \esteps_{1,1} + \esteps_{1,2} + \esteps_{2}) = (\msteps_{1} + \msteps_{2}, \esteps_{1} + \esteps_{2}) = (\msteps, \esteps + 1) $.
	
	\end{itemize}
	
	\end{itemize}

\item \emph{Contextual closure.} We proceed by induction on the derivation of $\tm = \cbneedctx \hole{\tm'} \tond \cbneedctx \hole{\tmtwo'} = \tmtwo$:
	\begin{itemize}
	\item Let $\cbneedctx = \ctxhole$. Then $\tm \rtom \tmtwo$ or $\tm \rtoe \tmtwo$ and in either case the statement holds, as we have just proved.
	\item Let $\cbneedctx = \cbneedctx_{1} \tmthree$. Then $\tderiv \exder[\cbneed] \tyjp{(\msteps,\esteps)}{\cbneedctx_{1} \hole{\tmtwo'} \tmthree}{\typctx}{\mtype}$ and its last typing rule is either $\appsteps$ or $\appgc$. We will only cover the case where $\cbneedctx \hole{\tm'} \tom \cbneedctx \hole{\tmtwo'} $ and $\tderiv$ ends in rule $\appsteps$, leaving the rest of the (analogous) cases to the reader.
	
	Let $\tderiv$ be of the form
		$$
		\infer
			[\appsteps]
			{\tyjp{(\mstepstwo + \mstepsthree + 1, \estepstwo + \estepsthree)}{\cbneedctx_{1} \hole{\tmtwo'} \tmthree}{\typctxtwo \bigmplus \typctxthree}{\mtype}}
			{\tderiv_{\cbneedctx_{1} \hole{\tmtwo}} \exder[\cbneed] \tyjp{(\mstepstwo, \estepstwo)}{\cbneedctx_{1} \hole{\tmtwo'}}{\typctxtwo}{\mult{\ty{\mtypethree}{\mtype}}}
			\quad
			\tyjp{(\mstepsthree, \estepsthree)}{\tmthree}{\typctxthree}{\mtypethree}}
		$$
	where $\typctx = \typctxtwo \bigmplus \typctxthree$, $(\mstepstwo + \mstepsthree + 1, \estepstwo + \estepsthree) = (\msteps, \esteps) $, and $\mtypethree \neq \emptytype$.
	
	We can then apply the \ih on $\tderiv_{\cbneedctx_{1} \hole{\tmtwo}} $ to obtain a typing derivation $\tderiv'_{\ih} \exder[\cbneed] \tyjp{(\msteps + 1, \esteps)}{\cbneedctx_{1} \hole{\tm'}}{\typctxtwo}{\mult{\ty{\mtypethree}{\mtype}}} $ with which we construct $\tderiv'$ as follows
		$$
		\infer
			[\appsteps]
			{\tyjp{(\mstepstwo + 1 + \mstepsthree + 1, \estepstwo + \estepsthree)}{\cbneedctx_{1} \hole{\tm'} \tmthree}{\typctxtwo \bigmplus \typctxthree}{\mtype}}
			{\tyjp{(\msteps + 1, \esteps)}{\cbneedctx_{1} \hole{\tm'}}{\typctxtwo}{\mult{\ty{\mtypethree}{\mtype}}}
			\quad
			\tyjp{(\mstepsthree, \estepsthree)}{\tmthree}{\typctxthree}{\mtypethree}}
		$$
	Note that $(\mstepstwo + 1 + \mstepsthree + 1, \estepstwo + 1 + \estepsthree) = (\msteps + 1, \esteps) $.
		
	\item Let $\cbneedctx = \cbneedctx_{1} \esub{\var}{\tmthree}$. Then $\tderiv \exder[\cbneed] \tyjp{(\msteps,\esteps)}{\cbneedctx_{1} \hole{\tmtwo'} \esub{\var}{\tmthree}}{\typctx}{\mtype}$ and its last typing rule is either $\ES$ or $\ESgc$. We will only cover the case where $\cbneedctx \hole{\tm_{1}} \tom \cbneedctx \hole{\tmtwo_{1}} $ and $\tderiv$ ends in rule $\ES$, leaving the rest of the (analogous) cases to the reader.
	
	Let $\tderiv$ be of the form
		$$
		\infer
			[\ES]
			{\tyjp{(\mstepstwo + \mstepsthree, \estepstwo + \estepsthree)}{\cbneedctx_{1} \hole{\tmtwo'} \esub{\var}{\tmthree}}{\typctxtwo \bigmplus \typctxthree}{\mtype}}
			{\tderiv_{\cbneedctx_{1} \hole{\tmtwo'}} \exder[\cbneed] \tyjp{(\mstepstwo, \estepstwo)}{\cbneedctx_{1} \hole{\tmtwo'}}{\var \colon \mtypethree ; \typctxtwo}{\mtype}
			\quad
			\tyjp{(\mstepsthree, \estepsthree)}{\tmthree}{\typctxthree}{\mtypethree}}
		$$
	where $\typctxtwo \bigmplus \typctxthree = \typctx$, $(\mstepstwo + \mstepsthree, \estepstwo + \estepsthree) = (\msteps, \esteps) $, and $\mtypethree \neq \emptytype$.
	
	We can then apply the \ih on $\tderiv_{\cbneedctx_{1} \hole{\tmtwo'}} $ to obtain a typing derivation $\tderiv'_{\ih} \exder[\cbneed] \tyjp{(\mstepstwo + 1, \estepstwo)}{\cbneedctx_{1} \hole{\tm'}}{\var \colon \mtypethree ; \typctxtwo}{\mtype} $ with which $\tderiv'$ goes as follows
		$$
		\infer
			[\ES]
			{\tyjp{(\mstepstwo + 1 + \mstepsthree, \estepstwo + \estepsthree)}{\cbneedctx_{1} \hole{\tm'} \esub{\var}{\tmthree}}{\typctxtwo \bigmplus \typctxthree}{\mtype}}
			{\tyjp{(\mstepstwo + 1, \estepstwo)}{\cbneedctx_{1} \hole{\tm'}}{\var \colon \mtypethree ; \typctxtwo}{\mtype}
			\quad
			\tyjp{(\mstepsthree, \estepsthree)}{\tmthree}{\typctxthree}{\mtypethree}}
		$$
	Note that $(\mstepstwo + 1 + \mstepsthree, \estepstwo + \estepsthree) = (\msteps + 1, \esteps) $.
		
	\item Let $\cbneedctx = \cbneedctx_{1} \cwc{\var} \esub{\var}{\cbneedctx_{2}}$. We will only consider the case where 
		$$
		\tm = \cbneedctx \hole{\tm'} = \cbneedctx_{1} \cwc{\var} \esub{\var}{\cbneedctx_{2} \hole{\tm'}} \tom \cbneedctx_{1} \cwc{\var} \esub{\var}{\cbneedctx_{2} \hole{\tmtwo'}} = \cbneedctx \hole{\tmtwo'} = \tmtwo
		$$
leaving the (analogous) case when $\tm \toe \tmtwo $ to the reader. 

	Therefore, $\tderiv$ is of the form	
		$$
		\infer
			[\ES]
			{\tyjp{(\mstepstwo + \mstepsthree, \estepstwo + \estepsthree)}{\cbneedctx_{1} \cwc{\var} \esub{\var}{\cbneedctx_{2} \hole{\tmtwo'}}}{\typctxtwo \bigmplus \typctxthree}{\mtype}}
			{\tyjp{(\mstepstwo, \estepstwo)}{\cbneedctx_{1}\cwc{\var}}{\typctxtwo}{\mtype}
			\quad
			\tderiv_{\cbneedctx_{2} \hole{\tmtwo'}} \exder[\cbneed] \tyjp{(\mstepsthree, \estepsthree)}{\cbneedctx_{2} \hole{\tmtwo'}}{\typctxthree}{\mtypethree}}
		$$
	where $\typctxtwo \bigmplus \typctxthree = \typctx$, $(\mstepstwo + \mstepsthree, \estepstwo + \estepsthree) = (\msteps, \esteps) $, and $\mtypethree \neq \emptytype$.
	
	We can then apply the \ih on $\tderiv_{\cbneedctx_{2} \hole{\tmtwo'}}$ to obtain a typing derivation $\tderiv'_{\ih} \exder[\cbneed] \tyjp{(\mstepsthree + 1, \estepsthree)}{\cbneedctx_{2} \hole{\tm'}}{\typctxthree}{\mtypethree}$, with which we can finally construct $\tderiv'$ as follows
		$$
		\infer
			[\ES]
			{\tyjp{(\mstepstwo + \mstepsthree + 1, \estepstwo + \estepsthree)}{\cbneedctx_{1} \cwc{\var} \esub{\var}{\cbneedctx_{2} \hole{\tm'}}}{\typctxtwo \bigmplus \typctxthree}{\mtype}}
			{\tyjp{(\mstepstwo, \estepstwo)}{\cbneedctx_{1}\cwc{\var}}{\typctxtwo}{\mtype}
			\quad
			\tyjp{(\mstepsthree + 1, \estepsthree)}{\cbneedctx_{2} \hole{\tm'}}{\typctxthree}{\mtypethree}}
		$$
	Note that $(\mstepstwo + \mstepsthree + 1, \estepstwo + \estepsthree) = (\msteps + 1, \esteps) $.
\qed
	\end{itemize}

\end{itemize}
\end{proof}

\gettoappendix {thm:need-completeness}
% !TEX root = ../../main.tex
\begin{proof} 
By induction on $\size{\deriv}$; \ie, on $k$ such that $\deriv \colon \tm \rightarrow_\cbneedsym^{k} \tmtwo$.
\begin{itemize}

\item If $k = 0$, then $\tm = \tmtwo$ and \refprop{syntactic-characterization-closed-normal} implies that $\normalpr{\tm}$. We then obtain $\tderiv$ as desired via application of \refprop{need-normal-forms-exist}.

\item If $k > 0$ then $\deriv \colon \tm \tocbneed \tmthree \rightarrow_\cbneedsym^{k-1} \tmtwo$. We can then apply the \ih on $\deriv' \colon \tmthree \rightarrow_\cbneedsym^{k-1} \tmtwo$ to obtain a $\tight$ derivation $\tderiv_{\ih} \exder[\cbneedup] \tyjp{(\size{\deriv'}_{\msteps}, \size{\deriv'}_{\esteps})}{\tmthree}{}{\mult{\normal}}$.

Now, if $\tm \tomcbneed \tmthree$ then \refprop{need-subject-expansion} implies that there exists a derivation as desired, namely $\tderiv \exder[\cbneedup] \tyjp{(\size{\deriv'}_{\msteps} + 1, \size{\deriv'}_{\esteps})}{\tm}{}{\mult{\normal}}$, since $(\size{\deriv'}_{\msteps} + 1, \size{\deriv'}_{\esteps}) = (\size{\deriv}_{\msteps}, \size{\deriv}_{\esteps})$.

If $\tm \toecbneed \tmthree$ then \refprop{need-subject-expansion} implies that there exists a derivation as desired, namely $\tderiv \exder[\cbneedup] \tyjp{(\size{\deriv'}_{\msteps}, \size{\deriv'}_{\esteps} + 1)}{\tm}{}{\mult{\normal}}$, since $(\size{\deriv'}_{\msteps}, \size{\deriv'}_{\esteps} + 1) = (\size{\deriv}_{\msteps}, \size{\deriv}_{\esteps})$.\qed

\end{itemize}
\end{proof}

%\input{cbv-cbneed}

 % !TEX root = main.tex
\section{A New Fundamental Theorem for Call-by-Need (\refsect{new-theorem-need})}

\gettoappendix {coro:value-need} 

\begin{proof}
By induction on $\msteps+\esteps$ and case analysis on whether $\tm$ reduces or not.
	If $\normalpr{\tm}$ then the statement holds with $\tmtwo \defeq \tm$ and $\deriv$ the empty evaluation, so that $\sizem{\deriv} = 0 = \sizee{\deriv}$.
%	If moreover $\tderiv$ is tight then  $\sizem{\deriv} = 0 = \msteps$ and $\sizee{\deriv} = 0 = \esteps$ by \refprop{value-normal-forms-forall}.
	
	Otherwise $\lnot \normalpr{\tm}$, then $\tm \tocbneed \tmthree$ according to the syntactic characterization of closed $\cbneedsym$-normal forms (\refprop{syntactic-characterization-closed-normal}), since $\tm$ is closed.
	As $\tm \tocbneed \tmthree$ means either $\tm \tomcbneed \tmthree$ or $\tm \toecbneed \tmthree$,
	by quantitative subject reduction for the \cbv multi type system with respect to \cbneed evaluation (\refprop{value-need-subject-reduction}) there is $\tderivtwo \exder[\cbvsym] \!\tyjp{(\mstepstwo,\estepstwo)}{\tmthree}{\typctx}{\mtype}$
	with:
	\begin{itemize}
		\item $\mstepstwo \defeq \msteps - 1$ and $\estepstwo = \esteps$ if $\tm \tomcbneed \tmthree$,
		\item $\mstepstwo \defeq \msteps$ and $\estepstwo = \esteps - 1$ if $\tm \toecbneed \tmthree$.
	\end{itemize}
	By \ih (since $\mstepstwo + \estepstwo = \msteps + \esteps - 1$), there is a term $\tmtwo$ such that $\deriv' \colon \tmthree \tocbneedn \tmtwo$ and $\normalpr{\tmtwo}$, with $\sizem{\deriv'} \leq \mstepstwo$ and $\sizee{\deriv'} \leq \estepstwo$.
%	; and if, moreover, $\tderivtwo$ is tight, then $\sizem{\deriv'} = \mstepstwo$ and $\sizee{\deriv'} = \estepstwo$.
	The evaluation $\deriv \colon \tm \tocbneedn \tmtwo$ obtained by prefixing $\deriv'$ with the step $\tm \tocbneed \tmthree$ verifies $\sizem{\deriv} \leq \msteps$ and $\sizee{\deriv} \leq \esteps$ 
%	(and $\sizem{\deriv} = \msteps$ and $\sizee{\deriv} = \esteps$ if moreover $\tderiv$---and hence $\tderivtwo$ %according to \refremark{value-properties-tight}
%	since $\dom{\typctx} = \emptyset$ and $\mtype = \emptymset$---is tight) 
	because:
	\begin{itemize}
		\item if $\tm \tomcbneed \tmthree$ then $\sizem{\deriv} = \sizem{\deriv'} + 1 \leq \mstepstwo + 1 = \msteps$ and $\sizee{\deriv} = \sizee{\deriv'} \leq \estepstwo = \esteps$, 
%		(and $\sizem{\deriv} = \sizem{\deriv'} + 1 = \mstepstwo + 1 = \msteps$ and $\sizee{\deriv} = \sizee{\deriv'} = \estepstwo = \esteps$ if moreover $\tderiv$ is tight),
		\item if $\tm \toecbneed \tmthree$ then $\sizem{\deriv} = \sizem{\deriv'} \leq \mstepstwo = \msteps$ and $\sizee{\deriv} = \sizee{\deriv'} + 1 \leq \estepstwo + 1 = \esteps$.
%		(and $\sizem{\deriv} = \sizem{\deriv'} = \mstepstwo = \msteps$ and $\sizee{\deriv} = \sizee{\deriv'} + 1 = \estepstwo + 1 = \esteps$ if moreover $\tderiv$ is tight).
		\qed
	\end{itemize}
\end{proof}

\gettoappendix {cor:value-longer-than-need}

\begin{proof}
	By tight completeness for \cbv (\refthm{value-completeness}), there exists a tight type derivation $\tderiv \exder[\cbvsym] \Deri[(\msteps,\esteps)]{}{\tm}\emptymset$ with $\sizem{\deriv} = \msteps$ and $\sizee{\deriv} = \esteps$, because by hypothesis $\tm$ is \cbv normalisable.
	Correctness of \cbv with respect to \cbneed (\refcoro{value-need}) then gives $\derivtwo \colon \tm \tocbneedn \tmtwo$ with $\normalpr{\tmtwo}$, $\sizem{\derivtwo} \leq \msteps = \sizem{\deriv}$ and $\sizee{\derivtwo} \leq \esteps = \sizee{\deriv}$.
	\qed
\end{proof}

\end{document}